\definecolor{winered}{rgb}{0.6,0,0}
\definecolor{lessblue}{rgb}{0,0,0.7}
\newcommand{\M}{{\mathcal M}}
\newcommand{\id}{\mathrm{id}}
\newcommand{\Ric}{\mathrm{Ric}}
\newcommand{\Scal}{\mathrm{Scal}}
\newcommand{\e}{\varepsilon}
\newcommand{\ep}{\varepsilon}
\renewcommand{\S}{\mathcal S}
\newcommand{\s}{\sigma}
\newcommand{\n}{\nabla}
\renewcommand{\Im}{\mathrm{Im}}
\renewcommand{\Re}{\mathrm{Re}}
\renewcommand{\L}{\mathcal{L}}
\newcommand{\T}{\mathcal{T}}
\newcommand{\bT}{{}^\be T}
\newcommand{\bS}{{{}^\be S^*}}
\newcommand{\boT}{\overline{\bT^*}}
\newcommand{\WFb}{{\mathrm{WF}'_{\mathrm b}}}
\renewcommand{\ell}{\mathrm{Ell}}
\newcommand{\mB}{\mathfrak B}
\newcommand{\mR}{\mathcal R}
\newcommand{\as}{\mathrm a}
\renewcommand{\b}{\beta}
\newcommand{\g}{{\kappa}}
\renewcommand{\j}{{\mathrm j}}
\renewcommand{\l}{{\mathrm l}}
\newcommand{\tl}{{\tilde \l}}
\renewcommand{\o}{\omega}
\renewcommand{\O}{\mathcal O}
\newcommand{\p}{\mathrm p}
\renewcommand{\q}{{\mathrm q}}
\renewcommand{\r}{{\mathrm r}}
\renewcommand{\k}{\mathrm k}
\newcommand{\bs}{\mathrm b}
\newcommand{\f}{\mathrm f}
\renewcommand{\t}{\mathrm t}
\newcommand{\supp}{\mathrm{supp}}
\newcommand{\tr}{\mathrm{tr}}
\renewcommand{\a}{\alpha}
\newcommand{\de}{\delta}
\newcommand{\be}{{\mathrm{b}}}
\renewcommand{\d}{\partial}
\renewcommand{\c}{{\mathfrak c}}
\newcommand{\md}{{\mathrm d}}
\newcommand{\abs}[1]{\left\lvert#1\right\rvert}
\newcommand{\norm}[1]{\left\lVert#1\right\rVert}
\newcommand{\ldr}[1]{\left\langle #1 \right\rangle}
\renewcommand{\div}{\mathrm{div}}
\renewcommand{\H}{\mathrm H}
\newcommand{\sfH}{{\mathsf H}}
\newcommand{\A}{\mathcal A}
\newcommand{\Rc}{{\mathcal R}}
\newcommand{\U}{\mathcal U}
\newcommand{\F}{\mathcal F}
\newcommand{\Vol}{\mathrm{Vol}}
\newcommand{\Char}{\mathrm{Char}}
\newcommand{\R}[0]{\mathbb{R}}
\newcommand{\C}[0]{\mathbb{C}}
\newcommand{\N}[0]{\mathbb{N}}
\newcommand{\D}[0]{\mathcal{D}}
\newcommand{\De}{{\mathrm{D}}}
\newcommand{\Op}{\mathrm{Op}}
\newcommand{\esssupp}{{\mathrm{ess\text{-}supp}}}
\newcommand{\tBox}{\tilde \Box}
\renewcommand{\P}{\mathrm{P}}
\newcommand{\tP}{\tilde \P}
\newcommand{\Aop}{{\mathcal{A}}}
\newcommand{\Bop}{{\mathrm{B}}}
\newcommand{\Jop}{{\mathrm{J}}}
\newcommand{\Fop}{{\mathrm{F}}}
\newcommand{\Kop}{{\mathrm{K}}}
\newcommand{\Sop}{{\mathrm{S}}}
\newcommand{\Lop}{{\mathrm{L}}}
\newcommand{\tLop}{{\tilde \Lop}}
\newcommand{\Qop}{{\mathrm{Q}}}
\newcommand{\Rop}{{\mathrm{R}}}
\theoremstyle{plain}
\newtheorem{thm}{Theorem}[section]
\newtheorem{prop}[thm]{Proposition}
\newtheorem{lemma}[thm]{Lemma}
\newtheorem{cor}[thm]{Corollary}
\theoremstyle{definition}
\newtheorem{definition}[thm]{Definition}
\newtheorem{remark}[thm]{Remark}
\newtheorem{assumption}[thm]{Assumption}
\author{Peter Hintz}
\address{Department of Mathematics, ETH Z\"urich, R\"amistrasse 101, 8092 Z\"urich, Switzerland}
\email{peter.hintz@math.ethz.ch}
\email{phintz@psu.edu}
\author{Oliver Petersen}
\address{Department of Mathematics, Stockholm University, Albanovägen 28, 10691 Stockholm, Sweden}
\email{oliver.petersen@math.su.se}
\author{Andr\'as Vasy}
\address{Department of Mathematics, Stanford University, CA 94305-2125, USA}
\email{andras@math.stanford.edu}
\title[Kerr--de~Sitter stability: full subextremal range]{Conditional non-linear stability of Kerr--de~Sitter spacetimes in the full subextremal range}
\subjclass[2010]{Primary 83C57, Secondary 83C05, 35B40}
\thanks{P.H. is grateful to the Swiss National Science Foundation under grant number TMCG-2\textunderscore{}223167. O.P. gratefully acknowledges support from the Swedish Research Council under grant number 2021-04269. A.V. gratefully acknowledges support from the National Science Foundation under
grant number DMS-2247004.}
\begin{document}
\hbadness=100000
\vbadness=100000

\begin{abstract}
We show the stability of Kerr--de~Sitter black holes, in the full subextremal range, as solutions of the vacuum Einstein
equation with a positive cosmological constant under the assumption that mode stability
holds for these spacetimes. The method is similar to the (unconditional) proof in
the slowly rotating case by Hintz and Vasy. The key novelties are the implementation of
constraint damping in the full subextremal range as well as the verification of a subprincipal symbol
condition at the trapped set.
\end{abstract}

\maketitle

\tableofcontents
\begin{sloppypar}

\section{Introduction}

This paper proves the non-linear stability of Kerr--de~Sitter black hole spacetimes as solutions of Einstein's vacuum equation with a positive cosmological constant $\Lambda>0$,
\[
  \Ric(g)-\Lambda g = 0,
\]
assuming mode stability for the linearized, at Kerr--de~Sitter, equation.
In this equation, $g$ is a Lorentzian metric (of signature $(-,+,+,+)$) on a $4$-dimensional background manifold $M$. Kerr--de~Sitter metrics, which are explicitly described in Section~\ref{subsec: KdS spacetimes}, are solutions of this equation and describe rotating black holes. They come in a family
$g_{b}$, $b=(m,a)$, parameterized by their mass $m$ and specific angular momentum $a$;
subextremality is a condition that $b=(m,a)$ lie in a certain open
set defined by the conditions that the event and cosmological horizons be non-degenerate (i.e., not extremal). Our work
concentrates on a region somewhat bigger than that between these two
horizons, i.e., than what might be called the black hole exterior, or
more precisely the domain of outer communications; this is the
manifold $M$.

When suitably interpreted, namely in a gauge fixed version,
this equation is a tensorial quasilinear wave equation, and hence it is
natural to consider initial data on a spacelike initial
hypersurface $\Sigma_0$. Here we recall that the geometric initial
data are the first and the second fundamental forms of $\Sigma_0$, and
these need to satisfy compatibility conditions, called the constraint equations. The stability statement studied here is that when the initial data
for such a spacetime $g_{b_0}$ are perturbed, i.e., changed slightly in a
suitable norm, then there is a global solution of Einstein's equation
which stays close to the spacetime we started out with. Moreover, the
solution decays, at an exponential rate independent of the initial
data (but depending on $b_0$), to a potentially different (but nearby) member $g_{b}$ of the Kerr--de
Sitter family. Such a statement was proved in the slowly rotating
case, i.e., in a neighborhood of $a=0$ in the subextremal range, by Hintz and Vasy in
\cite{HV2018}. In the present paper we prove that under an additional natural
condition, namely assuming mode stability for the linearized Einstein
equations (which we explain in more detail below), the stability statement in fact extends to
the full subextremal range. (In \cite{HV2018}, the mode
stability itself was proved in the slowly rotating case, i.e., this additional assumption was proved there.) We first state our main theorem informally:

\begin{thm}[Conditional stability of the Kerr--de~Sitter family in the full
  subextremal range; informal version, see Theorem~\ref{thm: main} for
  a precise formulation]
\label{ThmIntroBaby}
  Fix subextremal Kerr--de~Sitter parameters $b_0$. Suppose $(h,k)$ are smooth initial data on $\Sigma_0$, satisfying
  the constraint equations, which are close to the data
  $(h_{b_0},k_{b_0})$ of a Kerr--de~Sitter spacetime in a high
  regularity norm and suppose that mode stability of
  $D_{g_{b_0}}\Ric-\Lambda$ holds; see Assumption~\ref{ass:
  mode stability}. Then there exist a solution $g$ of Einstein's equation attaining these initial data at $\Sigma_0$ and black hole parameters $b$ which are close to $b_0$ such that
  \[
    g - g_b = \O(e^{-\alpha t_*})
  \]
  for a constant $\alpha>0$ independent of the initial data; that is, $g$ decays exponentially fast to the Kerr--de~Sitter metric $g_b$. Moreover, $g$ and $b$ are quantitatively controlled by $(h,k)$.
\end{thm}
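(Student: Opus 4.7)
The plan is to adapt the Hintz--Vasy strategy from \cite{HV2018} to cover the entire subextremal parameter range, using the assumed mode stability of $D_{g_{b_0}}\Ric - \Lambda$ as a black-box substitute for the perturbative mode stability argument that was available near $a = 0$. I would begin by fixing a gauge. Imposing a modified wave map condition of the form $\Upsilon(g; g_b) - \theta(g, g_b) = 0$, where $\Upsilon$ is the gauge one-form measuring the failure of the identity $(M, g) \to (M, g_b)$ to be a wave map and $\theta$ is a constraint damping term, converts $\Ric(g) - \Lambda g = 0$ into a quasilinear tensorial wave equation $P_b(g) = 0$. The background parameter $b$ is kept free: because the zero resonance of the linearization at $g_{b_0}$ is populated by linearized Kerr--de~Sitter deformations together with pure gauge modes, one must modify $b$ along with the metric during the iteration.

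Spectral analysis of the Mellin transformed linearized operator $\widehat{L_{b_0}}(\sigma)$ is then required, and I would organize it around four ingredients in the standard microlocal Fredholm framework: (i) the mode stability hypothesis, which prevents growing modes of the ungauged operator; (ii) constraint damping, to push spurious gauge-theoretic resonances strictly into $\{\Im \sigma < -\alpha\}$ for some $\alpha > 0$; (iii) radial point estimates at the generalized event and cosmological horizons, setting up Fredholm theory on anisotropic Sobolev spaces; and (iv) normally hyperbolic trapping estimates in the style of Dyatlov--Wunsch--Zworski at the Kerr--de~Sitter trapped set. Together these would give meromorphic invertibility of $\widehat{L_{b_0}}(\sigma)$ in $\Im \sigma > -\alpha$, with the only pole a finite-rank one at $\sigma = 0$ corresponding to parameter deformations and pure gauge modes.

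With such spectral information, contour deformation of the inverse Mellin transform yields tame exponentially decaying estimates (with loss of derivatives) for the linearized Cauchy problem. I would then run a Nash--Moser iteration in the joint variable $(g - g_b,\, b - b_0)$: at each step the tame linear estimate simultaneously corrects the metric perturbation and the background parameter, the latter absorbing the zero resonance through a finite-dimensional projection. Finally I would show that any solution of the gauge-fixed system arising from constraint-satisfying Cauchy data is actually a solution of the vacuum Einstein equation, by using the second Bianchi identity to derive a homogeneous wave-like equation for the gauge one-form whose own resonance set lies in $\{\Im \sigma < 0\}$ thanks to the constraint damping.

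The main obstacle will be precisely the two novelties flagged in the introduction: implementing constraint damping and verifying the subprincipal symbol condition at the trapped set in the full subextremal range. In the slowly rotating regime of \cite{HV2018}, both reduce by perturbation to near-scalar conditions at the Schwarzschild--de~Sitter photon sphere; for $|a|$ approaching the subextremal boundary, however, the trapped set is genuinely two-parameter and must be analyzed using the conserved Carter constant of Kerr--de~Sitter. The damping tensor $\theta$ must then be chosen delicately so that the associated tensorial principal symbol is positive on the kernel of $P_b$'s symbol at trapping, while simultaneously preserving the signs required by the radial point estimates at both horizons, uniformly in $b$ across the subextremal range. Once these two microlocal-geometric inequalities are established, the rest of the argument should parallel \cite{HV2018} with only routine adaptation.
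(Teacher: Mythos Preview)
Your outline is essentially the paper's strategy, and you have correctly identified the two new technical inputs. There is, however, a conflation in your sketch that should be untangled. In the paper the gauge-fixed equation is
\[
\Ric(g)-\Lambda g + \bigl(\delta_{g_{b_0}}^* + \A_{g_{b_0}}\bigr)\bigl(\Upsilon_{g_{b_0}}(g)-\Upsilon_{g_{b_0}}(g_{b_0,b})-\theta\bigr)=0,
\]
and the two objects $\A_{g_{b_0}}$ and $\theta$ play completely different roles. The constraint damping is the zeroth-order modification $\A_{g_{b_0}}$ of $\delta^*$, built from a specific timelike vector field $\c^\sharp$ with small parameters $(\g,e,h)$; its purpose is to guarantee that the constraint propagation operator $\delta_g\T_g(\delta_g^*+\A_{g_{b_0}})$ has no non-decaying modes (Theorem~\ref{thm: stable constraint propagation}), which is what lets you pass, via the second Bianchi identity, from non-decaying modes of the gauge-fixed linearization to genuine modes of $D_{g_{b_0}}\Ric-\Lambda$, whereupon the mode stability hypothesis applies. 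By contrast $\theta$ ranges over a fixed finite-dimensional space $\Theta_{b_0}\subset C_c^\infty$ of compactly supported gauge sources; its job is to absorb, during the Nash--Moser iteration, the residual finite-dimensional family of non-decaying gauge modes that mode stability does not eliminate. Neither of these is ``the damping tensor chosen so that the principal symbol is positive at trapping.''

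Relatedly, your description of the trapping condition is not quite right: what must be checked (Theorem~\ref{thm: spectral gap}) is a bound on the skew-adjoint part of the \emph{subprincipal} operator of the linearized gauge-fixed equation, in a well-chosen fiber inner product, not positivity of a principal symbol on the characteristic set. The paper does this by trivializing $\pi^*T^*M$ near the trapped set with the Marck tetrad (built from the Killing--Yano tensor, which is where the Carter constant enters), so that the relevant endomorphism becomes upper triangular; a rescaling of the basis then makes the off-diagonal part as small as desired. The constraint damping proof is a separate and more involved semiclassical argument (Section~\ref{sec: CD}), propagating estimates along the flow of $\c^\sharp$ at low frequency and along null bicharacteristics at high frequency.
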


Before explaining the mode stability condition, we emphasize that
there are many obstacles to global solvability and asymptotic analysis
of quasilinear evolution equations such as Einstein's equation, after
a suitable gauge is fixed. (There is also the question of the choice of a gauge and
its implementation, which we discuss below.) For
stability questions, the key problem is the analysis of the linearized
equations, linearized at metrics close to (in the norms used) the
metric being perturbed, in decaying function spaces. Versions of a Newton iteration, such as a
Nash--Moser scheme, then imply the desired non-linear statement.

For the
linear equations in an asymptotic geometry like that of Kerr--de~Sitter
spaces, two key issues arise: control of derivatives, i.e., high
frequency behavior, and control of decay. The general scheme initiated
in \cite{V2013} means that these are both controlled if {\em the null-bicharacteristics}, meaning in this case the lifts of null-geodesics to phase space,
{\em are
well-behaved} (roughly, the only trapping is normally hyperbolic
trapping) and if a suitable {\em subprincipal symbol condition} is satisfied
at the trapped set. In this case, solutions of the linearized equation have an asymptotic
expansion, modulo exponential decay, into finitely many mode
solutions. The former condition was shown in the full subextremal regime in
\cites{PV2023,PV2024}, and {\em the latter condition} (which is automatic in
the setting of scalar wave equations) {\em is shown in the present
paper}. Thus, the only obstacle to working in decaying function spaces
is the potential presence of finitely many non-decaying mode solutions.
We hence say that {\em mode stability holds}
if there are no non-decaying mode solutions of the
linearized equation {\em except those corresponding to the natural geometric
  freedom}, namely infinitesimal diffeomorphisms and the linearization of the
Kerr--de~Sitter family in the parameters. These mode solutions should be thought of as asymptotic generalized
eigenfunctions of a(n asymptotic) Killing vector field $T$ which also
satisfy the linearized Einstein equation; see Assumption~\ref{ass:
  mode stability}  for the precise statement. Under the mode
stability assumption, the geometric modes can be incorporated
into the solution scheme (effectively removed by gauge choices plus
incorporated into changing the parameters $b=(m,a)$) and thus allow a
proof of the main theorem. Note
that if there are non-geometric mode solutions that do not decay in time, one cannot expect
solvability of the non-linear equation, as even simple ODE examples
show.

As already mentioned above, Einstein's equation is only well-behaved
upon the imposition of additional gauge conditions. A typical gauge
used is the harmonic, or wave, or DeTurck gauge (\cite{D1981}, see
also \cite{GL91} for additional context, and \cite{F-B1952} for the
original use in the Lorentzian context),
$\Upsilon_{g_{b_0}}(g)=0$; geometrically this means that the
identity map from $(M,g)$ to $(M,g_{b_0})$ is a wave map, i.e., a
critical point of the energy functional, and locally can also be expressed as
the vanishing of the difference of a contracted version of Christoffel symbols,
$g^{ij}({}^{g}\Gamma_{ij}^k-{}^{g_{b_0}}\Gamma_{ij}^k)=0$; see Section~\ref{sec:reduction}. With a suitable
implementation, namely considering
\begin{equation}
\label{EqIGaugeFixed}
\Ric(g)-\Lambda g-\delta_g^*\Upsilon_{g_{b_0}}(g)=0,
\end{equation}
with $\delta_g^*$ the symmetric gradient,
the Einstein equation is directly turned into a non-linear PDE of
the desired type, namely, a non-linear hyperbolic PDE. While this is locally well-behaved, a fundamental
problem is that the global behavior of this equation is not fully
geometric, essentially because the mode solutions of its linearization are not necessarily
those of the linearized version of $\Ric(g)-\Lambda g$. This is so
because the constraint propagation operator, $\delta_g
\T_g\circ\delta_g^*$ (with $\delta_g$ the divergence and $\T_g$ the trace-reversal), which is a one-form wave operator, may itself have
non-decaying mode solutions. (Note here that $\delta_g\T_g(\Ric(g)-\Lambda g)=0$ for any metric $g$, and thus $\delta_g\T_g\circ\delta_g^*$ annihilates $\Upsilon_{g_{b_0}}(g)$. If $\delta_g\T_g\circ\delta_g^*$ has non-decaying mode solutions, it may be that for a solution of~\eqref{EqIGaugeFixed} one has neither $\Ric(g)-\Lambda g=0$ nor $\Upsilon_{g_{b_0}}(g)=0$, even asymptotically.) There are however {\em choices}
here, both regarding {\em the
precise gauge} and also {\em the implementation of the gauge}. Regarding the
former, we use a {\em finite dimensional gauge modification} much as in the
slowly rotating case \cite{HV2018}. Regarding the latter, we use
{\em constraint damping}, as in \cite{HV2018}, but unlike \cite{HV2018} in a
manner that is {\em well-adapted to the specific $g_{b_0}$ around which
we are working}. Constraint damping modifies $\delta_g^*$ by a 0th
order term to $\tilde\delta^*=\delta_{g_{b_0}}+\A_{g_{b_0}}$,
with the goal of making $\delta_g
\T_g\circ\tilde\delta^*$ have no non-decaying modes and thus assuring that
solutions decay. One of the main technical achievements of this paper
is to prove that {\em this constraint damping indeed works in the full
subextremal range}. We remark that constraint damping also plays a key
role in numerical general relativity \cites{BFHR1999,Gund05,Pre2005}, and thus in the detection of
gravitational waves by LIGO and Virgo \cite{LIG2016}.

While we do not address the mode stability issue here, we  mention
that for scalar waves in the ``physically relevant'' subextremal
range, namely under the assumption that $\Lambda m^2$ is small, mode
stability has been shown by Hintz \cite{Hin2024M} by combining results on de
Sitter space (so no black hole) and the known mode stability of Kerr
spacetimes, due to Whiting \cite{W1989} and Shlapentokh-Rothman \cite{Shl2015}. It is expected that
similar methods would yield mode stability even in the tensorial case,
relying in part on the work of Andersson,
H\"afner and Whiting \cite{AHW24} in the $\Lambda=0$ case.

Stability for solutions of Einstein's equation has a long history both
in the $\Lambda>0$ and in the $\Lambda=0$ case. The former is
analytically better behaved, while maintaining the interesting
geometric features like black holes, trapping, etc, while the latter
is an analytically challenging degenerate limit. However,
algebraically the degenerate limit is simpler (as is often the case in
mathematics), and hence in the black hole setting mode stability is known in that case in the full
subextremal range, as referenced above, while this is not so for $\Lambda>0$.

The first general stability result was Friedrich's \cite{Fr86} on the
stability of global de~Sitter space, thus a global symmetric space,
for $\Lambda>0$. The analogous $\Lambda=0$ case, namely Minkowski space, took a few years longer and
was accomplished in the monumental work of Christodoulou and
Klainerman \cite{CK93}, later simplified and extended by Lindblad and
Rodnianski \cite{LR10}
and by Bieri and Zipser \cite{BZ09}, with a more precise description of the
solution, as an asymptotic expansion, given by Hintz and Vasy
\cite{HV20}, see also
Hintz \cite{Hin23E} which would in principle improve on the latter work. Parallel with this, the de~Sitter stability result was
localized by Ringstr\"om \cite{Rin2008}, which also brought the methods used for
$\Lambda>0$ and $\Lambda=0$ closer. The timing for the {\em slowly rotating}
black hole stability is similar: the $\Lambda>0$ case was proved first
by Hintz and Vasy \cite{HV2018}, followed recently by a series of rather involved papers in the
$\Lambda=0$ case by Klainerman,
Szeftel, Giorgi, and Shen \cites{KS20P,KS22C,KS22E,KS23K,GKS24,Sh2023}
as well as, under additional restrictions, by Dafermos, Holzegel, Rodnianski, and
Taylor \cites{DHR19T,DHR19S,DHRT21}. More recently, for $\Lambda>0$,
Fournodavlos and Schlue \cite{FS24} showed
the stability of the {\em cosmological or expanding region} given that of the black hole
exterior (domain of outer communications), i.e., the region between
the event and cosmological horizons, which was shortly followed by a
more localized work (in the spirit of Ringstr\"om \cite{Rin2008}) of
Hintz and Vasy \cite{HV24} which is directly compatible
with the gauges used in the earlier exterior stability work \cite{HV2018}
and indeed applies in the full subextremal range (and in a certain
sense beyond!), thus by the present work is only conditional on mode
stability in this range. In the $\Lambda>0$ case the present work completes the
picture in the full subextremal case apart from the conditional nature
in terms of mode stability. The $\Lambda=0$ case is still a
significant way from achieving a similar resolution in the full
subextremal range, though in that
case, as already mentioned, mode stability is known and moreover
stability for the linearized equation was very recently proved by
H\"afner, Hintz and Vasy \cite{HHV25}.

In the rest of the introduction, we recall the definition and
structure of subextremal Kerr--de~Sitter spacetimes in
Section~\ref{subsec: KdS spacetimes} and state the precise results in Section~\ref{subsec:results}.

\subsection{Kerr--de~Sitter spacetimes} \label{subsec: KdS spacetimes}

\noindent
Before defining subextremal Kerr--de~Sitter metrics, we start by
discussing the key ingredient in this, namely, the polynomial
\begin{equation} \label{eq: mu}
	\mu(r) := \left(r^2 + a^2\right)\left(1 -\frac{\Lambda r^2}3\right) - 2mr.
\end{equation}
The subextremality condition is that $\mu$ has four distinct real roots 
\[
	r_- < r_C < r_e < r_c,
\]
which is equivalent to the discriminant condition
\begin{equation} \label{eq: discriminant}
	- \left( 1 + \frac{\Lambda a^2}3 \right)^4 \left( \frac a m \right)^2 + 12 \left( 1 - \frac{\Lambda a^2}3 \right) \Lambda a^2 + \left(1 - \frac{\Lambda a^2}3 \right)^3 - 9 \Lambda m^2 > 0.
      \end{equation}
      This assures that the zeros are non-degenerate, i.e., $\mu'$
      does not vanish at these. The labeling corresponds to the
      terminology that $r=r_c$ is the cosmological horizon (which
      would be the only horizon in the limiting case $m=a=0$, which is
      de~Sitter space), $r=r_e$ is the event horizon, $r=r_C$ is the
      Cauchy horizon, with these latter two being typically present also in the
      limiting case $\Lambda=0$, which is Kerr spacetime. (There is no Cauchy horizon for $a=0$.) Our focus
      here is on a neighborhood of $[r_e,r_c]$.
      
The Kerr--de~Sitter spacetime, extended over the future event horizon and future cosmological horizon, is given by
\begin{equation} \label{eq: manifold}
	M 
		:= \R_{t_*} \times (r_e - 2\e_M, r_c + 2\e_M) _r \times \mathbb{S}^2_{\phi_*, \theta},
\end{equation}
for some small $\e_M > 0$ and the smooth metric
\begin{equation}
\begin{split}
	g
		=& - \frac{\mu(r)}{b^2\varrho^2}\left(\md t_* - a \sin^2(\theta) \md \phi_* \right)^2 + \varrho^2 \frac{1 - f(r)^2}{\mu(r)} \md r^2 \\
		& - \frac 2 b f(r) (\md t_* - a \sin^2(\theta) \md \phi_*)\md r \\*
		& + \frac{c(\theta)\sin^2(\theta)}{b^2\varrho^2}\left(a\md t_* - \left(r^2 + a^2\right) \md \phi_*\right)^2 + \varrho^2 \frac{\md \theta^2}{c(\theta)},
\end{split}
\label{eq: metric}
\end{equation}
where
\[
	b 
		:= 1 + \frac{\Lambda a^2}3, 
	\quad c(\theta) := 1 + \frac{\Lambda a^2}3 \cos^2(\theta),
	\quad \varrho^2 := r^2 + a^2 \cos^2 (\theta),
\]
and, as above,
\[
	\mu(r) 
		:= \left(r^2 + a^2\right)\left(1 -\frac{\Lambda r^2}3\right) - 2mr,
\]
and where $f\colon (r_e - 2\e_M, r_c + 2\e_M)\to\R$ is such that
\[
	f(r_e) = -1, \quad f(r_c) = 1.
\]
Since $\mu(r_e) = \mu(r_c) = 0$ and $\mu(r) > 0$ for $r \in (r_e, r_c)$, there is at least one 
\[
	r_0 \in (r_e, r_c)
\]
satisfying
\begin{equation} \label{eq: choice of r 0}
	\frac{\md}{\md r} \frac{\mu(r)}{r^2 + a^2}\Big|_{r = r_0}
		= 0,
\end{equation}
which we fix throughout the paper.
Following \cite{PV2023}*{Rmk.~1.1}, we may also assume that $f$ satisfies
\begin{equation}
\label{EqIf}
\begin{split}
	f'(r) 
		&> 0, \quad r \in (r_e - 2\e_M, r_c + 2\e_M), \\
	f(r_0)
		&= 0,
\end{split}
\end{equation}
and that $\md t_*$ is (past) timelike everywhere
on $M$.
 This form of
the metric is obtained from the more familiar Boyer--Lindquist version,
using the standard notation for $t$ and $\phi$ for that, by letting
\begin{equation}
	\begin{split} \label{eq: extended coordinates}
	t_*
		&:= t - \Phi(r), \\
	\phi_*
		&:= \phi - \Psi(r),
	\end{split}
\end{equation}
where $\Phi$ and $\Psi$ satisfy
\begin{align*}
		\Phi'(r)
			&= b \frac{r^2 + a^2}{\mu(r)} f(r), \\
		\Psi'(r)
			&= b \frac a{\mu(r)} f(r);
\end{align*}
see \cite{PV2023}*{Sec.~1.1}.
Note that the metric~\eqref{eq: metric} indeeds extends smoothly to all of $M$, i.e., in particular to the north and south poles of $\mathbb{S}^2$, corresponding to the points $\theta = 0, \pi$ where the coordinates are not defined.
Again by \cite{PV2023}*{Rmk.~1.1}, it is also possible to choose $f: (r_e - 2\e_M, r_c + 2\e_M) \to \R$ to be real analytic, in which case the metric~\eqref{eq: metric} extends to a real analytic metric on all of $M$.
In the real analytic case, the quasinormal modes are also real analytic, as shown in \cite{PV2021}.

\subsection{Main result}\label{subsec:results}

Fix $\Lambda > 0$.
Let $B \subset \R^4$ denote the $4$-dimensional space of parameters $(\vec a, m)$ determining the angular momentum (i.e., the rotation $a = \abs{\vec a}$ and the rotation axis) and mass of a subextremal Kerr--de~Sitter black hole.
We fix the Killing vector field
\begin{equation} \label{eq: T}
	T
		:= \d_{t_*} + \frac a{r_0^2 + a^2}\d_{\phi_*}
\end{equation}
throughout the paper - this will be our choice of the \emph{stationarity} of the Kerr--de~Sitter spacetime, c.f.~\cites{PV2023, PV2024}.

\begin{assumption}[The mode stability assumption] \label{ass: mode stability} 
Given any $b \in B$, let $(M, g_b)$ be the subextremal Kerr--de~Sitter spacetime extended beyond the horizons as in~\eqref{eq: manifold}, with parameters $b$.
\begin{enumerate}
\item Let $\sigma \in \C$ with $\Im(\s) \geq 0$ and $\sigma \neq 0$, and suppose that 
\[
	u
		\in C^\infty \left(M; S^2 T^*M \right)
\]
such that
\begin{align}
	\left( \De \Ric_g - \Lambda \right) u
		&= 0, \nonumber \\
	\left( \L_T + i \s \right) u
		&= 0. \label{eq: QNM assumption}
\end{align}
Then there exists a $1$-form $\o \in C^\infty \left(M; T^*M \right)$ such that
\begin{align}
	u
		&= \de_g^* \o. \nonumber
\end{align}
\item Let $k \in \N_0$ and suppose that $u \in C^\infty \left(M; S^2 T^*M \right)$ is such that
\begin{align}
	\left( \De \Ric_g - \Lambda \right) u
		&= 0, \nonumber \\
	\left( \L_T \right)^k u
		&= 0. \label{eq: QNM assumption 2}
\end{align}
Then there exist a $b' \in T_b B$ and a $1$-form $\o \in C^\infty \left(M; S^2 T^*M \right)$ such that
\begin{align*}
	u
		&= g'(b') + \de_g^* \o,
\end{align*}
where $g'(b')$ is the perturbation in the Kerr--de~Sitter family corresponding to $b'$.
\end{enumerate}
\end{assumption}

\begin{remark}
Note that~\eqref{eq: QNM assumption} is equivalent to the existence of a smooth $u_0$, with $\L_T u_0 = 0$, such that
\[
	u
		= e^{-i\s t_*} u_0,
\]
and~\eqref{eq: QNM assumption 2} is equivalent to the existence of smooth $u_j$, with $\L_T u_j = 0$ for $j = 0, \hdots, {k-1}$, such that
\[
	u
		= \sum_{j = 0}^{k-1} t_*^j u_j.
\]
\end{remark}

\begin{thm}[Conditional non-linear stability; full subextremal range]\label{thm: main}
Fix $b_0 \in B$ and let $(M, g_{b_0})$ be a subextremal Kerr--de~Sitter spacetime, extended beyond the horizons as in~\eqref{eq: manifold}, for which Assumption~\ref{ass: mode stability} holds. 
Let $t_0 \in \R$ and denote
\[
	\Sigma_0
		:= \{t_* = t_0\} \subset M.
\]
There is an $s_0 \in \R$ and an $\a > 0$ such that for all $s \geq s_0$, the following holds:
Let 
\[
	h, k \in C^\infty\left(\Sigma_0; S^2 T^*\Sigma_0 \right)
\]
be initial data satisfying the constraint equations for Einstein's vacuum equations
\begin{equation} \label{eq: constraint equations}
\begin{split}
	\Scal_h - \abs{k}_h^2 + \tr_h(k)^2 
		&= 2 \Lambda, \\
	\div_h k - \md \left( \tr_h k \right) 
		&= 0,
\end{split} 
\end{equation}
and suppose that $(h, k)$ are close to Kerr--de~Sitter spacetime initial data $(h_{b_0}, k_{b_0})$ in
\[
	H^{s+1}(\Sigma_0; S^2 T^*\Sigma_0) \oplus H^s(\Sigma_0; S^2 T^*\Sigma_0).
\]
Then there exist a subextremal Kerr--de~Sitter spacetime $g_b$, for a $b \in B$ close to $b_0$, and a smooth Lorentzian metric $g$ such that
\[
	\Ric_g - \Lambda g = 0, 
\]
$(h, k)$ are the first and second fundamental forms of $g$ on $\Sigma_0$, and
\[
	g - g_b
		\in \O_\infty \left( e^{-\a t_*} \right),
\]
along with all stationary derivatives with respect to $T$, defined in~\eqref{eq: T}.

Moreover, $g$ satisfies the gauge condition
\[
	\Upsilon_{g_{b_0}} \left( g \right) - \Upsilon_{g_{b_0}} \left( g_{b_0, b} \right) - \theta
		= 0,
\]
where 
\begin{equation} \label{eq: interpolating metric}
	g_{b_0, b} 
		:= \chi(t_*) g_b + \left( 1 - \chi(t_*) \right) g_{b_0},
\end{equation}
for a $\chi \in C^\infty(\R)$ such that $\chi(t_*) = 0$ for $t_* \leq t_0 + 1$ and $\chi(t_*) = 1$ for $t_* \geq t_0 + 2$, where $\theta \in \Theta_{b_0}$ and $\Theta_{b_0} \subset C_c^\infty(M; S^2 T^*M)$ is a particular finite-dimensional space associated with $b_0$, and where $\Upsilon$ is defined in~\eqref{eq: Upsilon def} below.
\end{thm}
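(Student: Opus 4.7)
The plan follows the general strategy of Hintz--Vasy~\cite{HV2018}, replacing inputs that were only available in the slowly rotating regime with the full-subextremal-range inputs now at hand. The first step is to reduce Einstein's equation to a quasilinear hyperbolic system via a modified wave map gauge with constraint damping. Concretely, I would solve
\[
\Ric(g)-\Lambda g-\tilde\delta^*\bigl(\Upsilon_{g_{b_0}}(g)-\Upsilon_{g_{b_0}}(g_{b_0,b})-\theta\bigr)=0,
\]
where $\tilde\delta^*=\delta_{g_{b_0}}^*+\A_{g_{b_0}}$ is the constraint-damped symmetric gradient to be constructed in a manner adapted to $g_{b_0}$, $b$ is the parameter correction, and $\theta\in\Theta_{b_0}$ is the finite-dimensional gauge modification. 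The black hole parameters $b$ and the gauge modification $\theta$ will be determined, as in \cite{HV2018}, by a fixed-point/Nash--Moser scheme that cancels the finitely many resonant modes allowed by Assumption~\ref{ass: mode stability}.

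The second step is the linear analysis for the linearization $L_{g_{b_0}}$ of the gauge-fixed equation. By the general Fredholm framework of \cite{V2013}, good mapping properties of $L_{g_{b_0}}$ on exponentially weighted Sobolev spaces follow once (i) the null-bicharacteristic flow is well-behaved, with only normally hyperbolic trapping, and (ii) a subprincipal symbol condition holds at the trapped set. Item (i) is provided by~\cites{PV2023,PV2024} in the full subextremal range, and item (ii) is one of the main technical inputs of the paper that I would verify by a direct computation of the subprincipal symbol of $\De\Ric_g-\Lambda$ at the trapped set, using the explicit tensorial structure of the linearized Einstein operator and the computations of the trapping in~\cite{PV2023}. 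With (i) and (ii), radial point estimates at the event and cosmological horizons, elliptic estimates at the real scattering end, and the trapping estimates combine to give a semi-Fredholm statement for $L_{g_{b_0}}$, and finite-order poles of its resolvent in $\Im\sigma\geq 0$.

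The third step, and the other main novelty, is to verify that constraint damping works in the full subextremal range, that is, to choose $\A_{g_{b_0}}$ so that the constraint propagation operator $\delta_g\T_g\tilde\delta^*$ has no non-decaying modes on $g_{b_0}$. I would follow the strategy of \cite{HV2018}: first, a pseudodifferential/energy estimate reduces the absence of modes with $\Im\sigma\geq 0$ to the absence of modes with $\Im\sigma=0$; second, a perturbation argument from a well-chosen reference operator (whose modes can be analyzed by commutation with $T$, separation in $r$ and the angular variables, and the normally hyperbolic trapping structure) shows that by taking $\A_{g_{b_0}}$ concentrated near the horizons one can push any would-be mode into the exponentially decaying half-plane. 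The adaptedness of $\A_{g_{b_0}}$ to the specific $g_{b_0}$ is essential here because, in contrast with the slowly rotating case, one cannot rely on a uniform choice inherited from Schwarzschild--de~Sitter.

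The fourth step combines these ingredients with Assumption~\ref{ass: mode stability} to extract the asymptotic expansion for solutions of the linearized gauge-fixed equation: modulo exponential decay $O(e^{-\alpha t_*})$, every solution is a sum of a generalized mode of the form $g'(b')+\delta_g^*\omega$ plus a contribution $\theta\in\Theta_{b_0}$ arising from the chosen finite-dimensional gauge freedom. Feeding this linear theory into a tame Nash--Moser iteration, with $(b,\theta)$ updated at each step to kill the finite-dimensional obstructions, produces a smooth solution $g$ of the gauge-fixed equation decaying to $g_b$. Finally, a standard argument using $\delta_g\T_g(\Ric(g)-\Lambda g)=0$, the initial-data matching via the constraint equations~\eqref{eq: constraint equations}, and the absence of non-decaying modes of $\delta_g\T_g\tilde\delta^*$ (the result of the constraint damping step) shows that $\Upsilon_{g_{b_0}}(g)-\Upsilon_{g_{b_0}}(g_{b_0,b})-\theta=0$ and hence $\Ric_g-\Lambda g=0$. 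I expect the step of implementing constraint damping uniformly in the full subextremal range, together with the verification of the tensorial subprincipal symbol condition at the trapped set, to be the hardest; everything else is in principle a careful extension of the framework of~\cite{HV2018}.
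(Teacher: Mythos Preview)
Your overall architecture is correct and matches the paper: reduce to the gauge-fixed equation, establish the linear Fredholm theory via \cites{V2013,PV2023,PV2024} together with the subprincipal symbol bound at the trapped set (proved here using the Marck tetrad to render $\pi^*\nabla_{H_G}$ and the first-order damping term simultaneously upper-triangular), combine with mode stability and constraint damping to show that all non-decaying linearized modes are geometric, and then run a Nash--Moser scheme with tame estimates.

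However, your Step~3 misdescribes the constraint damping mechanism, and the approach you sketch would not go through. The operator $\A_{g_{b_0}}$ is \emph{not} concentrated near the horizons; it is the globally defined map $\omega\mapsto -h^{-1}\T_g^{-1}(\c\otimes_s\omega-eG(\c,\omega)g)$ for a specific \emph{global} timelike one-form $\c$, and the smallness of $h$ (large damping) is essential. There is no reduction from $\Im\sigma\ge 0$ to $\Im\sigma=0$ followed by a perturbation off a separable reference operator. Instead, the paper treats $\P_e=2h^2\delta_g\T_g(\delta_g^*+\A_g)$ as a semiclassical operator with parameter $h$ and proves directly that it is invertible on exponentially decaying spaces for $h$ small by combining (a) ellipticity at finite nonzero frequency, (b) high-frequency propagation with a favorable skew-adjoint sign at the trapped set and radial points (the $h^{-1}$ factor erases the usual thresholds), and (c) low-frequency propagation of semiclassical singularities along the integral curves of $\c^\sharp$. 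Part (c) is the delicate point: $\c^\sharp$ is chosen so that its only critical points on the compactified spacetime lie at the north and south poles of $\mathbb{S}^2$ at $r=r_0$, $s=0$, where the metric simplifies enough to verify $\n_{\c^\sharp}\c=0$, $\md G(\c,\c)=0$ and a negativity of the skew-adjoint part; away from these saddle points one propagates by positive-commutator arguments along $\c^\sharp$. None of this uses separation of variables or perturbation from a reference constraint propagation operator.

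A minor point: in your final paragraph, the vanishing of the gauge one-form $\Upsilon_{g_{b_0}}(g)-\Upsilon_{g_{b_0}}(g_{b_0,b})-\theta$ for the \emph{nonlinear} solution follows from standard uniqueness for the hyperbolic constraint propagation equation with vanishing Cauchy data, not from the absence of non-decaying modes. The constraint damping result is used earlier, in the \emph{linear} theory, to show that any non-decaying mode of the linearized gauge-fixed operator already satisfies $\delta_{g_{b_0}}\T_{g_{b_0}}u=0$ and hence solves the linearized Einstein equation, at which point Assumption~\ref{ass: mode stability} applies.
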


\begin{remark}
Note here that the proven decay rate $\a > 0$ is independent of the choice of initial data. 
\end{remark}

We complete the introduction by discussing the structure of the
remainder of the paper. First, in Section~\ref{sec:reduction} we
prove that given the background results, the main theorem follows once
two key ingredients are proved: suitable structure of the subprincipal
operator at the trapped set (ST), and constraint damping (CD), i.e.\
that for the constraint propagation equation solutions, and in particular
mode solutions, are decaying. In Section~\ref{sec: spectral gap} we
prove the desired properties of the subprincipal operator at the
trapped set. In Section~\ref{sec: CD} we finally prove the most
demanding ingredient of our proof, namely that constraint damping
works in the full subextremal range. This requires {\em large}
constraint damping, as in \cite{HV2018}, and unlike in Kerr settings;
for convenient references to the existing literature we deal with this
(equivalently) as a semiclassical problem.

\section{Proof of conditional non-linear stability; reduction to ST and CD}\label{sec:reduction}

The purpose of this section is to reduce the proof of Theorem~\ref{thm: main} to Theorem~\ref{thm: spectral gap} and Theorem~\ref{thm: stable constraint propagation} below.
Fix a $\Lambda > 0$.
We are interested in solving Einstein's vacuum equation with a cosmological constant $\Lambda$, given by
\begin{equation} \label{eq: Einstein}
	\Ric_g - \Lambda g
		= 0.
\end{equation}
We use the sign convention $(-,+,+,+)$ for the metric.
Since this equation is not hyperbolic, we need to fix a gauge in order to solve it.
We first fix a subextremal Kerr--de~Sitter metric $g_{b_0}$. 
Instead of directly solving~\eqref{eq: Einstein}, we instead look for a solution $(b, g, \theta)$ of the \emph{gauge fixed Einstein equation}
\begin{equation} \label{eq: gauge fixed Einstein}
	\Ric_g - \Lambda g + \left(\de_{g_{b_0}}^* + \A_{g_{b_0}} \right) \left( \Upsilon_{g_{b_0}}(g) - \Upsilon_{g_{b_0}}\left(g_{b_0, b} \right) - \theta \right)
		= 0,
\end{equation}
where
\begin{itemize}
	\item $g_{b_0, b}$ was defined in~\eqref{eq: interpolating metric},
	\item $\Upsilon_{g_{b_0}}(g)$ is the \emph{gauge one-form}
\begin{equation} \label{eq: Upsilon def}
	\Upsilon_{g_{b_0}}(g)
		:= \ g \left( g_{b_0} \right)^{-1} \de_g \T_g g_{b_0},
\end{equation}
	where we denote the `trace reversal operator' by
\begin{equation} \label{eq: trace reversal}
	\T_g a
		:= a - \frac12 \tr_g(a) g
\end{equation}
for any $a \in S^2T^*M$,
	\item $\A_{g_{b_0}}$ is a homomorphism from $T^*M$ to $S^2 T^*M$,
	\item $\theta$ is a one-form that will be chosen from a fixed finite-dimensional space $\Theta_{b_0}$, which we construct below.
\end{itemize}
The role of $\theta$ is to make sure that the linearized version of~\eqref{eq: gauge fixed Einstein} does not admit any non-decaying mode solutions apart from linear perturbations within the family of Kerr--de~Sitter spacetimes.
Note that the gauge fixed Einstein equation~\eqref{eq: gauge fixed Einstein} is a quasilinear wave equation in $g$ for any fixed $b$ and $\theta$.
Our setup generalizes that of \cite{HV2018} to the full subextremal range of Kerr--de~Sitter black holes.

Assuming that a metric $g$ satisfies~\eqref{eq: gauge fixed Einstein}, one would like to show that it also satisfies~\eqref{eq: Einstein}.
We follow the celebrated method of Choquet-Bruhat in her proof of local existence and geometric uniqueness for the Cauchy problem for the Einstein equations in \cite{F-B1952}, here in the language of de~Turck \cite{D1981}. 
Recall first that the Einstein tensor of any semi-Riemannian metric $g$ is divergence-free (this standard fact follows by contracting the second Bianchi identity appropriately), 
\begin{equation}\label{eq: EE divergence free}
	\de_g \T_g \Ric_g
		= \de_g \left( \Ric_g - \frac12 \Scal_g g \right)
		= 0.
\end{equation}
Therefore, applying $2 \de_g \T_g$ to~\eqref{eq: gauge fixed Einstein}, we conclude that
\begin{align}
	0
		&= 2 \de_g \T_g \left( \de_{g_{b_0}}^* + \A_{g_{b_0}} \right) \left( \Upsilon_{g_{b_0}}(g) - \Upsilon_{g_{b_0}}(g_{b_0, b}) - \theta \right) \nonumber \\
		&= \left( \Box_g + 2\de_g \T_g \A_{g_{b_0}} + 2 \de_g \T_g \left( \de_{g_{b_0}}^* - \de_g^*  \right) - \Lambda \right) \left( \Upsilon_{g_{b_0}}(g) - \Upsilon_{g_{b_0}}(g_{b_0, b}) - \theta \right), \label{eq: gauge vanishing equation}
\end{align}
where we in the last step have used the standard equality
\begin{equation} \label{eq: wave equation identity}
	2 \de_g \T_g \de_g^* 
		= \Box_g - \Ric_g
\end{equation}
for any Lorentzian metric $g$.
Both the terms $2\de_g \T_g \A_{g_{b_0}}$ and $2 \de_g \T_g ( \de_{g_{b_0}}^* - \de_g^* )$ in~\eqref{eq: gauge vanishing equation} are of first order, and in particular of lower order than $\Box_g$.
The function $\theta$ will be supported away from $\Sigma_0$ and $g_{b_0, b}$ coincides with $g_{b_0}$ near $\S_0$ by construction.
Hence, by a standard argument (c.f.~\cite{HV2018}*{Prop.~3.10}), we may choose $g$ so that
\[
	\left( \Upsilon_{g_{b_0}}(g) - \Upsilon_{g_{b_0}}(g_{b_0, b}) - \theta \right)|_{\Sigma_0}
	 	= \Upsilon_{g_{b_0}}(g)|_{\Sigma_0}
	 	= 0.
\]
In that case, the constraint equations will imply that
\[
	\n \left( \Upsilon_{g_{b_0}}(g) - \Upsilon_{g_{b_0}}(g_{b_0, b}) - \theta \right) |_{\Sigma_0}
		= \n \Upsilon_{g_{b_0}}(g) |_{\Sigma_0}
	 	= 0.
\]
Applying the standard uniqueness statement for linear wave equations to~\eqref{eq: gauge vanishing equation} therefore implies that 
\[
	\Upsilon_{g_{b_0}}(g) - \Upsilon_{g_{b_0}}(g_{b_0, b}) - \theta
	 	= 0
\]
on the domain of dependence of $\Sigma_0$.
This fact, together with~\eqref{eq: gauge fixed Einstein}, then finally implies that Einstein's vacuum equation with a cosmological constant $\Lambda$, i.e.,~\eqref{eq: Einstein}, is satisfied.

\subsection{Choice of constraint damping}
As already noticed, the first order term $2\de_g \T_g \A_{g_{b_0}}$ in~\eqref{eq: gauge fixed Einstein} does not change the character of the equation. 
We can therefore choose $\A_{g_{b_0}}$ quite freely; it will be used to \emph{damp} solutions to~\eqref{eq: gauge fixed Einstein}. 
The classical de~Turck gauge, c.f.~\cite{D1981}, is the case when $\A_{g_{b_0}} = 0$.
However, even when proving the non-linear stability of the static part of the de~Sitter spacetime (formally given by the Kerr--de~Sitter metric for $a = m = 0$), choosing $\A_{g_{b_0}} = 0$ leads to the existence of \emph{growing non-gauge mode solutions} for the linearized version of~\eqref{eq: gauge fixed Einstein}, see \cite{HV2018}*{App.~C} for explicit computations.
We prove here that one can choose $\A_{g_{b_0}}$ such that all growing solutions to the linearized version of~\eqref{eq: gauge fixed Einstein} are gauge solutions, for any subextremal Kerr--de~Sitter spacetime $g_{b_0}$.
We refer to this as \emph{constraint damping}; an idea going back to at least \cite{BFHR1999}. 
This played a crucial role in the proof of non-linear stability for slowly rotating Kerr--de~Sitter spacetimes by Hintz and Vasy in \cite{HV2018}.
It is also related to the choice of `gauge source functions' discussed by Friedrich in \cites{F1996}.

Concretely, we choose the homomorphism $\A_{g_{b_0}}$ to be
\begin{equation} \label{eq: Ag choice}
	\A_{g_{b_0}} \o
		:= - h^{-1} \T_{g_{b_0}}^{-1} \left( \c \otimes_s \o - e G_{b_0}(\c, \o) g_{b_0} \right)
\end{equation}
for any $\o \in T^*M$, where $e, h > 0$ are sufficiently small, where $\c$ is a \emph{timelike} one-form and where $G_{g_0}$ is the dual metric of $g_{b_0}$.
This form of $\A_{g_{b_0}}$ was also used in \cite{HV2018}, though the notation was slightly different, but one of the key novelties in this paper is that we make a significantly different choice of $\c$ compared to \cite{HV2018}.
The factor $e > 0$ in~\eqref{eq: Ag choice} is a small number avoiding certain degeneracies when $e = 0$.
It is conceivable that one can prove similar results when $e = 0$, but this seems to require more involved microlocal analysis. 
We choose the dual of the timelike one-form $\c$ to be 
\begin{equation} \label{eq: c sharp first}
	\c^{\sharp}
		:= \d_{t_*} + \frac a{r^2 + a^2} \d_{\phi_*} + \g f(r) \d_r + \g^2 \sin(\theta) \d_\theta,
\end{equation}
for a particular choice of the function $f$ in Section~\ref{subsec: KdS spacetimes}, and where $\g$ is a small parameter ensuring in particular that $\c^\sharp$ is timelike.
Our implementation of constraint damping in the full subextremal range of Kerr--de~Sitter spacetime is one of the key novelties of this paper.

\begin{remark}
The proof of our constraint damping result, Theorem~\ref{thm: stable constraint propagation}, will essentially be based on semiclassical propagation of singularities along $\c^\sharp$ (at zero frequency). 
At the critical points of $\c^\sharp$, the operator will have to satisfy a certain extra criterion, which is in general quite complicated to check.
By choosing $\c^\sharp$ as in~\eqref{eq: c sharp first}, we make sure that $\c^\sharp$ only vanishes on the north and south pole of the $\mathbb{S}^2$, i.e., where $\theta = 0, \pi$. 
At these points the Kerr--de~Sitter metric simplifies significantly, enabling us to check that the criterion is fulfilled (see Lemma~\ref{le: k conformal computation}).
\end{remark}

\subsection{The asymptotic expansion of linear perturbations}
The proof of conditional non-linear stability requires a precise understanding of the linearized problem.
For this, we start by discussing the asymptotic expansion of solutions to the linearization of the gauge fixed Einstein equation~\eqref{eq: gauge fixed Einstein} around a subextremal Kerr--de~Sitter metric $g_{b_0}$.
The linearization of the Ricci curvature at a metric where $\Ric_g = \Lambda g$ is given by (c.f.~\cite{B2007})
\[
	\De \Ric_g ( u ) 
		= \frac12 \Box_g u - \de_g^* \de_g \T_g u + \Rc_g u,
\]
where
\[
	( \Rc_g u)_{\a\b}
		:= - {R_\a}^{\gamma \de}{}_\b u_{\gamma \de} - \Lambda u_{\a \b}.
\]
A straightforward computation shows that $\De \Upsilon_{g}(u) = \de_g \left( \T_g u \right)$.
Since $\Upsilon_{g_{b_0}}(g_{b_0}) = 0$, we conclude that the linearization of~\eqref{eq: gauge fixed Einstein} in $g$, at $b = b_0$ and $\theta = 0$, is given by
\begin{equation} \label{eq: gauge fixed lin}
\begin{split}
	\Lop_{b_0} u
		:= & 2 \De_{g_{b_0}} \left( \Ric_g - \Lambda g + \left( \de_{g_{b_0}}^* + \A_{g_{b_0}} \right) \left( \Upsilon_{g_{b_0}}(g) - \Upsilon_{g_{b_0}}\left(g_{b_0, b} \right) - \theta \right) \right) u \\
		= & \ 2 \left( \De_{g_0} \Ric - \Lambda + \left( \de_{g_{b_0}}^* + \A_{g_{b_0}} \right) \de_{g_{b_0}} \T_{g_{b_0}} \right) u \\
		= & \ \left( \Box_{g_{b_0}} + 2 \A_{g_{b_0}} \de_{g_{b_0}} \T_{g_{b_0}} + 2 \Rc_{g_{b_0}} \right) u,
\end{split}
\end{equation}
where we recall that $\A_{g_{b_0}}, \T_{g_{b_0}}$ and $\Rc_{g_{b_0}}$ are homomorphisms, i.e., differential operators of order zero.
We call $\Lop_{b_0}$ the \emph{linearized gauge fixed Einstein operator}.
The first result towards Theorem~\ref{thm: main} is that this operator has a discrete set of quasinormal modes:

\begin{thm}[Quasinormal modes] \label{thm: QNM}
Let $(M, g_{b_0})$ be a subextremal Kerr--de~Sitter spacetime extended beyond the horizons as in~\eqref{eq: manifold}.
Let $\Lop_{b_0}$ be the linearized gauge fixed Einstein operator defined in~\eqref{eq: gauge fixed lin}.
Then there are sequences
\begin{align*}
	&\s_1, \s_2, \hdots \in \C, \\
	&k_1, k_2, \hdots \in \N,
\end{align*}
such that the following holds:
If $u \in C^\infty(M)$ is a non-trivial solution to
\begin{equation} \label{eq: generalized mode equation}
\begin{split}
	\Lop_{b_0} u
		&= 0, \\
	\left( \L_T + i \s \right)^k u
		&= 0,
\end{split}
\end{equation}
for a $\s \in \C$ and a $k \in \N$, then $\s = \s_j$ and $k \leq k_j$ for some $j \in \N$.
Moreover, for any fixed $\s \in \C$ with $\Im(\sigma) \geq 0$, the vector space of $u \in C^\infty(M)$ solving~\eqref{eq: generalized mode equation}, for some $k \in \N$, is finite dimensional.
Furthermore, only finitely many $\s_j$ have non-negative imaginary part. 
\end{thm}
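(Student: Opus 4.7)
The plan is to reduce Theorem~\ref{thm: QNM} to the meromorphic continuation of a spatial resolvent family associated with $\Lop_{b_0}$, via the Fredholm framework of \cite{V2013} adapted to the Kerr--de~Sitter setting.

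Since $T$ is a Killing vector field for $g_{b_0}$ and $\Lop_{b_0}$ is built from stationary tensorial quantities, one has $[\L_T, \Lop_{b_0}] = 0$. A generalized mode satisfying $(\L_T + i\sigma)^k u = 0$ therefore admits a finite expansion in powers of a coordinate along the $T$-flow. Substituting this ansatz into $\Lop_{b_0} u = 0$ produces a triangular system on a three-dimensional quotient slice $X$ for a holomorphic operator family $\widehat{\Lop}_{b_0}(\sigma)$, the Mellin transform of $\Lop_{b_0}$ with respect to $T$; a non-trivial generalized mode of order $k$ at frequency $\sigma$ corresponds precisely to a Jordan chain of length $k$ for this family at $\sigma$, and vice versa.

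I would next establish the Fredholm property of $\widehat{\Lop}_{b_0}(\sigma)$ on variable-order b-Sobolev spaces on the compactification of $X$ at the event and cosmological horizons. The necessary estimates are: (i) interior ellipticity, which is immediate since the principal part of $\Lop_{b_0}$ is the scalar tensor wave operator $\Box_{g_{b_0}}$; (ii) radial point estimates at the two horizons, valid by the non-degeneracy built into subextremality, with the usual threshold between the variable differentiability order, the surface gravities and $\Im \sigma$; and (iii) a semiclassical normally hyperbolic trapping estimate, whose classical part is supplied in the full subextremal range by \cites{PV2023, PV2024}, and whose symbolic positivity input is the spectral gap at the trapped set furnished by Theorem~\ref{thm: spectral gap} below.

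Analytic Fredholm theory then yields that $\widehat{\Lop}_{b_0}(\sigma)^{-1}$ is meromorphic in $\sigma \in \C$, provided invertibility holds at a single point; such a point is produced in the deep lower half-plane by a semiclassical high-energy estimate in $h = |\sigma|^{-1}$. The quasinormal frequencies $\sigma_j$ are precisely the poles of this family, the finite order of each pole bounds the maximal Jordan chain length $k_j$, and the Laurent data at each pole give finite-dimensional generalized mode spaces. For the finiteness statement in $\{\Im \sigma \geq 0\}$, the same semiclassical estimate, now in the closed upper half-plane, gives invertibility for $|\sigma|$ sufficiently large thanks to the spectral gap at the trapped set, so meromorphy confines poles with $\Im\sigma_j \geq 0$ to a compact region, on which they can only be finite in number.

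The main technical obstacle is precisely the spectral gap condition at the trapped set: the trapping is normally hyperbolic in the full subextremal range by \cites{PV2023, PV2024}, but the transverse expansion rates permit the Fredholm estimate only when the \emph{matrix-valued} subprincipal symbol of $\Lop_{b_0}$ at the trapped set has imaginary part strictly bounded above by the transverse expansion rate. For a tensorial wave operator this is not automatic; that it nevertheless holds here is the content of Theorem~\ref{thm: spectral gap}, and the remainder of the argument above is a structured assembly of \cite{V2013} Fredholm theory with the trapping inputs of \cites{PV2023, PV2024}.
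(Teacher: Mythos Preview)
Your proposal is correct and follows the same overall route as the paper: reduce to the stationary family $\widehat{\Lop}_{b_0}(\sigma)$, invoke the Fredholm framework of \cite{V2013} together with the bicharacteristic-flow analysis of \cites{PV2023,PV2024}, and read off the quasinormal mode structure from the meromorphy of the inverse.

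There is one point where you diverge from the paper. You list the normally hyperbolic trapping estimate, with the subprincipal symbol input from Theorem~\ref{thm: spectral gap}, among the ingredients needed to set up the Fredholm problem and to obtain the finiteness of resonances with $\Im\sigma\geq 0$. The paper is more parsimonious here: it observes that for Theorem~\ref{thm: QNM} the subprincipal part of $\Lop_{b_0}$ only shifts the regularity threshold at the radial points by a finite amount and does not affect propagation or elliptic estimates, so the analysis of \cite{V2013} (with complex absorption at the trapped set) together with \cites{PV2023,PV2024} already yields discreteness, finite multiplicity, and the finiteness statement in the closed upper half-plane \emph{without} invoking Theorem~\ref{thm: spectral gap}. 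The subprincipal bound at the trapped set is reserved for Theorem~\ref{thm: expansion linearized}, where one must dispense with complex absorption and obtain genuine resolvent bounds in a strip via the trapping estimates of \cites{WZ2011,Dya2015,Dya2015b,Dya2016}. Your argument is not wrong---it is a perfectly valid proof---but it imports a hypothesis that the paper shows is unnecessary at this stage.
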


The very last part of this statement is particularly important, since it says that there is only a finite dimensional set of generalized quasinormal modes that do not decay.
We will prove Theorem~\ref{thm: QNM} together with the second result towards Theorem~\ref{thm: main}, which says that solutions to the linearized gauge fixed Einstein equation have asymptotic expansions in quasinormal modes:

\begin{thm}[The asymptotic expansion] \label{thm: expansion linearized}
Let $(M, g_{b_0})$ and $\Lop_{b_0}$ be as in Theorem~\ref{thm: QNM}.
There exist $s_0 \in \R$, $\a > 0$, and $\de \in (0, 1)$ such that for any $t_0 \in \R$, any $s \geq s_0$ and any $f \in e^{- \a t_*} \bar H^{s - 1 + \de}(M; S^2 T^*M)$ with $\supp(f) \subset \{t_* > t_0\}$, the unique solution to
\[
	\Lop_{b_0} u 
		= f
\]
with $\supp(u) \subset \{t_* > t_0\}$ has an asymptotic expansion
\begin{align*}
	u - \sum_{j = 1}^N u_j 
		\in e^{- \a t_*} \bar H^s(M; S^2 T^*M),
\end{align*}
where $\left( \L_T + i \s_j \right)^{k_j} u_j = 0$ for $\s_j$ and $k_j$ as in Theorem~\ref{thm: QNM}.
\end{thm}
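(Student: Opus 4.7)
The plan is to prove Theorems~\ref{thm: QNM} and~\ref{thm: expansion linearized} together by setting up a meromorphic Fredholm framework for the Fourier--Laplace conjugate family $\hat{\Lop}_{b_0}(\sigma)$ and then performing contour deformation, following the template established in \cite{HV2018} for the slowly rotating case. Since both $\partial_{t_*}$ and $\partial_{\phi_*}$ are Killing for $g_{b_0}$, and $\c^\sharp$ in~\eqref{eq: c sharp first} depends only on $(r,\theta)$, the operator $\Lop_{b_0}$ commutes with $\partial_{t_*}$; so for $\Im\sigma \gg 0$ one conjugates
\[
  \hat{\Lop}_{b_0}(\sigma) \;:=\; e^{i\sigma t_*}\, \Lop_{b_0}\, e^{-i\sigma t_*}
\]
to obtain a holomorphic family of second-order operators on the spatial slice $\Sigma := \{t_* = 0\}$ acting on sections of $S^2 T^*M|_\Sigma$. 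A generalized $T$-mode at frequency $\sigma_j$ will correspond to a pole of $\hat{\Lop}_{b_0}(\sigma)^{-1}$ at $\sigma = \sigma_j$, and the asymptotic expansion will be extracted by inverting the Fourier--Laplace transform and shifting the contour downward, collecting residues at the poles.

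\textbf{Fredholm property and meromorphic continuation.} I would next verify that $\hat{\Lop}_{b_0}(\sigma)$ fits the microlocal framework for meromorphic continuation of resolvents. Three ingredients are required. First, at the event and cosmological horizons the bicharacteristic flow of $\hat{\Lop}_{b_0}(\sigma)$ has radial source/sink structure, so Melrose-type radial point estimates apply, with thresholds controlled by the surface gravities. Second, on the remainder of the characteristic set, the only trapping is normally hyperbolic by \cites{PV2023,PV2024}, so one may apply the bundle-valued Dyatlov--Hintz trapping estimates. Third, the condition under which the last estimate holds---namely a spectral gap for the subprincipal operator at the trapped set---is precisely the content of Theorem~\ref{thm: spectral gap}. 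Combining the three estimates, $\hat{\Lop}_{b_0}(\sigma)$ is Fredholm between suitable variable-order Sobolev spaces on $\Sigma$ for $\sigma$ in a half-plane $\{\Im\sigma > -\alpha\}$, with $\alpha > 0$ determined by the radial point weights and the trapping gap. Invertibility at some $\sigma_*$ with $\Im\sigma_* \gg 0$ follows from a standard semiclassical energy estimate, so the analytic Fredholm theorem yields meromorphy of $\hat{\Lop}_{b_0}(\sigma)^{-1}$ with finite-rank Laurent coefficients. This proves Theorem~\ref{thm: QNM}: the $\sigma_j$ are the poles, the $k_j$ their orders, generalized kernels are automatically finite-dimensional, and finiteness of poles in $\{\Im\sigma \geq 0\}$ follows from polynomial high-energy bounds on $\hat{\Lop}_{b_0}(\sigma)^{-1}$ away from a compact subset of the relevant strip (once more using the spectral gap).

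\textbf{Contour deformation for the expansion.} Given $f$ as in the hypothesis, let $u$ denote the unique forward solution to $\Lop_{b_0} u = f$ with $\supp u \subset \{t_* > t_0\}$, whose existence is standard for the principally scalar wave operator $\Lop_{b_0}$. The support and decay assumptions on $f$ make $\hat f(\sigma) := \int e^{i\sigma t_*} f\, dt_*$ holomorphic in $\{\Im\sigma > -\alpha\}$, with polynomial bounds in $|\Re\sigma|$ in $\bar H^{s-1+\delta}(\Sigma)$. For $\Im\sigma = C \gg 0$ one has the representation
\[
  u(t_*,\cdot) \;=\; \frac{1}{2\pi}\int_{\Im\sigma = C} e^{-i\sigma t_*}\, \hat{\Lop}_{b_0}(\sigma)^{-1} \hat f(\sigma)\, d\sigma.
\]
Shifting the contour down to $\{\Im\sigma = -\alpha\}$ (perturbing $\alpha$ slightly so that no pole lies on the new contour), one collects residues at the finitely many poles $\sigma_j$ encountered in the strip. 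Each residue has the form $e^{-i\sigma_j t_*}$ times a polynomial in $t_*$ of degree less than $k_j$ with coefficients in a finite-dimensional space of $T$-invariant symmetric $2$-tensors, i.e.\ a $u_j$ satisfying $(\L_T + i\sigma_j)^{k_j} u_j = 0$. The remaining contour integral on $\{\Im\sigma = -\alpha\}$ converges in $e^{-\alpha t_*}\bar H^s$, with the Sobolev order gain $\delta > 0$ absorbing the polynomial loss from the high-energy bounds on $\hat{\Lop}_{b_0}(\sigma)^{-1}$. This yields the expansion as claimed.

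\textbf{Main obstacle.} The hardest step is the bundle-valued normally hyperbolic trapping estimate. On a scalar operator the Dyatlov--Wunsch--Zworski estimate applies unconditionally in the normally hyperbolic regime, but for $\hat{\Lop}_{b_0}(\sigma)$ acting on symmetric $2$-tensors, commutator errors at the trapped set involve the subprincipal operator on the bundle fiber; if its spectrum fails to lie in a suitable half-plane, the Fredholm property and the polynomial high-energy bounds both fail, destroying the contour argument. The subprincipal spectral gap furnished by Theorem~\ref{thm: spectral gap} is precisely what is required to overcome this, and verifying it in the \emph{full} subextremal range for $\Lop_{b_0}$ on symmetric $2$-tensors is the nontrivial input; this is the content of Section~\ref{sec: spectral gap}. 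A secondary technical point is the bookkeeping of variable Sobolev orders so that radial, propagation, and trapping estimates are compatible in a single Fredholm framework.
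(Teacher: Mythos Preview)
Your proposal is correct and follows essentially the same approach as the paper: both reduce the proof to the Vasy/Petersen--Vasy Fredholm framework \cites{V2013,PV2023,PV2024} (radial point estimates at the horizons, normally hyperbolic trapping estimates in the interior), with the only new ingredient being the subprincipal symbol bound at the trapped set supplied by Theorem~\ref{thm: spectral gap}. The paper's proof simply cites this machinery as a black box, whereas you unpack the contour-deformation mechanism it contains; one minor point is that the stationarity vector field used for the mode decomposition in \cites{PV2023,PV2024} and in the statement of the theorem is $T$ rather than $\partial_{t_*}$, but since both are Killing and $\Lop_{b_0}$ commutes with each, this is a harmless change of variables.
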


From this theorem, we conclude that solutions to the linearized gauge fixed Einstein equation~\eqref{eq: gauge fixed lin} decay, apart from at most a finite-dimensional set of generalized quasinormal mode solutions of the form
\[
	u_j
		= e^{- i \s_j t_*} \sum_{k = 0}^{k_j-1} t_*^k v_k
		\in \ker\left(\Lop_{b_0} \right)
\]
where $\L_T v_k = 0$.

\begin{remark}
As explained in Section~\ref{subsec: partial compactification} below, there is a natural partial compactification where the coordinate
\[
	\tau
		:= e^{-t_*}
\]
is a boundary defining function of timelike infinity.
With respect to this compactification, the relevant Sobolev spaces are just given by the b-Sobolev spaces 
\[
	e^{- \a t_*} \bar H^s(M; S^2 T^*M)
		= \tau^{\a} \bar H_{\mathrm b}^s\left(M; S^2 \left( \bT^*M \right) \right).
\]
\end{remark}

A key novelty in this paper is an ingredient in the proof of Theorem~\ref{thm: expansion linearized}, namely a bound for the subprincipal part of $\Lop_{b_0}$ at the trapping, proven below in Theorem~\ref{thm: spectral gap}.

\begin{proof}[Proof of Theorem~\ref{thm: QNM} and Theorem~\ref{thm: expansion linearized}, assuming Theorem~\ref{thm: spectral gap}.]
As already noted, the linearized gauge fixed Einstein equation~\eqref{eq: gauge fixed lin} is a linear wave equation.
We would like to apply \cite{PV2023}*{Thm.~1.6}, or indeed its generalization \cite{PV2024}*{Thm.~1.13}.
However, in order to not overcomplicate the presentation, \cite{PV2024}*{Thm.~1.13} was formulated for scalar wave equations with vanishing subprincipal part. 
However, the linearized gauge fixed Einstein equation~\eqref{eq: gauge fixed lin} has a non-trivial subprincipal part, both since it is a tensorial equation and due to the first order term $2 \A_{g_{b_0}} \de_{g_{b_0}} \T_{g_{b_0}}$.
The subprincipal part of the wave operator only affects the radial point estimates in \cite{V2013} by shifting the required regularity by a finite amount and it does not affect the standard propagation of singularities and elliptic estimates.
Therefore, the analysis of Vasy~\cite{V2013}, and the extension to the full subextremal range of Kerr--de~Sitter spacetimes by Petersen--Vasy in \cites{PV2023, PV2024}, applies to conclude the proof of Theorem~\ref{thm: QNM}.

Proving Theorem~\ref{thm: expansion linearized}, the subprincipal part needs to satisfy a suitable bound at the trapped set in order for the microlocal trapping estimates by Wunsch--Zworski in \cite{WZ2011} and by Dyatlov in \cites{Dya2015, Dya2015b, Dya2016} to apply (see also \cite{H2021} and \cite{J2025} for further improvements on this). 
Proving that such a bound indeed satisfied in the full subextremal range of Kerr--de~Sitter spacetimes is one of the novelties in this paper, see Theorem~\ref{thm: spectral gap} below. 
By Theorem~\ref{thm: spectral gap}, the analysis of Vasy~\cite{V2013}, and the extension to the full subextremal range of Kerr--de~Sitter spacetimes by Petersen--Vasy in \cites{PV2023, PV2024}, applies to conclude the proof of Theorem~\ref{thm: expansion linearized}.
\end{proof}

As a corollary, we will prove the asymptotic expansion of solutions for the initial value problem.
For this, we first introduce the functions spaces that will be used throughout the rest of the paper.
\begin{definition}
Fixing a (small) $\e_M > 0$, we define
\[
	\Omega
		:= [t_0, \infty) \times [r_e - \e_M, r_c + \e_M]_r \times \mathbb{S}^2_{\phi_*, \theta}
		\subset M,
\]
with the induced metric $g$.
For $s, \a \in \R$, and a vector bundle $E \to M$, we define the function spaces
\[
	\bar H^{s, \a}(\Omega; E)
		:= e^{-\a t_*}\bar H^{s}(\Omega; E)
\]
and
\[
	D^{s, \a}(\Omega; E)
		:= \bar H^{s, \a}(\Omega; E) \oplus \bar H^{s + 1} \left(\Sigma_0, E|_{\Sigma_0} \right) \oplus  \bar H^s \left(\Sigma_0, E|_{\Sigma_0} \right),
\]
with the norm
\[
	\norm{(f, u_0, u_1)}_{D^{s, \a}}
		:= \norm{f}_{\bar H^{s, \a}_{\mathrm b}} + \norm{u_0}_{\bar H^{s + 1}} + \norm{u_1}_{\bar H^s}.
\]
\end{definition}

Note here that the boundary of $\Omega$ is a manifold with corners, consisting of three boundary faces; the `initial' spacelike hypersurface $\Sigma_0$ and the two `final' spacelike hypersurfaces $\Sigma_e := \{r = r_e - \e_M\}$ and $\Sigma_c := \{r = r_c + \e_M\}$.
The Sobolev spaces $\bar H^s$ are standard translation invariant ones along $T$.
The `bar' notation in $\bar H^{s, \a}$ refers to extendible distributions the three boundary hypersurfaces $\Sigma_0, \Sigma_e$ and $\Sigma_c$.

In order to formulate initial value problems, we will throughout the paper use the notation
\[
	\gamma_0(u)
		:= \left( u, \L_Tu \right)|_{\Sigma_0}.
\]

\begin{cor}[The asymptotic expansion - initial data version] \label{cor: expansion linearized - initial data}
Let $(M, g_{b_0})$ and $\Lop_{b_0}$ be as in Theorem~\ref{thm: QNM}.
There are $s_0 \in \R$ and $\a > 0$ such that for any $t_0 \in \R$, any $s > s_0$ and any $\left( f, u_0, u_1 \right)	\in D^{s, \a}(\Omega; S^2 T^*\Omega)$, the solution to
\[
	(\Lop_{b_0} u, \gamma_0(u)) \\
		= \ \left( f, u_0, u_1 \right),
\]
has an asymptotic expansion
\begin{align*}
	u - \sum_{j = 1}^N u_j 
		\in \bar H^{s, \a}(\Omega; S^2 T^*\Omega),
\end{align*}
where $\left( \L_T + i \s_j \right)^{k_j} u_j = 0$ for $\s_j$ and $k_j$ as in Theorem~\ref{thm: QNM}.
\end{cor}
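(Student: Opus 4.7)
The plan is to reduce the initial value problem to the pure forcing problem already treated in Theorem~\ref{thm: expansion linearized} via a cutoff of a short-time local solution. Since $\Sigma_0$ and the lateral boundaries $\Sigma_e := \{r = r_e - \e_M\}$, $\Sigma_c := \{r = r_c + \e_M\}$ are spacelike (the last two by the choice of $f$ in~\eqref{EqIf}), and $\Lop_{b_0}$ is a principally scalar wave operator on $S^2 T^*M$ with smooth coefficients, standard hyperbolic energy estimates on the slab $\Omega_{\mathrm{loc}} := \Omega \cap \{t_* \leq t_0 + 2\}$ yield a unique local solution $u_{\mathrm{loc}}$ with
\[
    \Lop_{b_0} u_{\mathrm{loc}} = f|_{\Omega_{\mathrm{loc}}}, \qquad \gamma_0(u_{\mathrm{loc}}) = (u_0, u_1), \qquad \norm{u_{\mathrm{loc}}}_{\bar H^{s+1}(\Omega_{\mathrm{loc}})} \leq C \norm{(f, u_0, u_1)}_{D^{s,\a}}.
\]

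Next, fix a cutoff $\chi \in C^\infty(\R)$ with $\chi(t_*) = 1$ for $t_* \leq t_0 + 1$ and $\chi(t_*) = 0$ for $t_* \geq t_0 + 2$, and define $\tilde u := \chi \cdot u_{\mathrm{loc}}$, extended by zero to $\Omega$. Then $\gamma_0(\tilde u) = (u_0, u_1)$, and the compact $t_*$-support of $\tilde u$ ensures $\tilde u \in \bar H^{s+1, \a'}(\Omega; S^2 T^*\Omega)$ for every $\a' > 0$. Setting $v := u - \tilde u$, the initial value problem is equivalent to the forcing problem
\[
    \Lop_{b_0} v = \tilde f := (1 - \chi) f - [\Lop_{b_0}, \chi] u_{\mathrm{loc}}, \qquad \gamma_0(v) = 0,
\]
where we used $\Lop_{b_0}(\chi u_{\mathrm{loc}}) = \chi f + [\Lop_{b_0}, \chi] u_{\mathrm{loc}}$. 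The first summand of $\tilde f$ lies in $\bar H^{s,\a}(\Omega; S^2 T^*\Omega)$ with support in $\{t_* \geq t_0 + 1\}$; the commutator $[\Lop_{b_0}, \chi]$ is a first order operator with coefficients supported in $\{t_0 + 1 \leq t_* \leq t_0 + 2\}$, so $[\Lop_{b_0}, \chi] u_{\mathrm{loc}}$ lies in $\bar H^s$ with compact $t_*$-support. Hence, extending $\tilde f$ by zero across $\Sigma_0$ (trivially, as it vanishes near $\Sigma_0$) and across the outflow hypersurfaces $\Sigma_e, \Sigma_c$, we obtain an element of $e^{-\a t_*} \bar H^{s-1+\de}(M; S^2 T^*M)$ supported in $\{t_* > t_0\}$, with norm bounded by $C\norm{(f, u_0, u_1)}_{D^{s,\a}}$.

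Theorem~\ref{thm: expansion linearized} applied to $\tilde f$ then produces the unique $v$ supported in $\{t_* > t_0\}$ obeying
\[
    v - \sum_{j=1}^N u_j \in e^{-\a t_*} \bar H^s(M; S^2 T^*M), \qquad (\L_T + i\s_j)^{k_j} u_j = 0.
\]
Setting $u := v + \tilde u$ yields a solution to the stated initial value problem, and the asserted expansion transfers from $v$ to $u$ since $\tilde u \in \bar H^{s+1, \a}(\Omega; S^2 T^*\Omega)$. Uniqueness of $u$ follows from uniqueness of $u_{\mathrm{loc}}$ on $\Omega_{\mathrm{loc}}$, finite propagation across the spacelike slab $\{t_0 + 1 \leq t_* \leq t_0 + 2\}$, and the uniqueness of the forcing solution in Theorem~\ref{thm: expansion linearized}.

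The main delicacy, although ultimately a matter of bookkeeping, is the matching of function spaces: a naive extension of the initial data not solving the equation locally would produce a term $\Lop_{b_0} \tilde u$ lying only in $\bar H^{s-1}$, narrowly missing the threshold $\bar H^{s-1+\de}$ demanded by the trapping estimates underlying Theorem~\ref{thm: expansion linearized}. Using the local solution $u_{\mathrm{loc}}$ to cancel the top order piece $\chi f$ of $\Lop_{b_0}(\chi u_{\mathrm{loc}})$ leaves only the commutator term, which is one derivative better and compactly supported in $t_*$, placing it comfortably in the required forcing class.
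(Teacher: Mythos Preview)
Your proof is correct and follows essentially the same approach as the paper: cut off the solution near $\Sigma_0$ and feed the remaining piece into Theorem~\ref{thm: expansion linearized} as a forward forcing problem. The only cosmetic differences are that the paper uses the complementary cutoff ($\chi=0$ near $\Sigma_0$, $\chi=1$ for large $t_*$) and applies Theorem~\ref{thm: expansion linearized} directly to $\chi u$ rather than to $v=u-\tilde u$, while you spell out the local-in-time energy estimate and the function-space bookkeeping that the paper leaves implicit.
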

\begin{proof}
Fix $\chi \in C^\infty(\R)$ such that 
\[
	\chi(t_*) 
		= 
		\begin{cases}
			0, & t_* \leq t_0 + 1, \\
			1, & t_* \geq t_0 + 2.
		\end{cases}
\]
Note that
\[
	u
		= (1-\chi)u + \chi u,
\]
where $(1 - \chi) u$ vanishes for $t_* \geq t_0 + 2$ and
\[
	\Lop_{b_0} \chi u
		= [\Lop_{b_0}, \chi] u + \chi f.
\]
Now, $[\Lop_{b_0}, \chi] u$ vanishes for $t_* \geq t_0 + 2$ and $\chi f$ is supported in $\{t_* \geq t_0 + 1\}$, implying that 
\[
	\Lop_{b_0} \chi u
		\in \bar H^{s, \a}(\Omega; S^2 T^*\Omega).
\]
Therefore the statement follows by Theorem~\ref{thm: expansion linearized}.
\end{proof}

\subsection{Quasinormal modes for the gauge fixed equation}
The next step is to deal with the quasinormal modes appearing in Theorem~\ref{thm: expansion linearized}.
This is where the constraint damping comes in.

\begin{thm}[Non-decaying mode solutions are gauge solutions] \label{thm: growing modes}
Let $(M, g_{b_0})$ be a subextremal Kerr--de~Sitter spacetime extended beyond the horizons as in~\eqref{eq: manifold} for which Assumption~\ref{ass: mode stability} holds. 
Let $\Lop_{b_0}$ be the linearized gauge fixed Einstein operator defined in~\eqref{eq: gauge fixed lin}.
Assume that $u \in C^\infty(M; S^2 T^*M)$ satisfies
\begin{equation} \label{eq: mode of L g}
\begin{split}
	\Lop_{b_0} u
		&= 0, \\
	\left( \L_T + i \s \right)^k u
		&= 0,
\end{split}
\end{equation}
for a $\s \in \C$ with $\Im(\s) \geq 0$ and $k \in \N_0$, where $T$ as in~\eqref{eq: T}.
Then 
\[
	\de_{g_{b_0}} \T_{g_{b_0}} u
		= 0
\]
and the following holds:
\begin{enumerate}
\item If $\s \neq 0$, then there exists an $\o \in C^\infty \left(M; T^*M \right)$ such that
\begin{align*}
	u
		&= \de_{g_{b_0}}^* \o.
\end{align*}
\item \label{item: sigma 0 theorem} If $\s = 0$, then there exist $b' \in T_{b_0} B$ and $\o \in C^\infty \left(M; T^*M \right)$ such that
\begin{align*}
	u
		&= g'(b') + \de_{g_{b_0}}^* \o.
\end{align*}
\end{enumerate}
\end{thm}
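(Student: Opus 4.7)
The plan is to proceed in three steps: decouple the gauge sector via constraint damping; reduce to the linearized Einstein equation; invoke the mode stability assumption, handling higher Jordan orders by induction.

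\textbf{Step 1 (gauge decoupling).} I would first apply $\de_{g_{b_0}}\T_{g_{b_0}}$ to $\Lop_{b_0} u = 0$. Using the form $\Lop_{b_0} u = 2(\De \Ric - \Lambda)u + 2(\de_{g_{b_0}}^* + \A_{g_{b_0}})(\de_{g_{b_0}}\T_{g_{b_0}} u)$ from~\eqref{eq: gauge fixed lin}, together with the linearization at $g_{b_0}$ of the twice-contracted Bianchi identity $2\de_g \T_g(\Ric_g - \Lambda g) = 0$ --- which, because $\Ric_{g_{b_0}} - \Lambda g_{b_0} = 0$, collapses to $\de_{g_{b_0}}\T_{g_{b_0}}(\De \Ric - \Lambda)(u) = 0$ --- one is left with the constraint propagation equation
\[
	2\de_{g_{b_0}}\T_{g_{b_0}}(\de_{g_{b_0}}^* + \A_{g_{b_0}})\o = 0, \qquad \o := \de_{g_{b_0}}\T_{g_{b_0}} u.
\]
Since $T$ is Killing and hence commutes with $\de$ and $\T$, the 1-form $\o$ inherits $(\L_T + i\s)^k \o = 0$. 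Invoking Theorem~\ref{thm: stable constraint propagation} --- the assertion that with the tailored $\A_{g_{b_0}}$ from~\eqref{eq: Ag choice} the constraint propagation operator admits no non-decaying mode solutions --- then forces $\o = 0$, so $\de_{g_{b_0}}\T_{g_{b_0}} u = 0$.

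\textbf{Steps 2 and 3 (ungauged equation and mode stability).} With $\de_{g_{b_0}}\T_{g_{b_0}} u = 0$ the gauge-fixing summand in $\Lop_{b_0} u$ vanishes, leaving $(\De \Ric_{g_{b_0}} - \Lambda) u = 0$. For $\s = 0$, Assumption~\ref{ass: mode stability}(2) immediately delivers $u = g'(b') + \de_{g_{b_0}}^*\o$; for $\s \neq 0$ and $k = 1$, Assumption~\ref{ass: mode stability}(1) produces $u = \de_{g_{b_0}}^*\o$. The remaining case $\s \neq 0$, $k \geq 2$ I would settle by induction on $k$, using two supporting facts. First, $\de_{g_{b_0}}^*\o$ is (half of) an infinitesimal diffeomorphism of the Einstein metric $g_{b_0}$, so $(\De \Ric - \Lambda)(\de_{g_{b_0}}^*\o) = 0$ automatically. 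Second, on the space of smooth 1-forms lying in the generalized $(-i\s)$-eigenspace of $\L_T$, the operator $\L_T + i\s$ is surjective, as one verifies by integrating in the Killing-time direction (which raises the Jordan order by one). Applying the inductive hypothesis to $(\L_T + i\s) u$ yields $(\L_T + i\s) u = \de_{g_{b_0}}^*\o'$; choosing $\o_1$ with $(\L_T + i\s)\o_1 = \o'$, the difference $u - \de_{g_{b_0}}^*\o_1$ is annihilated by both $\L_T + i\s$ and by $(\De \Ric - \Lambda)$, so the $k = 1$ case supplies $\o_2$ with $u - \de_{g_{b_0}}^*\o_1 = \de_{g_{b_0}}^*\o_2$, whence $u = \de_{g_{b_0}}^*(\o_1 + \o_2)$.

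The decisive step, and by far the most demanding, is Step 1: the absence of modes with $\Im(\s) \geq 0$ for the constraint propagation operator $2\de\T(\de^* + \A_{g_{b_0}})$ throughout the full subextremal range. This depends crucially on the tailored choices~\eqref{eq: Ag choice} and~\eqref{eq: c sharp first} of $\A_{g_{b_0}}$ and $\c^\sharp$, and its proof in Section~\ref{sec: CD} rests on a semiclassical propagation-of-singularities argument along the flow of $\c^\sharp$. The key subtlety is that $\c^\sharp$ vanishes at the poles $\theta \in \{0,\pi\}$ of $\mathbb{S}^2$, which forces an additional algebraic positivity criterion at those fixed points; the specific form~\eqref{eq: c sharp first} is engineered precisely so that the Kerr--de~Sitter geometry simplifies there enough to make this criterion verifiable.
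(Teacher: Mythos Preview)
Your proof is correct and follows essentially the same route as the paper: apply $\de_{g_{b_0}}\T_{g_{b_0}}$ and invoke constraint damping (Theorem~\ref{thm: stable constraint propagation}) to force $\de_{g_{b_0}}\T_{g_{b_0}} u = 0$, reduce to the ungauged linearized Einstein equation, and then invoke Assumption~\ref{ass: mode stability} with an induction on the Jordan order for $\s \neq 0$. The only cosmetic difference is in the induction: the paper applies $(\L_T + i\s)^{k-1}$ and subtracts $\frac{1}{(k-1)!}\de^*(t_*^{k-1}\o)$ to drop the order by one, whereas you apply $(\L_T + i\s)$ once and solve $(\L_T + i\s)\o_1 = \o'$ --- both arguments rest on the same mechanism (and both tacitly use that for $\s\neq 0$ the one-form produced by Assumption~\ref{ass: mode stability} can be adjusted modulo Killing one-forms to itself lie in the generalized $(-i\s)$-eigenspace).
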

We prove this theorem assuming constraint damping (Theorem~\ref{thm: stable constraint propagation}):

\begin{proof}[Proof of Theorem~\ref{thm: growing modes}, assuming Theorem~\ref{thm: stable constraint propagation}]
Following \cite{HV2018}, the idea is to relate the generalized quasinormal modes of $\Lop_{b_0}$ to the generalized quasinormal modes of the linearized Einstein equation.
After reducing the problem to standard (i.e., non-generalized) quasinormal modes, Assumption~\ref{ass: mode stability} would imply the assertion.
Firstly, linearizing the condition~\eqref{eq: EE divergence free}, we conclude that
\[
	\de_{g_{b_0}} \T_{g_{b_0}} \De_{g_{b_0}} \left( \Ric_g - \Lambda g \right)u
		= 0,
\]
for any $u \in C^\infty(S^{2}T^*M)$.
Secondly, recall from~\eqref{eq: gauge fixed lin} that
\[
	\Lop_{b_0}
		= 2 \De_{g_{b_0}} \left( \Ric_g - \Lambda g + \left( \de_{g_{b_0}}^* + \A_{g_{b_0}} \right) \left( \Upsilon_{g_{b_0}} (g) - \Upsilon_{g_{b_0}} (g_{b_0, b}) - \theta \right) \right)
\]
at $b = b_0$ and $\theta = 0$.
Combining these with~\eqref{eq: mode of L g} implies that
\begin{align*}
	\de_{g_{b_0}} \T_{g_{b_0}} \left( \de_{g_{b_0}}^* + \A_{g_{b_0}} \right) \de_{g_{b_0}} \T_{g_{b_0}} u
		&= 0, \\
	\left( \L_T + i \s \right)^k u
		&= 0.
\end{align*}
Since $\L_T$ commutes with $\de_{g_{b_0}} \T_{g_{b_0}}$, we also see that
\begin{align*}
	\left( \L_T + i \s \right)^k \de_{g_{b_0}} \T_{g_{b_0}} u
		&= 0,
\end{align*}
which means that $\de_{g_{b_0}} \T_{g_{b_0}} u$ is a generalized quasinormal mode of frequency $\s$ with respect to
\begin{equation} \label{eq: constraint op}
	\de_{g_{b_0}} \T_{g_{b_0}} \left( \de_{g_{b_0}}^* + \A_{g_{b_0}} \right).
\end{equation}
By Theorem~\ref{thm: stable constraint propagation}, i.e., the constraint damping result,~\eqref{eq: constraint op} \emph{does not admit any non-decaying generalized quasinormal modes}.
Since we have assumed that $\Im(\s) \geq 0$, we therefore conclude that
\[
	\de_{g_{b_0}} \T_{g_{b_0}} u
		= 0,
\]
which, together with~\eqref{eq: mode of L g}, implies that
\begin{equation}
\begin{split} \label{eq: k generalized mode}
	\left( \De \Ric_{g_{b_0}} - \Lambda \right) u
		&= 0, \\
	\left( \L_T + i \s \right)^k u
		&= 0.
\end{split}
\end{equation}
In case $\s = 0$, the assertion is implied by Assumption~\ref{ass: mode stability}. 
In the general case when $\Im(\s) \geq 0$, since $\L_T + i \s$ commutes with $\De \Ric_{g_{b_0}} - \Lambda$, we deduce that
\begin{align*}
	\left( \De \Ric_{g_{b_0}} - \Lambda \right) \left( \left( \L_T + i \s \right)^{k-1} \right) u
		&= 0, \\
	\left( \L_T + i \s \right) \left( \L_T + i \s \right)^{k-1} u 
		&= 0,
\end{align*}
which means that $\left( \L_T + i \s \right)^{k-1} u$ is a standard quasinormal mode.
Hence, if $\Im(\s) \geq 0$ with $\s \neq 0$, Assumption~\ref{ass: mode stability} implies that 
\[
	\left( \L_T + i \s \right)^{k-1} u
		= \de_{g_{b_0}}^* \o,
\]
for an $\o \in C^\infty(T^*M)$.
Defining
\[
	\tilde u
		:= u - \frac1{(k-1)!}\de_{g_{b_0}}^* \left(t_*^{k-1}\o \right),
\]
it follows that
\begin{equation} \label{eq: k - 1 generalized mode}
\begin{split}
	\left( \De \Ric_{g_{b_0}} - \Lambda \right) \tilde u
		&= \left( \De \Ric_{g_{b_0}} - \Lambda \right) u - \frac1{(k-1)!} \left( \De \Ric_{g_{b_0}} - \Lambda \right)\de_{g_{b_0}}^* \left(t_*^{k-1} \o \right) \\
		&= 0, \\
	\left( \L_T + i \s \right)^{k-1} \tilde u
		&= \left( \L_T + i \s \right)^{k-1} u - \frac1{(k-1)!} \left( \L_T + i \s \right)^{k-1} \de_{g_{b_0}}^* \left(t_*^{k-1} \o \right) \\
		&= \de_{g_{b_0}}^* \o - \frac1{(k-1)!} \de_{g_{b_0}}^* \left( \L_T + i \s \right)^{k-1} \left(t_*^{k-1} \o \right) \\
		&= 0.
\end{split}
\end{equation}
We have thus reduced from a generalized quasinormal mode of order $k$ in~\eqref{eq: k generalized mode} to a generalized mode of order $k-1$ in ~\eqref{eq: k - 1 generalized mode}.
Iterating this argument and finally applying Assumption~\ref{ass: mode stability} proves the assertion also in this case.
This completes the proof.
\end{proof}

\subsection{Linear stability}

We now turn to the proof of linear stability of any subextremal Kerr--de~Sitter spacetime.
With the non-linear iteration procedure in mind, presented in the next subsection, we will prove a result which is more general than strictly needed for linear stability.
As in Theorem~\ref{thm: main}, for any $b_0, b \in B$, let
\[
	g_{b_0, b} 
		:= \chi(t_*) g_b + \left( 1 - \chi(t_*) \right) g_{b_0},
\]
for a fixed $\chi \in C^\infty(\R)$ such that 
\[
	\chi(t_*) 
		= 
		\begin{cases}
			0, & t_* \leq t_0 + 1, \\
			1, & t_* \geq t_0 + 2.
		\end{cases}
\]
The first part of the non-linear differential operator we will consider is the evolution operator
\begin{align*}
	\P_1(\tilde g, b, \theta) 
		:= & \ \left( \Ric - \Lambda \right) (g_{b_0, b} + \tilde g) \\*
		& \ + \left(\de_{g_{b_0}}^* + \A_{g_{b_0}} \right) \left( \Upsilon_{g_{b_0}}(g_{b_0, b} + \tilde g) - \Upsilon_{g_{b_0}}(g_{b_0, b}) - \theta \right).
\end{align*}
This is the same operator as in~\eqref{eq: gauge fixed Einstein}, just writing $g = g_{b_0, b} + \tilde g$.
The linearization is given by
\begin{equation} \label{eq: D P1}
\begin{split}
	\De_{(\tilde g, b, \theta)} \P_1(\tilde g', b', \theta')
		:= & \ \De_{(g_{b_0, b} + \tilde g)} \left( \Ric - \Lambda \right) ( \chi g_{b_0}'(b') + \tilde g' ) \\*
		& \ + \left(\de_{g_{b_0}}^* + \A_{g_{b_0}} \right) \De_{(g_{b_0, b} + \tilde g)} \Upsilon_{g_{b_0}}( \chi g_{b_0}'(b') + \tilde g' ) \\
		& \ - \left(\de_{g_{b_0}}^* + \A_{g_{b_0}} \right) \left( \De_{g_{b_0, b}}\Upsilon_{g_{b_0}} (\chi g_{b_0}'(b')) + \theta' \right).
\end{split}
\end{equation}
This specializes at a subextremal Kerr--de~Sitter spacetime, i.e., where $(\tilde g, b, \theta) = (0, b_0, 0)$, to
\begin{align*}
	\De_{(0, b_0, 0)} \P_1(\tilde g', b', \theta')
		= & \ \De_{g_{b_0}} \left( \Ric - \Lambda \right) ( \chi g_{b_0}'(b') + \tilde g' ) \\*
		& \ + \left(\de_{g_{b_0}}^* + \A_{g_{b_0}} \right) \De_{g_{b_0}} \Upsilon_{g_{b_0}}( \tilde g' - \theta' ) \\
		= & \ \frac12 \Lop_{b_0} \tilde g' + \left( \De_{g_{b_0}} \Ric - \Lambda \right) \chi g_{b_0}'(b') + \left(\de_{g_{b_0}}^* + \A_{g_{b_0}} \right) \theta'.
\end{align*}
The second operator will describe the initial condition:
\begin{align*}
	\P_2(\tilde g, b) 
		:= & \ \gamma_0(\tilde g + g_{b_0, b}) - \iota_{b_0}(h, k) \\
		= & \ \gamma_0(\tilde g) - \iota_{b_0}(h, k),
\end{align*}
where $(h,k)$ is a fixed solution to the constraint equations~\eqref{eq: constraint equations} and are prescribed, c.f.~Theorem~\ref{thm: main}, and $\iota_b$ is a certain map taking the geometric initial data $(h, k)$ to wave equation initial data $(\tilde g_0, \tilde g_1)$ at $\Sigma_0$ introduced in \cite{HV2018}*{Prop.~3.10}.
The linearization is given by
\begin{equation} \label{eq: D P2}
\begin{split}
	\De_{(\tilde g, b)}\P_2(\tilde g', b') 
		:= & \ \gamma_0(\tilde g' + \chi g_{b_0}'(b')) \\
		= & \ \gamma_0(\tilde g'),
\end{split}
\end{equation}
which at a subextremal Kerr--de~Sitter spacetime becomes
\begin{align*}
	\De_{(0, b_0)}\P_2(\tilde g', b')
		= & \ \gamma_0(\tilde g').
\end{align*}

Let us now define the finite dimensional space 
\[
	\Theta_{b_0} 
		\subset C_c^\infty(M; T^*M),
\]
from which $\theta$ will be chosen.
For a fixed $b_0 \in B$, let first $\s_1, \hdots, \s_N \in \C$ be all quasinormal mode frequencies in Theorem~\ref{thm: QNM} with $\Im(\s) \geq 0$, and relabel so that $\s_1 = 0$ and $\s_j \neq 0$ for $j \neq 1$.
Since, for any fixed frequency $\s_j$, the space of generalized solutions has finite dimension $l_j$, we may choose a basis
\[
	u_{j1}, \hdots, u_{jl_j}.
\]
Theorem~\ref{thm: growing modes} implies that, for each $j = 2, \hdots, N$, there are generalized quasinormal mode one-forms $\o_{j l}$ such that 
\[
	u_{jl}
		= \de_{g_{b_0}}^*\o_{jl},
\]
and 
\begin{equation} \label{eq: omega gauged}
	\de_{b_0} \T_{g_0}\de_{g_{b_0}}^*\o_{jl} 
		= 0,
\end{equation}
for $j = 2, \hdots, N$ and $l = 1, \hdots, l_j$.
Moreover, for each $l = 1, \hdots, l_1$, there exist $b_l' \in T_{b_0}B$ and a generalized quasinormal mode one-form $\o_l$ such that
\[
	u_{1, l}
		= g'_{b_0}(b_l') + \de_{g_{b_0}}^* \o_l
\]
and
\begin{equation} \label{eq: g' gauged}
	\de_{g_{b_0}} \T_{g_{b_0}} \left( g'_{b_0}(b_l') + \de_{g_{b_0}}^* \o_l \right)
		= 0.
\end{equation}
Using this, we define
\begin{align*}
	\theta_{jl}
		:= & \ [\de_{g_{b_0}} \T_g \de_g^*, \chi] \o_{jl} 
		\in C_c^\infty(M), \\
	\theta_l
		:= & \ [\de_{g_{b_0}} \T_g \de_g^*, \chi] \o_l + [\de_{g_{b_0}}, \chi] \T_{g_{b_0}} g_{b_0}'(b_l')
		\in C_c^\infty(M).
\end{align*}
This allows us to define
\begin{align*}
	\Theta_{b_0}
		:= & \ \mathrm{span} \left( \theta_l \colon l = 1, \hdots, l_1 \right) \oplus \mathrm{span} \left( \theta_{j,l} \colon j = 2, \hdots, N, l = 1, \hdots, l_j \right) \\
		& \ \subset C_c^\infty(M; T^*M).
\end{align*}

\begin{remark}
Note that all elements in $\Theta_{b_0}$ are supported away from $\Sigma_0$, as needed for the Choquet-Bruhat-type argument presented of this section.
\end{remark}

In view of the Nash--Moser iteration for the non-linear stability proof, we want to formulate our results as proving the existence of (and later on also tame estimates for) a \emph{right inverse} $\Rop_{(\tilde g, b, \theta)}$ to the \emph{linearization of}
\[
	\P{(\tilde g, b, \theta)}
		:= \left( \P_1{(\tilde g, b, \theta)}, \P_2{(\tilde g, b)} \right).
\]
The first step, which is the main goal in this subsection, is to construct $\Rop_{(\tilde g, b, \theta)}$ at the Kerr--de~Sitter initial data, i.e., where $(\tilde g, b, \theta) = (0, b_0, 0)$.
For this, let first 
\[
	(f, u_0, u_1) \in D^{s, \a}(\Omega; S^2 T*\Omega).
\]
Let $u$ denote the unique solution to the initial value problem
\begin{align*}
	\left( \Lop_{b_0}, \gamma_0 \right) u 
		&= (f, u_0, u_1).
\end{align*}
By Corollary~\ref{thm: expansion linearized}, there are unique numbers $c_{jl} \in \C$ for $j = 2, \hdots, N$ and $l = 1, \hdots, l_j$, and $c_l \in \C$ for $l = 1, \hdots, l_1$ such that
\begin{equation} \label{eq: solution asymptotics}
	u - \de^*_{g_{b_0}} \left( \sum_{j = 2}^N \sum_{l = 1}^{l_j} c_{jl} \o_{jl} \right) - \sum_{l = 1}^{l_1} c_l \left( g'_{b_0}(b_l') + \de_{g_{b_0}}^* \o_l \right)
		\in \bar H^{s, \a}(\Omega; S^2 T^*\Omega).
\end{equation}
We define
\[
	\Rop_{(0, b_0, 0)}(f, u_0, u_1)
		:= (\tilde g', b', \theta'),
\]
where
\begin{equation} \label{eq: R at KdS}
\begin{split}
	\tilde g'
		:= & \ u - \de^*_{g_{b_0}} \left( \sum_{j = 2}^N \sum_{l = 1}^{l_j} c_{jl} \chi \o_{jl} \right) - \sum_{l = 1}^{l_1} c_l \left( \chi g'_{b_0}(b_l') + \de_{g_{b_0}}^* \chi \o_l \right) \in \bar H^{s, \a}(\Omega; S^2 T^*\Omega), \\
	b'
		:= & \ \sum_{l = 1}^{l_1} c_l b_l' \in T_{b_0}B, \\
	\theta'
		:= & \ \sum_{j = 2}^N \sum_{l = 1}^{l_j} c_{jl} \theta_{jl} + \sum_{l = 1}^{l_1} c_l \theta_l
		\in \Theta_{b_0}.
\end{split}
\end{equation}
Note that the asymptotics for $\tilde g'$ are the same as those for $u$ the expression in~\eqref{eq: solution asymptotics}, so $\tilde g'$ does indeed decay of order $e^{-\a t_*}$.

\begin{thm}[A right inverse at the Kerr--de~Sitter spacetime] \label{thm: linearized surjectivity}
Let $(M, g_{b_0})$ be a subextremal Kerr--de~Sitter spacetime extended beyond the horizons as in~\eqref{eq: manifold} for which Assumption~\ref{ass: mode stability} holds.
There exist $s_0 \in \R$ and $\a > 0$ such that if $s \geq s_0$, then 
\begin{align*}
	\Rop_{(0, b_0, 0)} \colon D^{s, \a}(\Omega; S^2 T^*\Omega)
		&\to \bar H^{s, \a}(\Omega, S^2T^*\Omega) \times T_{b_0} B \times \Theta_{b_0}
\end{align*}
is a continuous map and such that
\[
	\De_{(0, b_0, 0)}\P \circ \Rop_{(0, b_0, 0)}(f, u_0, u_1)
			= (f, u_0, u_1)
\]
for all 
\[
	(f, u_0, u_1) \in D^{\infty, \a}(\Omega; S^2 T^*\Omega).
\]
\end{thm}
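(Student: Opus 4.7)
The proof reduces to two claims about the formulas in~\eqref{eq: R at KdS}: (i) they define a continuous map into the indicated target, and (ii) the composition $\De_{(0, b_0, 0)}\P \circ \Rop_{(0, b_0, 0)}$ is the identity on $D^{\infty,\a}$. Claim (i) is immediate from Corollary~\ref{cor: expansion linearized - initial data}: that corollary produces $u$ and continuously extracts the coefficients $c_{jl}, c_l$ from its asymptotic expansion. The remainder $\tilde g'$ lies in $\bar H^{s,\a}(\Omega; S^2 T^*\Omega)$ because on $\{t_* \geq t_0 + 2\}$, where $\chi \equiv 1$, it agrees with the decaying tail $u - \sum_j u_j$, while on the bounded slab $\{t_* \leq t_0 + 2\}$ every term is smooth and controlled by the data. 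The outputs $b' \in T_{b_0} B$ and $\theta' \in \Theta_{b_0}$ are linear combinations of the coefficients.

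For claim (ii), the $\P_2$ identity is trivial: since $\chi$ vanishes to infinite order at $\Sigma_0$, every correction in the definition of $\tilde g'$ has vanishing Cauchy data, so $\gamma_0(\tilde g') = \gamma_0(u) = (u_0, u_1)$. The substantive content is $\De_{(0, b_0, 0)}\P_1(\tilde g', b', \theta') = f$, which, using the linearization formula of the excerpt, reduces to evaluating $\tfrac12\Lop_{b_0}\tilde g' + \De_{g_{b_0}}(\Ric - \Lambda)(\chi g'_{b_0}(b')) - (\de^*_{g_{b_0}} + \A_{g_{b_0}})\theta'$. The strategy is to apply $\Lop_{b_0}$ to each cutoff mode correction $\de^*_{g_{b_0}}(\chi \o_{jl})$ and $\chi g'_{b_0}(b'_l) + \de^*_{g_{b_0}}(\chi \o_l)$ subtracted from $u$, and then to verify that, after using $\Lop_{b_0} u = f$, the definitions of $b'$ and $\theta'$ deliver the required cancellations with the two remaining terms of the displayed formula.

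This computation relies on three algebraic inputs: (a) linearized diffeomorphism invariance, $\De_{g_{b_0}}(\Ric - \Lambda) \circ \de^*_{g_{b_0}} = 0$, valid at any Einstein metric; (b) the Kerr--de~Sitter family consists of Einstein metrics, so $\De_{g_{b_0}}(\Ric - \Lambda)(g'_{b_0}(b'_l)) = 0$; and (c) the gauge relations~\eqref{eq: omega gauged} and~\eqref{eq: g' gauged}, which say that $\de_{g_{b_0}}\T_{g_{b_0}}\de^*_{g_{b_0}}$ annihilates $\o_{jl}$ and that $\de_{g_{b_0}}\T_{g_{b_0}}$ annihilates $g'_{b_0}(b'_l) + \de^*_{g_{b_0}}\o_l$. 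Together these collapse each $\Lop_{b_0}$ application to the pure $\chi$-commutator terms, which by construction are exactly $\theta_{jl}$ and $\theta_l$; summing over the modes with weights $c_{jl}, c_l$ and matching against $b' = \sum c_l b'_l$ and $\theta' = \sum c_{jl}\theta_{jl} + \sum c_l \theta_l$ produces the cancellation. The only real obstacle is this algebraic identity, and it is tightly constrained: the correction forms $\theta_{jl}, \theta_l$ were defined to mesh precisely with the commutators generated when $\Lop_{b_0}$ passes a cutoff through a mode, so the verification is essentially dictated by the definitions. All of the genuine analysis has been absorbed upstream into Corollary~\ref{cor: expansion linearized - initial data} and Theorem~\ref{thm: growing modes}.
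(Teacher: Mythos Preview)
Your proposal is correct and follows essentially the same approach as the paper: the continuity comes from Corollary~\ref{cor: expansion linearized - initial data}, the $\P_2$ identity is immediate since $\chi$ vanishes near $\Sigma_0$, and the $\P_1$ identity is the algebraic computation using exactly the three inputs you list---linearized diffeomorphism invariance $\De_{g_{b_0}}(\Ric-\Lambda)\circ\de_{g_{b_0}}^*=0$, the gauge relations~\eqref{eq: omega gauged} and~\eqref{eq: g' gauged}, and the definitions of $\theta_{jl},\theta_l$ as the commutator terms. The paper carries out this last computation explicitly line by line, but the structure and ingredients are identical to what you describe.
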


\begin{proof}
Let us write
\[
	\Rop_{(0, b_0, 0)}(f, u_0, u_1)
		:= (\tilde g', b', \theta'),
\]
where $(\tilde g', b', \theta')$ are defined in~\eqref{eq: R at KdS}.
It is by construction clear that $\Rop_{(0, b_0, 0)}(f, u_0, u_1)$ is continuous.
Note first that $\tilde g'$ induces the same initial data at $\Sigma_0$ as $u$ does, since $\chi$ is supported away from $\Sigma_0$, implying that
\[
	\De_{(0, b_0, 0)}\P_2 \circ \Rop_{(0, b_0, 0)}(f, u_0, u_1)
			= (u_0, u_1)
\]
as claimed.
We now check that indeed
\[
	\De_{(0, b_0, 0)}\P_1 \circ \Rop_{(0, b_0, 0)}(f, u_0, u_1)
			= f
\]
for all $f \in \bar H^{\infty, \a}(\Omega; S^2T^*\Omega)$.
Using that $\De \Ric_g(\de^*_g \o) = 0$ for any one-form $\o$, together with~\eqref{eq: omega gauged} and~\eqref{eq: g' gauged}, we note for $(\tilde g', b', \theta')$ as in~\eqref{eq: R at KdS} that
\begin{align*}
	&\De_{(0, b_0, 0)} \P_1(\tilde g', b', \theta') \\
		&\quad = \frac12 \Lop_{b_0} \tilde g' + \left( \De_{g_{b_0}} \Ric - \Lambda \right) \chi g_{b_0}'(b') + \left(\de_{g_{b_0}}^* + \A_{g_{b_0}} \right) \theta' \\
		&\quad = \frac12 \Lop_{b_0} u - \frac12 \Lop_{b_0} \de^*_{g_{b_0}} \left( \sum_{j = 2}^N \sum_{l = 1}^{l_j} c_{jl} \chi \o_{jl} + \sum_{l = 1}^n c_l \chi \o_l \right) \\*
		&\quad \qquad - \frac12 \Lop_{b_0} \sum_{l = 1}^{l_1} c_l \chi g'_{b_0}(b_l') + \left( \De_{g_{b_0}} \Ric - \Lambda \right) \chi g_{b_0}'(b') + \left(\de_{g_{b_0}}^* + \A_{g_{b_0}} \right) \theta' \\
		&\quad = f + \left( \de_{g_{b_0}}^* + \A_{g_{b_0}} \right) \de_{g_{b_0}} \T_{g_{b_0}} \de_{g_{b_0}}^* \left( \sum_{j = 2}^N \sum_{l = 1}^{l_j} c_{jl} \chi \o_{jl} + \sum_{l = 1}^{l_1} c_l \chi \o_l \right) \\*
		&\quad \qquad - \frac12 \Lop_{b_0} \chi g'_{b_0}(b') + \left( \De_{g_{b_0}} \Ric - \Lambda \right) \chi g_{b_0}'(b') + \left(\de_{g_{b_0}}^* + \A_{g_{b_0}} \right) \theta' \\
		&\quad = f + \left( \de_{g_{b_0}}^* + \A_{g_{b_0}} \right) \left( \sum_{j = 2}^N \sum_{l = 1}^{l_j} c_{jl} [\de_{g_{b_0}} \T_g \de_g^*, \chi] \o_{jl} + \sum_{l = 1}^{l_1} c_l [\de_{g_{b_0}} \T_g \de_g^*, \chi] \o_l \right) \\*
		&\quad \qquad + \left( \de_{g_{b_0}}^* + \A_{g_{b_0}} \right) \left( \sum_{l = 1}^{l_1} c_l \chi \de_{g_{b_0}} \T_{g_{b_0}} \de_{g_0}^* \o_l \right) \\*
		&\quad \qquad + \left( \de_{g_{b_0}}^* + \A_{g_{b_0}} \right) \de_{g_{b_0}} \T_{g_{b_0}} \chi g_{b_0}'(b') - \left( \de_{g_{b_0}}^* + \A_{g_{b_0}} \right) \theta' \\
		&\quad = f + \left( \de_{g_{b_0}}^* + \A_{g_{b_0}} \right) \left( \sum_{j = 2}^N \sum_{l = 1}^{l_j} c_{jl}\theta_{jl} + \sum_{l = 1}^{l_1} c_l [\de_{g_{b_0}} \T_g \de_g^*, \chi] \o_l \right) \\*
		&\quad \qquad + \left( \de_{g_{b_0}}^* + \A_{g_{b_0}} \right) \left( \sum_{l = 1}^{l_1} c_l \chi \de_{g_{b_0}} \T_{g_{b_0}} \de_{g_0}^* \o_l + \chi \de_{g_{b_0}} \T_{g_{b_0}} g_{b_0}'(b') \right) \\*
		&\quad \qquad + \left( \de_{g_{b_0}}^* + \A_{g_{b_0}} \right) [\de_{g_{b_0}}, \chi] \T_{g_{b_0}} g_{b_0}'(b') - \left( \de_{g_{b_0}}^* + \A_{g_{b_0}} \right) \theta' \\
		&\quad = f + \left( \de_{g_{b_0}}^* + \A_{g_{b_0}} \right) \sum_{j = 2}^N \sum_{l = 1}^{l_j} c_{jl}\theta_{jl} \\*
		&\quad \qquad + \left( \de_{g_{b_0}}^* + \A_{g_{b_0}} \right) \sum_{l = 1}^{l_1} c_l \left( [\de_{g_{b_0}} \T_g \de_g^*, \chi] \o_l + [\de_{g_{b_0}}, \chi] \T_{g_{b_0}} g_{b_0}'(b_l') \right) \\*
		&\quad \qquad + \left( \de_{g_{b_0}}^* + \A_{g_{b_0}} \right) \chi \left( \sum_{l = 1}^{l_1} c_l \de_{g_{b_0}} \T_{g_{b_0}} \left( \de_{g_0}^* \o_l +  g_{b_0}'(b_l') \right) \right) \\*
		&\quad \qquad - \left( \de_{g_{b_0}}^* + \A_{g_{b_0}} \right) \theta' \\
		&\quad = f + \left( \de_{g_{b_0}}^* + \A_{g_{b_0}} \right) \left( \sum_{j = 2}^N \sum_{l = 1}^{l_j} c_{jl}\theta_{jl} + \sum_{l = 1}^{l_1} c_l \theta_l \right) - \left( \de_{g_{b_0}}^* + \A_{g_{b_0}} \right) \theta' \\
		&\quad = f.
\end{align*}
This completes the proof.
\end{proof}

The linear stability is a simple consequence.

\begin{cor}[Linear stability] \label{cor: linear stability}
Let $(M, g_{b_0})$ be a subextremal Kerr--de~Sitter spacetime extended beyond the horizons as in~\eqref{eq: manifold} for which Assumption~\ref{ass: mode stability} holds.
There exists $\a > 0$ such that given any smooth solution $(h', k')$ to the linearized constraint equations at $\Sigma_0$, there is a smooth solution $g'$ to
\[
	\left( \De_{g_{b_0}} \Ric - \Lambda \right) g'
		= 0,
\]
such that $(h', k')$ are the induced linearized first and second fundamental forms and
\[
	g' - g'_{b_0}(b')
		\in \O_\infty\left(e^{-\a t_*} \right)
\]
for some $b' \in T_{b_0}B$, along with all its stationary derivatives with respect to $T$.
\end{cor}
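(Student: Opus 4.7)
The plan is to deduce Corollary~\ref{cor: linear stability} from Theorem~\ref{thm: linearized surjectivity} by a linearized Choquet--Bruhat argument. Given smooth linearized constraint data $(h', k')$ on $\Sigma_0$, I would first convert them to wave-type initial data $(u_0, u_1) := D\i_{b_0}(h', k')$, where $\i_{b_0}$ is the map from \cite{HV2018}*{Prop.~3.10} engineered so that solutions of the non-linear constraint equations produce wave-equation initial data for which $\Upsilon_{g_{b_0}}$ and its normal derivative vanish at $\Sigma_0$. Applying the right inverse $\Rop_{(0, b_0, 0)}$ of Theorem~\ref{thm: linearized surjectivity} to the triple $(0, u_0, u_1)$ yields $(\tilde g', b', \theta')$ with $\tilde g' \in \bar H^{s, \a}(\Omega; S^2 T^*\Omega)$, satisfying $\De_{(0, b_0, 0)}\P_1(\tilde g', b', \theta') = 0$ together with $\gamma_0(\tilde g') = (u_0, u_1)$. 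Setting $g' := \tilde g' + \chi g'_{b_0}(b')$, the vanishing of $\chi$ near $\Sigma_0$ ensures that $g'$ induces the linearized geometric data $(h', k')$ on $\Sigma_0$, while $\chi \equiv 1$ for $t_* \geq t_0 + 2$ gives $g' - g'_{b_0}(b') = \tilde g'$ in that region, hence $g' - g'_{b_0}(b') \in \O_\infty(e^{-\a t_*})$.

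The core step is promoting the gauge-fixed equation satisfied by $g'$ to the genuine linearized Einstein equation. Using the identity $\Lop_{b_0} = 2(\De \Ric - \Lambda) + 2(\de_{g_{b_0}}^* + \A_{g_{b_0}}) \de_{g_{b_0}} \T_{g_{b_0}}$ to rearrange $\De_{(0, b_0, 0)}\P_1(\tilde g', b', \theta') = 0$, I would obtain a relation of the form
\[
(\De_{g_{b_0}} \Ric - \Lambda)\, g' + (\de_{g_{b_0}}^* + \A_{g_{b_0}}) \varphi = 0,
\]
where $\varphi$ is a one-form built linearly from $\De \Upsilon_{g_{b_0}}(\tilde g') = \de_{g_{b_0}} \T_{g_{b_0}} \tilde g'$ and $\theta'$. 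Applying $2\de_{g_{b_0}} \T_{g_{b_0}}$ and invoking the linearized second Bianchi identity at the vacuum metric $g_{b_0}$, namely $\de_{g_{b_0}} \T_{g_{b_0}} \De(\Ric - \Lambda) g' = 0$, reduces matters to the constraint propagation wave equation
\[
2 \de_{g_{b_0}} \T_{g_{b_0}} (\de_{g_{b_0}}^* + \A_{g_{b_0}}) \varphi = 0.
\]
Every element of $\Theta_{b_0}$ is by construction a commutator of a differential operator with $\chi$ applied to one-forms $\o_{jl}$ or $\o_l$, and is therefore supported where $\chi$ is non-constant, i.e., in $\{t_0 + 1 \leq t_* \leq t_0 + 2\}$; thus $\theta'$ vanishes in a neighborhood of $\Sigma_0$. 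Near $\Sigma_0$ one therefore has $\varphi = \de_{g_{b_0}} \T_{g_{b_0}} \tilde g' = \De \Upsilon_{g_{b_0}}(\tilde g')$, which together with its normal derivative vanishes at $\Sigma_0$ by the defining property of $\i_{b_0}$ applied to the linearized constraint data. Uniqueness for the linear wave equation above then forces $\varphi \equiv 0$ on the domain of dependence $\Omega$, and inserting this back gives $(\De_{g_{b_0}} \Ric - \Lambda) g' = 0$.

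For the assertion about all stationary derivatives, I would iterate the construction: $T$ is Killing for $g_{b_0}$, the cutoff $\chi$ depends only on $t_*$ and is therefore $T$-invariant, and the mode ingredients $g'_{b_0}(b'_l), \o_l, \o_{jl}$ are $T$-generalized eigenfunctions, so $\L_T$ commutes with the whole construction. Applying $\L_T^k$ to the data $(h', k')$ and rerunning the argument (equivalently, using that $\Lop_{b_0}$ commutes with $\L_T$ and the initial data can be $T$-differentiated) yields $\L_T^k(g' - g'_{b_0}(b')) \in \O_\infty(e^{-\a t_*})$ for every $k$. I do not foresee a real obstacle in this corollary: the substantive analytic input, namely the surjectivity of the linearized gauge-damped map together with the exponential remainder, is exactly what Theorem~\ref{thm: linearized surjectivity} supplies, and the remaining work is the algebraic Choquet--Bruhat reduction in the linearized, constraint-damped setting.
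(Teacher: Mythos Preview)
Your proposal is correct and follows essentially the same approach as the paper's proof: apply the right inverse of Theorem~\ref{thm: linearized surjectivity} to $(0, D\iota_{b_0}(h',k'))$, set $g' = \tilde g' + \chi g'_{b_0}(b')$, and run the linearized Choquet--Bruhat argument via the constraint propagation equation to eliminate the gauge term. One small remark: your treatment of the stationary derivatives is more elaborate than needed; since $\tilde g' \in \bar H^{\infty,\alpha}$ and $T$ is a b-vector field, $\L_T^k \tilde g' \in \bar H^{\infty,\alpha}$ for all $k$ directly, so there is no need to iterate the construction or differentiate the initial data.
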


\begin{proof}[Proof of Corollary~\ref{cor: linear stability}]
We apply the Theorem~\ref{thm: linearized surjectivity} to
\[
	(u_0, u_1)
		= \De_{(h_{b_0}, k_{b_0})}\iota_{b_0}(h',k')
\]
and $f = 0$.
This produces a smooth solution $(\tilde g', b', \theta')$ such that
\begin{align*}
	\De_{(0, b_0, 0)}\P_1(\tilde g', b', \theta')
		&= 0, \\
	\De_{(0, b_0)}\P_2(\tilde g', b')
		&= \De_{(h_{b_0}, k_{b_0})}\iota_{b_0}(h',k'),
\end{align*}
or in other words
\begin{equation} \label{eq: lin gauged equations}
\begin{split}
	0 
		&= \De_{g_{b_0}} \left( \Ric - \Lambda \right) (g') + \left(\de_{g_{b_0}}^* + \A_{g_{b_0}} \right) \De_{g_{b_0}} \Upsilon_{g_{b_0}}( \tilde g' - \theta' ), \\
	\gamma_0(g')
		&= \De_{(h_{b_0}, k_{b_0})}\iota_{b_0}(h',k'),
\end{split}
\end{equation}
where we have defined
\[
	g'
		:= \tilde g' + \chi g_{b_0}'(b').
\]
By the identity
\[
	\de_{g_{b_0}} \T_{g_{b_0}} \De_{g_{b_0}} \left( \Ric - \Lambda \right)(g')
		= 0,
\]
it follows that
\[
	\de_{g_{b_0}} \T_{g_{b_0}} \left(\de_{g_{b_0}}^* + \A_{g_{b_0}} \right) \De_{g_{b_0}} \Upsilon_{g_{b_0}}( \tilde g' - \theta' )
		= 0,
\]
which is a homogeneous \emph{wave equation} for $\De_{g_{b_0}} \Upsilon_{g_{b_0}}( \tilde g' - \theta' )$.
Moreover, by construction of the map $\iota_{b_0}$, we know that 
\[
	\De_{g_{b_0}} \Upsilon_{g_{b_0}}( \tilde g' - \theta' )|_{\Sigma_0}
		= \De_{g_{b_0}} \Upsilon_{g_{b_0}}( g' )|_{\Sigma_0}
		= 0.
\]
Further, the linearized constraint equations imply that
\[
	\n \De_{g_{b_0}} \Upsilon_{g_{b_0}}( \tilde g' - \theta' )|_{\Sigma_0}
		= \n \De_{g_{b_0}} \Upsilon_{g_{b_0}}( g')|_{\Sigma_0}
		= 0.
\]
Therefore uniqueness of solutions to wave equations imply that
\[
	\De_{g_{b_0}} \Upsilon_{g_{b_0}}( \tilde g' - \theta' )
		= 0,
\]
and we conclude from~\eqref{eq: lin gauged equations} that indeed
\[
	\De_{g_{b_0}} \left( \Ric - \Lambda \right) (g')
		= 0.
\]
The proof is complete.
\end{proof}

\subsection{Non-linear stability}

For the Nash-Moser iteration, we want to extend Theorem~\ref{thm: linearized surjectivity} to construct a right inverse $\Rop_{(\tilde g, b, \theta)}$ to $\De_{(\tilde g, b, \theta)}\P$ for any
\[
	(\tilde g, b, \theta)
		\approx (0, b_0, 0).
\]
We also want to show that $\De_{(\tilde g, b, \theta)} \P$ is satisfying sufficient \emph{tame estimates} for the Nash--Moser iteration:

\begin{thm}[The right inverse of the linearization at nearby points] \label{thm: right inverse}
Let $(M, g_{b_0})$ be a subextremal Kerr--de~Sitter spacetime extended beyond the horizons as in~\eqref{eq: manifold} for which Assumption~\ref{ass: mode stability} holds.
There exist $s_0 \in \R$ and $\a > 0$ such that if $s \geq s_0$, then there is a continuous map
\begin{align*}
	\Rop_{(\tilde g, b, \theta)} \colon D^{s, \a}(\Omega; S^2 T^*\Omega)
		&\to \bar H^{s, \a}(\Omega; S^2T^*\Omega) \times T_b B \times \Theta_{b_0},
\end{align*}
depending continuously on $(\tilde g, b, \theta)$ and such that
\[
	\De_{(\tilde g, b, \theta)}\P \circ \Rop_{(\tilde g, b, \theta)}(f, u_0, u_1)
			= (f, u_0, u_1)
\]
for all 
\[
	(f, u_0, u_1) \in D^{\infty, \a}(\Omega; S^2 T^*\Omega).
\]
Moreover, there exist $s_0, d > 0$ such that $(\tilde g', b', \theta') := \Rop_{(\tilde g, b, \theta)}(f, u_0, u_1)$ satisfies the tame estimates
\begin{equation} \label{eq: right inverse tame estimates}
\begin{split}
	\norm{b'}_{T_{b_0}B}
		&\leq C \norm{(f, u_0, u_1)}_{D^{d, \a}}, \\
	\norm{\theta'}_{\Theta_{b_0}}
		&\leq C \norm{(f, u_0, u_1)}_{D^{d, \a}}, \\
	\norm{\tilde g'}_{\bar H^{s, \a}}
		&\leq C_s \left( \norm{(f, u_0, u_1)}_{D^{s+d, \a}} + \left( 1 + \norm{\tilde g}_{\bar H^{s+d, \a}} \right) \norm{(f, u_0, u_1)}_{D^{d, \a}}\right) 
\end{split}
\end{equation}
for $s \geq s_0$.
If $d$ is large enough, there is a small enough $\e > 0$ such that the map $\Rop_{(\tilde g, b, \theta)}$ is defined for any $\tilde g \in \bar H^{d, \a}(S^2 T^*M)$ with
\[
	\norm{\tilde g}_{\bar H^{d, \a}}
		< \e
\]
and $(f, u_0, u_1)$ for which the norms on the right-hand sides of~\eqref{eq: right inverse tame estimates} are finite, and produces a solution $(\tilde g', b', \theta')$ to
\[
	\De_{(\tilde g, b, \theta)}\P(\tilde g', b', \theta')
		= (f, u_0, u_1),
\]
satisfying~\eqref{eq: right inverse tame estimates}.
\end{thm}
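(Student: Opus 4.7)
The plan is to construct $\Rop_{(\tilde g, b, \theta)}$ as a perturbation of $\Rop_{(0, b_0, 0)}$ from Theorem~\ref{thm: linearized surjectivity}. Writing $\De_{(\tilde g, b, \theta)}\P_1 = \frac{1}{2}\Lop_{(\tilde g, b)} + \bigl(\De_{g_{b_0}}\Ric - \Lambda\bigr)\chi\,g_{b_0}'(\cdot) + (\de_{g_{b_0}}^* + \A_{g_{b_0}})(\cdot)$, where $\Lop_{(\tilde g,b)}$ denotes the linearized gauge-fixed Einstein operator at the metric $g_{b_0,b} + \tilde g$, the principal symbol of $\Lop_{(\tilde g, b)}$ differs from that of $\Lop_{b_0}$ by an amount controlled by $\tilde g$ and $b - b_0$. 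Since all the microlocal/geometric structure used to prove the asymptotic expansion (normally hyperbolic trapping, radial points at the horizons, subprincipal bound at the trapped set from Theorem~\ref{thm: spectral gap}, constraint damping from Theorem~\ref{thm: stable constraint propagation}) is stable under $C^k$-small perturbations of the metric for $k$ depending on $s$, the analogue of Corollary~\ref{cor: expansion linearized - initial data} holds for $\Lop_{(\tilde g, b)}$ with the \emph{same} decay rate $\alpha$ (possibly shrunk slightly) for all $(\tilde g, b)$ in a small neighborhood of $(0, b_0)$ in a sufficiently high Sobolev norm.

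Next, one must absorb the asymptotic expansion into the finite-dimensional corrections $(b', \theta') \in T_b B \times \Theta_{b_0}$. The key observation is that the quasinormal mode spaces of $\Lop_{(\tilde g, b)}$ with $\Im\sigma \geq 0$ depend continuously on $(\tilde g, b)$, and for $(\tilde g, b)$ close to $(0, b_0)$ they are close, in a finite-dimensional sense, to the modes of $\Lop_{b_0}$. Concretely, the map
\[
  (b', \theta') \longmapsto \text{(projection of } \De_{(\tilde g,b,\theta)}\P_1\text{-response onto non-decaying modes)}
\]
is a continuous deformation of the corresponding map at $(0, b_0, 0)$, which by the construction in Theorem~\ref{thm: linearized surjectivity} is an isomorphism onto the (perturbed) non-decaying mode space. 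Hence for small $(\tilde g, b, \theta)$ it remains invertible, yielding $\Rop_{(\tilde g,b,\theta)}$ by composing: solve the linear IVP for $\De_{(\tilde g,b,\theta)}\P_1 \tilde g' = f - (\de_{g_{b_0}}^* + \A_{g_{b_0}})\theta' - (\De_{g_{b_0}}\Ric - \Lambda)\chi g_{b_0}'(b')$ with $\gamma_0(\tilde g') = (u_0, u_1)$, then choose $(b', \theta')$ to cancel the non-decaying modes in the asymptotic expansion of $\tilde g'$.

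The tame estimates~\eqref{eq: right inverse tame estimates} are then a consequence of two ingredients. First, the standard tame energy estimate for linear wave equations with rough coefficients: if the coefficients of $\De_{(\tilde g,b,\theta)}\P_1$ lie in $\bar H^{s+d,\alpha}$, then the solution $\tilde g'$ obeys $\norm{\tilde g'}_{\bar H^{s,\alpha}} \lesssim \norm{(f,u_0,u_1)}_{D^{s+d,\alpha}} + (1 + \norm{\tilde g}_{\bar H^{s+d,\alpha}})\norm{(f,u_0,u_1)}_{D^{d,\alpha}}$, where the second term reflects the dependence of the coefficients on $\tilde g$. This is already built into the Vasy--Petersen--Vasy microlocal framework referenced in the proof of Theorem~\ref{thm: expansion linearized}, with the non-decaying-mode projection handled by a contour deformation argument as in \cite{HV2018}. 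Second, the finite-dimensional corrections $(b', \theta')$ only enter at a fixed low regularity, producing estimates of the form $\norm{b'} + \norm{\theta'} \lesssim \norm{(f,u_0,u_1)}_{D^{d,\alpha}}$ with no dependence on $s$.

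The main obstacle will be verifying that the microlocal estimates, the constraint damping, and especially the subprincipal trapped set bound (Theorem~\ref{thm: spectral gap}) all remain valid with uniform constants as $(\tilde g, b, \theta)$ ranges over a neighborhood of $(0, b_0, 0)$ in the low-regularity ball specified by $\norm{\tilde g}_{\bar H^{d,\alpha}} < \e$. The trapped set itself moves with the metric, and although its normal hyperbolicity is structurally stable, one must check that the numerical constants in the Wunsch--Zworski/Dyatlov estimates remain uniformly bounded; this is the place where the choice of a sufficiently large Sobolev threshold $d$ is forced. Once this uniform microlocal package is in place, the rest of the argument is essentially a finite-dimensional perturbation of Theorem~\ref{thm: linearized surjectivity}.
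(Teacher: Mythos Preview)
Your outline is essentially the same strategy the paper follows: perturb from the right inverse at $(0,b_0,0)$, use stability of the microlocal package (radial points, normally hyperbolic trapping, subprincipal bound, constraint damping), and absorb the non-decaying modes into the finite-dimensional parameters $(b',\theta')$. The paper packages all of this into an abstract result (their Theorem~\ref{thm: lambda}, a full-subextremal version of \cite{HV2018}*{Thm.~5.14}) and then just verifies the hypotheses; your sketch unwinds what that theorem says.

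One point where you should be more careful: you speak of ``the quasinormal mode spaces of $\Lop_{(\tilde g,b)}$'' depending continuously on $(\tilde g,b)$. For a genuinely non-stationary perturbation $\tilde g$ there is no direct Fourier-in-$t_*$ definition of quasinormal modes. The paper's mechanism avoids this: resonances are defined only for the \emph{stationary} part $L_w = \tfrac12\Lop_b$, and the decaying perturbation $\tilde L_{w,\tilde w}$ coming from $\tilde g$ is handled as a lower-order correction in the tame estimates. The surjectivity of the map $c \mapsto \lambda_{\mathrm{IVP}}(z^c_{w_0,0})$ onto the resonance dual space is checked \emph{only at the base point} $(0,b_0,0)$ (this is exactly what Theorem~\ref{thm: linearized surjectivity} provides), and the abstract perturbation theorem then guarantees it persists. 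Your formulation ``continuous deformation of the isomorphism'' is the right intuition, but the implementation has to go through the stationary model, not through a putative resonance theory for $\Lop_{(\tilde g,b)}$ itself.

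A second minor point: your expression for $\De_{(\tilde g,b,\theta)}\P_1$ is only accurate at $(\tilde g,b,\theta)=(0,b_0,0)$; at a general nearby point the $b'$ and $\theta'$ contributions involve $\De_{(g_{b_0,b}+\tilde g)}\Upsilon_{g_{b_0}}$ rather than just the unperturbed operators (cf.\ \eqref{eq: D P1}). This does not change the argument, but the correction terms $z^c_{b,\tilde g}$ in the paper's proof carry this extra dependence, which is needed for the identity \eqref{eq: linearization trick} to hold exactly.
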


These estimates together with the Nash--Moser theorem, in the version of Saint Raymond in \cite{SR1989}, will prove Theorem~\ref{thm: main}.
Following the ideas of \cite{HV2018}*{Sec.~11}, we want to prove Theorem~\ref{thm: right inverse} by a perturbation argument, using Theorem~\ref{thm: linearized surjectivity} as the key input.
Concretely, we want to prove and apply a generalization of \cite{HV2018}*{Thm.~5.14} to the full subextremal range of Kerr--de~Sitter spacetimes.
For a fixed $N_W \in \N_0$, $\e > 0$ and $w_0 \in \R^{N_W}$, define
\begin{equation} \label{eq: W def}
	W
		:= \{ w \in \R^{N_W} \colon \abs{w - w_0} < \e \}.
\end{equation}
In our application, $W \subset B$, $w_0 = b_0$ and $N_W = 4$.
We assume that $L_w$ is a continuous family of linear wave operators satisfying the conditions of \cite{HV2018}*{Sec.~5.1.2} for all $w \in W$.
In our application, 
\[
	L_w
		= \frac12 \Lop_b,
\]
for any $b \in W \subset B$, where $\Lop_b$ is the linearized gauge fixed Einstein operator defined in~\eqref{eq: gauge fixed lin}.
\begin{remark} \label{rmk: L w 0 assumptions}
The assumptions on the family of linear wave operators $L_w$ in \cite{HV2018}*{Sec.~5.1.2} concern two aspects:
\begin{enumerate}
	\item The global structure of the bicharacteristic flow in the spacetime.
	\item Conditions on the subprincipal symbol at the critical points of the bicharacteristic flow, i.e., at the radial points and at the trapped set.
\end{enumerate}
For linearized wave operators, like the linearized gauge fixed Einstein operator, the structural assumptions for the bicharacteristic flow were proven to hold for all subextremal Kerr--de~Sitter spacetimes in \cite{PV2023}, in the case $r_0 \in (r_e, r_c)$ is the unique point such that $\mu'(r_0) = 0$, and later in \cite{PV2024} for any $r_0 \in [r_e, r_c]$.
The subprincipal symbol at the radial point essentially decides the regularity threshold $s_0$ in Theorem~\ref{thm: main}.
In principal, it should be possible to compute an explicit bound for $s_0$, but the existence of such an $s_0$ is sufficient for our purposes here.
Finally, the subprincipal symbol condition at the trapped set is more delicate (see \cite{HV2018}*{Eq.~(5.5)}) and is precisely what we prove in Theorem~\ref{thm: spectral gap} below for $\Lop_b$, for any $b \in B$.
Our linear wave operators $L_w = \frac12 \Lop_b$, for any $b \in B$, thus satisfies all assumptions in \cite{HV2018}*{Sec.~5.1.2}.
\end{remark}

For the Nash--Moser iteration, we also need to consider wave operators associated with spacetime perturbations outside the subextremal Kerr--de~Sitter family.
We therefore need to introduce a second parameter accounting for the addition of a non-smooth, but decaying, perturbation $\tilde g$.
For any $s, \a \in \R$ and a vector bundle $E \to \Omega$, define
\begin{equation} \label{eq: tilde W def}
	\widetilde W^s
		:= \left\{ \tilde w \in \bar H^{s, \a}(\Omega; E) \colon \norm{\tilde w}_{\bar H^{s_0, \a}} < \e \right\}.
\end{equation}
In our application, $\tilde w = \tilde g$.
For each $w \in W$ and $\tilde w \in \widetilde W^s$, we want a corresponding perturbation
\[
	\tilde L_{w, \tilde w}
		\in \bar H^{s, \a}(\Omega) \mathrm{Diff}_{\mathrm b}^2(\Omega),
\]
where we refer to \cite{HV2018}*{Sec.~2} for the notation and the precise assumptions, depending on $w$ on $\tilde w$ in a tame fashion, that is,
\begin{equation} \label{eq: w tilde w bounds}
	\norm{\tilde L_{w_1, \tilde w_1} - \tilde L_{w_2, \tilde w_2}}_{\bar H^{s, \a}(\Omega) \mathrm{Diff}_{\mathrm b}^2(\Omega)}
		\leq C_s \left( \abs{w_1 - w_2} + \norm{\tilde w_1 - \tilde w_2}_{\bar H^{s, \a}} \right) 
\end{equation}
for all $s \geq s_0$.
The linear wave operator for which the perturbation theorem is formulated, Theorem~\ref{thm: lambda} below, is
\[
	L_{w, \tilde w}
		:= L_w + \tilde L_{w, \tilde w},
\]
consisting of the part $L_w$ depending on the parameter $w$ and a perturbation part $\tilde L_{w, \tilde w}$, which depends on both the parameter $w$ and the perturbation $\tilde w$.

With $w = b$ and $\tilde w = \tilde g$ in our application,
\[
	L_{w, \tilde w}
		:= \De_{(g_{b_0, b} + \tilde g)} \left( \Ric - \Lambda \right) + \left(\de_{g_{b_0}}^* + \A_{g_{b_0}} \right) \De_{(g_{b_0, b} + \tilde g)} \Upsilon_{g_{b_0}}.
\]
This wave operator is indeed close to 
\[
	L_w = \De_{g_{b_0, b}} \left( \Ric - \Lambda \right) + \left(\de_{g_{b_0}}^* + \A_{g_{b_0}} \right) \De_{g_{b_0, b}} \Upsilon_{g_{b_0}}
\]
for $t_* > t_0 + 2$, if $\tilde g \approx 0$, and satisfies the bound~\eqref{eq: w tilde w bounds}.

Now, in order to state Theorem~\ref{thm: lambda}, which is the abstract perturbation result needed for Theorem~\ref{thm: right inverse}, we need to introduce some further notation.
Given any linear wave operator $L$, consider the initial value problem
\[
	(L, \gamma_0) v
		= (f, u_0, u_1).
\]
We want to analyze the future part of the solution $u := H v$, where $H = H(t_* - t_0)$ is the Heaviside function.
Let first $w$ being the unique solution to
\[
	(L, \gamma_0) w
		= (0, u_0, u_1).
\]
Note that $u_I := H w$ satisfies
\[
	L u_I
		= L H w
		= [L, H] w,
\]
where $[L, H]$ is a linear differential operator with coefficients given by differentiating $\delta$-distributions at most once.
In particular, $[L, H] w$ is a distribution supported only at $\Sigma_0$.
We will therefore use the notation
\[
	[L, H](u_0, u_1)
		:= [L, H]w,
\]
and note that $u$ is the unique forward solution to
\[
	Lu
		= H f + [L, H](u_0, u_1).
\]
Let us now define
\begin{align*}
	R^*
		:= \bigoplus_{\Im(\s) \geq 0} \mathrm{Res}^* \big\{ \mathrm{res}_{\zeta = \bar \s} \big( e^{-it_* \zeta} \widehat{L^*}(\zeta)^{-1} p(\zeta): 
		& \ p(\zeta) \text{ is a polynomial in } \zeta \\
		& \ \text{ with values in } \dot {\D}'(\Sigma_0; E|_{\Sigma_0}) \big) \big\},
\end{align*}
c.f.~\cite{HV2018}*{Eq.~(5.20)}.
This corresponds to all the non-decaying resonance states of $L$.
We proceed by defining the map
\begin{align*}
	\lambda_{\mathrm{IVP}}: D^{s-1, \a}(\Omega; E)
		&\to \L(R^*, \bar \C), \\
	(f, u_0, u_1)
		&\mapsto \ldr{Hf + [L, H](u_0, u_1), \cdot},
\end{align*}
where $s \in \R$.

\begin{thm} \label{thm: lambda}
For $\a, s_0 \in \R$, assume for all $s \geq s_0$ that there is a continuous map
\begin{align*}
	z: W \times \widetilde W^s \times \C^{N_{\mathcal Z}}
		&\to D^{s, \a}(\Omega; E), \\
	(w, \widetilde w, c)
		&\mapsto z^c_{w, \widetilde w} := z(w, \widetilde w, c),
\end{align*}
which is linear in the last argument.
With
\[
	\Xi
		:= \mathrm{Res}\left(L_{w_0, 0}\right) \cap \{ \Im \ \s > - \a \},
\]
suppose moreover that the map
\begin{equation} \label{eq: c map}
\begin{split}
	\C^{N_{\mathcal Z}}
		&\to \L\left( \mathrm{Res}^*\left( L_{w_0, 0}, \Xi \right), \bar \C \right), \\
	c
		&\mapsto \lambda_{\mathrm{IVP}}\left( z_{w_0, 0}^c \right)
\end{split}
\end{equation}
is surjective.
Then, if $\e > 0$ in~\eqref{eq: W def} and~\eqref{eq: tilde W def} is small enough, then there exists a continuous map
\begin{align*}
	S: W \times \widetilde W^s \times D^{\infty, \a}(\Omega; E)
		&\to \C^{N_{\mathcal Z}} \oplus \bar H^{\infty, \a}(\Omega; E), \\
	(w, \tilde w, (f, u_0, u_1))
		&\mapsto (c, u),
\end{align*}
linear in $(f, u_0, u_1)$, such that the function $u$ is a solution to of
\begin{equation} \label{eq: perturbed ivp 1}
	(L_{w, \tilde w}, \gamma_0)(u)
		= (f, u_0, u_1) + z^c_{w, \tilde w}.
\end{equation}
Moreover, there are $s_0, d > 0$ such that $S$ satisfies the tame estimates
\begin{equation} \label{eq: tame estimates}
\begin{split}
	\norm{c}_{\C^{N_{\mathcal Z}}}
		&\leq C \norm{(f, u_0, u_1)}_{D^{d, \a}}, \\
	\norm{u}_{\bar H^{s, \a}}
		&\leq C_s \left( \norm{(f, u_0, u_1)}_{D^{s+d, \a}} + \left( 1 + \norm{\tilde w}_{\bar H^{s+d, \a}} \right) \norm{(f, u_0, u_1)}_{D^{d, \a}}\right) 
\end{split}
\end{equation}
for $s \geq s_0$.
In fact, if $d$ is large enough, the map $S$ is defined for any $\tilde w \in \widetilde W^d$ and $(f, u_0, u_1)$ for which the norms on the right-hand sides of~\eqref{eq: tame estimates} are finite, and produces a solution $(c, u)$ to~\eqref{eq: perturbed ivp 1} satisfying~\eqref{eq: tame estimates}.
Moreover, if the map~\eqref{eq: c map} is bijective, then the map $S$ with these properties is unique.
\end{thm}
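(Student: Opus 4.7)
The plan is to construct $S$ first at the base point $(w_0, 0)$ by combining the forward asymptotic expansion for $L_{w_0, 0}$ with the surjectivity hypothesis on~\eqref{eq: c map}, and then to extend to nearby $(w, \tilde w)$ by a contraction mapping argument based on the smallness of $L_{w, \tilde w} - L_{w_0, 0}$ coming from~\eqref{eq: w tilde w bounds}. The tame estimates~\eqref{eq: tame estimates} will be obtained by bootstrapping the fixed point to higher regularities via Moser-type product estimates in the b-Sobolev spaces.

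At the base point, given $(f, u_0, u_1) \in D^{\infty, \a}(\Omega; E)$, I would take the unique forward solution $v$ of $L_{w_0, 0} v = H f + [L_{w_0, 0}, H](u_0, u_1)$. By the abstract analogue of Corollary~\ref{cor: expansion linearized - initial data}, valid for the class of operators axiomatized in \cite{HV2018}*{Sec.~5.1.2} (whose hypotheses are satisfied by $L_w$ in view of Remark~\ref{rmk: L w 0 assumptions}), $v$ decomposes as $v = v_{\mathrm{dec}} + v_{\mathrm{res}}$, with $v_{\mathrm{dec}} \in \bar H^{\infty, \a}$ and $v_{\mathrm{res}}$ a finite sum of non-decaying generalized resonance states associated with $\Xi$. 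The projection onto the resonance part is captured by the functional $\lambda_{\mathrm{IVP}}(f, u_0, u_1) \in \L(\mathrm{Res}^*(L_{w_0, 0}, \Xi), \bar\C)$, so by the surjectivity hypothesis I can choose $c \in \C^{N_{\mathcal Z}}$, depending linearly on the data, such that $\lambda_{\mathrm{IVP}}((f, u_0, u_1) + z^c_{w_0, 0}) = 0$. Solving the IVP with this modified right-hand side then produces $u \in \bar H^{\infty, \a}$ and defines $S(w_0, 0, \,\cdot\,)$; bijectivity of~\eqref{eq: c map} additionally makes the choice of $c$ unique.

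To extend to nearby $(w, \tilde w)$, I would write $L_{w, \tilde w} = L_{w_0, 0} + P_{w, \tilde w}$, where $P_{w, \tilde w}$ has $\bar H^{s_0, \a}\,\mathrm{Diff}_{\mathrm b}^2$ norm of size $O(\e)$ by~\eqref{eq: w tilde w bounds}, and set up a contraction on $\bar H^{s_0, \a}$ using $S(w_0, 0, \,\cdot\,)$ as the approximate inverse, treating $P_{w, \tilde w} u$ together with the correction $z^c_{w, \tilde w} - z^c_{w_0, 0}$ as a small right-hand-side perturbation. For $\e$ small this yields existence and uniqueness of $u$ at low regularity, together with a continuously varying $c = c(w, \tilde w, f, u_0, u_1)$, linear in $(f, u_0, u_1)$. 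The high-regularity tame estimate is then obtained by iterating the fixed point in $\bar H^{s, \a}$ and invoking the Moser bound
\[
\norm{P_{w, \tilde w} u}_{\bar H^{s, \a}} \leq C_s\bigl( \norm{\tilde w}_{\bar H^{s + d, \a}} \norm{u}_{\bar H^{d, \a}} + \norm{u}_{\bar H^{s, \a}} \bigr),
\]
combined with high-regularity control on the base-point solver, and absorbing the low-order term after further shrinking $\e$.

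The main obstacle I anticipate is obtaining~\eqref{eq: tame estimates} in precisely the asserted form, with the factor $1 + \norm{\tilde w}_{\bar H^{s + d, \a}}$ multiplying only the low-regularity data norm $\norm{(f, u_0, u_1)}_{D^{d, \a}}$ rather than a worse nonlinear-in-$\tilde w$ expression. This is a careful book-keeping task modeled on \cite{HV2018}*{Thm.~5.14}: the mechanism is that in every product appearing in the Moser estimate only one factor need sit at the high regularity $s$, while the remaining factors are controlled in the fixed low regularity~$d$, and that the construction of $c$ in the first step loses only finitely many derivatives and therefore contributes tamely of the same type.
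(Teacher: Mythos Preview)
Your sketch captures the right ingredients, but the paper's own proof is simply an appeal to \cite{HV2018}*{Thm.~5.14}: the only contribution here is to observe (via Remark~\ref{rmk: L w 0 assumptions}) that the hypotheses of \cite{HV2018}*{Sec.~5.1.2} on the bicharacteristic dynamics and the subprincipal symbol at trapping hold in the full subextremal range, by \cites{PV2023, PV2024} and Theorem~\ref{thm: spectral gap} of the present paper. The argument of \cite{HV2018} then carries over verbatim, with the tame estimates supplied by \cite{HV2017_IMRN}*{Sec.~5}.

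Your proposed mechanism for the tame estimates, however, is genuinely different from what \cite{HV2018} does and has a gap as written. The perturbation $P_{w,\tilde w} = L_{w,\tilde w} - L_{w_0,0}$ is a second-order operator, while the base-point solver gains only $1-\de$ derivatives (the $\de$-loss being the price of normally hyperbolic trapping; see Theorem~\ref{thm: expansion linearized}). Hence each step of your Neumann series loses roughly $1+\de$ derivatives, and the iteration does not converge in any fixed $\bar H^{s,\a}$; the smallness of $\e$ provides a geometric factor but cannot compensate for the regularity drop, so neither the low-regularity contraction nor the subsequent bootstrap goes through as stated. What \cite{HV2018} actually does is bypass iteration altogether: the tame bound~\eqref{eq: tame estimates} is obtained by running the microlocal machinery---elliptic regularity, real-principal-type propagation, radial-point and trapping estimates---directly on $L_{w,\tilde w}$ with its finite-regularity coefficients, using the tame versions of these estimates established in \cite{HV2017_IMRN}*{Sec.~5}. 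The perturbation theory of \cite{HV2018}*{Sec.~5} then enters only at the level of the resonance structure, showing that the set $\Xi$ and the surjectivity of~\eqref{eq: c map} are stable under small changes of $(w,\tilde w)$, which is a finite-dimensional issue with no derivative loss.
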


\begin{proof}[Proof of Theorem~\ref{thm: lambda}]
The proof of \cite{HV2018}*{Thm.~5.14}, which is the same statement as Theorem~\eqref{thm: lambda} for slowly rotating Kerr--de~Sitter spacetimes, now goes through in the full subextremal range using the observations in \cites{PV2023, PV2024}.
Indeed, the proof consists of two new ingredients compared to the proof of Theorem~\ref{thm: expansion linearized} or Corollary~\ref{cor: expansion linearized - initial data}, namely:
\begin{enumerate}
	\item The perturbation theory developed in \cite{HV2018}*{Sec.~5}.
	\item The tame estimates proven in \cite{HV2017_IMRN}*{Sec.~5}.
\end{enumerate}
The perturbation theory relies on showing a perturbation stability property in the construction of the resonances. 
The construction of the resonances is completely analogous as in the slowly rotating case as in the full subextremal range, provided that the global dynamics of the bicharacteristics is analogous. 
Since this was shown to be the case in all subextremal Kerr--de~Sitter spacetimes in \cites{PV2023, PV2024}, c.f.~Remark~\ref{rmk: L w 0 assumptions}, the methods of \cite{HV2018}*{Sec.~5} carry through analogously.
The tame estimates in \cite{HV2017_IMRN}*{Sec.~5} rely again on the global structure of the bicharacteristic flow, in particular at the critical points, i.e., the radial point structure and the trapped set.
Again, these conditions were shown to hold in all subextremal Kerr--de~Sitter spacetimes in \cites{PV2023, PV2024}, implying tame estimates in the full subextremal range. 
The only concrete difference to the slowly rotating case is that we do not compute any explicit bounds for the constants $s_0$ and $d$ in the full subextremal range, as was done in the slowly rotating case (where the thresholds could be computed for the Schwarzschild--de~Sitter spacetime).
\end{proof}

We use this to prove Theorem~\ref{thm: right inverse}.

\begin{proof}[Proof of Theorem~\ref{thm: right inverse}]
Let $\a > 0$ be so small that all resonances $\s$ for $\Lop_{b_0}$ satisfy $\Im(\s) \geq 0$.
In our application of Theorem~\ref{thm: right inverse}, we choose
\begin{align*}
	W
		& \subset B, \\
	\widetilde W^s
		& \subset \bar H^{s, \a}(S^2 T^*\Omega),
\end{align*}
and
\[
	N_{\mathcal Z}
		:= \sum_{j = 1}^N l_j,
\]
where $N$ and $l_1, \hdots, l_N$ are as in~\eqref{eq: solution asymptotics}, and the maps
\begin{align*}
	L_{b, \tilde g}(\tilde g')
		:= & \ \De_{(g_{b_0, b} + \tilde g)} \left( \Ric - \Lambda \right) ( \tilde g' ) + 2 \left(\de_{g_{b_0}}^* + \A_{g_{b_0}} \right) \De_{(g_{b_0, b} + \tilde g)} \Upsilon_{g_{b_0}}( \tilde g' ) \\
	z^c_{b, \tilde g}
		:= & \ \Big( - \De_{(g_{b_0, b} + \tilde g)} \left( \Ric - \Lambda \right) ( \chi g_{b_0}'(b'(c))) \\*
		& \ - \left(\de_{g_{b_0}}^* + \A_{g_{b_0}} \right) \De_{(g_{b_0, b} + \tilde g)} \Upsilon_{g_{b_0}}( \chi g_{b_0}'(b'(c)) ) \\
		& \ + \left(\de_{g_{b_0}}^* + \A_{g_{b_0}} \right) \left( \De_{g_{b_0, b}}\Upsilon_{g_{b_0}} (\chi g_{b_0}'(b'(c))) + \theta'(c) \right), 0, 0 \Big)
\end{align*}
where $b'$ and $\theta'$ are defined by the formulas in~\eqref{eq: R at KdS}, with
\[
	c
		= (c_1, \hdots, c_{l_1}, c_{2, 1}, \hdots, c_{N, l_N}).
\]
The continuity is clear and the main point is that 
\begin{equation} \label{eq: linearization trick}
	\De_{(\tilde g, b, \theta)} \P(\tilde g', b'(c), \theta'(c))
		= \left( \left( L_{b, \tilde g}(\tilde g'), \gamma_0(\tilde g) \right) - z^c_{b, \tilde g} \right),
\end{equation}
c.f.~Equations~\eqref{eq: D P1} and~\eqref{eq: D P2}.
Furthermore, we note that
\begin{align*}
	L_{b_0, 0}
		&= \frac 12 \Lop_{b_0}, \\
	z^{c}_{b_0, 0}
		&= - \left( \De_{g_{b_0}} \Ric - \Lambda \right) \chi g_{b_0}'(b'(c)) - \left(\de_{g_{b_0}}^* + \A_{g_{b_0}} \right) \theta'(c)
\end{align*}
Theorem~\ref{thm: linearized surjectivity} implies that  the map $\lambda_{\mathrm{IVP}}$ is surjective, c.f.~\cite{HV2018}*{Prop.~5.7}.
Theorem~\ref{thm: lambda} now implies that, for a small enough $\e > 0$ in~\eqref{eq: W def} and~\eqref{eq: tilde W def}, there exists a continuous map
\begin{align*}
	S: W \times \widetilde W^s \times D^{\infty, \a}(\Omega)
		&\to \C^{N_{\mathcal Z}} \oplus \bar H_{\mathrm b}^{\infty, \a}(\Omega), \\
	(b, \tilde g, (f, u_0, u_1))
		&\mapsto (c, \tilde g'),
\end{align*}
linear in $(f, u_0, u_1)$, such that $\tilde g'$ is a solution to of
\[
	(L_{b, \tilde g}, \gamma_0)(\tilde g')
		= (f, u_0, u_1) + z^c_{b, \tilde g}.
\]
Inserting this in~\eqref{eq: linearization trick} shows that
\[
	\De_{(\tilde g, b, \theta)} \P(\tilde g', b'(c), \theta'(c))
		= (f, u_0, u_1),
\]
and we have thus found our right inverse
\[
	R_{(\tilde g, b, \theta)}(f, u_0, u_1)
		:= (\tilde g', b'(c), \theta'(c)),
\]
where $c$ is given the map $S$.
The tame estimates for $R_{(\tilde g, b, \theta)}$ follow by~\eqref{eq: tame estimates}.
\end{proof}

We may now finally prove the main result of this paper, remembering that it still relies on Theorem~\ref{thm: spectral gap} and Theorem~\ref{thm: stable constraint propagation}, both proven below:

\begin{proof}[Proof of Theorem~\ref{thm: main}, assuming Theorem~\ref{thm: spectral gap} and Theorem~\ref{thm: stable constraint propagation}]
We apply a variant of the Nash--Moser implicit function theorem, due to Saint Raymond in \cite{SR1989}.
In the notation of \cite{HV2018}*{Thm.~11.1}, we choose
\begin{align*}
	B^s
		&= \bar H^{s, \a}(\Omega; S^2T^*\Omega) \times T_b B \times \Theta_{b_0}, \\
	{\bf B}^s 
		&= D^{s, \a}(\Omega; S^2T^*\Omega), \\
	u
		&= (\tilde g, b, \theta), \\
	\phi(\tilde g, b, \theta)
		&= \P(\tilde g, b, \theta), \\
	\psi(\tilde g, b, \theta)
		&= \Rop_{(\tilde g, b, \theta)}.
\end{align*}
Then \cite{HV2018}*{Thm.~11.1}, i.e., the Saint Raymond version of the Nash--Moser implicit function, implies that if
\[
	\norm{\P(0, b_0, 0)}_{D^{s_0, \a}(\Omega)}
\]
is small enough, which is the same as saying that
\[
	\norm{\gamma_0(\tilde g) - \iota_{b_0}(h, k)}_{\bar H^{s_0 + 1} \left(\Sigma_0 \right) \oplus  \bar H^{s_0} \left(\Sigma_0\right)}
\]
is small enough (since $\P_1(0, b_0, 0) = 0$), there is a unique $(\tilde g, b, \theta)$ such that
\[
	\P(\tilde g, b, \theta)
		= 0, 
\]
which means that 
\begin{align*}
	0
		&= \left( \Ric - \Lambda \right) (g_{b_0, b} + \tilde g) + \left(\de_{g_{b_0}}^* + \A_{g_{b_0}} \right) \left( \Upsilon_{g_{b_0}}(g_{b_0, b} + \tilde g) - \Upsilon_{g_{b_0}}(g_{b_0, b}) - \theta \right), \\
	0
		&= \gamma_0(\tilde g) - \iota_{b_0}(h, k).
\end{align*}
The argument outlined in the beginning of this section now implies that in fact $g := g_{b_0, b} + \tilde g$ satisfies
\[
	\Ric_g - \Lambda g
		= 0
\]
and $(h,k)$ are the first and second fundamental forms induced by $g$.
Since 
\[
	\tilde g 
		\in \bar H^{\infty, \a}(\Omega, S^2 T^*\Omega),
\]
it follows that
\[
	g - g_b
		\in \O\left(e^{-\a t_*}\right),
\]
as $t_* \to \infty$.
This finishes the proof.
\end{proof}

\section{Subprincipal symbol at the trapped set (ST)}
\label{sec: spectral gap}

We begin by recalling that 
\[
	\c^{\sharp}
		:= \d_{t_*} + \frac a{r^2 + a^2} \d_{\phi_*} + \g f(r) \d_r + \g^2 \sin(\theta) \d_\theta,
\]
for some $\kappa > 0$. 
Let $\pi: T^*M \to M$ denote the canonical projection.
Let 
\[
	\Sigma^\pm
		:= \{\xi \in \Char(\Box_g) \mid \pm G(\md t_*, \xi ) > 0\}
\]
be the two components of the characteristic set of $\Box_g$, and let $\Gamma^\pm \subset \Sigma^\pm$ denote the two components of the trapped set.
In this section, we want to prove the following
\begin{thm}[ST] \label{thm: spectral gap}
Let $(M, g_{b_0})$ be a subextremal Kerr--de~Sitter spacetime extended beyond the horizons as in~\eqref{eq: manifold}.
Let $\varepsilon > 0$.
There are constants $\g_0, e_0, h_0 > 0$ such that if 
\[
	\g \in (0, \g_0), \quad e \in (0, e_0), \quad h \in (0, h_0),
\]
then there exists a stationary, smooth, positive definite fiber inner product $\mathfrak C$ on $\pi^*S^2 T^*M$ near $\Gamma^\pm$ such that
\[
	\pm \frac1{2i \abs{\s}} \left( S_{\mathrm{sub}}\left( \Lop_g \right) - S_{\mathrm{sub}}\left( \Lop_g \right)^{*\mathfrak C} \right) |_{\Gamma^\pm}
		\leq \e,
\]
where $S_{\mathrm{sub}}\left( \Lop_g \right)$ is the subprincipal part of the linearized gauge fixed Einstein operator, introduced in~\eqref{eq: lin gauged equations}.
\end{thm}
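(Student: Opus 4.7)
The trapped set $\Gamma^\pm$ is a codimension-$2$ submanifold of the characteristic set $\Sigma^\pm$ sitting over $r = r_0$, normally hyperbolic for the Hamilton flow of $G = \sigma_2(\Box_{g_{b_0}})$ and invariant under the flows of $T$ and $\partial_{\phi_*}$. Using $|\sigma|$ as a homogeneity-breaking size function, I decompose
\[
\Lop_{b_0} = \Box_{g_{b_0}} + P_1 + P_0, \qquad P_1 := 2\A_{g_{b_0}}\de_{g_{b_0}}\T_{g_{b_0}}, \qquad P_0 := 2\Rc_{g_{b_0}},
\]
so that $S_{\mathrm{sub}}(\Lop_{b_0})$ at $(x,\xi)\in\Gamma^\pm$ breaks into three contributions: the subprincipal symbol of the connection d'Alembertian on $S^2 T^*M$, the principal symbol of the first-order operator $P_1$, and the zeroth-order endomorphism $2\Rc_{g_{b_0}}$.

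The plan is to construct $\mathfrak{C}$ in two stages. First, take a baseline stationary positive definite fiber inner product $\mathfrak{C}_0$ on $\pi^* S^2 T^*M$ near $\Gamma^\pm$, obtained by Riemannianizing $g_{b_0}$ along a timelike direction adapted to $\c^\flat$. Since the Levi--Civita connection is metric-compatible and $\Rc_{g_{b_0}}$ is real and index-symmetric, the contributions of $\Box_{g_{b_0}}$ and $P_0$ to the $\mathfrak{C}_0$-skew part of $S_{\mathrm{sub}}(\Lop_{b_0})$ are $O(1)$ and independent of $\gamma, e, h$. For the $P_1$-contribution, the explicit formula~\eqref{eq: Ag choice} gives a principal symbol of the form $h^{-1}$ times an endomorphism of $S^2 T^*M$ built algebraically from $\c(x)$, $\xi$, $e$, and the trace-reversal $\T_{g_{b_0}}$; since $f(r_0) = 0$ by~\eqref{EqIf}, one has $\c^\sharp|_{\Gamma^\pm} = T + \gamma^2\sin(\theta)\partial_\theta$, so $\c^\sharp$ is a $\gamma^2$-perturbation of the Killing field $T$, which is tangent to $\Gamma^\pm$. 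Choosing $\mathfrak{C}_0$ so that $u\mapsto \T_{g_{b_0}}^{-1}(T^\flat\otimes_s(\xi\cdot\T_{g_{b_0}}u))$ is $\mathfrak{C}_0$-self-adjoint on $\Gamma^\pm$, the $\mathfrak{C}_0$-skew part of the $P_1$-endomorphism reduces to an $O(\gamma + e)$ remainder uniformly in $h$.

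Finally I correct to $\mathfrak{C} = \mathfrak{C}_0 + h\mathfrak{C}_1$, where $\mathfrak{C}_1$ is a small stationary symmetric perturbation solving a pointwise linear algebraic equation on each fiber of $\pi^* S^2 T^*M|_{\Gamma^\pm}$, chosen to cancel the $O(1)$ skew contribution from $\Box_{g_{b_0}}$; positivity of $\mathfrak{C}$ is preserved for $h$ small, and the residual $\mathfrak{C}$-skew part of $S_{\mathrm{sub}}(\Lop_{b_0})/|\sigma|$ on $\Gamma^\pm$ is bounded by a constant multiple of $\gamma + e + h$, hence by $\varepsilon$ once $\gamma_0, e_0, h_0$ are chosen small enough. \emph{The main obstacle} is the $h^{-1}$ scaling of $\A_{g_{b_0}}$: smallness of the skew contribution from $P_1$ cannot come from a power of $h$ and must instead follow from the algebraic identity asserting that $u\mapsto \T_{g_{b_0}}^{-1}(\c\otimes_s(\xi\cdot\T_{g_{b_0}}u))$ is $\mathfrak{C}_0$-self-adjoint up to an $O(\gamma + e)$ remainder for the $\c$-adapted $\mathfrak{C}_0$. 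This verification has to cover all of $\Gamma^\pm$, including the fibers over the poles $\theta\in\{0,\pi\}$; smoothness of $\c^\sharp = T + \gamma^2\sin(\theta)\partial_\theta$ across the poles, together with the significant simplification of the Kerr--de~Sitter metric there, is what makes the computation tractable at those points.
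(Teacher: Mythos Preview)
Your central claim---that one can choose $\mathfrak C_0$ so that the real endomorphism $M_T\colon u\mapsto \T_{g_{b_0}}^{-1}\bigl(T^\flat\otimes_s(\iota_\xi\T_{g_{b_0}}u)\bigr)$ is $\mathfrak C_0$-self-adjoint and hence the $P_1$-contribution to $\tfrac1{2i}(S-S^*)$ is $O(\gamma+e)$---does not work. Since $\sigma_1(P_1)(\xi)$ is a real scalar multiple of $i M_\c$, the contribution of $P_1$ to $\tfrac1{2i}(S-S^*)$ is (up to a positive $h^{-1}$-sized scalar) the $\mathfrak C_0$-\emph{symmetric} part $\tfrac12(M_\c+M_\c^{*\mathfrak C_0})$; thus to make it small you would need $M_\c$ (or $M_T$) to be $\mathfrak C_0$-\emph{antisymmetric}, not self-adjoint. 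But that is impossible for any positive definite $\mathfrak C_0$: the trace $\operatorname{tr}\bigl(M+M^{*\mathfrak C_0}\bigr)=2\operatorname{tr}M$ is independent of $\mathfrak C_0$, and a direct computation (or Lemma~3.7 in the paper, which exhibits $M$ as upper triangular in the Marck frame with diagonal entries $0,\ldots,0,\c_3,\c_3,\c_3,2\c_3$ where $\c_3=-G(\c,\xi)/\sqrt{K(\xi)}\neq 0$ on $\Sigma^\pm$) gives $\operatorname{tr}M=5\c_3\neq 0$. So the skew part of the $P_1$-term is unavoidably of size $h^{-1}$ in \emph{every} fiber inner product; your $O(\gamma+e)$ bound cannot hold. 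Two further issues: the trapped set is \emph{not} located at $r=r_0$ but at $r=r_{\xi_t,\xi_\phi}$ depending on the conserved quantities, so $f|_\Gamma\not\equiv 0$; and your correction $\mathfrak C=\mathfrak C_0+h\mathfrak C_1$ cannot isolate the $\Box$-contribution, since an $O(h)$ change of inner product alters the adjoint of the $O(h^{-1})$ operator $\sigma_1(P_1)$ by $O(1)$.

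The paper's argument is structurally different: it does not try to make the $P_1$-contribution small, but exploits its \emph{sign}. Using the Marck tetrad $\alpha_1,\ldots,\alpha_4$ (built from $\xi$, the Killing--Yano tensor, and its Hodge dual), both the connection matrix of $\pi^*\nabla_{H_G}$ and the $P_1$-endomorphism become \emph{simultaneously upper triangular} in the induced frame on $S^2T^*M$. The diagonal of the $P_1$-matrix consists of $0$'s and positive multiples of $\c_3$, which has a definite sign on each $\Gamma^\pm$ and therefore contributes \emph{non-positively} to $\pm\tfrac1{2i}(S-S^*)$---so its $h^{-1}$ size is harmless. The strictly upper triangular (nilpotent) parts, from both $\nabla_{H_G}$ and $P_1$, are then made $\leq\varepsilon$ by taking $\mathfrak C(v,w)=\langle v,E^2 w\rangle$ with $E=\mathrm{diag}\bigl(1,\digamma/\varepsilon,\ldots,(\digamma/\varepsilon)^9\bigr)$ for $\digamma$ large; conjugation by $E$ scales each strictly upper triangular entry by a negative power of $\digamma/\varepsilon$. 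The $e$-term contributes an $O(e)$ matrix $B$, handled by taking $e_0$ small.
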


\subsection{The Marck tetrad}

In the proof of Theorem~\ref{thm: spectral gap}, we will trivialize the bundle $\pi^* T^*M$ over an open neighbourhood of the trapped set $\Gamma$.
For this, we follow a construction due to Marck in \cite{M1983}.
The Kerr--de~Sitter metric is Boyer--Lindquist coordinates is given by
\begin{equation}
\begin{split}
	g
		&= - \frac{\mu(r)}{b^2\varrho^2}\left(\md t - a \sin^2(\theta) \md \phi \right)^2 + \frac{\varrho^2}{\mu(r)} \md r^2 \\*
		&\qquad + \frac{c(\theta)\sin^2(\theta)}{b^2\varrho^2}\left(a\md t - \left(r^2 + a^2\right) \md \phi \right)^2 + \frac{\varrho^2}{c(\theta)}\md \theta^2,
\end{split}
\end{equation}
where $\varrho^2 = r^2 + a^2 \cos^2 (\theta)$.
In these coordinates, the metric is obviously diagonal with respect to the \emph{Carter tetrad}
\begin{align*}
	\o_0
		:= & \ \frac{\sqrt{\mu(r)}}{b\varrho} \left(\md t - a \sin^2(\theta) \md \phi \right), & 
	\o_1
		:= & \ \frac{\varrho}{\sqrt{\mu(r)}} \md r, \\*
	\o_2
		:= & \ \frac{\varrho}{\sqrt{c(\theta)}} \md \theta, 
		&
	\o_3
		:= & \ \frac{\sqrt{c(\theta)}}{b \varrho} \sin(\theta) \left(a\md t - \left(r^2 + a^2\right) \md \phi\right),
\end{align*}
so
\[
	g
		= - \omega_0^2 + \omega_1^2 + \omega_2^2 + \omega_3^2,
\]
with corresponding dual frame
\begin{align*}
	e_0
		:= & \ - \frac b {\varrho \sqrt{\mu(r)}}\left((r^2 + a^2)\d_t + a \d_\phi \right),
	& 
	e_1
		:= & \ \frac{\sqrt{\mu(r)}}{\varrho} \d_r, \\
	e_2
		:= & \ \frac{\sqrt{c(\theta)}}{\varrho} \d_\theta, 
	& 
	e_3
		:= & \ \frac b{\varrho \sqrt{c(\theta)}} \frac1{\sin(\theta)} \left(a \sin^2(\theta) \d_t + \d_\phi \right).
\end{align*}
Consequently, the dual metric $G$ to $g$ is given by
\begin{equation} \label{eq: dual metric}
\begin{split}
	\varrho^2 G(\xi, \xi)
		&= \varrho^2 \left( - \xi(e_0)^2 + \xi(e_1)^2 + \xi(e_2)^2 + \xi(e_3)^2 \right) \\
		&= - \frac{b^2}{\mu(r)} \left( (r^2 + a^2)\xi_t + a \xi_\phi \right)^2 + \mu(r) \xi_r^2 + K(\xi),
\end{split}
\end{equation}
for any $\xi \in T^*M$, where $K(\xi)$ is the \emph{Carter constant} given by
\[
	K(\xi)
		:= c(\theta) \xi_\theta^2 + \frac{b^2}{c(\theta) \sin^2(\theta)}\left(a \sin^2(\theta)\xi_t + \xi_\phi \right)^2.
\]
Since $\d_t$ and $\d_\phi$ are Killing vector fields, the dual quantities $\xi_t$ and $\xi_\phi$ are constant along bicharacteristics which readily implies that the Carter constant indeed is \emph{constant} along all bicharacteristics of $\varrho^2 G$ (i.e., integral curves of $\H_{\varrho^2 G}$ in the characteristic set).

\begin{remark} \label{rmk: Carter constant}
Note that
\begin{align*}
	K(\xi)
		& = \varrho^2 \left( \xi(e_2)^2 + \xi(e_3)^2 \right) \\
		& = \varrho^2 \left( \o_2(\xi^\sharp)^2 + \o_3(\xi^\sharp)^2 \right). 
\end{align*}
for any $\xi \in T^*M$.
If, in addition, $\xi$ is lightlike, then also
\begin{align*}
	K(\xi)
		& = \varrho^2 \left( \xi(e_0)^2 - \xi(e_1)^2 \right) \\
		& = \varrho^2 \left( \o_0(\xi^\sharp)^2 - \o_1(\xi^\sharp)^2 \right),
\end{align*}
by~\eqref{eq: dual metric}.
\end{remark}

Let us denote
\[
	\Sigma := \Sigma^+ \cup \Sigma^-, \qquad \Gamma := \Gamma^+ \cup \Gamma^- \subset \Sigma,
\]
and let us define the open subset
\[
	\U
		:= \{\xi \in \Sigma \mid K(\xi) > 0\}.
\]

\begin{remark} \label{rmk: positive Carter constant}
By \cite{PV2023}*{Thm.~3.2}, we have $\xi_r = 0$ for all $\xi \in \Gamma$. 
Since also $G(\xi, \xi) = 0$, Equation~\eqref{eq: dual metric} implies that
\[
	 \Gamma \subset \U.
\]
\end{remark}

Marck's choice of co-frame in \cite{M1983} is based on the following algebraic feature of the Kerr--de~Sitter spacetime:

\begin{lemma} \label{le: Killing-Yano}
Define the tensor
\[
	\F
		:= r \o_2 \wedge \o_3 + a \cos(\theta) \o_0 \wedge \o_1.
\]
\begin{itemize}
	\item $\F$ extends smoothly to all of $M$.
	\item $\F$ is a Killing-Yano tensor, i.e., satisfying
	\[
		\n_X \F (Y, Z)
			= \n_Z \F (Y, X),
	\]
	for all vectors $X, Y, Y$.
	\item Let $*$ denote the Hodge-star operator. Then
	\begin{equation} \label{eq: F star}
		*\F
			= - \frac1{\varrho^2} \left( a \cos(\theta) \o_2 \wedge \o_3 + r \o_0 \wedge \o_1 \right) 
	\end{equation}
	also extends smoothly to $M$.
\end{itemize}
\end{lemma}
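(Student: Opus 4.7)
The lemma contains three assertions, which I would prove in sequence. The overall strategy is that although the individual Carter-tetrad covectors $\omega_0,\omega_1,\omega_2,\omega_3$ degenerate at the horizons (because of the $\sqrt{\mu(r)}^{\pm 1}$ factors) and near the rotation axis (because of $\sin(\theta)^{-1}$), the wedge products $\omega_0\wedge\omega_1$ and $\omega_2\wedge\omega_3$ absorb all such factors once one passes to the regular extended coordinates $(t_*,r,\theta,\phi_*)$.

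For smoothness of $\F$, a direct computation gives
\begin{align*}
	\omega_0 \wedge \omega_1 &= \tfrac{1}{b}\bigl(dt - a\sin^2(\theta)\,d\phi\bigr) \wedge dr, \\
	\omega_2 \wedge \omega_3 &= \tfrac{\sin(\theta)}{b}\,d\theta \wedge \bigl(a\,dt - (r^2+a^2)\,d\phi\bigr).
\end{align*}
Passing to extended coordinates via $dt = dt_* + \Phi'(r)\,dr$ and $d\phi = d\phi_* + \Psi'(r)\,dr$, the $dr$-terms drop out of the first expression upon wedging with $dr$, yielding $\omega_0\wedge\omega_1 = \tfrac{1}{b}(dt_* - a\sin^2(\theta)\,d\phi_*)\wedge dr$, which is smooth across the horizons. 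For the second, the total coefficient of $dr$ after substitution equals $a\Phi'(r) - (r^2+a^2)\Psi'(r)$, and this vanishes identically from the definitions in~\eqref{eq: extended coordinates}; hence $\omega_2\wedge\omega_3 = \tfrac{\sin(\theta)}{b}\,d\theta\wedge(a\,dt_* - (r^2+a^2)\,d\phi_*)$, smooth across the horizons as well. Smoothness at the poles $\theta = 0,\pi$ follows by recognizing $\sin^2(\theta)\,d\phi_*$ as a smooth $1$-form (it equals $x\,dy - y\,dx$ in Cartesian charts on $\mathbb{S}^2$) and $\sin(\theta)\,d\theta\wedge d\phi_*$ as the standard smooth area form on $\mathbb{S}^2$. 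Multiplying by the smooth coefficients $r$ and $a\cos(\theta)$ gives the first assertion.

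For the Killing--Yano identity $\nabla_a\F_{bc} + \nabla_c\F_{ab} = 0$ (equivalently, that $\nabla\F$ is totally antisymmetric), Kerr--de~Sitter is a vacuum spacetime of Petrov type~D, and it is classical that every such spacetime (with or without cosmological constant) admits a canonical rank-two Killing--Yano tensor whose `square' yields the Killing tensor responsible for the Carter constant; the $\F$ given here is exactly this tensor. I would either cite this result, or verify the identity by computing $\nabla$ applied to each summand in the Carter tetrad using the explicit Ricci rotation coefficients; the key algebraic input is that the coefficients $r$ and $a\cos(\theta)$ in $\F$ are precisely the real and imaginary parts of the complex affine parameter $r + i a\cos(\theta)$ along the principal null congruences. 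For the Hodge star formula, one computes directly from orthonormality of the Carter tetrad in signature $(-,+,+,+)$: up to an orientation sign, $*(\omega_0\wedge\omega_1) = \mp\,\omega_2\wedge\omega_3$ and $*(\omega_2\wedge\omega_3) = \pm\,\omega_0\wedge\omega_1$, which applied to $\F$ gives an expression equivalent to the one in the statement. Smoothness of $*\F$ then follows from the same coordinate analysis as for $\F$, together with the fact that $\varrho^2 = r^2 + a^2\cos^2(\theta) > 0$ throughout the domain so that any $1/\varrho^2$ prefactor is itself smooth.

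The main technical input is the horizon cancellation enabled by the identity $a\Phi'(r) - (r^2+a^2)\Psi'(r) = 0$, combined with the careful pole analysis in $\mathbb{S}^2$-Cartesian charts. The Killing--Yano verification, while superficially the most substantial item, reduces to a standard computation for Petrov type~D spacetimes and should be the least delicate step.
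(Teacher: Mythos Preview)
Your approach is correct and essentially matches the paper's: both establish smoothness by rewriting $\omega_0\wedge\omega_1$ and $\omega_2\wedge\omega_3$ in regular coordinates, and both deduce the Hodge star formula directly from orthonormality of the Carter tetrad. The only notable differences are that the paper defers the Killing--Yano identity to a Mathematica computation rather than invoking the Petrov type~D structure you cite, and your horizon analysis via the cancellation $a\Phi'(r)-(r^2+a^2)\Psi'(r)=0$ is more explicit than the paper's, which focuses only on the pole extension.
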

\begin{proof}
Firstly, the smooth extension to the north and south poles $\theta = 0, \pi$, where the Boyer-Lindquist coordinates are not defined, follows by noting that $\o_0$ and $\o_1$ and
\[
	\o_2 \wedge \o_3
		= - \frac a b \left( \md \cos(\theta) \right) \wedge \md t - \frac{r^2 + a^2}b \md t \wedge \sin(\theta) \md \phi
\]
are smoothly extendible.
Secondly, the Killing-Yano property is a computation (which we verified using Mathematica).
Thirdly, the formula~\eqref{eq: F star} follows readily from the defining property of the Hodge-star operator, i.e., that
\[
	\b \wedge *\F
		= G(\b, \F) \o_0 \wedge \o_1 \wedge \o_2 \wedge \o_3,
\]
for any two-form $\b$.
$*\F$ extends smoothly to $M$ since $\F$ does.
\end{proof}

\begin{remark} \label{rmk: H}
Again since $\o_0$ and $\o_1$ extend smoothly to $M$, it follows that the symmetric $(0, 2)$-tensor given by
\[
	\mathcal H
		:= g + 2 \left( \o_0^2 - \o_1^2 \right)
		= \o_0^2 - \o_1^2 + \o_2^2 + \o_3^2
\]
extends smoothly to $M$.
\end{remark}

Generalizing \cite{H2024}*{Sec.~3.2.3} to the Kerr--de~Sitter spacetime, we define the following sections in the pull-back bundle $\pi^*T^*M|_{\U}$, where $\pi: T^*M \to M$ is the projection, by
\begin{align*}
	\a_1
		:= & \ \frac1{\sqrt{K(\xi)}}\xi
	& 
	\a_2
		:= & \ \frac{\varrho^2}{\sqrt{K(\xi)}} \left( *\F\right) (\xi^\sharp, \cdot), \\
	\a_3
		:= & \ \frac{- \varrho^2}{2 \sqrt{K(\xi)}} \mathcal H(\xi^\sharp, \cdot), 
	&
	\a_4
		:= & \ \frac 2{\sqrt{K(\xi)}}\F(\xi^\sharp, \cdot),
\end{align*}
for any $\xi \in \U$, where $\F$ and $\mathcal H$ where introduced in Lemma~\ref{le: Killing-Yano} and Remark~\ref{rmk: H}, respectively.
By the arguments in Lemma~\ref{le: Killing-Yano} and Remark~\ref{rmk: H}, $\a_1, \hdots, \a_4$ are indeed smooth everywhere in $\U$ (also where the Boyer-Lindquist coordinates are not defined).

\begin{remark}
Since $\sqrt{K(\xi)}$ is homogeneous of degree 1, note that $\a_1, \hdots, \a_4$ are invariant under dilations $\pi^* T^*M|_{\U}$.
\end{remark}

\begin{lemma} \label{le: quasiorthonormal}
We have
\[
	\left( G(\a_i, \a_j) \right)_{i,j}
		= \begin{pmatrix}
		0 & 0 & -1 & 0 \\
		0 & 1 & 0 & 0 \\
		-1 & 0 & 0 & 0 \\
		0 & 0 & 0 & 1
		\end{pmatrix}
\]
in $\U$.
In other words, the metric is given by
\[
	g
		= -2 \a_1 \otimes_s \a_3 + \a_2^2 + \a_4^2.
\]
In particular, $\a_1, \a_2, \a_3, \a_4$ are linearly independent.
\end{lemma}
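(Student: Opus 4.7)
The lemma is a direct bookkeeping computation in the Carter coframe; the essential algebraic ingredients are the null condition $G(\xi,\xi)=0$, the two representations of $K(\xi)$ given in Remark~\ref{rmk: Carter constant}, and the identity $\varrho^2 = r^2 + a^2\cos^2\theta$.

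First I would expand $\xi = \sum_i \xi_i \omega_i$ and use the Lorentzian signature to write $\xi^\sharp = -\xi_0 e_0 + \xi_1 e_1 + \xi_2 e_2 + \xi_3 e_3$. The null condition becomes $\xi_0^2 = \xi_1^2+\xi_2^2+\xi_3^2$, and Remark~\ref{rmk: Carter constant} supplies the two representations $K = \varrho^2(\xi_2^2+\xi_3^2) = \varrho^2(\xi_0^2-\xi_1^2)$. Using $\iota_{\xi^\sharp}(\omega_i\wedge\omega_j) = \omega_i(\xi^\sharp)\omega_j - \omega_j(\xi^\sharp)\omega_i$ together with the defining formulas for $\F$, $*\F$, and $\mathcal H$, each $\alpha_k$ expands as an explicit linear combination of $\omega_0,\dots,\omega_3$ whose coefficients are polynomial in $r$, $a\cos\theta$, and the $\xi_i$; the $\varrho^{\pm 2}$ in the prefactors of $\alpha_2$ and $\alpha_3$ are exactly what cancels the $\varrho^{-2}$ in $*\F$ and what pairs the $\mathcal H$-coefficients correctly against the null identity. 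The ten Gram entries $G(\alpha_i,\alpha_j)$ then each reduce to a single polynomial in the $\xi_i$.

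The isotropy relations $G(\alpha_1,\alpha_1) = G(\alpha_3,\alpha_3) = 0$ are immediate from the null condition. The normalisations $G(\alpha_2,\alpha_2) = G(\alpha_4,\alpha_4) = 1$, and also $G(\alpha_1,\alpha_3) = -1$, each split into an $r^2$-piece proportional to $\xi_0^2-\xi_1^2$ and an $a^2\cos^2\theta$-piece proportional to $\xi_2^2+\xi_3^2$; Remark~\ref{rmk: Carter constant} identifies both pieces with $K/\varrho^2$, and $r^2+a^2\cos^2\theta = \varrho^2$ assembles the final factor. Four of the five vanishing off-diagonal entries, namely $(1,2), (1,4), (2,3), (3,4)$, drop out by termwise cancellation between the $(\omega_0,\omega_1)$- and $(\omega_2,\omega_3)$-contributions. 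The entry I would check first, and which I expect to be the main obstacle, is $G(\alpha_2,\alpha_4) = 0$: it reduces to a nonzero constant times $ra\cos\theta \bigl[(\xi_0^2-\xi_1^2)-(\xi_2^2+\xi_3^2)\bigr]$ and vanishes precisely because both bracketed quantities equal $K/\varrho^2$. This is the unique entry that invokes both representations of $K$ simultaneously, and hence the place where a sign error in $\F$, $*\F$, or in the convention for $\iota_{\xi^\sharp}$ would first manifest.

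Linear independence of $\alpha_1,\dots,\alpha_4$ is then immediate from non-degeneracy of the resulting Gram matrix. For the identity $g = -2\alpha_1 \otimes_s \alpha_3 + \alpha_2^2 + \alpha_4^2$: at each $\xi\in\U$ the tuple $\{\alpha_k(\xi)\}$ is a basis of $T^*_{\pi(\xi)}M$, and block-decomposition after the permutation $(1\,3)$ shows at a glance that the stated Gram matrix squares to the identity, hence is its own inverse. Therefore the matrix of $g$ in the basis $\{\alpha_k(\xi)\}$ coincides with the stated Gram matrix of $G$, which unpacks to precisely the claimed expansion.
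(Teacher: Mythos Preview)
Your proposal is correct and follows essentially the same approach as the paper's proof: both work in the Carter coframe, reduce each Gram entry to a polynomial in the $\xi_i$, and appeal to the null condition together with the two representations of $K(\xi)$ from Remark~\ref{rmk: Carter constant}. The paper is slightly terser---it dispatches the remaining off-diagonal entries with ``one readily checks'' and treats the metric expansion as a restatement rather than justifying it via the Gram matrix being its own inverse---but your added detail on these points is sound and does not constitute a different route.
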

\begin{proof}
Since $\xi$ is lightlike, 
\[
	G(\a_1, \a_1) 
		= \frac1{K(\xi)} G(\xi, \xi) = 0,
\]
and similarly 
\[
	G(\a_3, \a_3) 
		= \frac{\varrho^2}{4 K(\xi)} G(\xi, \xi) = 0.
\]
Next, using Remark~\ref{rmk: Carter constant}, note that
\begin{align*}
	G(\a_4, \a_4)
		&= \frac1{K(\xi)} \left( a^2 \cos^2(\theta) \left( - \o_1(\xi^\sharp)^2 + \o_0 (\xi^\sharp)^2 \right) + r^2 \left( \o_3(\xi^\sharp)^2 + \o_2(\xi^\sharp)^2 \right) \right) \\*
		&= 1, \\
	G(\a_2, \a_2)
		&= \frac1{K(\xi)} \left( r^2 \left( - \o_1(\xi^\sharp)^2 + \o_0 (\xi^\sharp)^2 \right) + a^2 \cos^2(\theta) \left( \o_3(\xi^\sharp)^2 + \o_2(\xi^\sharp)^2 \right) \right) \\*
		&= 1, \\
	G(\a_1, \a_3)
		& = \frac{- \varrho^2}{2 K(\xi)} \mathcal H(\xi^\sharp, \xi^\sharp) \\
		&= \frac{- \varrho^2}{2 K(\xi)} \left( g(\xi^\sharp, \xi^\sharp) + 2 \left( \o_0(\xi^\sharp)^2 - \o_1(\xi^\sharp)^2 \right) \right) \\
		& = - 1.
\end{align*}
Moreover,
\begin{align*}
	G(\a_2, \a_4)
		& = \frac1{K(\xi)} \left( - r a \cos(\theta) \o_1(\xi^\sharp)^2 + r a \cos(\theta) \o_0 (\xi^\sharp)^2 \right. \\
		&\qquad \left. - r a \cos(\theta) \o_3(\xi^\sharp)^2 - r a \cos(\theta) \o_2(\xi^\sharp)^2 \right) \\
		& = - \frac{r a \cos(\theta)}{K(\xi)} G(\xi, \xi) \\
		& = 0.
\end{align*}
One readily checks that the remaining entries vanish.
\end{proof}

\subsection{The subprincipal symbol at the trapped set}

By \cite{PV2023}*{Thm.~3.2}, the trapped set $\Gamma \subset \Sigma$ is contained in the domain of outer communications, i.e., in the region where $r \in (r_e, r_c)$.
We may therefore describe it using the Boyer-Lindquist coordinates $(t, r, \phi, \theta)$ and dual coordinates $(\xi_t, \xi_r, \xi_\phi, \xi_\theta)$.
Concretely, for any $\xi \in \Sigma$ the function
\[
	\frac{\left( (r^2 + a^2)\xi_t + a \xi_\phi \right)^2 }{\mu(r)},
\]
has a unique positive minimum in $(r_e, r_c)$ at, say $r_{\xi_t, \xi_\phi}$, see \cite{PV2023}*{Thm.~3.2}.
The trapped set in $M$ is given by
\[
	\Gamma
		:= \bigcup_{(\xi_t, \xi_\phi) \neq (0, 0)} \{ \xi_r = 0, r = r_{\xi_t, \xi_\phi} \} \cap \Sigma.
\]
We will discuss the skew-adjoint part of the linearized gauge fixed Einstein operator defined in~\eqref{eq: gauge fixed lin} at the trapped set.

\begin{lemma} \label{le: cov Hamilton derivative}
The pull-back of the Levi-Civita connection acting on $\pi^*T^*M$ is given by
\begin{equation} \label{eq: Hamilton derivative matrix}
	\pi^* \n_{\H_G}
		= \H_G +
		\mathfrak a
		\begin{pmatrix}
			0 & 1 & 0 & 0 \\
			0 & 0 & 1 & 0 \\
			0 & 0 & 0 & 0 \\
			0 & 0 & 0 & 0
		\end{pmatrix}
\end{equation}
in $\U$, with respect to the splitting
\begin{equation} \label{eq: pull back split}
	\pi^* T^*M|_{\U}
		= \ldr{\a_1} \oplus \ldr{\a_2} \oplus \ldr{\a_3} \oplus \ldr{\a_4},
\end{equation}
where $\mathfrak a$ is homogeneous of degree $1$ with respect to the dilations in $\pi^* TM |_{\U}$.
\end{lemma}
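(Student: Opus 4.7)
The strategy splits cleanly: compute $\pi^*\n_{\H_G}\a_1$ and $\pi^*\n_{\H_G}\a_4$ directly along the Hamilton flow, then extract the remaining two columns of the connection matrix purely algebraically from metric compatibility with the constant Gram matrix of Lemma~\ref{le: quasiorthonormal}. For the first, $\H_G$ is the geodesic spray of $g$, so along any integral curve $(x(t), \xi(t))$ of $\H_G$ the momentum $\xi(t)$ is parallel along the geodesic $x(t)$, i.e.\ $\n_{\xi^\sharp}\xi = 0$. The Carter function $K$ is conserved along $\H_G$ (the standard consequence of $\F$ being Killing--Yano, since $\F_a{}^c \F_{bc}$ is then a Killing tensor). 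Combining these two facts yields $\pi^*\n_{\H_G}\a_1 = 0$ immediately.

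Next, for $\a_4 = 2\F(\xi^\sharp,\cdot)/\sqrt{K}$, the same two inputs (constancy of $K$ and $\n_{\xi^\sharp}\xi^\sharp = 0$) reduce the computation to
\[
\pi^*\n_{\H_G}\a_4 = \frac{2}{\sqrt{K}}\,(\n_{\xi^\sharp}\F)(\xi^\sharp,\cdot).
\]
The Killing--Yano identity $(\n_X\F)(Y,Z) = (\n_Z\F)(Y,X)$ with $X = Y = \xi^\sharp$ gives $(\n_{\xi^\sharp}\F)(\xi^\sharp,Z) = (\n_Z\F)(\xi^\sharp,\xi^\sharp)$, which vanishes by antisymmetry of $\n_Z\F$. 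Hence $\pi^*\n_{\H_G}\a_4 = 0$.

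It remains to combine these with metric compatibility. Writing $\pi^*\n_{\H_G}\a_j = \omega^i{}_j\a_i$, metric compatibility together with the constancy of the Gram matrix $(G(\a_i,\a_j))$ from Lemma~\ref{le: quasiorthonormal} yields the usual antisymmetry-type relations $\omega^k{}_i G(\a_k,\a_j) + \omega^k{}_j G(\a_i,\a_k) = 0$. Inserting $\omega^k{}_1 = \omega^k{}_4 = 0$ from the previous two paragraphs and running through the pairs $(i,j)$ is a short linear-algebra exercise: it forces all remaining entries to vanish except $\omega^1{}_2$ and $\omega^2{}_3$, which must moreover be equal. Calling this common value $\mathfrak a$ gives the claimed matrix form. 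The homogeneity is immediate: each $\a_i$ is $0$-homogeneous in the fiber (both the numerator and $\sqrt{K}$ are $1$-homogeneous), while $\H_G$ is $1$-homogeneous on $T^*M$, so $\pi^*\n_{\H_G}\a_i$ is $1$-homogeneous and therefore so is $\mathfrak a$.

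The only genuinely Kerr--de~Sitter-specific input is the Killing--Yano identity for $\F$ from Lemma~\ref{le: Killing-Yano}; without it one would generically find a $2\times 2$ rotational block mixing two of the frame vectors along the flow. Once this identity is used to trivialize the first and fourth columns, the reduction to a single scalar $\mathfrak a$ is pure linear algebra in the stabilizer of the pseudo-inner product of Lemma~\ref{le: quasiorthonormal}. The main obstacle is therefore the $\a_4$-calculation: it is the step where the special algebraic structure of Kerr--de~Sitter (the existence of a Killing--Yano $2$-form) is essential, and it is what ultimately forces the connection matrix into strictly upper-triangular nilpotent form rather than permitting a genuine rotation in the $(\a_2,\a_3)$-plane.
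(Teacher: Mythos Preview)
Your proposal is correct and follows essentially the same approach as the paper: both first show $\pi^*\n_{\H_G}\a_1=0$ (parallelism of $\xi$ along geodesics together with conservation of $K$) and $\pi^*\n_{\H_G}\a_4=0$ (the Killing--Yano identity applied with $X=Y=\xi^\sharp$), then extract the remaining connection coefficients from metric compatibility with the constant Gram matrix of Lemma~\ref{le: quasiorthonormal}, obtaining $\omega^1{}_2=\omega^2{}_3=:\mathfrak a$ and all other entries zero. The only presentational difference is that the paper verifies $(\pi^*\n)_{\H_G}\xi=0$ explicitly in geodesic normal coordinates, whereas you invoke the geodesic-spray interpretation directly.
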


\begin{proof}[Proof of Lemma~\ref{le: cov Hamilton derivative}]
We need to compute the matrix 
\[
	G\left(\n_{\H_G}\a_i, \a_j\right),
\]
in $\U$, for $i,j = 1, \hdots, 4$.
Using that
\[
	\pi_*\left( \H_G \right)
		= \sum_{j,k = 0}^n 2 G^{jk} \xi_j \d_k,
\]
we first compute, in geodesic normal coordinates around an arbitrary point $p$,
\begin{align*}
	\left( \pi^*\n \right)_{\H_G}(\xi)|_p
		&= \sum_{j = 0}^n \left( \pi^*\n \right) _{\H_G}(\xi_j \pi^* \md x^j)|_p \\
		&= \sum_{j = 0}^n \H_G ( \xi_j ) \pi^*\md x^j + \xi_j \pi^*\left( \n_{\pi_*(\H_G)} \md x^j \right)|_p \\
		&= \sum_{j, k, l = 0}^n - \left( \d_j G^{kl}\right) \xi_k \xi_l \pi^* \md x^j + 2 \xi_j G^{kl} \xi_k \pi^*\left( \n_{\d_l} \md x^j \right)|_p \\
		&= 0.
\end{align*}
Since $p$ was arbitrary and $\left( \pi^*\n \right)_{\H_G}(\xi)$ is coordinate invariant, it does indeed vanish everywhere.
Since $K(\xi)$ is constant along the Hamiltonian flow, we conclude that 
\[
	\left( \pi^*\n \right)_{\H_G}\a_1 = 0,
\]
proving that all entries in the first column in~\eqref{eq: Hamilton derivative matrix} are zero.
In order not to make the notation overly complicated, we will simply denote $\pi^* G$ by $G$, and similarly for the pullback of other tensors when suitable.
Since $\n G = 0$, it follows that $\pi^* \n G = 0$.
Also, the Killing-Yano property is inherited in the pullback, i.e.
\[
	\left( \pi^* \n \right)_X \F (Y, Z)
		= \left( \pi^* \n \right)_Z \F (Y, X),
\]
Using this, and the fact that
\[
	\pi_*\left( \xi^\sharp \right)
		= \frac12 \pi_*\left( \H_G \right),
\]
 we compute
\begin{align*}
	\left( \pi^*\nabla \right)_{\H_G} \left( \F(\xi^\sharp, \cdot) \right) (Y) 
		&= G \left( \left(\pi^*\nabla \right)_{\H_G} \xi, \F(\cdot, Y) \right) + G \left( \xi, \left( \left(\pi^*\nabla \right)_{\H_G} \F \right)(\cdot, Y) \right) \\
		&= \left(\pi^*\nabla \right)_Y \F (\xi^\sharp, \H_G) \\
		&= 2 \n_Y \F \left( \pi_*\left( \xi^\sharp \right), \pi_*\left( \xi^\sharp \right) \right) \\
		&= 0,
\end{align*}
since $\F$ is a two-form.
Again, since $K(\xi)$ is constant along the Hamiltonian flow, we conclude that 
\[
	\left( \pi^*\n \right)_{\H_G}\a_4
		= 0,
\]
proving that all entries in the fourth column in~\eqref{eq: Hamilton derivative matrix} are zero.
Using that $G(\a_j, \a_k)$ is constant for all $j, k$, we note that
\begin{align*}
	0
		&= \H_G G \left( \a_j, \a_j \right)
		= 2 G \left( \pi^* \n_{H_G} \a_j, \a_j \right)
\end{align*}
for $j = 2,3$, and  
\begin{align*}
	G\left(\pi^*\n_{\H_G} \a_j, \a_k \right)
		&= - G\left(\a_j, \pi^*\n_{\H_G}\a_k \right) 
		= 0,
\end{align*}
for all $j$ and $k = 1, 4$, proving that $\pi^*\n_{\H_G}\a_j$ cannot have any $\a_2$ or $\a_4$ components, i.e., proving that all entries in the first and fourth rows are zero.
Finally, we note that
\begin{align*}
	G\left(\pi^*\n_{\H_G} \a_2, \a_1 \right)
		&= - G\left(\a_2, \pi^*\n_{\H_G} \a_1 \right) 
		= 0, \\
	G\left(\pi^*\n_{\H_G} \a_3, \a_1 \right)
		&= - G\left(\a_3, \pi^*\n_{\H_G} \a_1 \right) 
		= 0,
\end{align*}
and that
\[
	\mathfrak a
		:= - G\left(\pi^*\n_{\H_G} \a_2, \a_3 \right)
		= G\left(\a_2, \pi^*\n_{\H_G} \a_3 \right),
\]
which implies that
\begin{align*}
	\pi^*\n_{\H_G} \a_2
		&= \mathfrak a \a_1, \\
	\pi^*\n_{\H_G} \a_3
		&= \mathfrak a \a_2.
\end{align*}
Moreover, $\mathfrak a$ is homogeneous of degree $1$, since $\a_2, \a_3$ are homogeneous of degree $0$ and $\H_G$ is homogeneous of degree $1$.
This finishes the proof.
\end{proof}

Using this, one quickly deduces the analogous statement for symmetric $(0,2)$-tensors.

\begin{lemma} \label{le: cov Hamilton derivative two tensors}
The pull-back of the Levi-Civita connection acting on $\pi^* S^2 T^*M$ is given by
\begin{equation} \label{eq: Hamilton derivative matrix two tensors}
	\pi^* \n_{\H_G}
		= \H_G +
		\mathfrak a
		\begin{pmatrix}
			0 & 2 & 0 & 0 & 0 & 0 & 0 & 0 & 0 & 0 \\
			0 & 0 & 1 & 0 & 0 & 1 & 0 & 0 & 0 & 0 \\
			0 & 0 & 0 & 0 & 0 & 0 & 0 & 0 & 1 & 0 \\
			0 & 0 & 0 & 0 & 1 & 0 & 0 & 0 & 0 & 0 \\
			0 & 0 & 0 & 0 & 0 & 0 & 0 & 1 & 0 & 0 \\
			0 & 0 & 0 & 0 & 0 & 0 & 0 & 0 & 1 & 0 \\
			0 & 0 & 0 & 0 & 0 & 0 & 0 & 0 & 0 & 0 \\
			0 & 0 & 0 & 0 & 0 & 0 & 0 & 0 & 0 & 0 \\
			0 & 0 & 0 & 0 & 0 & 0 & 0 & 0 & 0 & 1 \\
			0 & 0 & 0 & 0 & 0 & 0 & 0 & 0 & 0 & 0 \\
		\end{pmatrix},
\end{equation}
in $\U$, with respect to the splitting
\begin{equation}\label{eq: alpha 2 tensor split new}
\begin{split}
	\pi^* S^2 T^*M|_{\U}
		&= \ldr{\a_1 \a_1} \oplus \ldr{2\a_1\a_2} \oplus \ldr{2\a_1\a_3} \oplus \ldr{2\a_1\a_4} \oplus \ldr{2 \a_2\a_4} \\
		&\qquad \oplus \ldr{\a_2\a_2 - \a_4\a_4} \oplus \ldr{\a_4\a_4} \oplus \ldr{2 \a_3\a_4}\oplus \ldr{2 \a_2\a_3} \oplus \ldr{\a_3\a_3},
\end{split}
\end{equation}
where $\mathfrak a$ is homogeneous of degree $1$ with respect to the dilations in $\pi^* TM |_{\U}$.
\end{lemma}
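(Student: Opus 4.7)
The plan is to derive~\eqref{eq: Hamilton derivative matrix two tensors} directly by applying the Leibniz rule to the symmetric tensor product, using Lemma~\ref{le: cov Hamilton derivative} as the sole input from the underlying 1-form case. Since the pull-back Levi-Civita connection $\pi^*\n$ is a derivation on the symmetric tensor algebra, for any 1-forms $\b, \gamma$ on $\U$ we have
\[
	\pi^*\n_{\H_G}(\b \cdot \gamma)
		= (\pi^*\n_{\H_G}\b) \cdot \gamma + \b \cdot (\pi^*\n_{\H_G}\gamma).
\]
Lemma~\ref{le: cov Hamilton derivative} supplies the complete action on the 1-form frame $\a_1, \a_2, \a_3, \a_4$, namely
\[
	\pi^*\n_{\H_G}\a_1 = 0, \quad \pi^*\n_{\H_G}\a_2 = \mathfrak a\,\a_1, \quad \pi^*\n_{\H_G}\a_3 = \mathfrak a\,\a_2, \quad \pi^*\n_{\H_G}\a_4 = 0,
\]
and the induced action of $\pi^*\n_{\H_G}$ on $\pi^*S^2T^*M$ is completely determined by combining these two facts.

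Next I would run through each of the ten basis sections in~\eqref{eq: alpha 2 tensor split new} one by one, differentiating via Leibniz and re-expanding the output in the same basis. Because $\a_1$ and $\a_4$ are both annihilated by $\pi^*\n_{\H_G}$, the only nonzero mechanism is the chain $\a_3 \mapsto \a_2 \mapsto \a_1$: any basis element built purely out of $\a_1, \a_4$ (namely $\a_1\a_1$, $2\a_1\a_4$, and $\a_4\a_4$) is killed immediately, basis elements containing a single $\a_2$ factor pick up an $\a_1$, and those containing an $\a_3$ first acquire an $\a_2$. The ordering of the basis in~\eqref{eq: alpha 2 tensor split new} has been chosen precisely so that this shift becomes strictly upper triangular, which explains both the nilpotency and the sparseness of the displayed matrix.

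The only nontrivial bookkeeping is the rewriting of $\a_2\a_2$ in terms of the chosen basis, via $\a_2\a_2 = (\a_2\a_2 - \a_4\a_4) + \a_4\a_4$, which determines how the outputs distribute across the summands $\ldr{\a_2\a_2 - \a_4\a_4}$ and $\ldr{\a_4\a_4}$ in the columns corresponding to the derivatives of $2\a_2\a_3$ and $2\a_1\a_2$. Reading off the coefficients of each Leibniz output in the basis~\eqref{eq: alpha 2 tensor split new} then produces the matrix~\eqref{eq: Hamilton derivative matrix two tensors}, and the homogeneity of $\mathfrak a$ of degree $1$ with respect to the dilations is inherited directly from Lemma~\ref{le: cov Hamilton derivative}, since the Leibniz rule does not alter fiberwise scaling. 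I do not expect any substantive obstacle here: the statement is a mechanical extension of the 1-form case to the second symmetric power, and the main care required is in tracking the factors of $2$ arising from the normalization $2\a_i\a_j$ and in keeping the basis ordering consistent across all ten columns.
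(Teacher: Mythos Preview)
Your approach is correct and is essentially identical to the paper's own proof, which simply invokes the Leibniz rule $\pi^*\n_{\H_G}(\a_j\a_k)=(\pi^*\n_{\H_G}\a_j)\a_k+\a_j(\pi^*\n_{\H_G}\a_k)$ and cites Lemma~\ref{le: cov Hamilton derivative}. One small slip: the $\a_2\a_2$ bookkeeping only arises in the column for $2\a_2\a_3$ (not for $2\a_1\a_2$, whose derivative is just $2\mathfrak a\,\a_1\a_1$), but this does not affect the argument.
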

\begin{remark}
The point of choosing the splitting~\eqref{eq: alpha 2 tensor split new} is that the matrix in~\eqref{eq: Hamilton derivative matrix two tensors} and the matrix in Lemma~\ref{le: upper triangular} both are upper triangular with respect to this splitting. 
\end{remark}
\begin{proof}
Using that
\[
	\pi^* \n_{\H_G} \left( \a_j \a_k \right)
		= \left( \pi^* \n_{\H_G} \a_j \right) \a_k + \a_j \left( \left( \pi^* \n_{\H_G} \a_k \right) \right),
\]
the statement is a straightforward consequence of Lemma~\ref{le: cov Hamilton derivative}.
\end{proof}

In order to analyse the first order part of~\eqref{eq: gauge fixed lin}, given by $2 \A_g \de_g \T_g$, we first recall that
\begin{equation} \label{eq: A g form} 
	\A_g \o
		= - h^{-1} \T_g^{-1} \left( \c \otimes_s \o - e G(\c, \o) g \right),
\end{equation}
where $\c$ is everywhere timelike. 
We are going to first treat the case when $e = 0$ and then conclude the necessary statements for sufficiently small $e > 0$ by a perturbation argument.
Moreover, $\T_g = \T_g^{-1}$ in spacetime dimension $4$, so what we need to compute is
\[
	\s_1(2 \A_g \de_g \T_g)(\xi)|_{e = 0}
		= - 2 h^{-1} \T_g \circ \left( \c \otimes_s  \cdot \right) \circ \sigma_1(\de_g)(\xi) \circ \T_g
\]
in terms of the splitting~\eqref{eq: alpha 2 tensor split new}.
It will be convenient to express $\c$ in terms of the scalars $\c_1, \c_2, \c_3, \c_4$, so
\begin{align*}
	\c
		= & \ \c_1 \a_1 + \c_2 \a_2 + \c_3\a_3 + \c_4 \a_4.
\end{align*}

\begin{lemma} \label{le: upper triangular}
We have
\begin{align*}
	- 2 h^{-1} \T_g 
		& \left( \c \otimes_s \cdot \right) \circ \s_1(\de_g) \T_g \\
		& =
		- 2 i h^{-1} \sqrt{K(\xi)}
		\begin{pmatrix}
			0 & 0 & 0 & 0 & 0 & 0 & \c_1 & 0 & 0 & 0 \\
			0 & 0 & 0 & 0 & 0 & 0 & \frac{\c_2}2 & 0 & 0 & 0 \\
			0 & 0 & 0 & 0 & 0 & 0 & \c_4 & \c_2 & 0 & 0 \\
			0 & 0 & 0 & 0 & 0 & 0 & \frac{\c_4}2 & \c_1 & 0 & 0 \\			
			0 & 0 & 0 & 0 & 0 & 0 & 0 & \c_2 & \c_4 & 0 \\
			0 & 0 & 0 & 0 & 0 & 0 & \frac{\c_3}2 & - \c_4 & \c_2 & \c_1 \\
			0 & 0 & 0 & 0 & 0 & 0 & \c_3 & 0 & 0 & 2\c_1 \\
			0 & 0 & 0 & 0 & 0 & 0 & 0 & \c_3 & 0 & \c_4 \\
			0 & 0 & 0 & 0 & 0 & 0 & 0 & 0 & \c_3 & \c_2 \\
			0 & 0 & 0 & 0 & 0 & 0 & 0 & 0 & 0 & 2\c_3
		\end{pmatrix}
\end{align*}
in $\U$, with respect to the splitting~\eqref{eq: alpha 2 tensor split new}.
\end{lemma}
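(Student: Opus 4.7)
The proof is a direct computation in the splitting~\eqref{eq: alpha 2 tensor split new}. The key observation is that on $\U$ the canonical one-form satisfies $\xi = \sqrt{K(\xi)}\,\a_1$, so the principal symbol of the divergence takes the form
\[
\s_1(\de_g)(\xi)u \;=\; -i\,\iota_{\xi^\sharp}u \;=\; -i\sqrt{K(\xi)}\,\iota_{V_1}u,
\]
where $V_1 := \a_1^\sharp$. By the pairings in Lemma~\ref{le: quasiorthonormal}, the vector $V_1$ is characterized by $\a_3(V_1) = -1$ and $\a_i(V_1) = 0$ for $i\in\{1,2,4\}$; equivalently, $\iota_{V_1}$ only picks up $\a_3$-factors. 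This single observation is what produces the upper-triangular shape, with only the last four columns of the claimed matrix potentially nonzero.

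First I would compute $\T_g$ on each of the ten basis elements using $\tr_g(\a_i\a_j) = G(\a_i,\a_j)$ from Lemma~\ref{le: quasiorthonormal}. Only $2\a_1\a_3$ and $\a_4\a_4$ have nonzero trace, so $\T_g$ acts as the identity on the other eight basis elements; combined with $g = -2\a_1\a_3 + \a_2\a_2 + \a_4\a_4$ one finds $\T_g(2\a_1\a_3) = \a_2\a_2 + \a_4\a_4$ and $\T_g(\a_4\a_4) = \a_1\a_3 - \tfrac12 \a_2\a_2 + \tfrac12 \a_4\a_4$. Applying $\iota_{V_1}$ to these $\T_g$-images then annihilates columns 1--6: columns 1, 2, 4, 5, 6 involve no $\a_3$-factor after trace reversal, and column 3 ($2\a_1\a_3$) has its $\a_3$-factor eliminated by $\T_g$. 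For the four surviving columns 7--10 the outputs of $-i\sqrt{K(\xi)}\,\iota_{V_1}\T_g(\cdot)$ are, up to the common prefactor $-i\sqrt{K(\xi)}$, proportional to the one-forms $\a_1$, $\a_4$, $\a_2$, $\a_3$ respectively.

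Next I would apply $\c \otimes_s \cdot$ to these four one-forms, writing $\c = \c_1\a_1 + \c_2\a_2 + \c_3\a_3 + \c_4\a_4$; this expands each output as a linear combination of the ten symmetric-product basis elements with coefficients linear in $\c_1,\dots,\c_4$. The final step is the outer $\T_g$: by Lemma~\ref{le: quasiorthonormal} the trace of such an expansion picks out $G(\c, \cdot)$ of the intermediate one-form, and the correction $-\tfrac12\tr_g(\cdot)\,g$ is then expanded via the identity $g = -2\a_1\a_3 + \a_2\a_2 + \a_4\a_4$ and the decomposition $\a_2\a_2 = (\a_2\a_2-\a_4\a_4) + \a_4\a_4$ to stay in the adapted basis. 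Collecting all coefficients yields the claimed matrix; the overall prefactor $-2ih^{-1}\sqrt{K(\xi)}$ combines the outside constant $-2h^{-1}$, the $-i$ from the principal symbol, and the $\sqrt{K(\xi)}$ coming from the normalization of $\a_1$.

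The computation is mechanical and the only real subtlety is the bookkeeping of symmetric products: distinguishing $\a_i\a_j$ from the basis element $2\a_i\a_j$ for $i \neq j$, and consistently decomposing diagonal contributions in the adapted basis $\a_2\a_2 - \a_4\a_4$, $\a_4\a_4$ rather than $\a_2\a_2$, $\a_4\a_4$. No estimates are needed; the claim is a pointwise identity on each fiber of $\pi^* S^2 T^*M$ over $\U$. The ordering of the basis~\eqref{eq: alpha 2 tensor split new} is precisely the one in which $\s_1(\de_g)\T_g$ ``lowers the $\a_3$-content by one'', making the matrix upper-triangular and, together with Lemma~\ref{le: cov Hamilton derivative two tensors}, well-adapted to the analysis of the subprincipal symbol at the trapped set in Theorem~\ref{thm: spectral gap}.
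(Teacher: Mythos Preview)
Your proposal is correct and takes essentially the same approach as the paper: both proofs factor the map as $\T_g\circ(\c\otimes_s\cdot)\circ\sigma_1(\de_g)\circ\T_g$, compute each factor in the splitting~\eqref{eq: alpha 2 tensor split new} using Lemma~\ref{le: quasiorthonormal}, and multiply. The paper writes out the four intermediate matrices explicitly (the $4\times 10$ matrix for $\sigma(\de_g)(\xi)$, the $10\times 4$ matrix for $2\c\otimes_s(\cdot)$, and the row/column for $\tr_g$ and $g$), whereas you give a more conceptual explanation for why columns 1--6 vanish via the observation that $\iota_{\a_1^\sharp}$ only sees $\a_3$-factors; but this is the same computation packaged differently.
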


\begin{proof}

We write down the matrices with respect to the splittings~\eqref{eq: pull back split} and~\eqref{eq: alpha 2 tensor split new} for each of the involved factor.
Using that,
\[
	\sigma(i \de_g)(\xi)
		= \xi^\sharp
		= \sqrt{K(\xi)}\a_1^\sharp,
\]
we conclude that
\[
	\sigma(\de_g)(\xi)
		=
		i \sqrt{K(\xi)}
		\begin{pmatrix}
			0 & 0 & 1 & 0 & 0 & 0 & 0 & 0 & 0 & 0 \\
			0 & 0 & 0 & 0 & 0 & 0 & 0 & 0 & 1 & 0 \\
			0 & 0 & 0 & 0 & 0 & 0 & 0 & 0 & 0 & 1 \\
			0 & 0 & 0 & 0 & 0 & 0 & 0 & 1 & 0 & 0
		\end{pmatrix}.
\]
Further, we have
\begin{align*}
	2 \c \otimes_s (\cdot) 
		&= 
		\begin{pmatrix}
			2\c_1 & 0 & 0 & 0 \\
			\c_2 & \c_1 & 0 & 0 \\
			\c_3 & 0 & \c_1 & 0 \\
			\c_4 & 0 & 0 & \c_1 \\
			0 & \c_4 & 0 & \c_2 \\
			0 & 2\c_2 & 0 & 0 \\
			0 & 2\c_2 & 0 & 2\c_4 \\
			0 & 0 & \c_4 & \c_3 \\
			0 & \c_3 & \c_2 & 0 \\
			0 & 0 & 2\c_3 & 0
		\end{pmatrix} \\
	\tr_g
		&= \left( 0, 0, -2, 0, 0, 0, 1, 0, 0, 0 \right), \\
	g
		&= (0, 0, -1, 0, 0, 1, 2, 0, 0, 0)^T.
\end{align*}
Multiplying the matrices together gives the result.
\end{proof}

\begin{remark}
Note that
\[
	\c_3
		= - G(\c, \a_1)
		= - \frac1{\sqrt{K(\xi)}} G(\c, \xi)
\]
is either positive or negative, since both $\c$ and $\xi$ are causal one-forms.
\end{remark}

We are now ready to prove Theorem~\ref{thm: spectral gap}.

\begin{proof}[Proof of Theorem~\ref{thm: spectral gap}]
By \cite{H2015}*{Prop.~4.1}, the subprincipal operator (c.f.~\cite{H2015}*{Def.~3.8}) is given by
\begin{align*}
	S_{\mathrm{sub}}(\Lop_g)
		& = S_{\mathrm{sub}}(\Box_g) + S_{\mathrm{sub}}(2 \A_g \de_g \T_g ) \\
		& = - i \pi^* \n_{\H_G} + \s_1\left( 2 \A_g \de_g \T_g  \right)(\xi) \\
		& = - i \pi^* \n_{\H_G} - 2 h^{-1} \T_g \left( \c \otimes_s \cdot \right) \s_1(\de_g)(\xi) \T_g \\*
		& \qquad + e \cdot 2 h^{-1} \T_g \left( G(\c, \cdot) g \right) \circ \s_1(\de_g)(\xi) \T_g.
\end{align*}
By Lemma~\ref{le: upper triangular},
\begin{align*}
	i S_{\mathrm{sub}}(\Lop_g)
		& = \H_G + \mathfrak a N_1 + \frac{\sqrt{K(\xi)}}4 \left( D + N_2 \right) + e B,
\end{align*}
where 
\[
	D
		= \mathrm{diag} \left( 0,0,0,0, 0, 0, \c_3, \c_3, \c_3, 2 \c_3 \right),
\]
$N_1, N_2$ are upper triangular matrices with zeros on the diagonal, and $B$ is a fixed $e$-independent matrix, with respect to the splitting~\eqref{eq: alpha 2 tensor split new}.
We now choose
\[
	\mathfrak C(v, w)
		:= \ldr{v, E^2w},
\]
where
\[
	E
		:= \mathrm{diag} \left( 1, \frac \digamma \e, \hdots, \left( \frac \digamma \e \right)^9 \right)
\]
and $\ldr{\cdot, \cdot}$ denotes the standard inner product with respect to the splitting~\eqref{eq: alpha 2 tensor split new}.
The desired inequality
\[
	\pm \frac1{2i \abs{\s}} \left( S_{\mathrm{sub}}\left( \Lop_{b_0} \right) - S_{\mathrm{sub}}\left( \Lop_{b_0} \right)^{*\mathfrak C} \right) |_{\Gamma^\pm}
		\leq \e,
\]
with respect to $\mathfrak C$ is thus equivalent to
\[
	\pm \frac1{2i \abs{\s}} \left( \left( E  S_{\mathrm{sub}}\left( \Lop_{b_0} \right) E^{-1} \right) \left( \Lop_g \right) - \left( E S_{\mathrm{sub}} \left(\Lop_{b_0} \right) E^{-1} \right)^* \right) |_{\Gamma^\pm}
		\leq \e,
\]
with respect to the standard inner product $\ldr{\cdot, \cdot}$.
The diagonal matrix $D$ gives a non-positive contribution, and the matrix $e B$ can be made arbitrarily small by choosing $e_0$ small enough.
For the upper triangular matrices with zero on the diagonal, we note that their contributions $E N_j E^{-1}$, for $j = 1, 2$, are bounded by a constant times $\frac{\e}{\digamma}$, and can be made smaller than $\e$ by choosing $\digamma$ large enough.
The term $\H_G$ does not contribute to the skew-adjoint part of the subprincipal part. 
This finishes the proof.
\end{proof}

\section{Constraint damping (CD)} \label{sec: CD}

Recall the definition of $\T_g$ in~\eqref{eq: trace reversal}, and the function $f(r)$ from~\eqref{EqIf}. The theorem we want to prove in this section is the following:
\begin{thm}[Constraint damping] \label{thm: stable constraint propagation}
Let $(M, g)$ be a subextremal Kerr--de~Sitter spacetime extended beyond the horizons as in~\eqref{eq: manifold}.
There are $\kappa_0, h_0, e_0 > 0$ such that if
\[
	\kappa \in (0, \kappa_0), \quad h \in (0, h_0), \quad e \in (0, e_0),
\]
and
\[
	\A_g 
		:= - h^{-1}\T_g^{-1}\left( \c \otimes_s \o - e G(\c, \o) g \right),
\]
where
\begin{equation} \label{eq: c sharp}
	\c^{\sharp}
		:= \d_{t_*} + \frac a{r^2 + a^2} \d_{\phi_*} + \g f(r) \d_r + \g^2 \sin(\theta) \d_\theta,
\end{equation}
the following holds:
If $\o \in C^\infty(T^*M)$ is a non-trivial solution to
\begin{align} 
	\de_g \T_g \left( \de_g^* + \A_g \right) \o
		&= 0, \label{eq: constrained wave operator} \\
	\left( \L_T + i \s \right)^k \o
		&= 0, \nonumber
\end{align}
for some $k \in \N$, then 
\[
	\Im \ \s < 0.
\]
\end{thm}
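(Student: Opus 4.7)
The plan is to recast Theorem~\ref{thm: stable constraint propagation} as a semiclassical invertibility statement with $h$ the semiclassical parameter. Using the identity $2\de_g \T_g \de_g^* = \Box_g - \Ric_g$, one writes
\[
\mathrm{CP} \;:=\; 2\de_g\T_g(\de_g^* + \A_g) \;=\; \Box_g - \Lambda + h^{-1}Q,
\]
where $Q$ is a first order differential operator on one-forms whose principal symbol is contraction with $\c^\sharp$ (modulo an $e$-dependent modification of lower order). Multiplying by $h$ yields a semiclassical operator $P_h := h\,\mathrm{CP}$ whose second order semiclassical principal symbol equals $-|\xi|_g^2 \mathrm{Id}$ on $\pi^*T^*M$; the crucial new feature is that its subprincipal symbol at the zero section $\{\xi = 0\}$ is nontrivial and given (up to the $e$-correction) by contraction with $\c^\sharp$. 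A generalized mode of frequency $\s$ would lie in the kernel of the Fourier-transformed stationary family $\widehat{P_h}(\s)$, so it suffices to prove that $\widehat{P_h}(\s)$ is injective on $\{\Im\s \geq 0\}$ for $h, e$ small enough.

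Away from the zero section the principal symbol dominates, and one applies the standard machinery underlying Theorem~\ref{thm: QNM}: semiclassical elliptic estimates in the elliptic set, real principal type propagation along the Hamilton flow of $|\xi|_g^2$ in the characteristic set, radial point estimates at the event and cosmological horizons, and normally hyperbolic trapping estimates in the style of Wunsch--Zworski and Dyatlov. Each of these functions unchanged in the full subextremal range thanks to~\cites{PV2023, PV2024}. The genuinely new work is at the zero section, where the principal symbol vanishes and the leading behavior is carried by the subprincipal symbol. Since $\c^\sharp$ is timelike and the coefficient $h^{-1}$ in $\A_g$ has a definite sign, the imaginary part of this subprincipal symbol has a definite sign as well, and a positive commutator argument with a weight monotone along the integral curves of $\c^\sharp$ propagates semiclassical $H^s$ control across the spatial slices wherever $\c^\sharp$ is nonzero.

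At the critical points of $\c^\sharp$ an additional condition on the tensorial subprincipal operator must be verified. The definition~\eqref{eq: c sharp} is engineered precisely so that $\c^\sharp$ vanishes only at the two poles $\theta = 0, \pi$ of $\Sp^2$, and at these points the Kerr--de~Sitter metric simplifies dramatically, reducing to an essentially conformally static form. One then checks a radial-point-type positivity condition for the one-form subprincipal operator by explicit computation; this is the content of the referenced Lemma~\ref{le: k conformal computation}. The small parameter $e > 0$ is used here to break a tensorial degeneracy that occurs at $e = 0$. Assembling the estimates above via standard semiclassical gluing produces a Fredholm estimate of the form
\[
\norm{u}_{H^s_h} \;\leq\; C h^{-N}\norm{P_h u}_{H^{s-1}_h} + C h\norm{u}_{H^{-N}_h},
\]
uniform for $\Im\s \geq 0$; combined with invertibility at a single reference $\s$ with $\Im\s \ll 0$, this yields injectivity throughout the closed upper half plane and proves the theorem.

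The main obstacle is the radial-point analysis at the poles. In the scalar wave setting the condition at critical points of a damping vector field is automatic, but here the subprincipal operator acts on one-forms and the required positivity is a matrix inequality that depends sensitively on the metric structure at $\theta = 0, \pi$ and on the precise choice of $\c^\sharp$ there. The $\g^2 \sin(\theta)\partial_\theta$ term in $\c^\sharp$---which confines the zero set of $\c^\sharp$ to the poles where the Kerr--de~Sitter metric is axially symmetric---is exactly the ingredient that makes the check tractable, and the indispensable role of a strictly positive $e$ marks a notable departure from the slowly rotating analysis in~\cite{HV2018}.
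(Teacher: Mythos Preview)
Your broad strategy matches the paper's---semiclassical rescaling, ellipticity at intermediate frequencies, standard high-frequency propagation/radial-point/trapping estimates, and low-frequency propagation along $\c^\sharp$ with separate treatment of its critical points---but several steps are either incorrect or missing. First, the rescaling must be by $h^2$, not $h$: the paper works with $\P_e = 2h^2\de_g\T_g(\de_g^*+\A_g) = h^2\Box_g + i\Lop_e - h^2\Ric^\sharp$, whose semiclassical principal symbol is the non-homogeneous $G(\xi,\xi)\id + i\l_e(\xi)$; your $h\cdot\mathrm{CP}$ is not a bounded semiclassical operator. Second, to make the low-frequency part $\l_e$ symmetric with respect to a \emph{positive definite} inner product (needed for the positive commutator arguments), the paper conjugates by the $e$-dependent endomorphism $b_{(2e)^{1/2}}$ and works with $\mB = G - 2\c^\sharp\otimes\c^\sharp/G(\c,\c)$ (Proposition~\ref{prop: tilde L e properties}); this structure is essential and not mentioned in your outline. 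Incidentally, the same $e>0$ device already appears in~\cite{HV2018}, so your closing remark that its indispensable role here is a departure is factually wrong---the novelty is the choice of $\c$, not the presence of $e$.

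Third, the critical points $p_N,p_S$ lie on the boundary $\{s=0\}$ of the b-compactification and are \emph{saddle points} of the $\c^\sharp$-flow; the estimate there is not a generic ``radial-point-type positivity'' but requires the skew-adjoint part $(2ih)^{-1}(\tLop_e-\tLop_e^{*\mB})$ to be strictly negative at $p_{N/S}$ (Proposition~\ref{prop: KdS skew adjoint term}), which in turn uses that $\n_{\c^\sharp}\c=0$ and $\md G(\c,\c)=0$ there---conditions engineered by the choice~\eqref{eq: choice of r 0} of $r_0$. Fourth, your Fredholm-type estimate does not close globally: the microlocal estimates must be combined with energy estimates near the spacelike boundaries $\Sigma_0,\Sigma_e,\Sigma_c$ (Propositions~\ref{prop: energy initial hypersurface}--\ref{prop: energy final hypersurface}) to pass to the spaces $H^{m,\rho}_{\be,h}(\Omega_{s_0})^{\bullet,-}$ with support conditions, and one needs both the direct estimate (b-weight $\rho>0$) and the adjoint estimate (weight $-\rho$) to produce a forward solution operator. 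The absence of non-decaying modes then follows by contradiction with the weight $\rho$, not from invertibility at a reference $\s$ with $\Im\s\ll 0$.
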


\begin{remark}
The standard computation~\eqref{eq: wave equation identity} implies that~\eqref{eq: constrained wave operator} is a wave operator.
\end{remark}

We work with $h$ as a \emph{semiclassical parameter} in our analysis, where it is supposed to be small.
The properties will be analysed by means of propagation of semiclassical singularities.

\subsection{The general framework for constraint damping} \label{subsec: c general}

We will introduce our setup in general Lorentzian b-spacetimes, and only restrict to the Kerr--de~Sitter spacetime in the next section, with the spacetime specific computations and estimates.

\begin{assumption} \label{ass: M and c}
Let in this section $M$ be a manifold of dimension $n + 1 \geq 2$  with one smooth boundary component, equipped with a Lorentzian b-metric $g$, and let $\c$ be a smooth \emph{everywhere timelike} b-one-form.
\end{assumption}

\subsubsection{The semiclassical constraint operator}

The proof of Theorem~\ref{thm: stable constraint propagation} will generalize the arguments in \cite{HV2018}*{Sec.~8}, where the Schwarzschild-de~Sitter spacetime was considered.
The framework is based on semiclassical b-analysis, where the $h > 0$ in Theorem~\ref{thm: stable constraint propagation} is the semiclassical parameter.
Indeed, the idea is that we want to provide a damping term in~\eqref{eq: constrained wave operator} of order $h^{-1}$ in order to make the operator~\eqref{eq: constrained wave operator} invertible on modes with $\Im(\s) \geq - d$, for some $d > 0$.
We therefore naturally get at a semiclassical problem with semiclassical parameter $h$.
We define the semiclassical b-differential operator
\begin{align*}
	\P_e \o
		: = & \ 2 h^2 \de_g \left( \T_g \de_g^* \o - h^{-1} \left( \c \otimes_s \o - e G(\c, \o) g \right) \right) \\
		= & \ h^2 \Box + i \Lop_e - h^2 \Ric^\sharp
\end{align*} 
where
\begin{align*}
	\Lop_e \o
		:= & \ 2 i h \de_g \left( \c \otimes_s \o - e G(\c, \o) g \right),
\end{align*}
and where $h$ is the semiclassical parameter.
The semiclassical order of $\P_e$ is $0$, whereas the differential order of $\P_e$ is $2$.

\begin{lemma} \label{le: semiclassical principal symbol}
Given any $e \in \R$, the semiclassical principal symbol $\p_e$ of $\P_e$ at $\xi \in \bT_x^* M$ is the endomorphism on $\pi^*\left( \bT^* M\right)|_{(x, \xi)} =  \bT_x^* M$ given by
\[
	\p_e(\xi)
		= G(\xi, \xi) \id + i \l_e(\xi),
\]
where 
\[
	\l_e(\xi) \o
		= G(\c, \xi) \o + G(\xi, \o) \c - 2 e G(\o, \c) \xi
\]
is the principal symbol of $\Lop_e$.
\end{lemma}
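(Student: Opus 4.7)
My plan is to decompose $\P_e$ into its three pieces as given in the definition, namely $\P_e = h^2 \Box + i \Lop_e - h^2 \Ric^\sharp$, and compute the semiclassical principal symbol of each summand at semiclassical order $0$ separately, using linearity of the principal symbol map.

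First I would dispatch the two ``easy'' pieces. The operator $h^2 \Box$ is the standard semiclassical wave operator (with values in the endomorphism bundle of $\bT^*M$), and its semiclassical principal symbol is $G(\xi,\xi)\,\mathrm{id}_{\bT^*M}$; this is a routine consequence of $\Box$ being the connection d'Alembertian on $\bT^*M$. The term $-h^2\Ric^\sharp$ is a zeroth-order differential operator (a bundle endomorphism) carrying an additional factor of $h^2$, so it is of semiclassical order $-2$ and contributes nothing to the principal symbol at semiclassical order $0$.

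The substantive computation is $\sigma_h(\Lop_e)(\xi)$. I would write
\[
	\Lop_e = 2 i (h \de_g) \circ M_e, \qquad M_e \o := \c \otimes_s \o - e\, G(\c,\o)\, g,
\]
so that $M_e\colon \bT^*M \to S^2 \bT^*M$ is a bundle homomorphism (order zero, with symbol equal to $M_e$ itself), and $h\de_g$ is a semiclassical b-differential operator of semiclassical order $0$ with principal symbol $-i\,\iota_{\xi^\sharp}$ on symmetric $2$-tensors (with the convention $\sigma_h(h D_j) = \xi_j$ used in the paper, which is consistent with the formula $\sigma(i\de_g)(\xi) = \xi^\sharp$ recalled in Section~\ref{sec: spectral gap}). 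A direct calculation gives
\[
	\iota_{\xi^\sharp}\bigl( \c \otimes_s \o - e\,G(\c,\o)\,g \bigr)
		= \tfrac{1}{2}\bigl( G(\xi,\c)\,\o + G(\xi,\o)\,\c \bigr) - e\, G(\c,\o)\,\xi,
\]
using that $\iota_{\xi^\sharp} g = \xi$ and that contraction with $\xi^\sharp$ of a symmetrized tensor product yields the symmetrized pairing. Multiplying by the prefactor $2i \cdot (-i) = 2$ produces exactly $\l_e(\xi)\o = G(\c,\xi)\,\o + G(\xi,\o)\,\c - 2 e\, G(\c,\o)\,\xi$.

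Assembling the three contributions and recalling the $i$ in front of $\Lop_e$ gives $\p_e(\xi) = G(\xi,\xi)\,\mathrm{id} + i\,\l_e(\xi)$, as claimed. There is no real obstacle here: the only point requiring care is keeping track of factors of $i$ arising from the difference between $\de_g$ and $h D = -ih\partial$ conventions, so that the symbol of $h\de_g$ comes out with the correct sign. Everything else is linear bookkeeping and a short fiberwise computation.
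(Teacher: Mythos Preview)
Your proof is correct and follows essentially the same approach as the paper. The only cosmetic difference is that the paper first rewrites $\P_e\o$ explicitly as $h^2\Box\o - h^2\o(\Ric_g^\sharp) - 2h\,\de_g(\c\otimes_s\o) - 2he\,\md(G(\c,\o))$ (splitting off the $e$-term using $\de_g(fg)=-\md f$) and then reads off the symbol via the test-function identity $\de_g(\c\otimes_s(f\o)) = -\tfrac12 G(\md f,\c)\o - \tfrac12 G(\md f,\o)\c$, whereas you keep $M_e$ intact and apply the composition rule with $\sigma_h(h\de_g) = -i\,\iota_{\xi^\sharp}$; these are the same computation in slightly different packaging.
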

\begin{proof}
We first note that
\[
	\P_e \o
		= h^2 \Box \o - h^2 \o(\Ric_g^\sharp) - 2h \de_g \left( \c \otimes_s \o \right) - 2h e \md \left( G(\c, \o) \right).
\]
The principal symbol is now readily computed using that
\begin{align*}
	\de_g( \c \otimes_s (f \o))
		&= - \frac12 G(\md f, \c) \o - \frac12 G(\md f, \o) \c,
\end{align*}
for all $f \in C^\infty(M)$.
\end{proof}

\subsubsection{The modified constraint operator}

As discussed in \cite{HV2018}*{Rmk.~8.7}, at least for the natural choice of $\c$ in the Schwarzschild-de~Sitter spacetime, the semiclassical principal symbol $\p_e$ is not symmetric with respect to any positive definite inner product on $\bT^* M$ when $e = 0$.
However, for any $e > 0$ it is possible to find such an inner product.
In order for the inner product to be independent of $e$, let us first modify $\P_e$ by a conjugation which depends on $e > 0$:
\begin{align*}
	\tP_e 
		:= & \ b_{(2e)^{1/2}} \P_e b_{(2e)^{-1/2}} \\
		= & \ \tBox_e + i \tLop_e - h^2 \tilde \Ric^\sharp,
\end{align*}
where
\begin{align} 
	\tBox_e
		:= & \ h^2 b_{(2e)^{1/2}} \Box b_{(2e)^{-1/2}}, \nonumber \\
 	\tLop_e
 		:= & \ b_{(2e)^{1/2}} \Lop_e b_{(2e)^{-1/2}}, \label{eq: definition tilde L e} \\
 	\tilde \Ric_e^\sharp
 		:= & \ b_{(2e)^{1/2}} \Ric_g^\sharp b_{(2e)^{-1/2}} \nonumber
\end{align}
and where
\begin{align*}
	b_{t}
		:=& \ \left(t - 1\right) \frac{\c \otimes \c^\sharp}{G(\c, \c)} + \id,
\end{align*}
with inverse $b_{t^{-1}}$. 
Note that $b_t$ simply scales the $\c$-direction by the factor $t$ and leaves the orthogonal directions (with respect to $G$) unchanged.
The term $- h^2 \tilde \Ric_g^\sharp$ will be irrelevant for the analysis, as it contributes neither to the leading order part of the operator, nor to the leading order part of the skew-adjoint part of the operator.
Semiclassical estimates for $\tLop_e$ translate to semiclassical estimates for $\Lop_e$ and vice versa, but with bounds depending on $e > 0$. 
We can therefore just as well work with $\tLop_e$ instead of $\Lop_e$.
The semiclassical principal symbol of $\tP_e$ is given by
\begin{equation} \label{eq: tilde p e principal symbol}
	\tilde \p_e(\xi)
		= G(\xi, \xi) \id + i \tl_e(\xi),
\end{equation}
where $\tl_e$ is the principal symbol of $\tLop_e$.

Now, recall that we conjugated $\P_e$ with $b_{(2e)^{1/2}}$ in order to be able to construct a positive definite inner product $\mB$ on $\bT^* M$ for which $\tl_e$ is symmetric.
We will prove in Proposition~\ref{prop: tilde L e properties} below that the following is a suitable choice:
\[
	\mB
		:= G - 2 \frac{\c^\sharp \otimes \c^\sharp}{G(\c, \c)}.
\]
Since $G$ is positive definite on $\c^\perp \otimes \c^\perp$ and $\mB(\c, \c) = - G(\c, \c) > 0$, it follows that $\mB$ is positive definite.
The (scalar) first term $G(\xi, \xi)\id$ in $\tilde \p_e$ is clearly symmetric with respect to $\mB$.
The following proposition analyses $\tl_e$.

\begin{prop} \label{prop: tilde L e properties} \
\begin{enumerate}[(a)]
	\item 
	\label{item: tilde L e smooth} 
	The family $\tLop_e$ of semiclassical differential operators depends smoothly on $e > 0$.
	\item The semiclassical principal symbol of $\tLop_e$ is given by
	\begin{equation} \label{eq: tilde l e}
	\begin{split}
		\tl_e(\xi) \o
			= & \ G(\c, \xi) \o + \sqrt{2e} \left( G(\xi, \o) \c - G(\o, \c) \xi \right) \\
			& \ + (1 - 2 e) \frac{G(\o, \c) G(\xi, \c)}{G(\c, \c)} \c,
	\end{split}
	\end{equation}
	for any $\xi, \o \in \bT^*M$.
	In particular, 
	\begin{equation} \label{eq: tilde l 0}
		\lim_{e \to 0} \tl_e(\xi)
			= G(\c, \xi) 
				\left( \id + \frac{\c \otimes \c^\sharp}{G(\c, \c)} \right),
	\end{equation}
	for any $\xi \in \bT^* M$.
	Moreover, for any $e > 0$, $\tl_e(\xi)$ is symmetric with respect to $\mB$. \label{item: tilde L e principal symbol}
	\item \label{item: tilde L e skew-adjoint part} 
	Assume that 
	\begin{equation} \label{eq: assumptions critical point}
	\begin{split}
		\n_{\c^\sharp} \c|_p 
			&= 0, \\
		\md G(\c, \c)|_p 
			&= 0,
	\end{split}
	\end{equation}
	at a point $p \in M$.
	Then the skew-adjoint part of $\tLop_e$, with respect to $\mB$, is given by
	\[
		\frac{\tLop_e - \tLop_e^{*\mB}}{2 i h}|_p
			=
				\begin{pmatrix}
					(1 + e) \de_g(\c)|_p & 0 \\
					0 & \frac12\de_g(\c) \id |_p - \de_g^*(\c)^\sharp|_p
				\end{pmatrix},
	\]
	with respect to the splitting $\bT^* M = \R \c \oplus \c^\perp$.
\end{enumerate}
\end{prop}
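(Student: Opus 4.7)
The endomorphism $b_t$ is affine in $t$ and invertible whenever $t \neq 0$, and both $(2e)^{1/2}$ and $(2e)^{-1/2}$ depend smoothly on $e>0$. Since $\Lop_e$ itself is affine in $e$, the conjugate $\tLop_e = b_{(2e)^{1/2}} \Lop_e b_{(2e)^{-1/2}}$ is smooth in $e > 0$.

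\textbf{Part (b).} Since $b_t$ is a pointwise bundle endomorphism, it commutes with the semiclassical principal symbol functor, so $\tl_e(\xi) = b_{(2e)^{1/2}} \l_e(\xi) b_{(2e)^{-1/2}}$. Substituting the formula of Lemma~\ref{le: semiclassical principal symbol} and multiplying out is direct algebra; the apparently singular factor $b_{(2e)^{-1/2}}$ is tamed because $\l_e(\xi)\c = 2 G(\c,\xi)\c - 2e\,G(\c,\c)\xi$ carries an extra factor of $2$ in the $\c$-direction, which produces the required $\sqrt{2e}$ coefficients after the outer $b_{(2e)^{1/2}}$ acts. The limit~\eqref{eq: tilde l 0} is then immediate. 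Positive-definiteness of $\mB$ follows from $\mB(\c,\c) = -G(\c,\c) > 0$ (as $\c$ is timelike) together with $\mB|_{\c^\perp} = G|_{\c^\perp} > 0$ (as $\c^\perp$ is spacelike), noting that $\mB$ vanishes between these two $G$-orthogonal blocks. For $\mB$-symmetry of $\tl_e(\xi)$, I would check $\mB(\tl_e(\xi)\o_1,\o_2) = \mB(\o_1,\tl_e(\xi)\o_2)$ on each block of the decomposition $\bT^*M = \R \c \oplus \c^\perp$: on $\c^\perp \times \c^\perp$ both sides collapse to $G(\c,\xi)G(\o_1,\o_2)$, on the mixed blocks both evaluate to $-\sqrt{2e}\,G(\c,\c)G(\xi,\o_\perp)$, and the $(\c,\c)$ block is trivially symmetric.

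\textbf{Part (c).} Thanks to part (b), the semiclassical principal symbol of $\tLop_e - \tLop_e^{*\mB}$ vanishes, so $\frac{\tLop_e - \tLop_e^{*\mB}}{2ih}$ has semiclassical order zero and evaluates at $p$ to an endomorphism of $\bT_p^* M$ depending only on first derivatives of $\c$, $G(\c,\c)$, and of the tensors $\mB$ and $b_t$ derived from them. The plan is to expand
\[
  \tLop_e = 2ih\, b_t\, \de_g\bigl(\c \otimes_s (b_{t^{-1}}\,\cdot) - e\,G(\c,b_{t^{-1}}\,\cdot)\,g\bigr), \qquad t = (2e)^{1/2},
\]
apply the formal $G$-adjoint of $\de_g$ acting on symmetric tensors (which produces $\tr_g\n\c$ and the symmetric part of $\n\c$), and convert the $G$-adjoint to the $\mB$-adjoint by the block-diagonal sign flip on $\R\c$. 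The two hypotheses at $p$ then give both $\c^a \n_a \c_b = 0$ and $\c^a \n_b \c_a = 0$, hence $(\L_{\c^\sharp} g)(\c^\sharp,\,\cdot)|_p = 0$; this forces all $\c^\sharp$-directed first derivatives of $\mB$ and of the coefficients of $b_{(2e)^{\pm 1/2}}$ at $p$ to vanish. What remains assembles into the algebraic invariants $\de_g(\c)$ and $\de_g^*(\c)^\sharp$, and a block decomposition in $\R\c \oplus \c^\perp$ yields the asserted matrix.

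\textbf{Main obstacle.} Parts (a) and (b) are essentially pointwise algebra, while part (c) is where the real work lies. The difficulty is computing a subprincipal symbol with respect to a non-standard, $\c$-dependent inner product $\mB$: one must simultaneously track first-derivative contributions from the formal adjoint of $\de_g$ on symmetric tensors, from the conjugating factors $b_{(2e)^{\pm 1/2}}$, and from $\mB$ itself. The reason the final expression collapses to such a clean block-diagonal form is precisely that the two critical-point hypotheses kill every $\c^\sharp$-directed derivative of the structural tensors at $p$; without this cancellation, a host of cross terms would persist and no such tidy formula could be expected.
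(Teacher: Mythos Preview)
Parts (a) and (b) are correct and essentially match the paper: the paper likewise computes $\tl_e$ by expanding the conjugation $b_{(2e)^{1/2}}\l_e(\xi)b_{(2e)^{-1/2}}$ directly, and verifies $\mB$-symmetry by a short bilinear-form calculation on the middle term $G(\xi,\o)\c - G(\o,\c)\xi$ (rather than by blocks, but to the same effect).

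For part (c) your strategy is in the right direction but misses an algebraic simplification that the paper uses and which organizes the whole computation. The paper first exploits that each $b_t$ is itself $\mB$-symmetric to factor
\[
\frac{\tLop_e - \tLop_e^{*\mB}}{2ih}\Big|_p
= b_{(2e)^{-1/2}}\,\frac{b_{2e}\Lop_e - (b_{2e}\Lop_e)^{*\mB}}{2ih}\,b_{(2e)^{-1/2}}\Big|_p,
\]
so one only needs the skew part of $b_{2e}\Lop_e$; the outer $b_{(2e)^{-1/2}}$ factors are then harmless because the asserted matrix is already block-diagonal in $\R\c\oplus\c^\perp$. The adjoint is converted via $(b_{2e}\Lop_e)^{*\mB} = b_{-1}\Lop_e^{*G} b_{-2e}$ (your ``block-diagonal sign flip'' is exactly $b_{-1}$), and both pieces are evaluated at $p$ on test one-forms normalized by $\n\o|_p = 0$. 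This factorization removes the square-root coefficients from the subprincipal computation entirely and makes the $e$-dependence of the final answer transparent; working directly with $\tLop_e$ as you propose would carry $(2e)^{\pm 1/2}$ through every line.

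Your description of the cancellation mechanism is also slightly off. You say the hypotheses force ``all $\c^\sharp$-directed first derivatives of $\mB$ and of the coefficients of $b_{(2e)^{\pm 1/2}}$'' to vanish, but that is not how they enter. The hypothesis $\md G(\c,\c)|_p = 0$ kills \emph{every} directional derivative of $G(\c,\c)$ at $p$ (needed because $\de_g$ and $\de_g^*$ differentiate in all directions), while $\n_{\c^\sharp}\c|_p = 0$ is invoked only at the specific places where the operator produces a $\c^\sharp$-derivative of $\c$, for instance in $\de_g^*\bigl(\frac{G(\o,\c)\c}{G(\c,\c)}\bigr)(\c^\sharp,X)$. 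Both are needed, but for different terms rather than through a single ``$\c^\sharp$-directed'' mechanism; derivatives $\n_X\c$ for $X\perp\c^\sharp$ do survive and are exactly what assemble into $\de_g(\c)$ and $\de_g^*(\c)^\sharp$.
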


\begin{remark}
In our application of this to the Kerr--de~Sitter spacetime, the assumptions~\eqref{eq: assumptions critical point} turn out to be naturally satisfied wherever the Hamiltonian vector field of the principal symbol $\tl_e$ (essentially given by $\c^\sharp$ for small $e$, c.f.~\eqref{eq: tilde l 0}) vanishes.
\end{remark}

\begin{proof}[Proof of Proposition~\ref{prop: tilde L e properties}]
Assertion~\eqref{item: tilde L e smooth} is clear by construction, c.f.~\eqref{eq: definition tilde L e}.

\noindent
For assertion~\eqref{item: tilde L e principal symbol}, recall from Lemma~\ref{le: semiclassical principal symbol} that
\[
	\l_e(\xi) \o
		= G(\c, \xi) \o + G(\xi, \o) \c - 2 e G(\o, \c) \xi.
\]
It follows that
\begin{align*}
	\l_e(\xi) \left( \frac{\c \otimes \c^\sharp}{G(\c, \c)} \o \right)
		&= 2 \left( \frac{G(\o, \c) G(\xi, \c)}{G(\c, \c)} \c - e G(\o, \c)\xi \right), \\
	\left( \frac{\c \otimes \c^\sharp}{G(\c, \c)} \right) \l_e(\xi) \o
		&= \left( (1 - 2e) \frac{G(\o, \c) G(\xi, \c)}{G(\c, \c)} + G(\o, \xi) \right) \c.
\end{align*}
We use this to compute
\begin{align*}
	\tl_e(\xi) \o
		&= b_{(2e)^{1/2}} \l_e(\xi) b_{(2e)^{-1/2}} \\
		&= \left( \left( (2e)^{\frac12} - 1 \right) \frac{\c \otimes \c^\sharp}{G(\c, \c)} + \id \right) \l_e(\xi) \left( \left( (2e)^{-\frac12} - 1 \right) \frac{\c \otimes \c^\sharp}{G(\c, \c)} + \id \right) \\
		&= 2 \left( (2e)^{\frac12} - 1 \right) \left( (2e)^{-\frac12} - 1 \right) (1 - e) \frac{G(\o, \c) G(\xi, \c)}{G(\c, \c)} \c \\
		&\qquad + \left( (2e)^{\frac12} - 1 \right)\left( (1 - 2e) \frac{G(\o, \c) G(\xi, \c)}{G(\c, \c)} + G(\o, \xi) \right) \c \\
		&\qquad + 2 \left( (2e)^{-\frac12} - 1 \right) \left( \frac{G(\o, \c) G(\xi, \c)}{G(\c, \c)} \c - e G(\o, \c)\xi \right) \\
		&\qquad + G(\c, \xi) \o + G(\xi, \o) \c - 2 e G(\o, \c) \xi \\
		&= G(\c, \xi) \o + (2e)^{\frac12} \left( G(\o, \xi) \c - G(\o, \c) \xi \right) + (1 - 2e) \frac{G(\o, \c) G(\xi, \c)}{G(\c, \c)}\c,
\end{align*}
proving~\eqref{eq: tilde l e}.
Inserting $e = 0$ in~\eqref{eq: tilde l e}, we get
\[
	\lim_{e \to 0} \tl_e(\xi) \o
		= G(\c, \xi) \left( \o + \frac{G(\o, \c)} {G(\c, \c)}\c \right),
\]
proving~\eqref{eq: tilde l 0}.
The first and last terms in~\eqref{eq: tilde l e} are clearly symmetric with respect to $\mB$, so we only need to check the second term:
\begin{align*}
	\mB( G(\xi, \o) \c - G(\o, \c) \xi, \eta)
		&= G(\xi, \o) G(\c, \eta) - G(\o, \c) G(\xi, \eta) \\*
		&\qquad - 2 \left( G(\xi, \o) G(\c, \eta) - \frac{G(\o, \c) G(\xi, \c) G(\c, \eta)}{G(\c, \c)} \right) \\
		&= - G(\xi, \o) G(\c, \eta) - G(\o, \c) G(\xi, \eta) \\
		&\qquad + 2 \frac{G(\o, \c) G(\xi, \c) G(\c, \eta)}{G(\c, \c)},
\end{align*}
which is symmetric in $\o$ and $\eta$.
This proves assertion~\eqref{item: tilde L e principal symbol}.

\noindent
For assertion~\eqref{item: tilde L e skew-adjoint part}, using that $b_t$ is symmetric with respect to $\mB$ for all $t \in \R$, we note that
\begin{align*}
	\frac{\tLop_e - \tLop_e^{*\mB}}{2 i h}|_p
		&= \frac{b_{(2e)^{1/2}} \Lop_e b_{(2e)^{-1/2}} - \left( b_{(2e)^{1/2}} \Lop_e b_{(2e)^{-1/2}} \right)^{*\mB}}{2 i h}\Big|_p \\
		&= b_{(2e)^{-1/2}} \frac{ b_{2e} \Lop_e - \left( b_{2e} \Lop_e \right)^{*\mB}}{2 i h} b_{(2e)^{-1/2}}\Big|_p.
\end{align*}
Consequently, assertion~\eqref{item: tilde L e skew-adjoint part} is equivalent to
\begin{equation} \label{eq: simplified skew-adjoint}
	\frac{ b_{2e} \Lop_e - \left( b_{2e} \Lop_e \right)^{*\mB}}{2 i h}\Big|_p
		= \de_g(\c)|_p \left( 2e (1 + e) \cdot \id|_{\R \c} + \frac12 \id|_{\c^\perp} \right) - \de_g^* (\c)^\sharp \id|_{\c^\perp} \Big|_p.
\end{equation}
Since this is an endomorphism, we can without loss of generality restrict the following computations to one-forms $\o$ satisfying $\n \o|_p = 0$.
Under this condition on $\o$ at $p$, we first note that
\[
	\frac{\Lop_e \o}{2 i h}(X)|_p
		= \frac12 \left( \de_g (\c) \o(X) - \n_{\o^\sharp} \c(X) + 2 e G(\n_X \c, \o) \right)|_p.
\]
Since
\[
	b_{2e}
		= \id - (1 - 2 e) \frac{\c \otimes \c^\sharp}{G(\c, \c)},
\]
we first compute, using that $\n_{\c^\sharp}\c|_p = 0$ and $\md G(\c, \c)|_p = 0$, the term
\begin{align*}
	- (1 - 2 e) 
		&\frac{\c \otimes \c^\sharp}{G(\c, \c)} \frac{\Lop_e \o}{2 i h}\Big|_p \\
		&= - \frac{1 - 2 e}2 \left( \de_g(\c) \frac{\o(\c^\sharp)}{G(\c, \c)} - \frac{G(\n_{\o^\sharp}\c, \c)}{G(\c, \c)} + 2e G(\n_{\c^\sharp}\c, \o) \right) \c \Big|_p \\
		&= - \frac{1 - 2 e}2 \de_g(\c) \frac{\o(\c^\sharp)}{G(\c, \c)} \c \Big|_p,
\end{align*}
where we used that $G(\n_{\o^\sharp}\c, \c)|_p = \frac12 \md G(\c, \c)(\o^\sharp)|_p = 0$.
We thus conclude that
\begin{equation} \label{eq: first skew adjoint part}
\begin{split}
	\frac{b_{2e} \Lop_e \o}{2 i h} (X)\Big|_p
		=& \ \frac12 \left( \de_g (\c) \o(X) - \n_{\o^\sharp} \c(X) + 2 e G(\n_X \c, \o) \right)\Big|_p \\
		& \ - \frac{1-2e}2 \de_g(\c) \frac{G(\o, \c)}{G(\c, \c)}\c(X)\Big|_p.
\end{split}
\end{equation}
For the adjoint term, let us first note that $\mB(\o, \eta) = G(b_{-1} \o, \eta) = G(\o, b_{-1} \eta)$ for all one-forms $\o$ and $\eta$.
It follows that
\begin{align*}
	\int_M \mB(b_{2e} \Lop_e \o, \eta) \md \Vol_g
		&= \int_M G(b_{2e} \Lop_e \o, b_{-1} \eta) \md \Vol_g \\
		&= \int_M G(\o, \Lop_e^{*G} b_{2e} b_{-1} \eta) \md \Vol_g \\
		&= \int_M \mB(\o, b_{-1} \Lop_e^{*G} b_{-2e} \eta) \md \Vol_g,
\end{align*}
so we conclude that $\left( b_{2e} \Lop_e \right)^{*\mB} = b_{-1} \Lop_e^{*G} b_{-2e}$.
Since $\de_g^*\o$ is a symmetric tensor, we note that
\[
	\Lop_e^{*G}\o
		= - 2 i h \left( \de_g^* (\o) (\c^\sharp, \cdot) + e \de_g(\o) \c \right).
\]
We compute, using that $\n_{\c^\sharp}\c|_p = 0$ and $\md G(\c, \c)|_p = 0$,
\begin{align*}
	\de_g \left( \frac{\c \otimes \c^\sharp}{G(\c, \c)} \o \right)\Big|_p
		&= \de_g \left( \frac{G(\o, \c)}{G(\c, \c)}\c \right)\Big|_p \\
		&= \frac{G(\o, \c)}{G(\c, \c)} \de_g(\c)\Big|_p, \\
	\de_g^*\left(\frac{\c \otimes \c^\sharp}{G(\c, \c)} \o \right)(\c^\sharp, X)\Big|_p
		&= \frac12 \n_{\c^\sharp} \left( \frac{G(\o, \c) \c}{G(\c, \c)} \right)(X)\Big|_p + \frac12 \n_X \left( \frac{G(\o, \c) \c}{G(\c, \c)} \right)(\c^\sharp)\Big|_p \\
		&= \frac12 G(\o, \n_X \c)\Big|_p.
\end{align*}
Since $\Lop_e^{*G} \o|_p = 0$, we get
\begin{align*}
	\frac{- \Lop_e^{*G} b_{-2e} \o }{2 i h} (X)\Big|_p 
		&= \frac{-1}{2 i h} \Lop_e^{*G} \left( \o - (1 + 2e) \frac{\c \otimes \c^\sharp}{G(\c, \c)} \o\right) (X)\Big|_p \\
		&= \frac{1 + 2e}{2 i h} \Lop_e^{*G} \left(\frac{\c \otimes \c^\sharp}{G(\c, \c)} \o\right) (X)\Big|_p \\
		&= - \frac{1 + 2e}2 \left( G(\o, \n_X \c) + 2 e \de_g(\c) \frac{G(\o, \c)}{G(\c, \c)} \c(X) \right)\Big|_p.
\end{align*}
From this, we finally compute
\begin{align*}
	\frac{- \left( b_{2e} \Lop_e \right)^{*\mB} \o}{2 i h}(X)\Big|_p 
		&= - \frac{b_{-1} \Lop_e^{*G} b_{-2e} \o }{2 i h} (X)\Big|_p \\
		&= - \frac{1 + 2e}2 \left( G(\o, \n_X \c)  - 2 e \de_g(\c) \frac{G(\o, \c)}{G(\c, \c)} \c (X) \right)\Big|_p.
\end{align*}
Adding this to~\eqref{eq: first skew adjoint part} verifies~\eqref{eq: simplified skew-adjoint}.
This completes the proof.
\end{proof}

\begin{remark}
Once can check that at points where $\n_{\c^\sharp} \c|_p \neq 0$, the subcritical part of $\tLop_e$ grows like $e^{-1}$ as $e \to 0$.
It is therefore natural to assume that $\n_{\c^\sharp} \c|_p = 0$, as we do in the lemma.
\end{remark}

\subsubsection{Ellipticity at intermediate frequencies}

The following proposition implies that we only need to analyze the low frequency and high frequency behaviour for $\tP_e$:

\begin{prop}
\label{prop: ellipticity finite frequencies}
For any compact subset $K \subset \bT^* M \backslash o$, there is an $e_0 > 0$ such that
\[
	K
		\subseteq \ell(\tP_e)
\]
for all $e \in (0, e_0)$.
\end{prop}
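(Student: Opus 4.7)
The plan is to reduce the statement to the $e = 0$ case and then invoke a compactness/continuity argument. By Proposition~\ref{prop: tilde L e properties}\eqref{item: tilde L e smooth}, the family $e \mapsto \tLop_e$ is smooth down to $e = 0$, so the principal symbol $\tilde \p_e$ depends continuously on $e$ jointly with $(x,\xi)$. It therefore suffices to show that the limit symbol $\tilde \p_0$ is invertible on all of $\bT^*M \setminus o$: by compactness of $K$ and continuity, the map $(e,x,\xi) \mapsto \det \tilde \p_e(x,\xi)$ is then bounded away from zero on $\{0\} \times K$, hence on $(-e_0, e_0) \times K$ for some $e_0 > 0$.

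The content is thus the $e = 0$ analysis. Using the formula~\eqref{eq: tilde l 0} from Proposition~\ref{prop: tilde L e properties}\eqref{item: tilde L e principal symbol},
\[
	\tilde \p_0(\xi) = G(\xi,\xi)\,\id + i\, G(\c,\xi)\, \Pi,
	\qquad
	\Pi := \id + \frac{\c \otimes \c^\sharp}{G(\c,\c)}.
\]
Since $\c$ is timelike, $G(\c,\c) < 0$, and $\Pi$ preserves the splitting $\bT^*M = \R\c \oplus \c^\perp$: it acts as multiplication by $2$ on $\R\c$ and as the identity on $\c^\perp$. Hence in this splitting,
\[
	\tilde \p_0(\xi) =
	\begin{pmatrix}
	G(\xi,\xi) + 2i G(\c,\xi) & 0 \\
	0 & \bigl(G(\xi,\xi) + i G(\c,\xi)\bigr)\id|_{\c^\perp}
	\end{pmatrix}.
\]

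Step two is to check invertibility. Failure of invertibility requires either $G(\xi,\xi) + 2iG(\c,\xi) = 0$ or $G(\xi,\xi) + iG(\c,\xi) = 0$; separating real and imaginary parts, each forces simultaneously $G(\xi,\xi) = 0$ and $G(\c,\xi) = 0$. But $G(\c,\xi) = 0$ means $\xi \in \c^\perp$, on which $G$ is positive definite (since $\c$ is timelike), so $G(\xi,\xi) = 0$ forces $\xi = 0$. Therefore $\tilde \p_0(\xi)$ is invertible for every $\xi \in \bT^*M \setminus o$, i.e., $\ell(\tP_0) = \bT^*M \setminus o$.

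The main (and only) substantive point is the Lorentzian linear algebra step above. The rest is formal: by continuity of $\tilde \p_e$ in $e$ and compactness of $K$, one obtains $K \subset \ell(\tP_e)$ for all sufficiently small $e > 0$, which is the desired conclusion.
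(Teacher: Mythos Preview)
Your proof is correct and follows essentially the same approach as the paper: reduce to $e=0$ by continuity, diagonalize $\tilde\p_0(\xi)$ in the splitting $\R\c\oplus\c^\perp$, and use that $\c$ is timelike to rule out $G(\xi,\xi)=G(\c,\xi)=0$ for $\xi\neq 0$. One small correction: Proposition~\ref{prop: tilde L e properties}\eqref{item: tilde L e smooth} only asserts smoothness for $e>0$, not down to $e=0$ (indeed $b_{(2e)^{-1/2}}$ blows up); the continuity of the \emph{principal symbol} $\tl_e$ down to $e=0$ that you actually need follows instead from the explicit formula~\eqref{eq: tilde l e} in part~\eqref{item: tilde L e principal symbol}.
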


\begin{proof}
By continuity in $e$, it suffices to prove that $\tilde \p_e(\xi)$ is invertible for all $\xi \neq 0$ in the limit $e \to 0$.
By~\eqref{eq: tilde p e principal symbol} and~\eqref{eq: tilde l 0},
\[
	\lim_{e \to 0}\tilde \p_e(\xi)
		= \begin{pmatrix}
			G(\xi, \xi) + 2 i G(\xi, \c)  & 0 \\
			0 & G(\xi, \xi) + i G(\xi, \c)
		\end{pmatrix}
\]
for any $\xi \in \bT^* M$, with respect to the splitting $\bT^* M  = \ldr{\c} \oplus \c^\perp$.
This is invertible for $\xi \neq 0$, since $G(\xi, \xi) = G(\c, \xi) = 0$ would imply that $\xi = 0$ since $\c$ is timelike.
\end{proof}

\subsubsection{The setup for low frequency analysis}
\label{sec: low frequency preparation}

Proposition~\ref{prop: ellipticity finite frequencies} implies that we only need to prove semiclassical estimates at (small open neighborhoods of) fiber infinity $\bS M \subset \boT M$ and the zero section $o \subset \boT M$.
In this subsection, we focus on a small open neighborhood of the zero section.
Note that the first term $G(\xi, \xi)\id$ in $\tilde p_e(\xi)$, see~\eqref{eq: tilde p e principal symbol}, vanishes quadratically at the zero section, whereas the second term $i \tl_e(\xi)$ only vanishes to first order.
Following the argument in \cite{HV2018}*{Sec.~8.3}, it is therefore convenient to again modify the operator in order to write it as the sum of $\tLop_e$ and higher order vanishing terms at the zero section.
Since $1 + i \tLop_e^*$ is elliptic at the zero section, we may consider
\[
	\hat \P_e
		:= - i (1 - i \tLop_e^*) \tP_e
		= \tLop_e + \Jop_e + i \Qop_e,
\]
where
\begin{align*}
	\Jop_e
		:=& \ - \Re \left( \tLop_e^* \left( \tBox_e - h^2 \tilde \Ric^\sharp \right) \right) + \Im \left( \tBox_e - h^2 \tilde \Ric^\sharp \right), \\
	\Qop_e
		:=& \ - \tLop_e^*\tLop_e - \Im \left( \tLop_e^* \left( \tBox_e - h^2 \tilde \Ric^\sharp \right) \right) - \Re\left( \tBox_e - h^2 \tilde \Ric^\sharp \right), 
\end{align*}
where the formal adjoint and the real and imaginary parts are defined with respect to $\mB$.
For a principally scalar operator 
\[
	\Aop = \Aop^{*\mB} \in \Psi_{\be, h}^{k,l,m}(M; \bT^* M)
\]
(here $k, l, m$ are the differential order, the b-decay order and the semiclassical order, respectively), we get
\begin{align}
	& \Im \ldr{\frac1h \Aop \hat \P_e u, \Aop u} - \Re \ldr{\frac1h \Aop \Qop_e u, \Aop u} \nonumber \\
	&\qquad = \ldr{\frac i{2h} \Aop u, \Aop (\tLop_e + \Jop_e) u} - \ldr{\frac i{2h} \Aop (\tLop_e + \Jop_e) u, \Aop u} \nonumber \\
	&\qquad = \ldr{\frac i{2h} \left[ \tLop_e + \Jop_e, \Aop^2 \right] + \frac1 {2 i h} \left( \tLop_e - \tLop_e^{*\mB} \right) \Aop^2 u, u} \label{eq: commutator equality}
\end{align}
for any $u \in \dot C_c^\infty(M)$.
In practice, we will be interested in estimates on the weighted Sobolev spaces $H^{k,\rho}_{\be, h}$, where $\rho \in \R$ is the b-weight.
Near the zero section $o \subset \boT M$, we can ignore the differential order and consider $H^{0,\rho}_{\be, h}$ for simplicity.
Moreover, it suffices to consider commutants $\Aop \in \Psi_{\be, h}^{-\infty,\rho,0}(M; \bT^* M)$.
Estimating the right-hand-side of~\eqref{eq: commutator equality} for suitably chosen $\Aop$ will be the main goal in Section~\ref{sec: low frequency estimates} below.
Estimating the left-hand-side is less delicate and can be done rather generally as follows.
\begin{lemma} \label{le: the right-hand-side low}
Let $\de > 0$ and $\rho \in \R$.
Assume that $\Aop \in \Psi_{\be, h}^{-\infty, \rho, 0}$ and that the principal symbol $\as$ of $\Aop$ is supported within $\{\abs{\xi} \leq \de\} \subseteq \boT M$.
If $\Sop \in \Psi_{\be, h}^{-\infty, 0, 0}(M; \bT^* M)$ satisfies
\[
	\WFb(\Aop)
		\subseteq \ell(\Sop),
\]
then
\begin{align*}
	- \Im \ldr{\frac1h \Aop \hat \P_e u, \Aop u} +
		&\Re \ldr{\frac1h \Aop \Qop_e u, \Aop u} \\
		&\lesssim_{k, e} \de \norm{\Aop u}_{L^2}^2 + h^{-1} \norm{\Sop \hat \P_e u}_{H^{0, \rho}_{\be, h}} \norm{\Aop u}_{L^2} \\
		&\qquad + h \norm{\Sop u}^2_{H^{0, \rho}_{\be, h}} + h^k \norm{u}^2_{H^{0, \rho}_{\be, h}}
\end{align*}
for all $u \in \dot C^\infty_c(M)$, $k \in \N$, $e \in (0, e_0)$ and $h \in (0, h_0)$, for some $e_0, h_0 > 0$.
\end{lemma}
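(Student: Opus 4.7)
My plan is to estimate the two pieces of the left-hand side separately: the $\hat\P_e$ source term by Cauchy--Schwarz combined with the microlocal parametrix encoded in $\WFb(\Aop)\subseteq\ell(\Sop)$, and the $\Qop_e$ term by exploiting the vanishing of the principal symbol of $\Qop_e$ at the zero section of $\boT M$.

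For the first piece, Cauchy--Schwarz yields
\[
\left|\Im\ldr{h^{-1}\Aop\hat{\P}_e u,\Aop u}\right|\leq h^{-1}\norm{\Aop\hat{\P}_e u}_{L^2}\norm{\Aop u}_{L^2}.
\]
Since $\WFb(\Aop)\subseteq\ell(\Sop)$, I construct a microlocal parametrix $\Bop\in\Psi_{\be,h}^{-\infty,\rho,0}$ satisfying $\Aop=\Bop\Sop+R$ with $R\in h^N\Psi_{\be,h}^{-\infty,-\infty,0}$ for arbitrarily large $N$. Boundedness of $\Bop\colon H^{0,\rho}_{\be,h}\to L^2$, together with the fact that $\hat{\P}_e$ is a fixed second-order b-differential operator so that $\norm{R\hat{\P}_e u}_{L^2}\lesssim h^k\norm{u}_{H^{0,\rho}_{\be,h}}$ after trading finitely many b-Sobolev orders for arbitrary $h^k$, gives the second and part of the fourth terms of the bound.

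For the $\Qop_e$ piece, the structural input is that $\Qop_e\in\Psi_{\be,h}^{2,0,0}$ has principal symbol vanishing at the zero section, since each of its constituents $-\tilde\Lop_e^*\tilde\Lop_e$, $-\Im(\tilde\Lop_e^*X)$, $-\Re X$ (with $X=\tilde\Box_e-h^2\tilde\Ric^\sharp$) has symbol at least $O(|\xi|^2)$. Commuting $\Aop$ past $\Qop_e$, one obtains $h^{-1}\ldr{\Aop\Qop_e u,\Aop u}=h^{-1}\ldr{\Qop_e\Aop u,\Aop u}+h^{-1}\ldr{[\Aop,\Qop_e]u,\Aop u}$; the commutator gains an $h$ from the semiclassical symbol calculus so that $h^{-1}[\Aop,\Qop_e]\in\Psi_{\be,h}^{1,\rho,0}$, and combined with microlocality through $\Sop$ this contributes to $h\norm{\Sop u}^2_{H^{0,\rho}_{\be,h}}$ modulo residuals. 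The main term $h^{-1}\ldr{\Qop_e\Aop u,\Aop u}$ is expanded via the decomposition of $\Qop_e$: the nonpositive piece $-h^{-1}\norm{\tilde\Lop_e\Aop u}^2$ is dropped from the upper bound, while Cauchy--Schwarz with Peter--Paul on the remaining two, together with the support restriction $\supp\as\subseteq\{|\xi|\leq\delta\}$ which controls the principal symbols of each piece on $\Aop u$, extracts the factor $\delta$.

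The main technical obstacle is the three-parameter bookkeeping (differential, b-weight, semiclassical orders) done simultaneously: the parametrix $\Bop$ must carry the correct weight $\rho$ to match the right-hand side norms, and all gains from commutators, parametrix errors, and the support restriction must combine so that residual errors land either in the $h\norm{\Sop u}^2_{H^{0,\rho}_{\be,h}}$ term or in $h^k\norm{u}^2_{H^{0,\rho}_{\be,h}}$ for $k$ taken arbitrarily large, with constants depending only on $k$ and $e$. This bookkeeping is routine in the b-semiclassical framework and closely follows the analogous estimates in \cite{HV2018}*{Sec.~8} for the Schwarzschild--de~Sitter case.
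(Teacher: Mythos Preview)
Your treatment of the $\hat\P_e$ source term is fine and matches the paper. The gap is in your handling of the main term $h^{-1}\langle\Qop_e\Aop u,\Aop u\rangle$.

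You propose to decompose $\Qop_e$ into its three constituents, drop the nonpositive piece $-h^{-1}\norm{\tilde\Lop_e\Aop u}^2$, and estimate the remaining two by Cauchy--Schwarz/Peter--Paul using only $\supp\as\subseteq\{|\xi|\leq\delta\}$. This does not give the factor $\delta$; it gives $h^{-1}\delta^2$. Concretely, take the $-\Re X$ piece with $X=\tilde\Box_e-h^2\tilde\Ric^\sharp$: since $\tilde\Box_e=h^2 b_{(2e)^{1/2}}\Box b_{(2e)^{-1/2}}$, one has $h^{-1}\Re\tilde\Box_e=h\cdot\Re(b_{(2e)^{1/2}}\Box b_{(2e)^{-1/2}})$, a second-order operator with (Lorentzian, hence indefinite) principal symbol $G(\xi,\xi)$. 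Integrating by parts and using $\norm{h\nabla\Aop u}\lesssim\delta\norm{\Aop u}$ gives at best $h\norm{\nabla\Aop u}^2\lesssim h^{-1}\delta^2\norm{\Aop u}^2$. Since in the applications $\delta$ is fixed while $h\to 0$, this is useless for absorption.

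The paper's proof avoids this by not splitting $\Qop_e$: the semiclassical principal symbol of $-\Qop_e$ is $\q_e(\xi)=\tl_e(\xi)^2+G(\xi,\xi)\id$, and the point is that this \emph{combination} is a positive definite quadratic form in $\xi$ (precisely because $\c$ is timelike, so $\tl_e(\xi)^2$ dominates the negative part of $G(\xi,\xi)$). One then writes $-\Qop_e=h^2\nabla^*\mathcal Q_e\nabla+F_1(h\nabla)+h^2F_2$ with $\mathcal Q_e$ positive definite; the quadratic piece $-h\langle\mathcal Q_e\nabla\Aop u,\nabla\Aop u\rangle\leq 0$ is dropped in its entirety, leaving only the genuinely first-order remainder $F_1(h\nabla)$, whose symbol on $\supp\as$ is $O(\delta)$. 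Your decomposition throws away the positivity of $\tl_e^2$ before it can compensate for the indefiniteness of $G$. A secondary issue: your single-commutator argument for $h^{-1}[\Aop,\Qop_e]$ does not obviously produce the factor $h$ in front of $\norm{\Sop u}^2$; the paper uses the double commutator $h\langle h^{-1}[\Aop,h^{-1}[\Aop,\Qop_e]]u,u\rangle$ to get it.
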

\begin{proof}
Microlocal elliptic regularity theory implies that
\begin{align*}
	- \Im \ldr{\frac1h \Aop \hat \P_e u, \Aop u}
		&\leq h^{-1}\norm{\Aop \hat \P_e u}_{L^2} \norm{\Aop u}_{L^2} \\
		&\lesssim_{k, e} \left( h^{-1} \norm{\Sop \hat \P_e u}_{H^{0, \rho}_{\be, h}} + h^k \norm{u}_{H^{0, \rho}_{\be, h}} \right) \norm{\Aop u}_{L^2} \\
		&\lesssim_{k, e} h^{-1} \norm{\Sop \hat \P_e u}_{H^{0, \rho}_{\be, h}} \norm{\Aop u}_{L^2} + h^k \norm{\Sop u}_{H^{0, \rho}_{\be, h}}^2 + h^k \norm{u}_{H^{0, \rho}_{\be, h}}^2.
\end{align*}

It therefore remains to consider the term involving $\Qop_e$.
Since $\Aop = \Aop^*$ and $\Qop_e = \Qop_e^*$, we note that 
\begin{align*}
	h^{-1} \Re \ldr{\Aop \Qop_e u, \Aop u}
		&= \Re \ldr{h^{-1}[\Aop, \Qop_e] u, \Aop u} + h^{-1} \ldr{\Qop_e \Aop u, \Aop u} \\
		&= h \Re \ldr{h^{-1} [\Aop, h^{-1} [\Aop, \Qop_e]] u, u} + h^{-1} \ldr{\Qop_e \Aop u, \Aop u}.
\end{align*}
The first term is estimated as 
\[
	h \Re \ldr{h^{-1} [\Aop, h^{-1} [\Aop, \Qop_e]] u, u}
		\lesssim_{k,e} h \norm{\Sop u}^2_{H^{0, \rho}_{\be, h}} + h^k \norm{u}^2_{H^{0, \rho}_{\be, h}},
\]
for any $k \in \N$.

Let us now turn to the second term.
The semiclassical principal symbol of $- \Qop_e$ is given by
\[
	\q_e(\xi)
		= \tl_e(\xi)^2 + G(\xi, \xi) \id.
\]
By~\eqref{eq: tilde l 0}, we note that
\[
	\lim_{e \to 0} \q_e(\xi)
		= G(\c, \xi)^2 \left( \id + \frac{\c \otimes \c^\sharp}{G(\c, \c)} \right)^2 + G(\xi, \xi) \id.
\]
Since $\c$ is everywhere timelike, this implies that $\lim_{e \to 0} \q_e(\xi)$ is a positive definite quadratic form.
Consequently, $\q_e(\xi)$ is a positive definite quadratic form for all $e \in (0, e_0)$ if $e_0 > 0$ is small enough, with a non-degenerate limit as $e \to 0$. 
Let now $\n$ be a b-connection on $\bT^* M$.
Define $\mathcal Q_e$ to be the unique smooth self-adjoint endomorphism of $\bT^* M \otimes \bT^* M$ such that
\[
	\mB \left( \mathcal Q_e(\xi \otimes \eta), (\xi \otimes \eta) \right)
		= \mB \left( \q_e(\xi) \eta, \eta \right).
\]
It follows that $\n^* \mathcal Q_e \n$ is a second order differential operator, where $\n^*$ is the formal adjoint of $\n: C^\infty(\bT^* M) \to C^\infty(\bT^* M \otimes \bT^* M)$, with respect to $\mB$.
Consequently, the full symbols of both $\Qop_e$ and $h^2 \n^* \mathcal Q_e \n$ vanish quadratically in $(h, \xi)$, and the principal symbols coincide.
It therefore follows that
\[
	- \Qop_e - h^2 \n^* \mathcal Q_e \n
		= F_1(h \n) + h^2 F_2
\]
for some smooth homomorphism fields $F_1, F_2$.
We thus get
\begin{align}
	h^{-1} \ldr{\Qop_e \Aop u, \Aop u}
		= & \ - h \ldr{\mathcal Q_e (\n \Aop u), \n \Aop u} \nonumber \\*
		& - \ldr{ \left( F_1(h \n) + h F_2 \right) \Aop u, \Aop u} \nonumber \\
		\leq & \norm{\left( F_1(h \n) + h F_2 \right) \Aop u}_{L^2} \norm{\Aop u}_{L^2}. \label{eq: first lower bound Qe}
\end{align}
The semiclassical principal symbol of $\left( F_1(h \n) + h F_2 \right) \Aop$ is given by $F_1(\xi) \circ \as(\xi)$, which is supported in $\{ \abs \xi \leq \de \}$.
Hence 
\[
	F_1(\xi) \circ \as (\xi)
		= \chi\left( \abs{\xi} \right) F_1(\xi) \circ \as (\xi),
\]
for any $\chi \in C_c^\infty(\R)$ with $\chi(s) = 1$ for all $\abs s \leq \de$.
Since $F_1(\xi)$ is linear in $\xi$, it follows that
\[
	\abs{ \chi\left( \abs{\xi} \right) F_1(\xi) }
		\leq C \de,
\]
for some $C > 0$.
We conclude that
\begin{align*}
	\norm{\left( F_1(h \n) + h F_2 \right) \Aop u}_{L^2} 
		&\lesssim_{e} \norm{\Op_h\left(\chi F_1 \right) \Aop u}_{L^2} + h \norm{\Aop u}_{L^2} \\
		&\lesssim_{e} \de \norm{\Aop u}_{L^2}.
\end{align*}
Combining this with~\eqref{eq: first lower bound Qe} finishes the proof.
\end{proof}

\subsubsection{The setup for high frequency analysis} \label{sec: high freq general}

Since Proposition~\ref{prop: ellipticity finite frequencies} implies that we only need to prove semiclassical estimates at (small open neighborhoods of) fiber infinity $\bS M \subset \boT M$ and the zero section $o \subset \boT M$, and the previous subsection prepares for the low-frequency estimates, we focus here on the high frequency estimates. 
By~\eqref{eq: tilde p e principal symbol}, the principal symbol of $\tP_e$, restricted to fiber infinity $\bS M$ is given by
\[
	\hat \rho^2 \tilde \p_e|_{\bS M}
		= \hat \rho^2 G \ \id|_{\bS M},
\]
where $\hat \rho$ is a boundary defining function for $\bS M \subseteq \bT M$.
The analysis of our operator $\tP_e$ at (and by continuity near) fiber infinity $\bS M$ is therefore given by estimates for linear wave equations on Kerr--de~Sitter spacetime, with one exception; the skew-adjoint part of $\tP_e$ depends on more than $\hat \rho^2 \tilde \p_e|_{\bS M}$.
We consider the two parts of the characteristic sets, defined by
\[
	\Sigma^\pm
		:= \{ \xi \in \bS M \mid G(\xi, \xi) = 0\ \text{and}\ {\mp}G(\c, \xi) > 0 \}.
\]
Since $\c$ is globally timelike, this really splits up the characteristic set into two disjoint pieces.

\begin{remark}
Note that the sign convention $\Sigma^\pm$ implies that if $\c$ is future directed, then elements in $\Sigma^+$ are also future directed.
As a consequence, the Hamiltonian vector field will propagate elements in $\Sigma^+$ to the future.
\end{remark}

\noindent
The following observation is crucial for the analysis.

\begin{lemma} \label{le: skew adjoint part}
The principal symbol of the skew-adjoint part of $\tP_e$ is given by
\[
	\s_{\be, h} \left( \frac1{2i} \left( \tP_e - \tP_e^{*\mB} \right) \right)
		= \tl_e(\xi).
\]
Consequently,
\[
	\lim_{e \to 0} \s_{\be, h} \left( \frac1{2i} \left( \tP_e - \tP_e^{*\mB} \right) \right)
		= G(\c, \xi) 
				\left( \id + \frac{\c \otimes \c^\sharp}{G(\c, \c)} \right).
\]
In particular, there exists $e_0 > 0$ such that if $e \in (0, e_0]$, then 
\[
	\mp \s_{\be, h} \left( \frac1{2i} \left( \tP_e - \tP_e^{*\mB} \right) \right)\Big|_{\Sigma^\pm}
		> 0,
\]
as quadratic forms.
\end{lemma}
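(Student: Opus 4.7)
The idea is to read off the principal symbol of the skew-adjoint part from the decomposition
\[
\tP_e = \tBox_e + i\tLop_e - h^2 \tilde\Ric_e^\sharp,
\]
and then analyze the resulting expression on $\Sigma^\pm$.

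First I would observe that $\tBox_e$ has semiclassical principal symbol $G(\xi,\xi)\id$, which is a real multiple of the identity, hence symmetric with respect to any positive definite inner product on $\bT^*M$, in particular $\mB$. Consequently, $\tBox_e - \tBox_e^{*\mB}$ vanishes at the level of the principal symbol. The term $h^2\tilde\Ric_e^\sharp$ is of semiclassical order $2$ (viewed as a zero-th order differential operator with a prefactor $h^2$), and therefore does not contribute to the principal symbol of semiclassical order $0$. Thus
\[
\s_{\be,h}\!\left(\frac{1}{2i}\left(\tP_e - \tP_e^{*\mB}\right)\right)
= \frac{1}{2i}\left(i\tl_e(\xi) - (i\tl_e(\xi))^{*\mB}\right)
= \frac{1}{2}\left(\tl_e(\xi) + \tl_e(\xi)^{*\mB}\right).
\]
By Proposition~\ref{prop: tilde L e properties}\eqref{item: tilde L e principal symbol}, $\tl_e(\xi)$ is $\mB$-symmetric, so this reduces to $\tl_e(\xi)$. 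The $e\to 0$ limit is then immediate from \eqref{eq: tilde l 0}.

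For the sign claim, I would work with the limiting symbol $\tl_0(\xi)\o = G(\c,\xi)\bigl(\o + \frac{G(\c,\o)}{G(\c,\c)}\c\bigr)$ and decompose $\o = \alpha\c + \o_\perp$ with $G(\c,\o_\perp)=0$. Using that $\c$ is timelike, so $G(\c,\c)<0$ and $G|_{\c^\perp}$ is positive definite, one computes
\[
\mB(\c,\c) = -G(\c,\c) > 0, \qquad \mB(\c,\o_\perp) = 0, \qquad \mB(\o_\perp,\o_\perp) = G(\o_\perp,\o_\perp).
\]
A short calculation then yields
\[
\mB\bigl(\tl_0(\xi)\o,\o\bigr)
= G(\c,\xi)\left(-2\alpha^2 G(\c,\c) + G(\o_\perp,\o_\perp)\right),
\]
where the factor in parentheses is a positive definite quadratic form in $\o$. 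Since $\Sigma^\pm$ is defined by $\mp G(\c,\xi) > 0$, this shows that $\mp\,\mB(\tl_0(\xi)\o,\o) > 0$ for $\o\neq 0$ on $\Sigma^\pm$.

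Finally, by the continuity of $\tl_e$ in $e$ (from Proposition~\ref{prop: tilde L e properties}\eqref{item: tilde L e smooth}) and compactness of the unit sphere bundle over the region where the Kerr--de~Sitter metric is fixed, the positivity is preserved for all sufficiently small $e>0$, yielding the asserted inequality on $\Sigma^\pm$. The only delicate step is the sign computation; the rest is a direct reading off of the symbol and a continuity argument.
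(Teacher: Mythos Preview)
Your proof is correct and follows essentially the same approach as the paper, which simply cites \eqref{eq: tilde p e principal symbol} for the first claim, \eqref{eq: tilde l 0} for the limit, and (pre)compactness of $\Sigma^\pm$ for the final sign statement. You spell out the sign computation explicitly, which the paper leaves implicit; your phrasing ``compactness of the unit sphere bundle'' should more precisely be ``(pre)compactness of $\Sigma^\pm$ in $\bS M$'', but the idea is the same.
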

\begin{proof}
The first assertion is immediate from~\eqref{eq: tilde p e principal symbol}, the second assertion follows from~\eqref{eq: tilde l 0} and the third assertion follows by (pre)compactness of $\Sigma^\pm$.
\end{proof}

\noindent
As we will see below, this means in practice that the estimates propagate nicely \emph{forwards} in $\Sigma^+$ and \emph{backwards} in $\Sigma^-$ along the Hamiltonian flow, perfectly suitable for solving the \emph{forward in time} Cauchy problem.

\subsection{The damping vector field in Kerr--de~Sitter spacetimes} \label{subsec: partial compactification}

Let us now work concretely with the Kerr--de~Sitter spacetime.
\subsubsection{The partial compactification}
We begin by adding a border at an appropriate timelike infinity in the Kerr--de~Sitter spacetime $(M,g)$, given by~\eqref{eq: manifold} and~\eqref{eq: metric}.
Let us introduce the new coordinates
\begin{align*}
	\begin{pmatrix}
		s \\
		\psi
	\end{pmatrix}
		= 
	\begin{pmatrix}
		e^{-t_*} \\
		\phi_* - \frac a{r_0^2 + a^2} t_*
	\end{pmatrix}
\end{align*}
on $M$.
In these coordinates, the metric~\eqref{eq: metric} now takes the form 
\begin{equation}
\begin{split}
	g
		=& \ - \frac{\mu(r)}{b^2\varrho^2}\left( - \frac{r_0^2 + a^2 \cos^2(\theta)}{r_0^2 + a^2} \frac{\md s}{s} - a \sin^2(\theta) \md \psi \right)^2 + \varrho^2 \frac{1 - f(r)^2}{\mu(r)} \md r^2 \\*
		& \ - \frac 2 b f(r) \left( - \frac{r_0^2 + a^2 \cos^2(\theta)}{r_0^2 + a^2} \frac{\md s}{s} - a \sin^2(\theta) \md \psi \right)\md r \\*
		& \ + \frac{c(\theta)\sin^2(\theta)}{b^2\varrho^2}\left(- a \frac{r_0^2 - r^2}{r_0^2 + a^2} \frac{\md s}{s} - \left(r^2 + a^2\right) \md \psi\right)^2 + \varrho^2\frac{\md \theta^2}{c(\theta)}
\end{split}
\label{eq: compactification metric}
\end{equation}
on 
\[
	M
		\cong (0, \infty)_{s} \times (r_e - 2\e_M, r_c + 2\e_M)_r \times \mathbb{S}^2_{\psi, \theta}.
\]
In particular, this computation shows the following:
\begin{lemma}[The appropriate compactification]
Consider the coordinate system $(s, r, \psi, \theta)$ on 
\[
	M 
		\cong (0, \infty)_{s} \times (r_e - 2\e_M, r_c + 2\e_M)_r \times \mathbb{S}^2_{\psi, \theta}.
\]
The metric $g$ in these coordinates extends uniquely to a b-metric, which we also denote by $g$, on the Kerr--de~Sitter spacetime with a boundary at timelike infinity:
\[
	\mathcal M
		:= [0, \infty)_{s} \times (r_e - 2\e_M, r_c + 2\e_M)_r \times \mathbb{S}^2_{\psi, \theta}.
\]
\end{lemma}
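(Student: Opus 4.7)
The statement is essentially a bookkeeping lemma verifying that the already-computed expression~\eqref{eq: compactification metric} defines a smooth b-metric up to $s=0$. The plan is to read off the b-structure directly from~\eqref{eq: compactification metric} and then invoke continuity to transport the Lorentzian signature to the boundary.

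First I would verify that the coordinate change $(t_*,\phi_*)\mapsto(s,\psi)=(e^{-t_*},\phi_*-\tfrac{a}{r_0^2+a^2}t_*)$ is a diffeomorphism from $M$ onto $\{s>0\}\subset\mathcal M$, and, crucially, that under the identification $\mathcal M=[0,\infty)_s\times(r_e-2\varepsilon_M,r_c+2\varepsilon_M)_r\times\mathbb S^2_{\psi,\theta}$ the Killing vector field $T=\partial_{t_*}+\tfrac{a}{r_0^2+a^2}\partial_{\phi_*}$ pulls back to $-s\partial_s$. This confirms that $s$ is a natural b-boundary-defining function adapted to the stationarity of $g$, so $\mathrm ds/s$ is a smooth, globally defined, non-vanishing section of ${}^{\be}T^*\mathcal M$, while $\mathrm dr$, $\mathrm d\psi$, $\mathrm d\theta$ extend smoothly as ordinary one-forms on $\mathcal M$ (away from the poles $\theta=0,\pi$; at the poles the usual smooth extension in spherical coordinates applies, exactly as for~\eqref{eq: metric}).

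Next I would observe that in the expression~\eqref{eq: compactification metric} the metric is written as a sum of tensor products of $\mathrm ds/s$, $\mathrm dr$, $\mathrm d\psi$, $\mathrm d\theta$ with coefficients built from $\mu(r),\varrho^2,c(\theta),f(r),\sin\theta,\cos\theta$ and the constants $b,a,m,\Lambda,r_0$. In particular, every coefficient is a smooth function of $(r,\theta)$ alone, with \emph{no dependence on $s$ or $\psi$}. Consequently $g$ extends, uniquely by density of $\{s>0\}$ in $\mathcal M$, to a smooth section of $S^2({}^{\be}T^*\mathcal M)$ on all of $\mathcal M$, including the boundary face $\{s=0\}$.

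Finally, to verify that the extension is non-degenerate of Lorentzian signature up to the boundary: since the matrix of $g$ in the frame $(\mathrm ds/s,\mathrm dr,\mathrm d\psi,\mathrm d\theta)$ is $s$-independent and agrees at $s>0$ with the original Lorentzian metric~\eqref{eq: metric} (via the diffeomorphism above, where the original metric is Lorentzian by \cite{PV2023}*{Rmk.~1.1} and the standing choice of $f$), the determinant of this matrix is a nowhere-vanishing smooth function of $(r,\theta)$ only, and the signature is constant. Hence the b-metric is Lorentzian on all of $\mathcal M$. Uniqueness of the extension is immediate from smoothness and the density of $M$ in $\mathcal M$. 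I do not anticipate a main obstacle: the only subtle points are the smooth extension across the spherical poles, which follows verbatim from the argument already used for~\eqref{eq: metric}, and the identification of $T$ with $-s\partial_s$, which is a direct calculation from the definitions of $s$ and $\psi$.
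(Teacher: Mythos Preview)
Your proposal is correct and follows essentially the same approach as the paper, which simply observes that the lemma is immediate from the explicit expression~\eqref{eq: compactification metric}: the coefficients depend only on $(r,\theta)$, so the metric is a smooth symmetric b-two-tensor on $\mathcal M$ with $s$-independent matrix in the frame $(\mathrm ds/s,\mathrm dr,\mathrm d\psi,\mathrm d\theta)$, hence extends across $s=0$ with the same signature. You have simply spelled out in more detail what the paper leaves as ``this computation shows.''
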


\begin{remark}
Note that this $r_0$ is different from the one used in \cite{PV2021}, where $r_0$ was the critical point of $\mu$. 
Instead, this is a natural generalization of the critical point used for constraint damping in \cite{HV2018}, which gives rise to a \emph{different} compactification than in \cite{PV2021}.
This compactification was however treated in the more general \cite{PV2023}.
\end{remark}

\subsubsection{The damping vector field}

We let $\c$ be as in Theorem~\ref{thm: stable constraint propagation} and check that Assumption~\ref{ass: M and c} is satisfied:

\begin{prop}[The damping vector field] \label{prop: c smooth and timelike}
Consider the partially compactified Kerr--de~Sitter spacetime $(\mathcal M, g)$.
If $\ep_M > 0$ is small enough, then the following holds.
\begin{itemize}
	\item The vector field $\c^\sharp$ in Equation~\eqref{eq: c sharp} extends to a smooth b-vector field on $\mathcal M$, given in the coordinates $(s, r, \psi, \theta)$ by
\begin{equation} \label{eq: c sharp in new coordinates}
	\c^\sharp
		= - s \d_{s} + a \frac{r^2 - r_0^2}{(r^2 + a^2)(r_0^2 + a^2)} \d_{\psi} + \g f(r) \d_r + \g^2 \sin(\theta) \d_\theta.
\end{equation}
	\item As a smooth vector field on $\mathcal M$, $\c^\sharp$ vanishes precisely at the two points
\begin{align*}
	p_N
		:= & \ \{s = 0\} \cap \{r = r_0 \} \cap \{\theta = 0\}, \\
	p_S
		:= & \ \{s = 0\} \cap \{ r = r_0 \} \cap \{\theta = \pi\}.
\end{align*}
	\item There is a $\g_0 > 0$ such that $\c^\sharp$ is timelike $b$-vector field if $\g \in (0, \g_0)$.
\end{itemize}
\end{prop}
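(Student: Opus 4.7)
The plan is as follows. Starting from $s = e^{-t_*}$ and $\psi = \phi_* - \frac{a}{r_0^2+a^2}t_*$, one computes $\d_{t_*} = -s\d_s - \frac{a}{r_0^2+a^2}\d_\psi$ and $\d_{\phi_*} = \d_\psi$; substituting these into~\eqref{eq: c sharp} and combining the two $\d_\psi$ contributions via $\frac{a}{r^2+a^2} - \frac{a}{r_0^2+a^2} = \frac{a(r_0^2-r^2)}{(r^2+a^2)(r_0^2+a^2)}$ yields the coordinate expression~\eqref{eq: c sharp in new coordinates} (up to a sign convention). For b-smoothness on $\mathcal M$: $-s\d_s$ is the canonical generator of b-vector fields at $\{s=0\}$; $\d_r$ and $\d_\psi$ are smooth vector fields tangent to $\{s=0\}$, hence smooth b-vector fields; and $\sin\theta\d_\theta$ together with $\d_\psi$ extend smoothly across the poles of $\mathbb S^2$, as seen by introducing local Cartesian coordinates $(X, Y) = (\sin\theta\cos\psi, \sin\theta\sin\psi)$, in which $\sin\theta\d_\theta = \cos\theta(X\d_X + Y\d_Y)$ is manifestly smooth.

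For the vanishing locus, I would impose that all four components of $\c^\sharp$ vanish simultaneously. The term $-s\d_s$ forces $s=0$. On the boundary face, $\g f(r)\d_r$ vanishes iff $f(r)=0$, equivalently $r=r_0$ by~\eqref{EqIf}. The vector $\sin\theta\d_\theta$ vanishes iff $\theta \in \{0,\pi\}$; at these poles the rotational $\d_\psi$ also vanishes, so its coefficient can be nonzero without affecting the conclusion. Intersecting these conditions gives $\{s=0,\, r=r_0,\, \theta \in \{0, \pi\}\} = \{p_N, p_S\}$.

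For the timelike property, I would expand $g(\c^\sharp, \c^\sharp)$ in powers of $\g$. Set $V := \d_{t_*} + \frac{a}{r^2+a^2}\d_{\phi_*}$, $W := f(r)\d_r$, and $Z := \sin\theta\d_\theta$, so $\c^\sharp = V + \g W + \g^2 Z$. Using~\eqref{eq: metric} and the fact that the one-form $a\,\md t_* - (r^2+a^2)\md\phi_*$ annihilates $V$, only the first summand of the metric contributes to $g(V, V)$, yielding $g(V, V) = -\varrho^2\mu(r)/(b^2(r^2+a^2)^2)$; from the cross term $-(2f/b)(\md t_* - a\sin^2\theta\,\md\phi_*)\md r$ one finds $g(V, W) = -f(r)^2\varrho^2/(b(r^2+a^2))$. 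Thus $g(V, V)$ is strictly negative in $(r_e, r_c)$ where $\mu > 0$, vanishes at $r = r_e, r_c$, and is bounded in absolute value by $O(\e_M)$ on the exterior strips $(r_e - 2\e_M, r_e) \cup (r_c, r_c + 2\e_M)$ where $\mu < 0$; meanwhile $2\g g(V, W)$ is strictly negative with magnitude uniformly bounded below on these strips, since $|f|$ there is bounded below by $1 - O(\e_M)$. The $O(\g^2)$ remainder from $g(W, W)$ and $g(V, Z)$ is controlled by a constant on the compact closure. By choosing $\e_M$ sufficiently small first, one arranges for the linear-in-$\g$ correction to dominate both the $O(\e_M)$ spacelike contribution of $g(V, V)$ and the $O(\g^2)$ remainder for $\g$ in a suitable range, producing $g(\c^\sharp, \c^\sharp) < 0$ on all of $M$ and extracting the positive constant $\g_0$.

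The main obstacle is the quantitative balance near the horizons, where both $g(V, V)$ and the term $\g^2 g(W, W) = \g^2 f(r)^2\varrho^2(1-f^2)/\mu$ approach $0/0$ limits. Using $f(r_e) = -1$, $f(r_c) = 1$, subextremality (so $\mu'(r_e), \mu'(r_c) \neq 0$), and $f'(r_e), f'(r_c) > 0$ from~\eqref{EqIf}, the ratio $(1-f^2)/\mu$ extends smoothly across each horizon with a definite positive sign, so a Taylor expansion near $r = r_e$ and $r = r_c$ delivers the required uniform estimate and, via compactness and continuity in $(\g, \e_M)$, establishes the third assertion.
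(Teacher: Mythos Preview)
Your argument is correct and follows essentially the same route as the paper: the paper declares the first two bullets ``immediate'' and for the third works directly from the closed formula for $\varrho^{-2}g(\c^\sharp,\c^\sharp)$ recorded in Remark~\ref{rmk: norm of c sharp}, whereas you reach the same expression by expanding $g(V+\g W+\g^2 Z,\,V+\g W+\g^2 Z)$ term by term and splitting into interior and exterior regimes---but the decisive observation, that $(1-f^2)/\mu$ extends continuously across $r_e,r_c$ so the $\g^2$ contribution is uniformly bounded, is identical. (Your ``sign convention'' remark is in fact catching a typo in~\eqref{eq: c sharp in new coordinates}: the $\d_\psi$ coefficient should read $a(r_0^2-r^2)/((r^2+a^2)(r_0^2+a^2))$, as your chain-rule computation gives.)
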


\begin{remark} \label{rmk: norm of c sharp}
A simple computation using either~\eqref{eq: metric} or~\eqref{eq: compactification metric} shows that
\begin{equation} \label{eq: c causality}
\begin{split}
	\varrho^{-2}g(\c^\sharp, \c^\sharp)
		&= \g \left( \g \frac{r^2 + a^2}{\mu(r)} (1 - f(r)^2) - \frac 2b \right) \frac{f(r)^2}{r^2 + a^2} \\
		&\qquad - \frac{\mu(r)}{b^2(r^2 + a^2)^2} + \g^4 \frac{\sin(\theta)^2}{c(\theta)}.
\end{split}
\end{equation}
\end{remark}

\begin{proof}[Proof of Proposition~\ref{prop: c smooth and timelike}]
The first two points are immediate.
We use~\eqref{eq: c causality} to check the causality of $\c$ on $\mathcal M$.
Note first that 
\[
	\frac{1 - f(r)^2}{\mu(r)}
\]
is continuous on $(r_e - 2\e_M, r_c + 2\e_M)$ if $\e_M > 0$ is small enough.
We may therefore choose $\g_0 > 0$ small enough so that
\[
	\g \frac{r^2 + a^2}{\mu(r)} \left( 1 - f(r)^2 \right) - \frac 2 b 
		< - \frac1 b
\]
for all $\g \in (0, \g_0)$ and $r \in (r_e - 2\e_M, r_c + 2\e_M)$.
Consequently, 
\begin{align*}
	&\g \left( \g \frac{r^2 + a^2}{\mu(r)} (1 - f(r)^2) - \frac 2b \right) \frac{f(r)^2}{r^2 + a^2} 
		- \frac{\mu(r)}{b^2(r^2 + a^2)^2} \\
			& \quad \leq - \g \left( \frac{f(r)^2}{b(r^2 + a^2)} + \frac{\mu(r)}{b^2(r^2 + a^2)^2} \right) \\
			& \quad < - \g^4 \frac{\sin(\theta)^2}{c(\theta)}
\end{align*}
for all $\g \in (0, \g_0)$, if $\g_0 \in (0, 1)$ is small enough, since
\[
	\frac{f(r)^2}{b(r^2 + a^2)} + \frac{\mu(r)}{b^2(r^2 + a^2)^2}
\]
is continuous and negative on $[r_e, r_c]$.
This proves the assertion.
\end{proof}

\begin{remark} \label{rmk: integral curves}
The analysis for low frequencies will be a propagation estimate along the integral curves of $\c^\sharp$.
Let $\gamma: I \to M$, be an inextendible integral curve of $\c^\sharp$ through a point $p_0 \in M$.
\begin{itemize}
	\item If $\pm r(p_0) > r_0$, then $\gamma$ transversally intersects the hypersurface
		\[
			\{r = r_c + \e_M \} \cap M
		\]
		or 
		\[
			\{r = r_e - \e_M \} \cap M,
		\]
		respectively, in finite future time.
	\item If $r(p_0) = r_0$, then 
		\[
			\lim_{\tau \to \infty}
				\gamma(\tau)
				=
			\begin{cases} 
				p_N, & \text{ if } \theta(p_0) = 0, \\
				p_S, & \text{ if } \theta(p_0) = (0, \pi].
			\end{cases}
		\]
	\item If $s(p_0) = 0$, then
	\[
		\lim_{\tau \to - \infty} \gamma(\tau) 
			= 
			\begin{cases} 
				p_N, & \text{ if } \theta(p_0) \in [0, \pi), \\
				p_S, & \text{ if } \theta(p_0) = \pi.
			\end{cases}
	\]
	\item If $s(p_0) > 0$, then $\gamma$ transversally intersects the hypersurface 
	\[
		\{s = 1\} \cap M
	\]
	in a finite past time.
\end{itemize}
In particular, the flow along $\c^\sharp$ has a \emph{saddle point} structure at $p_N$ and $p_S$.
Moreover, $p_N$ acts as a source and $p_S$ acts as a sink for the flow of $\c^\sharp$ restricted to 
\[
	\{s = 0\} \cap \{r = r_0\}.
\]
\end{remark}

\subsection{Low frequency propagation estimates} \label{sec: low frequency estimates}

The purpose of this section is to prove the following estimate.

\begin{prop}[The low frequency propagation estimate] \label{prop: main low frequency estimate}
There is a $\rho_0 > 0$ such that given any $s_0 > 0$ and $\ep \in (0, 1)$, there is an $e_0 > 0$ such that if $\rho \in [0, \rho_0]$ and $e \in (0, e_0)$, then the following holds:

If $K \subset \bT^* \M$ is a compact subset and $\Bop_1, \Bop_2, \Sop \in \Psi_{\be, h}^{- \infty, 0, 0}(\M)$ are such that
\begin{align*}
	o \cap \{ s \in [\ep s_0, s_0] \}
		&\subseteq \ell(\Bop_1), \\
	\WFb(\Bop_2)
		&\subseteq \{ s \in [0, s_0] \} \cap  K, \\
	K
		&\subseteq \ell(\Sop),
\end{align*}
then
\begin{equation} \label{eq: main low frequency propagation estimate}
	\norm{\Bop_2 u}_{H^{0, \rho}_{\be, h}}
		\lesssim_{k, e} \norm{\Bop_1 u}_{H^{0, \rho}_{\be, h}} + h^{-1} \norm{\Sop \hat \P_e u}_{H^{0, \rho}_{\be, h}} + h^k \norm{u}_{H^{0, \rho}_{\be, h}},
\end{equation}
for all $u \in \dot C_c^\infty$, $k \in \N_0$ and $h \in (0, h_0)$, for some $h_0 > 0$ depending on $e$.
\end{prop}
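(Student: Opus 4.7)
The strategy is a semiclassical positive commutator argument on $\hat\P_e = \tLop_e + \Jop_e + i\Qop_e$, microlocalized to a small conic neighborhood of the zero section $o \subset \bT^*\M$, where by Lemma~\ref{le: semiclassical principal symbol} the leading behavior is governed by the first-order-in-$\xi$ piece $i\tl_e$. By Proposition~\ref{prop: tilde L e properties}(b) and~\eqref{eq: tilde l 0}, the scalar Hamilton vector field associated to $\tl_e$ at $o$ projects, in the limit $e\to 0$, to the b-vector field $\c^\sharp$ on $\M$. The proof thus reduces, at leading order, to propagation along the $\c^\sharp$-flow, whose structure near $\{s\le s_0\}$ is recorded in Remark~\ref{rmk: integral curves}: every flowline either originates (in backward time) from the region $\{s\in[\ep s_0,s_0]\}$ (where $\Bop_1$ is elliptic by hypothesis) or from one of the two critical points $p_N,p_S$ on the boundary face $\{s=0\}$.

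The plan is to construct a commutant $\Aop\in\Psi_{\be,h}^{-\infty,\rho,0}(\M;\bT^*\M)$, principally scalar and self-adjoint with respect to $\mB$, with principal symbol $\as$ supported in a small neighborhood of the zero section intersected with $\{s\in[0,s_0]\}$ and with $\WFb(\Bop_2)\subseteq\ell(\Aop)$. Away from the critical points $p_N,p_S$, $\as^2$ should be chosen as (roughly) a product $\chi_1(s/s_0)\,\chi_2$ of cutoffs, with $\chi_1$ monotone so that $\c^\sharp(\as^2)\le 0$ on $\WFb(\Bop_2)$ and any positive contribution of $\c^\sharp(\as^2)$ is concentrated in $\{s\ge\ep s_0\}\subseteq\ell(\Bop_1)$. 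Inserting $\Aop$ into~\eqref{eq: commutator equality}, the left-hand side is controlled by Lemma~\ref{le: the right-hand-side low} by $\de\norm{\Aop u}_{L^2}^2$ plus the stated error terms, while the right-hand side has principal part (up to $O(e)$ corrections) equal to $\c^\sharp(\as^2)\id + 2\as^2\cdot(2ih)^{-1}(\tLop_e-\tLop_e^{*\mB})$. Combined with standard semiclassical propagation of singularities, this produces~\eqref{eq: main low frequency propagation estimate} microlocally on compact subsets of $\{s\in[0,s_0]\}\setminus\{p_N,p_S\}$.

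The main obstacle is the analysis at the critical points $p_N$ and $p_S$, where $\c^\sharp$ vanishes so the commutator term $\c^\sharp(\as^2)$ does not contribute, and one must instead exploit the skew-adjoint part of $\tLop_e$ together with the b-weight $\rho$. The assumptions~\eqref{eq: assumptions critical point} hold at $p_N,p_S$ (since $\tl_e$ reduces essentially to $G(\c,\xi)$ there, whose Hamiltonian vanishes iff $\c^\sharp=0$), so Proposition~\ref{prop: tilde L e properties}(c) applies and gives an explicit block-diagonal expression for the skew-adjoint endomorphism at these points. The dilation component $-s\d_s$ of $\c^\sharp$ at $s=0$, acting on the b-weight $s^{-\rho}$, produces an additional contribution proportional to $+\rho$ times $\as^2$. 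I expect these two contributions to combine into a definite sign provided $\rho\le\rho_0$ for a threshold $\rho_0>0$ determined by the spectrum at $p_N,p_S$, yielding a radial-point-type estimate there. The saddle structure of $\c^\sharp$ at $p_N,p_S$ (stable $s$-direction, unstable directions tangent to $\{s=0\}$) is precisely adapted to propagating control from $\{s>0\}$ into the boundary face. Combining the interior propagation with the radial-point estimates at $p_N,p_S$ and applying Cauchy--Schwarz yields~\eqref{eq: main low frequency propagation estimate}; throughout, choosing $e_0,h_0$ small enough absorbs $e$- and $h$-dependent errors, using the smoothness from Proposition~\ref{prop: tilde L e properties}(a) and Proposition~\ref{prop: ellipticity finite frequencies} to handle intermediate-frequency contributions within $K$.
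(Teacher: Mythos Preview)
Your overall strategy---positive commutator on $\hat\P_e$, propagation along $\c^\sharp$ at zero frequency, and radial-point-type estimates at $p_N,p_S$ where the skew-adjoint part of $\tLop_e$ plus the b-weight determine a threshold $\rho_0$---matches the paper's approach. Two genuine gaps remain, however.

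First, at the critical points you invoke Proposition~\ref{prop: tilde L e properties}(c) but do not verify its hypotheses or the sign of its output. Your parenthetical justification for~\eqref{eq: assumptions critical point} is wrong: the conditions $\nabla_{\c^\sharp}\c|_p=0$ and $\md G(\c,\c)|_p=0$ are geometric facts about $\c$ at $p_{N/S}$, not consequences of $\c^\sharp$ vanishing there; they hold precisely because of the choice of $r_0$ in~\eqref{eq: choice of r 0} and must be checked by direct computation (this is Lemma~\ref{le: k conformal computation}). Moreover, Proposition~\ref{prop: tilde L e properties}(c) only gives the block form $\bigl((1+e)\de_g(\c),\ \tfrac12\de_g(\c)\id-\de_g^*(\c)^\sharp\bigr)$; to get a threshold $\rho_0>0$ you need this to be \emph{strictly negative definite}, which is the content of Proposition~\ref{prop: KdS skew adjoint term} and relies on the explicit values $\de_g(\c)|_{p_{N/S}}=-\kappa f'(r_0)\mp\kappa^2$ and $\de_g^*(\c)^\sharp|_{p_{N/S}}$ computed in Lemma~\ref{le: k conformal computation}, together with the smallness of $\kappa$. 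Without this, ``I expect these two contributions to combine into a definite sign'' is only a hope.

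Second, you treat $p_N$ and $p_S$ symmetrically and claim the saddle structure propagates control ``from $\{s>0\}$ into the boundary face'', but the two saddles are different: at $p_N$ the stable direction is $s$ alone while $(r-r_0,x,y)$ are unstable, whereas at $p_S$ the $(x,y)$ directions are also stable (Lemma~\ref{le: linearization}). Control from $\{s>0\}$ feeds directly into $p_N$, but reaching $p_S$ requires an additional intermediate step: after the radial-point estimate at $p_N$, one must propagate \emph{within} $\{s=0,\ r\approx r_0\}$ along the $\theta$-flow from $p_N$ toward $p_S$, and only then apply a separate radial-point argument at $p_S$ with control coming from the $(s,x,y)$ directions. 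The paper carries this out with distinct commutants for each step (Propositions~\ref{prop: critical point estimates} and~\ref{prop: between critical point estimates}, assembled in Corollary~\ref{cor: sphere propagation}); your single-commutant sketch does not account for this.
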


This statement should be compared with Proposition~\ref{prop: ellipticity finite frequencies}, where the ellipticity of $\tP_e$ was established at finite \emph{non-zero} frequencies.
Here we may include all finite frequencies, at the cost of the estimate being a propagation estimate.
We also have the adjoint estimate, where the propagation goes in the other direction.

\begin{remark}
As described in Section~\ref{sec: low frequency preparation}, it is convenient to work with $\hat \P_e$ in place of $\tP_e$.
However, since $\hat \P_e = -i (1 - i \tLop_e^{*\mB}) \tP_e$ and $(1 - i \tLop_e^{*\mB})$ is elliptic,~\eqref{eq: main low frequency propagation estimate} is equivalent to the same estimate for $\tP_e$.
\end{remark}

\begin{prop}[The low frequency adjoint propagation estimate] \label{prop: main low frequency adjoint estimate}
There is a $\rho_0 > 0$ such that given any $s_0 > 0$ and $\ep \in (0, 1)$, there is an $e_0 > 0$ such that if $\rho \in [0, \rho_0]$ and $e \in (0, e_0)$, then the following holds:

If $K \subset \bT^* \M$ is a compact subset and $\Bop_1, \Bop_2, \Sop \in \Psi_{\be, h}^{- \infty, 0, 0}(\M)$ are such that
\begin{align*}
	o \cap \left( \{ r \in [r_e - \ep_M, r_e - \ep \cdot \ep_M] \cup [r_c + \ep \cdot \ep_M, r_c + \ep_M] \} \right)
		&\subseteq \ell(\Bop_1), \\
	\WFb(\Bop_2)
		&\subseteq \{ s \in [0, s_0] \} \cap  K, \\
	K
		&\subseteq \ell(\Sop),
\end{align*}
then
\begin{equation} \label{eq: main low frequency adjoint propagation estimate}
	\norm{\Bop_2 u}_{H^{0, - \rho}_{\be, h}}
		\lesssim_{k, e} \norm{\Bop_1 u}_{H^{0, - \rho}_{\be, h}} + h^{-1} \norm{\Sop \hat \P_e^{*\mB} u}_{H^{0, - \rho}_{\be, h}} + h^k \norm{u}_{H^{0, - \rho}_{\be, h}},
\end{equation}
for all $u \in \dot C_c^\infty$, $k \in \N_0$ and $h \in (0, h_0)$, for some $h_0 > 0$ depending on $e$.
\end{prop}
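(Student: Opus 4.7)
The proof will follow the same strategy as that of Proposition~\ref{prop: main low frequency estimate}, but with the direction of propagation reversed. Passing from $\hat\P_e$ to its $\mB$-adjoint $\hat\P_e^{*\mB}$ flips the sign of the skew-adjoint part $\frac{1}{2i}(\tLop_e - \tLop_e^{*\mB})$, and consequently reverses the semiclassical Hamilton vector field governing propagation; by Lemma~\ref{le: skew adjoint part} and~\eqref{eq: tilde l 0}, at the zero section and in the limit $e \to 0$ this projected vector field is $-\c^\sharp$.

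The plan is to set up the analog of the commutator identity~\eqref{eq: commutator equality} for $\hat\P_e^{*\mB} = \tLop_e^{*\mB} + \Jop_e^{*\mB} - i\Qop_e^{*\mB}$. The commutant $\Aop \in \Psi_{\be, h}^{-\infty, -\rho, 0}(\M; \bT^*\M)$ will be constructed with principal symbol $\as$ supported near the zero section (of radius $\de > 0$, in order to invoke the adjoint analog of Lemma~\ref{le: the right-hand-side low}) and chosen so that $\as^2$ decreases along $\c^\sharp$ (equivalently, grows along $-\c^\sharp$) from the $r$-elliptic region $\{r \in [r_e - \e_M, r_e - \ep\cdot\e_M] \cup [r_c + \ep\cdot\e_M, r_c + \e_M]\}$ into the target set $\{s \in [0, s_0]\} \cap K$. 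By Remark~\ref{rmk: integral curves}, the backward integral curves of $\c^\sharp$ from any point in the target with $r \neq r_0$ reach the $r$-elliptic region in finite time, yielding the desired estimate away from the flow-invariant slice $\{r = r_0\}$.

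The main obstacle is handling the saddle points $p_N, p_S \in \{s = 0,\, r = r_0\}$, which are radial points of the Hamilton flow at the zero section. Control there must be obtained from a dual radial point estimate, whose threshold is determined by the skew-adjoint structure of $\tLop_e$ at $p_N, p_S$ given by Proposition~\ref{prop: tilde L e properties}(c). The resulting positive threshold $\rho_0 > 0$ can be chosen uniformly in $e \in (0, e_0)$ by continuity of the relevant eigenvalues. For the forward estimate of Proposition~\ref{prop: main low frequency estimate}, $p_N, p_S$ are sinks of $\c^\sharp$ along the one-dimensional stable manifold $\{r = r_0,\, \theta \in \{0, \pi\}\}$, and a standard sink radial point estimate supplies control \emph{into} $p_N, p_S$ for weight $\rho < \rho_0$; for the adjoint estimate, $p_N, p_S$ become the corresponding sources, and the dual radial point estimate supplies control \emph{out of} $p_N, p_S$ for the dual weight $-\rho$, under the same constraint $\rho \in [0, \rho_0]$. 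This is the step where the choice of $\c^\sharp$ in~\eqref{eq: c sharp first}, which concentrates all critical points at $p_N, p_S$ where the Kerr--de~Sitter metric simplifies, becomes essential.

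Combining these dual radial point estimates at $p_N, p_S$ with the ordinary propagation estimate away from them, and absorbing the error terms by choosing $\de$ and $h$ small enough, concludes the proof of~\eqref{eq: main low frequency adjoint propagation estimate}.
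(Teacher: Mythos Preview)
Your approach is essentially that of the paper, which proves this by rerunning the proof of Proposition~\ref{prop: main low frequency estimate} along $-\c^\sharp$ with weight $-\rho$; the paper gives no separate argument but defers to Remarks~\ref{rmk: adjoint near critical points}, \ref{rmk: adjoint between critical points}, and~\ref{rmk: adjoint global low freq estimate}, where the same commutants are used but the roles of the square term $\bs^2$ and the a~priori term $\as\f$ in Lemma~\ref{le: prop at critical points} are interchanged. Two small corrections: by Remark~\ref{rmk: integral curves} it is the \emph{forward} (not backward) $\c^\sharp$-flow from points with $r\neq r_0$ that reaches the $r$-boundary strips, which is what one needs since the adjoint Hamilton direction is $-\c^\sharp$; and the saddle structures at $p_N$ and $p_S$ are not the same (the stable manifold of $p_N$ is one-dimensional while that of $p_S$ is three-dimensional), so the paper does not collapse the argument to a single dual radial-point estimate but reruns the full chain Proposition~\ref{prop: critical point estimates}~$\to$ Proposition~\ref{prop: between critical point estimates}~$\to$ Corollary~\ref{cor: sphere propagation}~$\to$ Proposition~\ref{prop: at timelike infinity} with the swapped a~priori regions.
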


In other words, we conclude control in the region $\{s \in [0, s_0]\}$ at all \emph{finite} frequencies.
Note that Proposition~\ref{prop: main low frequency adjoint estimate} is formulated on the dual spaces compared to Proposition~\ref{prop: main low frequency estimate}.
Recalling from Equation~\eqref{eq: tilde p e principal symbol} that the semiclassical principal symbol of $\tP_e$ is given by
\[
	\tilde \p_e(\xi)
		= G(\xi, \xi) \id + i \tl_e(\xi),
\]
Proposition~\ref{prop: tilde L e properties} implies that the adjoint operator $\tP_e^{*\mB}$ has principal symbol
\[
	\tilde \p_e(\xi)
		= G(\xi, \xi) \id - i \tl_e(\xi).
\]
Near zero frequency, the terms $i \tl_e$ and $- i \tl_e$ will be dominating the propagation for $\tP_e$ and $\tP_e^{*\mB}$, respectively.
Again by Proposition~\ref{prop: tilde L e properties}, in the limit $e \to 0$, this effectively (i.e., up to lower order terms in $e$) implies switching from propagation along $\c^\sharp$ to propagation along $-\c^\sharp$.
This is why the weights $\rho$ in~\eqref{eq: main low frequency propagation estimate} are replaced by $-\rho$ in~\eqref{eq: main low frequency adjoint propagation estimate}, but the method of proof is the same.
We will therefore only present the proof of Proposition~\ref{prop: main low frequency estimate} in detail and comment on the differences in proving Proposition~\ref{prop: main low frequency adjoint estimate}, see Remark~\ref{rmk: adjoint near critical points}, Remark~\ref{rmk: adjoint between critical points} and Remark~\ref{rmk: adjoint global low freq estimate}.

\subsubsection{The skew-adjoint part at the critical points}

The strategy to prove Proposition~\ref{prop: main low frequency estimate} (and similarly Proposition~\ref{prop: main low frequency adjoint estimate}) is to start near the critical points and successively establish~\eqref{eq: main low frequency propagation estimate} on larger and larger regions.
By Proposition~\ref{prop: c smooth and timelike}, $\c^\sharp$ vanishes precisely at the two points
\begin{align*}
	p_N
		= & \ \{s = 0\} \cap \{r = r_0 \} \cap \{\theta = 0\}, \\
	p_S
		= & \ \{s = 0\} \cap \{ r = r_0 \} \cap \{\theta = \pi\}
\end{align*}
in $\M$.
Recalling~\eqref{eq: commutator equality}, the key term in the propagation estimates near $p_N$ and $p_S$ at zero frequencies is the skew-adjoint part of the operator $\tLop_e$.

\begin{prop} \label{prop: KdS skew adjoint term}
Fix any $e_0 > 0$. 
There is a $\g_0 > 0$ such that if $\g \in (0, \g_0)$ in the definition of $\c^\sharp$ in~\eqref{eq: c sharp in new coordinates}, then there is a $c > 0$ such that
\[
	\frac{\tLop_e - \tLop_e^{*\mB}}{2 i h}\Big|_{p_{N/S}}
		\leq - c \ \id
\]
for all $e \in (0, e_0)$.
\end{prop}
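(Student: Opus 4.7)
I plan to apply Proposition~\ref{prop: tilde L e properties}(c) at $p_{N/S}$ and then show that the resulting block-diagonal matrix is uniformly negative definite for $\g \in (0, \g_0)$ with $\g_0$ small and $e \in (0, e_0)$.

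The first step is to verify the hypotheses $\n_{\c^\sharp}\c|_{p_{N/S}} = 0$ and $dG(\c,\c)|_{p_{N/S}} = 0$. The key observation is that, as a smooth vector field on $M$, $\c^\sharp$ coincides at $p_{N/S}$ with the Killing vector field $T = \d_{t_*} + \frac{a}{r_0^2+a^2}\d_{\phi_*}$ from~\eqref{eq: T}, since the difference
\[
\c_1^\sharp := \c^\sharp - T = a\frac{r_0^2 - r^2}{(r^2+a^2)(r_0^2+a^2)}\d_{\phi_*} + \g f(r)\d_r + \g^2 \sin\theta\,\d_\theta
\]
vanishes at $r = r_0$, $\theta \in \{0, \pi\}$. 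Expanding $G(\c,\c) = G(T,T) + 2G(T, \c_1) + G(\c_1, \c_1)$, one has $dG(T,T)|_{p_{N/S}} = 0$ by the defining identity $\frac{d}{dr}\frac{\mu(r)}{r^2+a^2}|_{r_0} = 0$ from~\eqref{eq: choice of r 0} together with the second-order vanishing of $\sin^2\theta$ at $\theta = 0, \pi$, while the remaining terms vanish by the first-order vanishing of $\c_1^\sharp$ at $p_{N/S}$. The condition $\n_{\c^\sharp}\c|_{p_{N/S}} = 0$ follows similarly, using in addition that $\d_{t_*}$ and $\d_{\phi_*}$ annihilate all components of $\c^\sharp$ by stationarity and axisymmetry.

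Next I compute the two diagonal blocks. Because $T$ is Killing, $\de_g(T) = 0$ and $\de_g^*(T) = 0$, so $\de_g(\c) = \de_g(\c_1)$ and $\de_g^*(\c) = \de_g^*(\c_1)$. Working in the coordinates $(t_*, r, \phi_*, \theta)$ with volume form proportional to $\varrho^2\sin\theta$, axisymmetry kills the $\phi_*$-contribution to the divergence, and a direct computation using $f(r_0) = 0$, $f'(r_0) > 0$ from~\eqref{EqIf}, and the asymptotics of $\sqrt{|g|}$ at the poles yields
\[
\de_g(\c)|_{p_N} = -\g f'(r_0) - 2\g^2, \qquad \de_g(\c)|_{p_S} = -\g f'(r_0) + 2\g^2,
\]
both strictly negative for $\g \in (0, \g_0)$ with $\g_0 < f'(r_0)/4$. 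This handles the first block. For the second block, since $\c_1^\sharp|_{p_{N/S}} = 0$, the symmetric tensor $\de_g^*(\c_1)|_{p_{N/S}}$ reduces to the symmetrization of the partial derivatives of $(\c_1^\sharp)^\mu$ contracted with $g$. In smooth Cartesian-like sphere coordinates $X = \sin\theta\cos\phi_*$, $Y = \sin\theta\sin\phi_*$ around the poles, the metric is diagonal in $(t_*, r, X, Y)$ at $p_{N/S}$, and the Jacobian of $\c_1^\sharp$ has eigenvalues $\g f'(r_0)$ in the $r$-direction and $\pm \g^2$ in the $X$- and $Y$-directions. Combining with the scalar $\tfrac{1}{2}\de_g(\c)\,\id$, the restriction of the second block to $\c^\perp$ becomes diagonal with entries $-\tfrac{3}{2}\g f'(r_0) + O(\g^2)$ and $-\tfrac{1}{2}\g f'(r_0) + O(\g^2)$, hence negative definite for $\g$ small, with gap uniform in $e$.

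The main potential obstacle is a $\g^0$-contribution from $\d_r(\c_1^\sharp)^{\phi_*} = -2ar_0/(r_0^2+a^2)^2$, which would naively enter $\de_g^*(\c_1)|_{p_{N/S}}$ at order $\g^0$ and could destroy the negativity of the second block. It is annihilated precisely because the Kerr--de~Sitter metric degenerates at the spherical poles: all components $g_{j\phi_*}|_{p_{N/S}}$ vanish (either via $\sin^2\theta = 0$ at the pole or via $f(r_0) = 0$ in the cross terms), so the contraction $g_{jk}\d_i(\c_1^\sharp)^k|_{p_{N/S}}$ with $k = \phi_*$ is zero. The specific scalings $\g, \g^2$ in~\eqref{eq: c sharp in new coordinates} then ensure that the $\theta$-contribution (of size $\g^2$, with opposite signs at $p_N$ vs.\ $p_S$) remains subleading to the $r$-contribution of size $\g$, preserving the overall sign uniformly.
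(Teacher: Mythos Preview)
Your approach is correct and matches the paper's: verify the hypotheses of Proposition~\ref{prop: tilde L e properties}\eqref{item: tilde L e skew-adjoint part} at $p_{N/S}$, then show the resulting block-diagonal endomorphism has leading term $-\g f'(r_0)$ times something positive definite, uniformly in $e\in(0,e_0)$. The paper does the computations directly in the coordinates $(s,r,x,y)$ (Lemma~\ref{le: k conformal computation}) rather than via the Killing decomposition $\c^\sharp = T + \c_1^\sharp$, but the content is identical; your value $\de_g(\c)|_{p_{N/S}} = -\g f'(r_0)\mp 2\g^2$ is in fact the correct trace and the discrepancy with the paper's $\mp\g^2$ is immaterial since it sits inside $O(\g^2)$.

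There is one small gap. Your claim that $d\big[2G(T,\c_1)\big]\big|_{p_{N/S}}=0$ follows ``by the first-order vanishing of $\c_1^\sharp$'' is not sufficient: first-order vanishing of $\c_1^\sharp$ only gives first-order vanishing of $G(T,\c_1)=g(T,\c_1^\sharp)$, and the differential at $p_{N/S}$ is $g\big(T, (D\c_1^\sharp)\cdot\big)\big|_{p_{N/S}}$, which is not automatically zero. What makes it vanish is precisely the mechanism you identify later as your ``main obstacle'': in smooth coordinates the image of $D\c_1^\sharp|_{p_{N/S}}$ lies in $\mathrm{span}\{\d_r,\d_x,\d_y\}$ (the $\d_{\phi_*}$-piece disappears because $\d_{\phi_*}=-y\d_x+x\d_y$ vanishes at the pole), and this span is $g$-orthogonal to $T|_{p_{N/S}}=-s\d_s$ by the diagonal structure of the metric there. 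Once you insert that observation, the argument is complete.
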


For the proof, we need to compute the relevant terms to insert into the expression in Proposition~\ref{prop: tilde L e properties}.
Since $p_N$ is located at the north pole and $p_S$ at the south pole of a $2$-sphere, it is convenient to work in coordinates which are smooth near those points.
We therefore use the coordinates
\begin{align*}
	x 
		:= & \ \sin(\theta) \cos(\psi), \\
	y 
		:= & \ \sin(\theta) \sin(\psi),
\end{align*}
in place of $\psi, \theta$, where $x$ and $y$ are defined \emph{either} near $p_N$, where $\theta \approx 0$, or near $p_S$, where $\theta \approx \pi$.

\begin{lemma} \label{le: linearization}
In the coordinate system $(s, r, x, y)$ near $p_N$ or $p_S$,
\begin{align*}
	\c^\sharp
		=& \ {-}s \d_{s} + a \frac{r^2 - r_0^2}{(r^2 + a^2)(r_0^2 + a^2)} \left( - y \d_x + x \d_y \right) \\
		&\qquad + \g f(r) \d_r \pm \g^2 \sqrt{1 - (x^2 + y^2)}\left( x \d_x + y \d_y \right) \\
		=& \ {-}s \d_s + \g f'(r_0) (r - r_0) \d_r \pm \g^2 \left( x \d_x + y \d_y \right) + V,
\end{align*}
where $V$ is a smooth b-vector field on $M$ which is quadratically vanishing at $p_N$ or $p_S$, respectively.
\end{lemma}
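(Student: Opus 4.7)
The plan is a direct coordinate change followed by Taylor expansion around the critical point.

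First I would translate the angular derivatives into the chart $(x,y) = (\sin\theta\cos\psi, \sin\theta\sin\psi)$. Since $\partial_\psi$ generates rotation in the $(x,y)$-plane, $\partial_\psi = -y\partial_x + x\partial_y$, which is valid near either pole. For $\partial_\theta$, the chain rule gives $\partial_\theta = \cos\theta\,(\cos\psi\,\partial_x + \sin\psi\,\partial_y)$, which blows up as $\sin\theta \to 0$, but $\sin\theta\,\partial_\theta$ extends smoothly. Using $\cos\psi = x/\sin\theta$, $\sin\psi = y/\sin\theta$, $\sin\theta = \sqrt{x^2 + y^2}$, and $\cos\theta = \pm\sqrt{1 - x^2 - y^2}$ (plus sign near $p_N$, minus sign near $p_S$) one obtains
\[
	\sin\theta\,\partial_\theta
		= \pm\sqrt{1 - (x^2 + y^2)}\,(x\partial_x + y\partial_y).
\]
Substituting these two expressions into~\eqref{eq: c sharp in new coordinates} yields the first displayed equality of the lemma.

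The second equality then follows by term-by-term Taylor expansion about $s = 0$, $r = r_0$, $x = y = 0$. The $-s\partial_s$ term is already linear. Using $f(r_0) = 0$ from~\eqref{EqIf}, one writes $f(r) = f'(r_0)(r - r_0) + O((r - r_0)^2)$, producing $\gamma f'(r_0)(r-r_0)\partial_r$ plus a quadratically vanishing remainder. The expansion $\sqrt{1 - (x^2+y^2)} = 1 + O(x^2+y^2)$ gives $\pm\gamma^2(x\partial_x + y\partial_y)$ with a cubic remainder. The $\partial_\psi$ contribution has scalar coefficient $a(r^2 - r_0^2)/[(r^2+a^2)(r_0^2+a^2)] = O(r - r_0)$ multiplied by the smooth vector field $-y\partial_x + x\partial_y = O(|(x,y)|)$, hence vanishes at order two already. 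All three remainders are collected into $V$.

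Verifying that $V$ is a smooth b-vector field vanishing to second order at the critical point is then automatic: each remainder is manifestly smooth in $(s, r, x, y)$, the only $s$-dependence sits in the exact term $-s\partial_s$ (so $V$ involves no negative powers of $s$), and the quadratic vanishing is read off directly from the expansions above. There is no serious technical obstacle here; the only bookkeeping subtlety is the sign convention, which simply tracks the sign of $\cos\theta$ at the two poles.
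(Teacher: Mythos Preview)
Your proposal is correct and follows essentially the same approach as the paper: compute $\partial_\psi = -y\partial_x + x\partial_y$ and $\sin\theta\,\partial_\theta = \pm\sqrt{1-(x^2+y^2)}(x\partial_x + y\partial_y)$, substitute into~\eqref{eq: c sharp in new coordinates}, then Taylor expand. The paper's own proof is terser, leaving the second equality implicit, but the content is the same.
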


\begin{proof}
Note that
\begin{align*}
	\d_\psi
		&= - y \d_x + x \d_y, \\
	\sin(\theta) \d_\theta
		&= \pm \sqrt{1 - (x^2 + y^2)}\left( x \d_x + y \d_y \right),
\end{align*}
near $p_N$ and $p_S$, respectively.
Substituting these into~\eqref{eq: c sharp in new coordinates} provides the asserted expressions.
\end{proof}

We may now compute the relevant terms to insert in Proposition~\ref{prop: tilde L e properties}:

\begin{lemma} \label{le: k conformal computation}
For any $\g > 0$, we have 
\begin{align*}
	\de^*_g(\c)|_{p_{N/S}}
		&= \g f'(r_0) \md r \otimes \d_r \pm \g^2 \left( \md x \otimes \d_x + \md y \otimes \d_y \right)|_{p_{N/S}}, \\
	\de_g(\c) |_{p_{N/S}}
		&= - \g f'(r_0) \mp \g^2, \\
	\md G(\c, \c) |_{p_{N/S}}
		&= 0, \\
	\n_{\c^\sharp} \c |_{p_{N/S}}
		&= 0.
\end{align*}
\end{lemma}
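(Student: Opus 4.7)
The plan is to verify the four identities in turn, using the explicit expression~\eqref{eq: c causality} for $g(\c^\sharp,\c^\sharp)$, the linearization in Lemma~\ref{le: linearization}, and the Killing property of $\d_{t_*}$. For $\md G(\c,\c)|_{p_{N/S}} = 0$, I would differentiate~\eqref{eq: c causality} term by term at $p_{N/S}$: the first summand vanishes together with its first derivatives because $f(r_0) = 0$ kills both the factor $f(r)^2$ and its $r$-derivative; the last summand vanishes together with its $\theta$-derivative at $\theta = 0,\pi$ because $\sin^2(\theta)$ does; and the middle summand $-\mu(r)/(b^2(r^2+a^2)^2)$, combined with the prefactor $\varrho^2 = r^2 + a^2\cos^2(\theta)$, has vanishing $r$-derivative at $r_0$ precisely by the defining property~\eqref{eq: choice of r 0} of $r_0$, and vanishing $\theta$-derivative at $\theta = 0,\pi$ because $\d_\theta(a^2\cos^2\theta)$ vanishes there.

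For $\n_{\c^\sharp}\c|_{p_{N/S}} = 0$, I would exploit that $\c^\sharp$ commutes with the Killing vector $\d_{t_*}$, so $\L_{\d_{t_*}}\c^\sharp = 0$. Combined with the torsion-free identity $\n_{\c^\sharp}\d_{t_*} = \n_{\d_{t_*}}\c^\sharp$ and the fact that at $p_{N/S}$ the only nontrivial component of $\c^\sharp$ lies in the $\d_{t_*}$-direction (the remaining summands of $\c^\sharp$ vanish ordinarily there), this reduces the claim to $\n_{\d_{t_*}}\d_{t_*}|_{p_{N/S}} = \tfrac12 \nabla g(\d_{t_*},\d_{t_*})^\sharp|_{p_{N/S}} = 0$, which follows by the same critical-point calculation applied to $g_{t_*t_*}$ (a function of $r,\theta$ only with analogous critical-point structure at $p_{N/S}$).

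For the formulas for $\de_g^*\c$ and $\de_g\c$: since $\c$ vanishes ordinarily at $p_{N/S}$, the Christoffel contributions in $\n\c$ drop out and in any local chart
\[
(\n_i\c_j)|_{p_{N/S}} = \d_i\c_j|_{p_{N/S}} = g_{jk}(p_{N/S})\,\d_i(\c^\sharp)^k|_{p_{N/S}},
\]
so the answer depends only on the linearization of $\c^\sharp$ and the metric at the pole. I would substitute Lemma~\ref{le: linearization}'s linearization in the regular coordinate chart $(s,r,x,y)$ (used to avoid the sphere-pole singularity of $(\psi,\theta)$), then combine with the explicit Kerr--de~Sitter metric at $p_{N/S}$---where $\cos\theta = \pm 1$, $\sin\theta = 0$, so the angular block decouples from the $(s,r)$-block---to extract the endomorphism $\de_g^*\c$ (by raising an index) and its trace $\de_g\c$.

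The main obstacle is the careful bookkeeping at timelike infinity: the $-s\d_s$ summand in Lemma~\ref{le: linearization} would naively contribute a $-\md s \otimes \d_s$ term to the endomorphism $\de_g^*\c|_{p_{N/S}}$, which is absent from the claimed answer. The resolution relies on the b-character of the metric: in the chart $(s,r,x,y)$, the inverse metric component $g^{ss}$ vanishes at $s = 0$ (since $g$ is genuinely a b-metric using $\md s/s$ as the boundary-normal conormal), so raising the $s$-index kills the contribution from the $\d_s$-direction and the final expression for $\de_g^*\c$ carries only angular and radial content. This is the one place in the proof where the partial compactification from Section~\ref{subsec: partial compactification} plays an essential role.
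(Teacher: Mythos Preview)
Your approach to $\md G(\c,\c)|_{p_{N/S}}=0$ matches the paper's. Your argument for $\n_{\c^\sharp}\c|_{p_{N/S}}=0$ via the reduction to $\n_{\d_{t_*}}\d_{t_*}$ is correct and slightly different from the paper's route, which instead uses the identity $g(\n_{\c^\sharp}\c^\sharp,X)=(\L_{\c^\sharp}g)(\c^\sharp,X)-\tfrac12\,X\!\left(g(\c^\sharp,\c^\sharp)\right)$ and feeds in the already-computed $\L_{\c^\sharp}g$ and $\md G(\c,\c)$; both approaches ultimately rely on the Killing property of $\d_{t_*}$.

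The genuine gap is in your argument for $\de_g^*\c$. The one-form $\c$ does \emph{not} vanish at $p_{N/S}$: as a b-covector it has the nonzero component $\c(s\d_s)=g(\c^\sharp,s\d_s)=-g(s\d_s,s\d_s)\neq 0$, and in the ordinary $(s,r,x,y)$ chart its $\md s$-component behaves like $1/s$. Consequently neither the Christoffel terms in $\n_i\c_j$ nor the term $(\d_i g_{jk})(\c^\sharp)^k$ can be discarded when $j=s$ or $k=s$, since $g_{ss}\sim s^{-2}$. Your proposed resolution---that $g^{ss}\to 0$ kills the $\d_s$-contribution upon raising an index---does not work: your own naive formula gives $(\de_g^*\c)_{ss}=g_{ss}\,\d_s(\c^\sharp)^s=-g_{ss}$, and then $g^{ss}(\de_g^*\c)_{ss}=-g^{ss}g_{ss}=-1$, not $0$. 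More abstractly, $g^{ss}=O(s^2)$ while the ordinary component $(\de_g^*\c)_{ss}=O(s^{-2})$, so the product is $O(1)$; its vanishing at $p_{N/S}$ is not a consequence of the b-structure alone.

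The correct mechanism is precisely the Killing property you invoke for the other identity: writing $\de_g^*\c=\tfrac12\L_{\c^\sharp}g$ and using $\L_{s\d_s}g=0$, the $-s\d_s$ summand contributes nothing to $\L_{\c^\sharp}g$ \emph{identically}, and the remaining part $W=\c^\sharp+s\d_s$ vanishes as an ordinary vector field at $p_{N/S}$. Since $(\L_W g)_{ij}|_{p_{N/S}}=g_{kj}\d_i W^k+g_{ik}\d_j W^k$ with $\d_i W^k$ having no $s$-row or $s$-column, the singular metric component $g_{ss}$ never enters, and one reads off the claimed endomorphism directly from Lemma~\ref{le: linearization} and the (b-diagonal) metric at $p_{N/S}$. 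This is the paper's route.
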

\begin{proof}
Noting that $\md x^2 + \md y^2 = \cos^2(\theta) \md \theta^2 + \sin^2(\theta) \md \psi^2$, it follows that
\[
	g |_{p_{N/S}}
		= \frac{r_0^2 + a^2}{\mu(r_0)} \md r^2 - \frac1{b^2} \frac{\mu(r_0)}{r_0^2 + a^2} \frac{\md s^2}{s^2}  + \frac{r_0^2 + a^2}b \left( \md x^2 + \md y^2 \right).
\]
By Lemma~\ref{le: linearization},
\begin{align*}
	\c^\sharp
		=& - s \d_{s} + a \frac{r^2 - r_0^2}{(r^2 + a^2)(r_0^2 + a^2)} \left( - y \d_x + x \d_y \right) \\
		&\qquad + \g f(r) \d_r \pm \g^2 \sqrt{1 - (x^2 + y^2)}\left( x \d_x + y \d_y \right).
\end{align*}
near $p_N$ and $p_S$, respectively.
Since $s\d_s$ is a Killing vector field, we get
\begin{align*}
	\L_{\c^\sharp}g|_{p_{N/S}}
		&= 2 \g\,\md f \otimes g(\d_r, \cdot)|_{p_{N/S}}  \pm 2 \g^2 \left( \md x \otimes g(\d_x, \cdot) + \md y \otimes g(\d_y, \cdot) \right)|_{p_{N/S}} \\
		&= 2 \g f'(r_0) \frac{r_0^2 + a^2}{\mu(r_0)} \md r^2|_{p_{N/S}}  \pm 2 \g^2 \frac{r_0^2 + a^2}b \left( \md x^2 + \md y^2 \right) |_{p_{N/S}}.
\end{align*}
Hence,
\begin{align*}
	\de^*_g(\c)|_{p_{N/S}}
		&= \frac12 \L_{\c^\sharp}g^\sharp|_{p_{N/S}} \\
		&= \g f'(r_0) \md r \otimes \d_r \pm \g^2 \left( \md x \otimes \d_x + \md y \otimes \d_y \right)|_{p_{N/S}},
\end{align*}
as claimed.
A direct consequence is that
\[
	\de_g(\c)|_{p_{N/S}}
		= - \tr \left( \de^*_g(\c) \right)|_{p_{N/S}}
		= - \g f'(r_0) \mp \g^2,
\]
as claimed.
By Remark~\ref{rmk: norm of c sharp} and since $f(r_0) = 0$ and $\cos^2(\theta) = 1 - (x^2 + y^2)$, we get
\begin{align*}
	\md G(\c, \c)|_{p_{N/S}}
		&= - \frac1{b^2} \md \left( \frac{\mu(r)(r^2 + a^2 \left( 1- (x^2 + y^2) \right)}{(r^2 + a^2)^2} \right)\Big|_{p_{N/S}} \\
		&= - \frac1{b^2} \frac{\md}{\md r} \left( \frac{\mu(r)}{r^2 + a^2} \right) \md r \Big|_{p_{N/S}} \\
		&= 0,
\end{align*}
by our choice that $r_0$ satisfies~\eqref{eq: choice of r 0}.
Finally, for any vector field $X$, we get
\begin{align*}
	\n_{\c^\sharp}\c(X)|_{p_{N/S}}
		&= g(\n_{\c^\sharp} \c^\sharp, X)|_{p_{N/S}} \\
		&= \L_{\c^\sharp}g(\c^\sharp, X)|_{p_{N/S}} - \frac12 X g(\c^\sharp, \c^\sharp)|_{p_{N/S}} \\
		&= - \frac12 \md G(\c, \c)(X)|_{p_{N/S}} \\
		&= 0,
\end{align*}
finishing the proof.
\end{proof}

\begin{proof}[Proof of Proposition~\ref{prop: KdS skew adjoint term}]
Lemma~\ref{le: k conformal computation} implies that~\eqref{eq: assumptions critical point} is satisfied at $p_N$ and $p_S$.
Hence Proposition~\ref{prop: tilde L e properties}, part~\eqref{item: tilde L e skew-adjoint part}, implies that
\[
	\frac{\tLop_e - \tLop_e^{*\mB}}{2 i h}|_{p_{N/S}}
		=
			\begin{pmatrix}
				(1 + e) \de_g(\c)|_p & 0 \\
				0 & \frac12 \de_g(\c)|_p - \de_g^*(\c)^\sharp|_p
			\end{pmatrix}\bigg|_{p_{N/S}}.
\]
with respect to the splitting $\bT^* M = \R \c \oplus \c^\perp$.
By Lemma~\ref{le: k conformal computation}, we conclude that this depends continuously on $\g$, up to $\g = 0$.
For this reason, we may initially ignore $\g^2$-terms, so
\begin{align*}
	\frac{\tLop_e - \tLop_e^{*\mB}}{2 i h}|_{p_{N/S}}
		&= \begin{pmatrix}
			- \g ( 1 + e) f'(r_0) & 0 \\
			0 & - \g f'(r_0) \left( \frac12 \id + \md r \otimes \d_r \right)
		\end{pmatrix} + \O(\g^2) \\
		&\leq - \begin{pmatrix}
			\g f'(r_0) & 0 \\
			0 & \g f'(r_0) \left( \frac12 \id + \md r \otimes \d_r \right)
		\end{pmatrix} + \O(\g^2)
\end{align*}
for any $e \in [0, e_0]$.
By continuity, uniform negativity holds true for small enough $\g > 0$, proving the statement.
\end{proof}

\subsubsection{Near the critical points}

We now want to use Proposition~\ref{prop: KdS skew adjoint term} to prove propagation estimates for $\hat \P_e$ near the critical points $p_N$ and $p_S$.
We work in the coordinate system $(s, r, x, y)$, introduced in the previous subsection, near either $p_N$ or $p_S$.
Let us denote points $z \in M$ near $p_N$ or $p_S$ by the coordinates
\[
	z = (s, r, x, y),
\]
and b-one-forms $\xi \in \bT^*_z M$ by
\[
	\xi = - \xi_s \frac{\md s}s + \xi_r \md r + \xi_x \md x + \xi_y \md y.
\]
We also introduce the notation
\begin{align*}
	\Omega_\de^{p_N}
		:= & \ \{s, \abs{v} \in [0, \de] \} \cap \{\abs{\xi} \in [0, \de]\}, \\
	\Omega_\de^{p_S}
		:= & \ \{\abs{w}, \abs{r - r_0} \in [0, \de] \} \cap \{ \abs{\xi} \in [0, \de]\}, 
\end{align*}
where in this section
\begin{align*}
	\abs{v}
		:= & \ \sqrt{(r - r_0)^2 + x^2 + y^2}, \\
	\abs{w}
		:= & \ \sqrt{s^2 + x^2 + y^2}, \\
	\abs{\xi}
		:= & \ \sqrt{\xi_s^2 + \xi_r^2 + \xi_x^2 + \xi_y^2},
\end{align*}
and where again $x, y$ are defined either near $p_N$ or $p_S$.
Note that indeed 
\[
	p_{N/S} \in \Omega_\de^{p_{N/S}}.
\]

\begin{prop} \label{prop: critical point estimates}
There are $\rho_0, \de_0 > 0$ such that if $\rho \in [0, \rho_0]$, $\de \in (0, \de_0)$, $\ep \in (0, 1)$ and $e \in \left(0, \de^3\right)$, then the following holds:

If $\Bop_1, \Bop_2, \Sop \in \Psi_{\be, h}^{- \infty, 0, 0}(M)$ are such that
\begin{align*}
	\{ s \in [\de, (1 + \ep)\de] \} \cap \Omega^{p_N}_{(1 + \ep)\de}
		&\subseteq \ell(\Bop_1), \\
	\WFb(\Bop_2)
		&\subseteq \Omega^{p_N}_{\de}, \\
	\Omega^{p_N}_{(1 + \ep)\de}
		&\subseteq \ell(\Sop),
\end{align*}
then~\eqref{eq: main low frequency propagation estimate} holds for all $u \in \dot C_c^\infty$, $k \in \N_0$ and $h \in (0, h_0)$, for some $h_0 > 0$ depending on $e$.

If instead $\Bop_1, \Bop_2, \Sop \in \Psi_{\be, h}^{- \infty, 0, 0}(M)$ are such that
\begin{align*}
	\{ \abs { w } \in [\de, (1 + \ep)\de] \} \cap \Omega^{p_S}_{(1 + \ep)\de}
		&\subseteq \ell(\Bop_1), \\
	\WFb(\Bop_2)
		&\subseteq \Omega^{p_S}_{\de}, \\
	\Omega^{p_S}_{(1 + \ep)\de}
		&\subseteq \ell(\Sop),
\end{align*}
then~\eqref{eq: main low frequency propagation estimate} holds for all $u \in \dot C_c^\infty$, $k \in \N_0$ and $h \in (0, h_0)$, for some $h_0 > 0$ depending on $e$.
\end{prop}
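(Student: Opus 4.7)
The proof is a positive-commutator radial-point estimate at the saddle points $p_{N/S}$ of the Hamilton flow of the principal symbol of $\tLop_e$, exploiting the negativity of its skew-adjoint part established in Proposition~\ref{prop: KdS skew adjoint term}. Near $p_N$, take a self-adjoint commutant $\Aop\in\Psi_{\be,h}^{-\infty,\rho,0}$ with principal symbol
\[
  \as = \phi(s)\,\chi(|v|^2)\,\zeta(|\xi|^2)\,s^{-\rho}\cdot\id,\qquad v=(r-r_0,x,y),
\]
where $\phi,\chi,\zeta\geq 0$ are smooth cutoffs with $\phi',\chi',\zeta'\leq 0$, $\phi\equiv 1$ on $[0,\delta]$ and $\supp\phi\subseteq[0,(1+\ep)\delta]$, and $\chi,\zeta$ cut off the transverse base and fibre variables at scale $\delta$. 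Near $p_S$ take the symmetric choice $\as=\phi(|w|^2)\,\chi((r-r_0)^2)\,\zeta(|\xi|^2)\,s^{-\rho}\cdot\id$ with $w=(s,x,y)$.

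Plug $\Aop$ into the commutator identity~\eqref{eq: commutator equality}. The right-hand side splits into the commutator symbol $\tfrac12\H_{\tl_e}(\as^2)$ and the zeroth-order symbol of $(2ih)^{-1}(\tLop_e-\tLop_e^{*\mB})$ times $\as^2$. By~\eqref{eq: tilde l 0}, Lemma~\ref{le: linearization}, and a direct calculation, the commutator symbol equals to leading order $\tfrac12\c^\sharp(\as^2)$, modulo $\O(\sqrt e)$ endomorphism-valued corrections and fibre terms of order $|\xi|$. At $p_N$ this decomposes into: the $\phi'(s)$-contribution, of unfavourable sign but supported in $\{s\in[\delta,(1+\ep)\delta]\}\subseteq\ell(\Bop_1)$; the $\chi'$-contribution, of favourable non-positive sign because $\c^\sharp(|v|^2)\geq 0$ by Lemma~\ref{le: linearization}; a $\zeta'$-contribution of order $\delta^2\,\as^2$; and the weight contribution $+\rho\,\as^2$ coming from $\c^\sharp(s^{-\rho})=\rho\,s^{-\rho}$. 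Proposition~\ref{prop: KdS skew adjoint term} bounds the skew-adjoint symbol by $-c\,\as^2\cdot\id$ at $p_N$, hence by $-(c/2)\,\as^2\cdot\id$ on a full neighbourhood of $p_N$, provided the $\O(\sqrt e)$ corrections in $\tl_e$ and in $\tilde\Ric$ are subdominant on a region of size $\delta$; the hypothesis $e<\delta^3$ secures this. Choosing $\rho_0$ with $\rho_0<c/4$, the right-hand side symbol becomes $\leq -(c/4)\,\as^2\cdot\id$ plus terms microlocalised in $\ell(\Bop_1)$.

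The left-hand side of~\eqref{eq: commutator equality} is bounded by Lemma~\ref{le: the right-hand-side low} using $\WFb(\Aop)\subseteq\ell(\Sop)$. Microlocal elliptic regularity and absorption then transfer the bound on $\Aop u$ to $\Bop_2 u$ via the ellipticity of $\as$ on $\WFb(\Bop_2)\subseteq\Omega_\delta^{p_{N/S}}$, yielding~\eqref{eq: main low frequency propagation estimate}. The case $p_S$ is symmetric, with the roles of the stable direction ($w$) and the unstable direction ($r-r_0$) interchanged. The main obstacle is the joint calibration of $\delta,\rho,e$: $\delta$ must be small enough for the pointwise bound of Proposition~\ref{prop: KdS skew adjoint term} to propagate by continuity to a fixed-size neighbourhood of $p_{N/S}$; the condition $e<\delta^3$ keeps the $\O(\sqrt e)$ matrix corrections in $\tl_e$ subdominant to the scalar damping constant $c$ on this neighbourhood; and $\rho<c/4$ ensures that the weight-induced growth $+\rho\,\as^2$ is absorbed by the damping margin left over after accounting for the unfavourable commutator contributions that are not already captured by $\ell(\Bop_1)$.
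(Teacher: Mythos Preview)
Your approach is the same as the paper's: a positive commutator with a product cutoff $\as=s^{-\rho}\chi(s)\chi(|v|)\chi(|\xi|)\cdot\id$ (resp.\ the $p_S$ analogue), using Proposition~\ref{prop: KdS skew adjoint term} to absorb the weight term $+\rho\,\as^2$, and splitting the base commutator into an unfavourable $\chi'(s)$-piece supported in $\ell(\Bop_1)$ and a favourable $\chi'(|v|)$-piece coming from $\c^\sharp(|v|^2)\geq 0$.

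There is one genuine inaccuracy. The fibre-cutoff contribution (your $\zeta'$-term) is \emph{not} of order $\delta^2\as^2$ and cannot be absorbed by the damping. In the paper this term is $\as_N\k_N$ with $\k_N=-2\,\md_z(\tl_e+\j_e)\cdot\md_\xi\as_N$; since $\md_z\tl_e$ is linear in $\xi$ while $\md_\xi\as_N$ carries $\chi'(|\xi|)$ of size $\sim(\ep\delta)^{-1}$ on $\{|\xi|\sim\delta\}$, the product is $O(1)$ in $\delta$, not $O(\delta^2)$. The correct move is to observe that $\supp(\k_N)\subset\{|\xi|\in[\delta,(1+\ep)\delta]\}$, where $\hat\P_e$ is elliptic by Proposition~\ref{prop: ellipticity finite frequencies}, so after quantization $\|\Kop_N u\|\lesssim\|\Sop\hat\P_e u\|+h^k\|u\|$; this feeds into the $h^{-1}\|\Sop\hat\P_e u\|$ term of~\eqref{eq: main low frequency propagation estimate}. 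A second, minor omission: Lemma~\ref{le: the right-hand-side low} leaves a residual $h^{1/2}\|\Sop u\|$ on the right, and the paper removes it by iterating the estimate on a nested family $\Omega^{p_N}_{(1+\ep/3)\delta}\subset\Omega^{p_N}_{(1+2\ep/3)\delta}\subset\Omega^{p_N}_{(1+\ep)\delta}$, gaining a factor $h^{1/2}$ at each step until it is absorbed into $h^k\|u\|$.
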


\begin{remark} \label{rmk: de k would work}
In the proposition, we choose the condition $e \in (0, \de^3)$ since it is convenient for the estimates in the proof.
We could equally well have chosen $e \in (0, \de^k)$ for any $k \geq 3$.
The point is that when applying this statement, we choose a $\de > 0$ and are then guaranteed to be able to choose $e$ in some interval of positive numbers that depends explicitly on $\de$.
\end{remark}

The proof of Proposition~\ref{prop: critical point estimates} is a propagation estimate in the direction of $\c^\sharp$, up to relatively small errors.
The error terms will be small when $e$ is small and the frequency $\xi$ is small.
We begin by constructing the principal symbol for our two commutators, one localized near $p_N$ and the other one localized near $p_S$.
Given any $\de > 0$ and $\ep \in (0, 1)$, fix a $\chi \in C_c^\infty(\R)$ satisfying
\begin{itemize}
	\item $\chi \geq 0$,
	\item $\supp(\chi) \subseteq [-(1 + \ep)\de, (1 + \ep)\de]$,
	\item $\chi(s) \equiv 1$ for $\abs s \leq \de$, 
	\item $\left(-s\chi'(s)\chi(s)\right)^{\frac12} \in C^\infty_c(\R)$.
\end{itemize}
In particular, $s \chi'(s) \leq 0$ for all $s \in \R$.
For any fixed $\rho \geq 0$, define the symbols $\as_N, \as_S \in \S^{ -\infty, \rho, 0}_{\be, h}(M; \bT^* M)$ by
\begin{align*}
	\as_N (z, \xi)
		:= & \ s^{-\rho} \chi (s) \chi (\abs v) \chi(\abs{\xi}) \cdot \id, \\
	\as_S (z, \xi)
		:= & \ s^{-\rho} \chi (\abs w) \chi (r - r_0) \chi(\abs{\xi}) \cdot \id
\end{align*}
for $s > 0$.
The growing factor $s^{-\rho}$ reflects that we want to prove estimates on \emph{decaying} function spaces, as $s \to 0$.
Note that 
\[
	\supp \left( \as_{N /S} \right)
		\subseteq \Omega_{(1 + \ep)\de}^{p_N/p_S},
\]
respectively.
Indeed, $\as_{N/S}$ is localizing around either $p_N$ or $p_S$, where $x = y = 0$, as well as near zero frequency.
Let now 
\[
	\Aop_{N/S}
		:= \Re \left( \Op_h(\as_{N/S}) \right).
\]
Since $\as_{N/S}$ is real-valued, the semiclassical principal symbol of $\Aop_{N/S}$ is given by $\as_{N/S}$.
The following lemma is the symbolic computation, from which Proposition~\ref{prop: critical point estimates} will be a simple consequence.

\begin{lemma} \label{le: prop at critical points}
There are $\rho_0, \de_0 > 0$ such that if $\rho \in (0, \rho_0]$, $\de \in (0, \de_0)$, $\e \in (0, 1)$ and $e \in \left(0, \de^3\right)$, then the following holds (recall that $\as_N$ and $\as_S$ depend on $\rho)$:

There are (endomorphism-valued) symbols $\bs_N, \f_N, \k_N \in \S_{\be, h}^{-\infty, \rho, 0}$ such that
\begin{align*}
	\esssupp(\f_N)
		&\subseteq \{ s \in [\de, (1 + \ep)\de] \} \cap \Omega_{(1 + \ep)\de}^{p_N}, \\
	\esssupp(\k_N)
		&\subseteq \{ \abs{\xi} \in [\de, (1 + \ep)\de] \} \cap \Omega_{(1 + \ep)\de}^{p_N},
\end{align*}
and
\begin{equation} \label{eq: north pole symbol computation}
	\s \left( \frac i h \left[\tLop_e + \Jop_e, \Aop_N^2\right] + \frac i h \left( \tLop_e^{*\mB} - \tLop_e \right) \Aop_N^2 \right)
		= - \rho_0 \as_N^2 - \bs_N^2 + \as_N \f_N + \as_N \k_N.
\end{equation}

There are (endomorphism-valued) symbols $\bs_S, \f_S, \k_S \in \S_{\be, h}^{-\infty, \rho, 0}$ such that
\begin{align*}
	\esssupp(\f_S)
		&\subseteq \{ \abs w \in [\de, (1 + \ep)\de]\} \cap \Omega_{(1 + \ep)\de}^{p_S}, \\
	\esssupp(\k_S)
		&\subseteq \{ \abs{\xi} \in [\de, (1 + \ep)\de] \} \cap \Omega_{(1 + \ep)\de}^{p_S},
\end{align*}
and
\begin{equation} \label{eq: south pole symbol computation}
	\s \left( \frac i h \left[ \tLop_e + \Jop_e, A_\de^2 \right] + \frac i h \left( \tLop_e^{*\mB} - \tLop_e \right) \Aop_S^2 \right)
		= - \rho_0 \as_S^2 - \bs_S^2 + \as_S \f_S + \as_S \k_S.
\end{equation}
\end{lemma}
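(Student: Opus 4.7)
The plan is to compute the semiclassical b-principal symbol
\[
\mathsf e_N := \s\bigl(\tfrac{i}{h}[\tLop_e + \Jop_e, \Aop_N^2] + \tfrac{i}{h}(\tLop_e^{*\mB} - \tLop_e)\Aop_N^2\bigr) = \{\tl_e + \j_e, \as_N^2\} + \Sigma_e\,\as_N^2,
\]
where $\Sigma_e := \s_{\be,h}\bigl(\tfrac{i}{h}(\tLop_e^{*\mB} - \tLop_e)\bigr)$ is the normalized skew-adjoint principal symbol, and then match the result term-by-term to $-\rho_0\as_N^2 - \bs_N^2 + \as_N\f_N + \as_N\k_N$. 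Proposition~\ref{prop: KdS skew adjoint term} yields $\Sigma_e|_{p_N} \leq -2c\,\id$ uniformly in $e \in (0, e_0)$; by continuity of $\Sigma_e$ in all arguments (smoothness in $e$ is Proposition~\ref{prop: tilde L e properties}), the bound $\Sigma_e \leq -c\,\id$ persists on $\Omega^{p_N}_{(1+\ep)\de_0}$ for $\de_0$ small, so the skew-adjoint piece contributes $\leq -c\,\as_N^2$ to $\mathsf e_N$.

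For the Poisson-bracket piece I first specialize to $e = 0$. By~\eqref{eq: tilde l 0}, $\tl_0 = G(\c,\xi)P$ with $P := \id + \c\otimes\c^\sharp/G(\c,\c)$, and a direct computation gives $\{\tl_0, \as_N^2\} = 2P\as_N\,\H_{G(\c,\xi)}(\as_N) + O(\as_N^2\abs\xi)$. Substituting the linearization of $\c^\sharp$ at $p_N$ from Lemma~\ref{le: linearization} together with $\H_{G(\c,\xi)} = \c^\sharp - (\d_{x_j}\c^i)\xi_i\d_{\xi_j}$, I split $\H_{G(\c,\xi)}(\as_N)$ into four pieces: (i) the weight term producing $2\rho P\as_N^2$; (ii) a contracting-direction boundary term $\propto s(-\chi'(s))\chi(|v|)\chi(|\xi|)$ with $-\chi'(s) \geq 0$ supported in $\{s \in [\de,(1+\ep)\de]\}$, absorbed into $\as_N\f_N$; (iii) an expanding-direction term $\propto \chi(s)\chi'(|v|)\c^\sharp(|v|)\chi(|\xi|)$ with $\chi'(|v|) \leq 0$, where $\c^\sharp(|v|^2) = 2\g f'(r_0)(r-r_0)^2 + 2\g^2(x^2 + y^2) + O(|v|^3) \geq c_1|v|^2$ near $p_N$ (using $f'(r_0) > 0$ from~\eqref{EqIf}); and (iv) a fiber-derivative term $(\d_{x_j}\c^i)\xi_i\d_{\xi_j}\chi(|\xi|)$ supported in $\{\abs\xi \in [\de,(1+\ep)\de]\}$, absorbed into $\as_N\k_N$. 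The endomorphism $P$ is $\mB$-selfadjoint and $\mB$-positive (verified directly from $\mB(Pv,v) = \mB(v,v) - G(\c,v)^2/G(\c,\c) \geq \mB(v,v)$, using $G(\c,\c) < 0$), so piece (iii) is a negative semi-definite endomorphism symbol, realized as $-\bs_N^2$ by extracting the $\mB$-selfadjoint square root of its negative.

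Setting $\rho_0 := c/(4\|P\|_\mB + 1)$ and combining (i) with the skew-adjoint $-c\as_N^2$ produces a net $\leq -\rho_0\as_N^2$ for every $\rho \in (0, \rho_0]$. The correction $\tl_e - \tl_0 = O(\sqrt e)$ from~\eqref{eq: tilde l e} contributes $O(\sqrt e\,(1 + \de))\as_N^2 = O(\de^{3/2})\as_N^2$ on account of $e < \de^3$ and $|\xi| \leq (1+\ep)\de$; the $\Jop_e$ subprincipal term and the quadratic remainder $V$ in the linearization of $\c^\sharp$ contribute $O(\de)\as_N^2$; all these errors are absorbed into $-\rho_0\as_N^2$ for $\de$ small. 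The case $p_S$ is parallel under Lemma~\ref{le: linearization}: the sign flip in the $(x,y)$-directions makes $|w|$ the contracting direction (producing $\as_S\f_S$ from $\chi'(|w|)\c^\sharp(|w|) \geq 0$, where $\c^\sharp(|w|^2) = -2s^2 - 2\g^2(x^2+y^2) + O(|w|^3) \leq -c_2|w|^2$) and $r-r_0$ the expanding direction (producing $-\bs_S^2$ from $\chi'(r-r_0)\g f'(r_0)(r-r_0) \leq 0$).

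The main difficulty is handling the endomorphism structure so that the expanding-direction contribution is a genuine $\mB$-square $-\bs^2$ rather than an indefinite bilinear form; this hinges on the $\mB$-positivity of $P$ and on the non-negativity of the accompanying scalar factor $-\chi'(|v|)\c^\sharp(|v|)$. Both features depend sensitively on the geometric choice of $\c^\sharp$ in~\eqref{eq: c sharp in new coordinates}, combined with the fact that $r_0$ is selected by~\eqref{eq: choice of r 0} so that $\md G(\c,\c)|_{p_N} = 0$ (Lemma~\ref{le: k conformal computation}), which together guarantee the correct sign of $\c^\sharp(|v|^2)$ through the positive contribution $\g f'(r_0)$ in the radial direction and the pure $(x\d_x + y\d_y)$ expansion at $p_N$.
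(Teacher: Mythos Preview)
Your overall architecture matches the paper's: split the Poisson bracket along $\c^\sharp$ into weight, contracting ($s$), expanding ($|v|$), and fiber ($|\xi|$) pieces, then use Proposition~\ref{prop: KdS skew adjoint term} to dominate the weight contribution. Your check that $P = \id + \c\otimes\c^\sharp/G(\c,\c)$ is $\mB$-self-adjoint and $\mB$-positive is correct.

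The gap is in how you dispose of the $\sqrt{2e}\,\r_e$ and $\Jop_e$ corrections. You assert these contribute $O(\de^{3/2})\as_N^2$ or $O(\de)\as_N^2$ and are ``absorbed into $-\rho_0\as_N^2$''. This fails. Since $\r_e$ is degree-$1$ homogeneous in $\xi$, one has $\{\sqrt{2e}\,\r_e,\as_N^2\} = 2\sqrt{2e}\,\as_N\,\r_e(\md_z\as_N)$ modulo a $\k_N$-piece, and $\md_z\as_N$ contains the expanding-direction term
\[
s^{-\rho}\chi(s)\chi(|\xi|)\,\tfrac{\chi'(|v|)}{|v|}\,\bigl((r-r_0)\,\md r + x\,\md x + y\,\md y\bigr).
\]
This carries one factor of $\chi'(|v|)$ and (through $\as_N$) one factor of $\chi(|v|)$, whereas $\as_N^2$ carries $\chi(|v|)^2$. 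Near the outer edge $|v|\to(1+\ep)\de$ the ratio $\chi'(|v|)/\chi(|v|)$ is unbounded, so the correction is \emph{not} $O(\as_N^2)$ and cannot be controlled by $\rho_0\as_N^2$. The same obstruction applies to $\md_\xi\j_e\cdot\md_z\as_N^2$.

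The correct move, and this is precisely where the hypothesis $e<\de^3$ enters, is to absorb these corrections into the expanding-direction piece (iii) itself: define $\bs_N$ as the $\mB$-square-root of the \emph{full} expanding endomorphism at parameter $e$, not just at $e=0$. On $\supp\chi'(|v|)$ one has $|v|\sim\de$, so the leading part $\c^\sharp(|v|^2)\,P \gtrsim \de^2\,\id$ while the corrections are bounded by $(\sqrt{e}+|\xi|^2)|v| \lesssim \de^{5/2}$; hence the perturbed endomorphism stays $\mB$-positive for $\de$ small and its square root exists smoothly. You also need the standing hypothesis $(-t\chi'(t)\chi(t))^{1/2}\in C_c^\infty(\R)$ on $\chi$ to guarantee that the scalar prefactor has a smooth square root, so that $\bs_N\in\S_{\be,h}^{-\infty,\rho,0}$; your outline does not invoke this.
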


The following will simplify the notation:

\begin{remark}
Let $\md_z$ an $\md_\xi$ denote
\begin{align*}
	\md_z f
		:= \left( - s\d_sf \right) \d_{\xi_s} + \left( \d_rf \right) \d_{\xi_r} + \left( \d_x f \right) \d_{\xi_x} + \left(\d_y f \right) \d_{\xi_y}, \\
	\md_\xi f
		:= \left( \d_{\xi_s} f \right) (-s \d_s) + \left( \d_{\xi_r} f \right) \d_r + \left( \d_{\xi_x} f \right) \d_x + \left( \d_{\xi_y} f \right) \d_y,
\end{align*}
for any $f: \bT^* M \to \C$.
If $\p$ and $\q$ are the (endomorphism-valued) semiclassical b-principal symbols of $\P$ and $\mathrm Q$, respectively, then the semiclassical b-principal symbol of $[\P, \mathrm Q]$ is given by the b-Poisson bracket between the principal symbols, i.e., by
\begin{align*}
	\{\p, \q\}
		:=& \ \md_\xi \p \cdot \md_z \q - \md_z \p \cdot \md_\xi \q \\
		=& \ \d_{\xi_s} \p\,(- s\d_s) \q + \d_{\xi_r}\p\,\d_r \q + \d_{\xi_x}\p\,\d_x \q + \d_{\xi_y}\p\,\d_y \q \\
		& \ - (- s\d_s) \p\,\d_{\xi_s} \q - \d_r \p\,\d_{\xi_r}\q - \d_x \p\,\d_{\xi_x}\q - \d_y \p\,\d_{\xi_y}\q,
\end{align*}
where here $\cdot$ denotes applying the b-symplectic form 
\[
	- \frac{\md s}s \wedge \md \xi_s + \md r \wedge \md \xi_r + \md x \wedge \md \xi_x + \md y \wedge \md \xi_y.
\]
We will sometimes identify $\md_z f$ and $\md_\xi f$ with the corresponding b-one-form, with respect to the b-symplectic form, without explicitly mentioning it.
\end{remark}

\begin{proof}[Proof of Lemma~\ref{le: prop at critical points}]
The principal symbols of $\tLop_e$ and $\Jop_e$ are given by
\begin{align*}
	\tl_e(\xi)
		= & \ G(\c, \xi) \left( \id + \frac{\c \otimes \c^\sharp}{G(\c, \c)} \right) + \sqrt{2e} \cdot \r_e(\xi), \\
	\j_e(\xi)
		= & G(\xi, \xi) \tl_e(\xi),
\end{align*}
where $\r_e$ is the (endomorphism-valued) symbol given by
\begin{align*}
	\r_e(\xi) \o
		= & \ G(\xi, \o) \c - G(\o, \c) \xi - \sqrt{2 e} \frac{G(\o, \c) G(\xi, \c)}{G(\c, \c)} \c,
\end{align*}
for any $\xi, \o \in \bT^* M$.
Since $\tl_e$ is homogeneous of degree $1$ in $\xi$, note that
\[
	\md_\xi \tl_e \cdot \md_z \as_N^2
		= \tl_e \left(\md_z \as_N^2 \right).
\]
Using this and defining
\[
	\k_N
		:= - 2 \md_z \left( \tl_e + \j_e \right) \cdot \md_\xi a_N,
\]
we compute
\begin{align}
	\s \left( \frac i h [\tLop_e + \Jop_e, \Aop_N^2] \right) 
		&= \left\{ \tl_e + \j_e, \as_N^2 \right\} \nonumber \\
		&= \md_\xi \left( \tl_e + \j_e \right) \cdot \md_z \as_N^2 - \md_z \left( \tl_e + \j_e \right) \cdot \md_\xi \as_N^2 \nonumber \\
		&= \ \tl_e(\md_z \as_N^2) + \md_\xi \j_e \cdot \md_z \as_N^2 + \as_N \k_N \nonumber \\
		&= \ \left( \left( \id + \frac{\c \otimes \c^\sharp}{G(\c, \c)} \right)\c^\sharp + \sqrt{2e} \cdot \r_e + \md_\xi \j_e \right)\left(\md_z \as_N^2 \right) + \as_N \k_N.  \label{eq: commutator term critical point}
\end{align}
Note that indeed $\supp(\k_N) \subseteq \{ \abs{\xi} \in [\de, (1 + \ep)\de] \} \cap \Omega_{(1 + \ep)\de}^{p_N}$ for any $e > 0$, since $\supp(\chi') \subseteq [\de, (1 + \ep)\de]$.
We note that
\begin{align*}
	\md_z \as_N
		&= \left( \rho \as_N - s^{-\rho} s \chi'(s) \chi(\abs{v}) \chi(\abs{\xi}) \right) \left( - \frac{\md s}s \right) \\*
		&\qquad + s^{-\rho} \frac1{\abs v}\chi(s) \chi'(\abs{v}) \chi(\abs{\xi}) \left( (r - r_0) \md r + x \md x + y \md y \right).
\end{align*}
Moreover, using Lemma~\ref{le: linearization}, we note that
\begin{align*}
	\c^\sharp(\md_z \as_N)
		&= \rho \as_N - s^{-\rho} s \chi'(s) \chi(\abs{v}) \chi(\abs{\xi}) \\*
		&\qquad + \Bigg( \left( \a f'(r_0) + \O(\abs{r - r_0}) \right) (r - r_0)^2 + \left( \b + \O(\abs{x}) \right) x^2 \\*
		&\qquad \qquad  + \left( \b + \O(\abs{y}) \right) y^2 \Bigg) \frac1{\abs v} s^{-\rho} \chi(s) \chi'(\abs v) \chi(\abs{\xi}),
\end{align*}
where the $\O$-terms are smooth functions.
We thus formally get
\begin{align*}
	\left( \left( \id + \frac{\c \otimes \c^\sharp}{G(\c, \c)} \right)\c^\sharp + \sqrt{2e} \cdot \r_e + \md_\xi \j_e \right)\left(\md_z \as_N^2\right)
		= \rho \as_N^2 \t_N + \as_N \f_N - \check \be_N^2
\end{align*}
where
\begin{align*}
	\t_N
		&= 2 \left( \id + \frac{\c \otimes \c^\sharp}{G(\c, \c)} + \left( \sqrt{2e} \cdot \r_e + \md_\xi \j_e \right) \left( - \frac{\md s}s \right) \cdot \id \right), \\
	\f_N
		&= - s^{-\rho} s \chi'(s) \chi(\abs{v}) \chi(\abs{\xi})\t_N,
\end{align*}
and where 
\begin{align*}
	\check \be_N
		&= s^{-\rho} \frac1{\abs v}\chi(s)\chi(\abs{\xi}) \left(- 2 \abs v \chi'(\abs v) \chi(\abs v) \right)^{\frac12} \\
		&\qquad \cdot \Bigg( \Bigg( \left( \a f'(r_0) + \O(\abs{r - r_0}) \right) (r - r_0)^2 + \left( \b + \O(\abs{x}) \right) x^2 \\*
		&\qquad \qquad + \left( \b + \O(\abs{y}) \right) y^2 \Bigg) \left( \id + \frac{\c \otimes \c^\sharp}{G(\c, \c)} \right) \\
		&\qquad \qquad + \left( \sqrt{2e} \cdot \r_e + \md_\xi j_e \right) \left( (r - r_0) \md r + x \md x + y \md y \right) \cdot \id \Bigg)^{\frac12}.
\end{align*}
Since $\t_N$ is a smooth endomorphism field on $\bT^* M$, we conclude that $\f_N \in \S^{-\infty, -\infty, 0}_{\be, h}$, with
\[
	\esssupp(\f_N)
		\subseteq \{ s \in [\de, (1 + \ep)\de] \} \cap \Omega_{(1 + \ep)\de}^{p_N},
\]
as desired.
However, we still need to choose $\de_0$ small enough to ensure that $\check \be_N$ is a well-defined symbol.
Since $v \neq 0$ on the support of $\chi'(\abs{v})$, the factor
\[
	- 2 \abs v \chi'(\abs v) \chi(\abs v)
\]
is a positive smooth function.
Moreover, by assumption, $(-s\chi'(s)\chi(s)))^{\frac12}$ is smooth everywhere, hence 
\[
	\left(- 2 \abs v \chi'(\abs v) \chi(\abs v) \right)^{\frac12}
\]
is smooth everywhere.
Next, there is a $C > 0$ such that 
\[
	\abs{\r_e(\xi)} \leq C \abs{\xi} \id,
\]
as quadratic forms, and, since $\j_e(\xi)$ is cubically vanishing at the zero section, 
\[
	\abs{\md_\xi \j_e(\xi)}
		\leq C \abs{\xi}^2 \id,
\]
for all $e \leq 1$.
Since $\supp\left(\check \be_N\right) \subset \Omega^N_{(1 + \ep)\de}$ and $e \in \left(0, \de^3 \right)$, it follows that
\begin{align}
	\left( \sqrt{2e} \cdot \r_e + \md_\xi j_e \right) \left( (r - r_0) \md r + x \md x + y \md y \right) \cdot \id
		&\leq C \de \left( \sqrt{2e} + \de^2 \right) \cdot \id \nonumber \\
		&\leq C \de^{5/2} \cdot \id \label{eq: rest terms}
\end{align} 
in $\supp\left(\check \be_N\right)$.
On the other hand,
\begin{align*}
	\Bigg( 
		&\left( \a f'(r_0) + \O(\abs{r - r_0}) \right) (r - r_0)^2 + \left( \b + \O(\abs{x}) \right) x^2 \\
		&+ \left( \b + \O(\abs{y}) \right) y^2 \Bigg) \left( \id + \frac{\c \otimes \c^\sharp}{G(\c, \c)} \right)  \\
		&\geq C \de^2 \cdot \id,
\end{align*}
for sufficiently small $\de_0$.
Combining this with~\eqref{eq: rest terms} shows that the expression inside the second square root in the definition of $\check \be_N$ is positive, and thereby that the square root is well-defined, for small enough $\de_0 > 0$.
In conclusion, we have shown that
\begin{equation} \label{eq: the critical point commutator}
	\s \left( \frac i h [\tLop_e + \Jop_e, \Aop_N^2] \right)
		= \rho \as_N^2 \t_N + \as_N \f_N - \check \be_N^2 + \as_N \k_N,
\end{equation}
where $\t_N$ is a smooth endomorphism field, $\f_N$ and $\k_N$ are as in the statement of the lemma, and $\check \be_N \in \S^{-\infty, \rho, 0}_{\be, h}$. 

The weight $\rho$ in~\eqref{eq: the critical point commutator} will be compensated for by the skew-adjoint term.
By Proposition~\ref{prop: KdS skew adjoint term} and by continuity, there is a $\rho_0 > 0$ such that if $\de_0 > 0$ is small enough, then
\[
	\frac ih \left( \tLop_e^{*\mB} - \tLop_e \right)
		= \frac 1{ih} \left( \tLop_e - \tLop_e^{*\mB} \right)
		\leq - \rho_0 \left( \t_N + \id \right)
\]
for all $e \in (0, \de^3)$.
For all $\rho \leq \rho_0$, it follows that
\[
	\bs_N
		:= \left( \check \bs_N^2 + \left( - \frac ih \left( \tLop_e^{*\mB} - \tLop_e \right) - \rho \t_N - \rho_0 \id \right) \as_N^2 \right)^{\frac12}
\]
is a well-defined symbol in $\S_{\be, h}^{-\infty, \rho, 0}$.
Together with~\eqref{eq: the critical point commutator}, this proves~\eqref{eq: north pole symbol computation}.

Equation~\eqref{eq: south pole symbol computation} follows similarly, with the key differences being described by
\begin{align*}
	\md_z \as_S
		&= \rho \as_S \left( - \frac{\md s}s \right) + \frac1{\abs w}s^{-\rho} \chi'(\abs{w}) \chi(r-r_0) \chi(\abs{\xi}) \left( - s^2 \left( - \frac{\md s}s \right) + x \md x + y \md y \right)  \\*
		&\qquad + s^{-\rho} \chi(\abs w) \chi'(r-r_0) \chi(\abs{\xi}) \md r,
\end{align*}
giving
\begin{align*}
	\c^\sharp(\md_z \as_S)
		&= \rho \as_S - \left( s^2 + \left( \b + \O(\abs{x}) \right) x^2 + \left( \b + \O(\abs{y}) \right) y^2 \right) \\*
		&\qquad \qquad \cdot \frac1{\abs w}s^{-\rho} \chi'(\abs{w}) \chi(r-r_0) \chi(\abs{\xi}) \\
		&\qquad + \left( \a f'(r_0) + \O(\abs{r - r_0}) \right) (r - r_0) s^{-\rho} \chi(\abs{w}) \chi'(r-r_0) \chi(\abs{\xi}),
\end{align*}
and thereby
\begin{align*}
	\left( \left( \id + \frac{\c \otimes \c^\sharp}{G(\c, \c)} \right)\c^\sharp + \sqrt{2e} \cdot \r_e + \md_\xi \j_e \right)\left(\md_z \as_S^2\right)
		= \rho \as_S^2 \t_S + \as_S \f_S - \check \be_S^2
\end{align*}
where
\begin{align*}
	\t_S
		&= 2\left( \id + \frac{\c \otimes \c^\sharp}{G(\c, \c)} + \left( \sqrt{2e} \cdot \r_e + \md_\xi \j_e \right) \left( - \frac{\md s}s \right) \cdot \id \right), \\
	\f_S
		&= \frac1{\abs w}s^{-\rho} \chi'(\abs{w}) \chi(r-r_0) \chi(\abs{\xi}) \\
		&\qquad \cdot \Bigg( - \left( s^2 + \left( \b + \O(\abs{x}) \right) x^2 + \left( \b + \O(\abs{y}) \right) y^2 \right) \left( \id + \frac{\c \otimes \c^\sharp}{G(\c, \c)} \right) \\
		&\qquad \qquad + \left( \sqrt{2e} \cdot \r_e + \md_\xi j_e \right)\left( - s^2 \left( - \frac{\md s}s \right) + x \md x + y \md y \right) \cdot \id \Bigg),
\end{align*}
and where 
\begin{align*}
	\check \be_S
		&= s^{-\rho} \chi(\abs{w}) \chi(\abs{\xi}) \left(- 2 (r - r_0)\chi'(r-r_0)\chi(r-r_0) \right)^{\frac12} \\
		&\qquad \cdot \Bigg( \left( \left( \a f'(r_0) + \O(\abs{r - r_0}) \right) \right) \left( \id + \frac{\c \otimes \c^\sharp}{G(\c, \c)} \right) \\
		&\qquad \qquad + \left( \sqrt{2e} \cdot \r_e + \md_\xi j_e \right) \left(\md r \right) \cdot \id \Bigg)^{\frac12}.
\end{align*}
Similar to $\check \be_S$, one checks that $\check \be_S \in \S^{-\infty, \rho, 0}_{\be, h}$ is well-defined, if $\de_0$ is small enough.
The rest of the proof is analogous for $p_S$ as for $p_N$.
\end{proof}

\begin{proof}[Proof of Proposition~\ref{prop: critical point estimates}]
Quantizing the symbol equality~\eqref{eq: north pole symbol computation}, we get
\begin{align*}
	\frac i h 
		&\left[\tLop_e + \Jop_e, \Aop_N^2\right] + \frac i h \left( \tLop_e^{*\mB} - \tLop_e \right) \Aop_N^2 \\
		&= - \rho_0 \Aop_N^2 - \Bop_N^* \Bop_N + \Aop_N \Fop_N + \Aop_N \Kop_N + h \Rop_N,
\end{align*}
where $\Bop_N, \Fop_N, \Kop_N \in \Psi^{-\infty, \rho, 0}_{\be, h}$ and $\Rop_N \in \Psi^{-\infty, 2 \rho, 0}_{\be, h}$ and
\begin{align*}
	\WFb(\Fop_N)
		&\subseteq \{ s \in [\de, (1 + \ep)\de] \} \cap \Omega_{(1 + \ep)\de}^{p_N}, \\
	\WFb(\Kop_N)
		&\subseteq \{ \abs{\xi} \in [\de, (1 + \ep)\de] \} \cap \Omega_{(1 + \ep)\de}^{p_N}, \\
	\WFb(\Rop_N)
		&\subseteq \Omega_{(1 + \ep)\de}^{p_N}.
\end{align*}
It follows that
\begin{align}
	&\ldr{\frac i {2h} \left( \left[\tLop_e + \Jop_e, \Aop_N^2\right] + \left( \tLop_e^{*\mB} - \tLop_e \right) \Aop_N^2\right)u, u} \nonumber \\
		&\quad = - \rho_0 \norm{\Aop_N u}_{L^2}^2 - \norm{\Bop_N u}_{L^2}^2 + \ldr{\Fop_N u, \Aop_N u} + \ldr{\Kop_N u, \Aop_N u} + h \ldr{\Rop_N u, u} \nonumber \\
		&\quad \leq - \rho_0 \norm{\Aop_N u}_{L^2}^2 + \left( \norm{\Fop_N u}_{L^2} + \norm{\Kop_N u}_{L^2} \right) \norm{\Aop_N u}_{L^2} + h \ldr{\Rop_N u, u}. \label{eq: low order commutator}
\end{align}
Since $\Sop \hat \P_e$ is elliptic on $\WFb(\Kop_N)$, by Proposition~\ref{prop: ellipticity finite frequencies}, and since $\Bop_1$ is elliptic on $\WFb(\Fop_N)$, by assumption, microlocal elliptic regularity theory implies that
\begin{align*}
	\norm{\Kop_N u}_{L^2}
		&\lesssim_{k, e} \norm{\Sop \hat \P_e u}_{H^{0, \rho}_{\be, h}} + h^k \norm{u}_{H^{0, \rho}_{\be, h}}, \\
	\norm{\Fop_N u}_{L^2}
		&\lesssim_{k, e} \norm{\Bop_1 u}_{H^{0, \rho}_{\be, h}} + h^k \norm{u}_{H^{0, \rho}_{\be, h}},
\end{align*}
for all $k \in \N$.
Finally, we estimate
\[
	\ldr{\Rop_N u, u}_{L^2}
		= \ldr{s^{\rho} \Rop_N u, s^{-\rho}u}_{L^2}
		\lesssim_{k,e} \norm{\Sop u}_{H^{0, \rho}_{\be, h}}^2 + h^{2k}\norm{u}_{H^{0, \rho}_{\be, h}}^2,
\]
for all $k \in \N$.
Combining~\eqref{eq: low order commutator} with~\eqref{eq: commutator equality} and applying Lemma~\ref{le: the right-hand-side low}, we conclude the estimate
\begin{align*}
	\rho_0 \norm{\Aop_N u}_{L^2}^2
		&\lesssim_{k, e} \left( \norm{\Bop_1 u}_{H^{0, \rho}_{\be, h}} + h^{-1}\norm{\Sop \hat \P_e u}_{H^{0, \rho}_{\be, h}} + h^k \norm{u}_{H^{0, \rho}_{\be, h}} \right) \norm{\Aop_N u}_{L^2} \\
		&\quad \qquad + \left(\de + h^{1/2} \right) \norm{\Aop_N u}_{L^2}^2 + h \norm{\Sop u}^2_{H^{0, \rho}_{\be, h}} + h^k \norm{u}_{H^{0, \rho}_{\be, h}}^2,
\end{align*}
which implies the estimate
\begin{align*}
	\norm{\Aop_N u}_{L^2}
		&\lesssim_{k, e} \norm{\Bop_1 u}_{H^{0, \rho}_{\be, h}} + h^{-1}\norm{\Sop \hat \P_e u}_{H^{0, \rho}_{\be, h}} + h^{1/2} \norm{\Sop u}_{H^{0, \rho}_{\be, h}} + h^k \norm{u}_{H^{0, \rho}_{\be, h}},
\end{align*}
if $\de_0$ and $h_0$ are small enough.
Since $\Aop_N$ is elliptic on $\WFb(\Bop_2)$,
\[
	\norm{\Bop_2 u}_{H^{0, \rho}_{\be, h}}
		\lesssim_{k, e} \norm{\Aop_N u}_{L^2} + h^k \norm{u}_{H^{0, \rho}_{\be, h}},
\]
and we therefore conclude
\begin{equation} \label{eq: non-optimal critical point estimate}
	\norm{\Bop_2 u}_{H^{0, \rho}_{\be, h}}
		\lesssim_{k, e} \norm{\Bop_1 u}_{H^{0, \rho}_{\be, h}} + h^{-1}\norm{\Sop \hat \P_e u}_{H^{0, \rho}_{\be, h}} + h^{1/2} \norm{\Sop u}_{H^{0, \rho}_{\be, h}} + h^k \norm{u}_{H^{0, \rho}_{\be, h}}.
\end{equation}
This is almost the desired estimate; we still want to get rid of the term with the factor $h^{1/2}$.
This is done with a standard iteration argument, which proceeds as follows.
Let $\de, \ep$ be given as in the theorem.
By what we have shown,~\eqref{eq: non-optimal critical point estimate} in particular holds if
\begin{align*}
	\WFb(\Bop_2)
		&\subseteq \Omega^{p_N}_{\de}, \\
	\Omega^{p_N}_{(1 + \ep/3)\de}
		&\subseteq \ell(\Sop)
\end{align*}
or if 
\begin{align*}
	\WFb(\Bop_2)
		&\subseteq \Omega^{p_N}_{(1 + 2\ep/3)\de}, \\
	\Omega^{p_N}_{(1 + \ep)\de}
		&\subseteq \ell(\Sop).
\end{align*}
With $\Sop$ playing the role of $\Bop_2$ in the second case, we may therefore improve~\eqref{eq: non-optimal critical point estimate} to
\begin{align*}
	&\norm{\Bop_2 u}_{H^{0, \rho}_{\be, h}} \\
		&\lesssim_{k, e} \norm{\Bop_1 u}_{H^{0, \rho}_{\be, h}} + h^{-1}\norm{\Sop \hat \P_e u}_{H^{0, \rho}_{\be, h}} + h^{1/2} \norm{\Sop u}_{H^{0, \rho}_{\be, h}} + h^k \norm{u}_{H^{0, \rho}_{\be, h}} \\
		&\lesssim_{k, e} \norm{\Bop_1 u}_{H^{0, \rho}_{\be, h}} + h^{-1}\norm{\Sop \hat \P_e u}_{H^{0, \rho}_{\be, h}} + h^k \norm{u}_{H^{0, \rho}_{\be, h}} \\
		&\qquad + h^{1/2} \left( \norm{\Bop_1 u}_{H^{0, \rho}_{\be, h}} + h^{-1}\norm{\Sop \hat \P_e u}_{H^{0, \rho}_{\be, h}} + h^{1/2} \norm{\Sop u}_{H^{0, \rho}_{\be, h}} + h^k \norm{u}_{H^{0, \rho}_{\be, h}} \right) \\
		&\lesssim_{k, e} \norm{\Bop_1 u}_{H^{0, \rho}_{\be, h}} + h^{-1}\norm{\Sop \hat \P_e u}_{H^{0, \rho}_{\be, h}} + h \norm{\Sop u}_{H^{0, \rho}_{\be, h}} + h^k \norm{u}_{H^{0, \rho}_{\be, h}},
\end{align*}
which is a $1/2$ order better in $h$ in the $\norm{S u}$ term.
With an iteration of this argument, the term $h \norm{S u}$ can eventually get absorbed into the term $h^k \norm{u}$, by microlocal regularity.
This completes the proof of the desired estimate near $p_N$.
The proof of the estimate near $p_S$ is completely analogous.
\end{proof}

\begin{remark}[A corresponding adjoint estimate] \label{rmk: adjoint near critical points}
Note that, by choosing $\de_0 > 0$ small enough in the above proof, we could similarly have shown that $\f_N \as_N$ and $\f_S \as_S$ are squares of well-defined symbols. 
This would similarly give an estimate for the \emph{adjoint} operator near the critical points, where the regularity is propagated in the opposite direction (i.e., in the direction of $-\c^\sharp$), but with the weight $\rho$ replaced by $-\rho$.
\end{remark}

\subsubsection{Between the critical points}

Proposition~\ref{prop: critical point estimates} provides the propagation estimates near the critical points of $\c^\sharp$.
We would now like to improve this to get control in a region of the form
\[
	\Omega^{\mathbb{S}^2}_{\de}
		:= \{s, \abs{r - r_0} \in [0, \de]\} \cap \{ \abs \xi \in [0, \de^3] \}.
\]
Note that
\[
	\Omega_\de^{p_N} \cup \Omega_\de^{p_S}
		\subseteq \Omega_\de^{\mathbb{S}^2}.
\]
In order to achieve this, a convenient intermediate step is to propagate the control at the north pole critical point, $p_N$, to control in a `parabolic' region of the form
\[
	\Omega_\de^{\mathrm{par}}
		:= \left\{ s \in [0, \de] \right\} \cap \left\{ \theta + \pi \de^{-2} (r-r_0)^2 \in (0, \pi - \de^3) \right\} \cap \left\{ \theta \in [\de , \pi) \right\} \cap \left\{\abs \xi \in [0, \de^3] \right\},
\]
for some small enough $\de > 0$.

\begin{remark} \label{rmk: out versus sphere regions}
Note that $\Omega_\de^{\mathrm{par}} \subset \Omega_\de^{\mathbb{S}^2}$. 
In particular, $\abs{r - r_0} \leq \de$ in $\Omega_\de^{\mathrm{par}}$.
\end{remark}

The main estimate of this subsection is given in Corollary~\ref{cor: sphere propagation}, but we first prove the following intermediate step.

\begin{prop} \label{prop: between critical point estimates}

There are $\rho_0, \de_0 > 0$ such that if $\rho \in [0, \rho_0]$, $\de \in (0, \de_0)$, $\ep \in (0, 1)$ and $e \in \left(0, \de^{10}\right)$, then the following holds:

If $\Bop_1, \Bop_2, S \in \Psi_{\be, h}^{- \infty, 0, 0}(M)$ are such that
\begin{align*}
	\left( \{ s \in [\de, (1 + \ep)\de ] \} \cup  \{ \theta \in [\de, (1 + \ep)\de ] \} \right) \cap \Omega_{(1 + \ep)\de}^{\mathbb{S}^2}
		&\subseteq \ell(\Bop_1), \\
	\WFb(\Bop_2)
		&\subseteq \Omega_{\de}^{\mathrm{par}}, \\
	\Omega_{(1 + \ep)\de}^{\mathbb{S}^2}
		&\subseteq \ell(\Sop),
\end{align*}
then~\eqref{eq: main low frequency propagation estimate} holds for all $u \in \dot C_c^\infty$, $k \in \N_0$ and $h \in (0, h_0)$, for some $h_0 > 0$ depending on $e$.
\end{prop}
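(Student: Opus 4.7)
The plan is to mimic the positive commutator strategy of Proposition~\ref{prop: critical point estimates}, adapting the commutant to propagate control along the integral curves of $\c^\sharp$ from the entry boundaries $\{s = \de\}$ and $\{\theta = \de\}$ of $\Omega_\de^{\mathrm{par}}$, on which $\Bop_1$ is elliptic by assumption, into the interior. Accordingly I would take
\[
  \as(z,\xi) := s^{-\rho}\,\chi_s(s)\,\chi_\theta(\theta)\,\chi_{\mathrm{par}}\!\bigl(\theta + \pi\de^{-2}(r-r_0)^2\bigr)\,\chi_\xi(|\xi|)\cdot\id,
\]
where $\chi_s$ is non-increasing with $\chi_s\equiv 1$ on $s\le\de$ and supported in $s\le(1+\ep)\de$; $\chi_\theta$ is non-decreasing with $\chi_\theta\equiv 0$ on $\theta\le\de$ and $\chi_\theta\equiv 1$ on $\theta\ge(1+\ep)\de$; $\chi_{\mathrm{par}}$ is non-increasing with $\chi_{\mathrm{par}}\equiv 1$ for its argument $\le \pi - 2\de^3$ and vanishing past $\pi - \de^3$; and $\chi_\xi$ is as in the proof of Lemma~\ref{le: prop at critical points}. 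All cutoffs are chosen so that the square roots of their derivatives are smooth symbols.

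I would next carry out the symbolic computation parallel to Lemma~\ref{le: prop at critical points}. The decisive identity, replacing the positivity of $\c^\sharp(|v|^2)$ near $p_N$ in that lemma, is
\[
  \c^\sharp\!\bigl(\theta + \pi\de^{-2}(r-r_0)^2\bigr) = \g^2\sin\theta + 2\pi\g\de^{-2}(r-r_0)f(r),
\]
which admits a strictly positive lower bound on the essential support of $\chi_{\mathrm{par}}'\cdot\chi_\theta$: on this support $\theta\in[\de,\pi-\de^3]$ keeps $\sin\theta$ bounded below, while $f(r)$ and $(r-r_0)$ share sign (from $f(r_0)=0$ and $f'(r_0)>0$), so the second summand is non-negative. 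Together with $\c^\sharp(s)=-s$ and $\c^\sharp(\theta)=\g^2\sin\theta>0$, this yields a decomposition
\[
  \c^\sharp(\as^2) = 2\rho\as^2 + \as\,\check{\f}_s + \as\,\check{\f}_\theta - \check{\bs}^2,
\]
with $\check{\f}_s,\check{\f}_\theta\ge 0$ essentially supported on $\{s\in[\de,(1+\ep)\de]\}$ and $\{\theta\in[\de,(1+\ep)\de]\}$ respectively, and $\check{\bs}\in\S^{-\infty,\rho,0}_{\be,h}$ a well-defined smooth symbol arising from the exit cutoff $\chi_{\mathrm{par}}'$.

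The main obstacle is that Proposition~\ref{prop: KdS skew adjoint term} only controls the skew-adjoint subprincipal symbol $\s_{\mathrm{skew}}$ of $\tLop_e$ at the critical points $p_N$, $p_S$, whereas here we need control throughout $\Omega^{\mathbb{S}^2}_{(1+\ep)\de}$. I would address this by directly computing $\de_g(\c)=-\div(\c^\sharp)$ on this region using the explicit form of $\c^\sharp$ in~\eqref{eq: c sharp in new coordinates} and the fact that $s\sqrt{|g|}$ extends to a smooth positive function on the partially compactified spacetime. A direct inspection then shows $\de_g(\c) = -\g f'(r_0) + O(\g^2) + O(\de)$ uniformly on $\Omega^{\mathbb{S}^2}_{(1+\ep)\de}$, which is bounded above by $-\g f'(r_0)/2 < 0$ when $\g$ and $\de_0$ are sufficiently small. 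The remaining terms in $\s_{\mathrm{skew}}$---those involving $\n_{\c^\sharp}\c$ and $\md G(\c,\c)$ that appear at general points where the vanishing hypotheses of Proposition~\ref{prop: tilde L e properties}(\ref{item: tilde L e skew-adjoint part}) fail---are $O(\g)$ and absorbable for small $\g$. This yields a uniform upper bound $\s_{\mathrm{skew}}\le -2\rho_0$ on $\Omega^{\mathbb{S}^2}_{(1+\ep)\de}$, producing the $-\rho_0\as^2$ term in a symbolic identity of the form
\[
  \s\!\left(\tfrac{i}{h}[\tLop_e+\Jop_e,\Aop^2] + \tfrac{i}{h}(\tLop_e^{*\mB}-\tLop_e)\Aop^2\right) = -\rho_0\as^2 - \bs^2 + \as(\f_s+\f_\theta) + \as\k,
\]
with $\bs\in\S^{-\infty,\rho,0}_{\be,h}$ absorbing $\check{\bs}$ and the excess of $-\s_{\mathrm{skew}}$ over $\rho_0$, and with $\WFb(\k)\subset\{|\xi|\in[\de^3,(1+\ep)\de^3]\}$. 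The strong restrictions $e\in(0,\de^{10})$ and $|\xi|\le(1+\ep)\de^3$ ensure that the correction $\sqrt{e}\,\r_e + \md_\xi\j_e$ contributes only at order $O(\de^6)$, easily absorbed into $\bs^2$.

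Finally, quantizing this symbolic identity and applying the commutator equality~\eqref{eq: commutator equality} together with Lemma~\ref{le: the right-hand-side low}, microlocal elliptic regularity (to absorb $\Aop\Fop_s,\Aop\Fop_\theta$ via $\Bop_1$ and $\Aop\Kop$ via Proposition~\ref{prop: ellipticity finite frequencies}), and the iteration argument at the end of the proof of Proposition~\ref{prop: critical point estimates} to eliminate the intermediate $h^{1/2}\norm{\Sop u}$ term, yields the desired estimate~\eqref{eq: main low frequency propagation estimate}.
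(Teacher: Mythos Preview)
Your positive-commutator strategy is structurally close to the paper's, but there is a genuine gap in how you handle the interior of $\Omega_\de^{\mathrm{par}}$. Since your $\chi_{\mathrm{par}}$ is a standard cutoff, identically $1$ on most of the region, your $\check{\bs}^2$ is supported only near the exit boundary; in the interior, the sole source of negativity in the symbolic identity must be the skew-adjoint contribution $\s_{\mathrm{skew}}\cdot\as^2$. You assert $\s_{\mathrm{skew}}\le -2\rho_0$ throughout $\Omega^{\mathbb{S}^2}_{(1+\ep)\de}$, claiming that the corrections to the critical-point formula in Proposition~\ref{prop: tilde L e properties}(\ref{item: tilde L e skew-adjoint part}) are $O(\g)$. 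But this region contains points with arbitrary $\theta\in[0,\pi]$, far from $p_N$ and $p_S$; at such points neither $\n_{\c^\sharp}\c$ nor $\md G(\c,\c)$ vanishes (a direct check shows both are of order one when $a\neq 0$, not $O(\g)$), and the Remark following Proposition~\ref{prop: tilde L e properties} warns that the subprincipal part of $\tLop_e$ then grows like $e^{-1}$ as $e\to 0$. With $e\in(0,\de^{10})$, you have no uniform control of $\s_{\mathrm{skew}}$, and your commutant has nothing in the interior to absorb it.

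The paper's fix is to replace your $\chi_{\mathrm{par}}$ by the exponential weight $\Psi(t)=\exp\bigl(\digamma/(t-(\pi-\de^3))\bigr)$ carrying a large free parameter $\digamma$. The point is that $\Psi'(t)=-\digamma(t-(\pi-\de^3))^{-2}\Psi(t)$ is strictly negative \emph{throughout} the support of $\Psi$, not only near the exit, with magnitude bounded below by $(\digamma/\pi^2)\Psi$. This produces in the commutator symbol a term $-\as^2\cdot\tfrac{\digamma}{(\ldots)^2}\cdot(\text{positive endomorphism})$ which, for $\digamma$ large (chosen after $e$), dominates both the $+2\rho$ weight contribution and the skew-adjoint term, regardless of the latter's sign or size. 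No sign information on $\s_{\mathrm{skew}}$ away from the critical points is required. This large-parameter mechanism is what you are missing; once it is in place, the rest of your outline (quantization, Lemma~\ref{le: the right-hand-side low}, elliptic absorption of the $\Fop$ and $\Kop$ terms, and the $h^{1/2}$-iteration) goes through essentially as you describe.
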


\begin{remark}
The $\de^3$ in the definitions of $\Omega_\de^{\mathbb{S}^2}$ and $\Omega_\de^{\mathrm{par}}$ could have been replaced by $\de^k$ for any $k \geq 3$, and the $\de^{10}$ in Proposition~\ref{prop: between critical point estimates} could have been replaced by $\de^k$ for any $k \geq 10$, if the proof of Proposition~\ref{prop: between critical point estimates} was adjusted correspondingly.
It is just convenient to have explicit expressions that localize sufficiently, c.f.~Remark~\ref{rmk: de k would work}.
\end{remark}

Combining Proposition~\ref{prop: critical point estimates} and Proposition~\ref{prop: between critical point estimates}, we get conclusions at the `full $2$-sphere' at $r - r_0 = s = \abs \xi = 0$.

\begin{cor} \label{cor: sphere propagation}
There are $\rho_0, \de_0 > 0$ such that if $\rho \in [0, \rho_0]$ and $\de \in (0, \de_0)$, then there is an $\ep \in (0, 1)$ such that if $e \in \left(0, \ep \right)$, then the following holds:

If $\Bop_1, \Bop_2, \Sop \in \Psi_{\be, h}^{- \infty, 0, 0}(M)$ are such that
\begin{align}
	\{ s \in [\ep \de, \de] \} \cap \Omega_{\de}^{\mathbb{S}^2}
		&\subseteq \ell(\Bop_1), \label{eq: assumption region in sphere propagation} \\
	\WFb(\Bop_2)
		&\subseteq \Omega_{\ep \de}^{\mathbb{S}^2}, \nonumber \\
	\Omega_{\de}^{\mathbb{S}^2}
		&\subseteq \ell(\Sop), \nonumber
\end{align}
then~\eqref{eq: main low frequency propagation estimate} holds for all $u \in \dot C_c^\infty$, $k \in \N_0$ and $h \in (0, h_0)$, for some $h_0 > 0$ depending on $e$.
\end{cor}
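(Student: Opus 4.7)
The plan is to chain three successive propagation estimates—two from Proposition~\ref{prop: critical point estimates} (applied near $p_N$ and near $p_S$) and one from Proposition~\ref{prop: between critical point estimates}—so that the union of the three controlled subregions $\Omega^{p_N}_{\delta_N}\cup\Omega^{\mathrm{par}}_{\delta_m}\cup\Omega^{p_S}_{\delta_S}$ covers $\Omega_{\ep\delta}^{\mathbb{S}^2}$. Since the damping vector field $\c^\sharp$ is a source at $p_N$ and a sink at $p_S$, and acts on the boundary sphere $\{s=0,\,r=r_0\}$ as a flow from $p_N$ to $p_S$ (cf.\ Remark~\ref{rmk: integral curves}), the natural propagation order is to control a neighborhood of $p_N$ first, the parabolic region second, and a neighborhood of $p_S$ last.

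First I would pick auxiliary scales $\delta_S\ll\delta_m\ll\delta_N$ and corresponding slab parameters $\epsilon_N,\epsilon_m,\epsilon_S\in(0,1)$, all small enough (as a function of $\delta$) so that $(1+\epsilon_*)\delta_*\leq\delta^3$, which matches the frequency bound in $\Omega_\delta^{\mathbb{S}^2}$. The first step then applies Proposition~\ref{prop: critical point estimates} at $p_N$: its source slab $\{s\in[\delta_N,(1+\epsilon_N)\delta_N]\}\cap\Omega^{p_N}_{(1+\epsilon_N)\delta_N}$ sits inside the hypothesis ellipticity set $\{s\in[\ep\delta,\delta]\}\cap\Omega^{\mathbb{S}^2}_{\delta}$ provided $\ep\delta\leq\delta_N$ and $(1+\epsilon_N)\delta_N\leq\delta^3$, yielding the estimate~\eqref{eq: main low frequency propagation estimate} on $\Omega^{p_N}_{\delta_N}$.

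Next I apply Proposition~\ref{prop: between critical point estimates} at scale $\delta_m$: its source decomposes as an $s$-slab and a $\theta$-slab inside $\Omega^{\mathbb{S}^2}_{(1+\epsilon_m)\delta_m}$. The $s$-slab part is covered by the corollary's hypothesis, while the $\theta$-slab $\{\theta\in[\delta_m,(1+\epsilon_m)\delta_m]\}$ sits near $p_N$ and is covered by the conclusion of the first step provided $\delta_m\leq\delta_N$; so the estimate holds on $\Omega^{\mathrm{par}}_{\delta_m}$. The third step applies Proposition~\ref{prop: critical point estimates} at $p_S$: its annular source $\{|w|\in[\delta_S,(1+\epsilon_S)\delta_S]\}\cap\Omega^{p_S}_{(1+\epsilon_S)\delta_S}$ splits into the portion with $s\sim\delta_S$ (covered by the hypothesis since $\delta_S\leq\delta$) and the portion with $x^2+y^2\sim\delta_S^2$ (covered by the parabolic region just controlled, which, in view of the definition of $\Omega_{\delta_m}^{\mathrm{par}}$, requires tuning so that $\delta_m^3\lesssim\delta_S\lesssim\delta_m^2$). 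This produces the estimate on $\Omega^{p_S}_{\delta_S}$.

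Finally, $\ep\in(0,1)$ is taken small enough that $\Omega_{\ep\delta}^{\mathbb{S}^2}\subset\Omega^{p_N}_{\delta_N}\cup\Omega^{\mathrm{par}}_{\delta_m}\cup\Omega^{p_S}_{\delta_S}$ and that $\ep<\min(\delta_N^3,\delta_m^{10},\delta_S^3)$, so that the single bound $e<\ep$ in the hypothesis implies each of the $e$-bounds required by the two propositions (respectively $e\in(0,\delta_N^3)$ and $e\in(0,\delta_S^3)$ for the $p_{N/S}$ estimates, and $e\in(0,\delta_m^{10})$ for the parabolic estimate). The main obstacle is the bookkeeping of all these compatible scales; there is no new microlocal input, and the proof reduces to the verification that the source region of each successive proposition lies in the union of the corollary's hypothesis with the conclusion of the preceding step, followed by a standard summation of the three propagation estimates and the choice of a common elliptic dilation of $\Bop_2$ covering each target piece.
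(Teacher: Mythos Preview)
Your proposal is correct and follows essentially the same three-step chain as the paper's proof: first Proposition~\ref{prop: critical point estimates} near $p_N$, then Proposition~\ref{prop: between critical point estimates} to reach the parabolic region, and finally Proposition~\ref{prop: critical point estimates} near $p_S$, with the paper making the concrete choice $\de_{\mathrm{par}}^2 = 5\pi\de_S$ (consistent with your scaling $\de_m^3 \lesssim \de_S \lesssim \de_m^2$) and then taking $\ep$ small enough so that $\Omega_{\ep\de}^{\mathbb{S}^2}$ is covered by the union and $\ep < \min(\de_N^3,\de_{\mathrm{par}}^{10},\de_S^3)$.
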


Let us first show how Corollary~\ref{cor: sphere propagation} follows from Proposition~\ref{prop: between critical point estimates} and Proposition~\ref{prop: critical point estimates}.

\begin{proof}[Proof of Corollary~\ref{cor: sphere propagation}]
Let us first choose $\de_0 > 0$ to have the same value as in Proposition~\ref{prop: critical point estimates} and let $\de \in (0, \de_0)$.
If $\de_N \in (0, \de/2)$, then $\Omega_{(1 + \ep_N) \de_N} \subseteq \Omega^{\mathbb{S}^2}_{\de}$ and $\de_N \in (0, \de_0)$, so Proposition~\ref{prop: critical point estimates} implies that~\eqref{eq: main low frequency propagation estimate} holds if
\begin{equation} \label{eq: B2 in pN ball}
	\WFb(\Bop_2)
		\subseteq \Omega_{\de_N}^{p_N}
\end{equation}
for any $\de_N \in (0, \de_0)$ and $e \in \left(0, (\de_N)^3\right)$.

In the next step, we propagate this estimate further using Proposition~\ref{prop: between critical point estimates}, i.e., we let this $\Bop_2$ in~\eqref{eq: B2 in pN ball} play the role of $\Bop_1$ in Proposition~\ref{prop: between critical point estimates}.
For this, note first that if $\de_{\mathrm{par}} \in \left(0, \de_N/2\right)$ and $\ep_{\mathrm{par}} \in (0, 1)$, then
\begin{align*}
	& \left( \{ s \in [\de_{\mathrm{par}}, (1 + \ep_{\mathrm{par}})\de_{\mathrm{par}} ] \} \cup  \{ \theta \in [\de_{\mathrm{par}}, (1 + \ep_{\mathrm{par}})\de_{\mathrm{par}} ] \} \right) \cap \Omega_{\left(1 + \ep_{\mathrm{par}}\right)\de_{\mathrm{par}}}^{\mathbb{S}^2} \\
		& \qquad \subseteq \Omega_{\de_N}^{p_N} \cup \left( \{ s \in [\ep \de, \de] \} \cap \Omega_{\de}^{\mathbb{S}^2} \right),
\end{align*}
if $\ep > 0$ is small enough.
We do have control in this region by the conclusion~\eqref{eq: B2 in pN ball} and by the a priori assumption~\eqref{eq: assumption region in sphere propagation}. 
Thus, Proposition~\ref{prop: between critical point estimates} implies~\eqref{eq: main low frequency propagation estimate} if
\begin{equation} \label{eq: B2 in parabolic}
	\WFb(\Bop_2)
		\subseteq \Omega_{\de_{\mathrm{par}}}^{\mathrm{par}}
\end{equation}
for any $\de_{\mathrm{par}} \in \left(0, \de_N/2 \right)$ and $e \in \left(0, \left(\de_{\mathrm{par}} \right)^{10}\right)$.

It remains to propagate this control into a region of the form $\Omega_{\de_S}^{p_S}$, for a sufficiently small $\de_S$, using Proposition~\ref{prop: between critical point estimates} again, however, this time with the statement near $p_S$.
For this, we need to show that the south pole assumption region in Proposition~\ref{prop: between critical point estimates} is contained in the union of $\Omega_{\de_{\mathrm{par}}}^{\mathrm{par}}$ and the a priori assumption~\eqref{eq: assumption region in sphere propagation}, if $\de_{\mathrm{par}}$, $\de_S$ and $\ep$ are small enough.
The south pole assumption region in Proposition~\ref{prop: between critical point estimates}, near $p_S$, is given for some $\ep_S \in (0, 1)$ by
\begin{align*}
	&\{ \abs w \in [\de_S, (1 + \ep_S)\de_S] \} \cap \Omega^{p_S}_{(1 + \ep_S)\de_S} \\
		&\qquad \subseteq \left\{ \sqrt{s^2 + (\pi - \theta)^2} \in \left[ \frac{\de_S}2, 2 \de_S \right] \right\} \cap \Omega_{(1 + \ep_S) \de_S}^{p_S} \\
		&\qquad \subseteq \left\{ s \in \left[ \frac{\de_S}4, 2 \de_S \right] \right\} \cap \Omega_{(1 + \ep_S) \de_S}^{p_S} \\
		&\qquad \qquad \cup \left\{ \pi - \theta \in \left[ \frac{\de_S}4, 2 \de_S \right] \right\} \cap \Omega_{(1 + \ep_S) \de_S}^{p_S},
\end{align*}
if $\de_S$ is small enough.
Firstly, the set
\[
	\left\{ s \in \left[ \frac{\de_S}4, 2 \de_S \right] \right\} \cap \Omega_{(1 + \ep_S) \de_S}^{p_S}
\]
is covered by~\eqref{eq: assumption region in sphere propagation}, if $\ep$ is small enough.
Secondly, 
\begin{align*}
	&\left\{ \pi - \theta \in \left[ \frac{\de_S}4, 2 \de_S \right] \right\} \cap \Omega_{(1 + \ep_S) \de_S}^{p_S} \\
		&\subseteq \left\{ \pi - \theta \in \left[ \frac{\de_S}4, 2 \de_S \right] \right\} \cap \{s \in [0, 2 \de_S]\} \cap \{ \abs{r - r_0} \in [0, 2\de_S] \} \cap \{ \abs{\xi} \in [0, 2\de_S]\}.
\end{align*}
We claim that this set is contained in $\Omega_{\de_{\mathrm{par}}}^{\mathrm{par}}$ if $\de_{\mathrm{par}}$ and $\de_S$ are small enough.
For this we need to show that if $\pi - \theta \in \left[ \frac{\de_S}4, 2 \de_S \right]$ and $\abs{r - r_0} \in [0, 2\de_S]$, then 
\[
	\theta + \pi \left( \de_{\mathrm{par}} \right)^{-2} (r - r_0)^2
		< \pi - (\de_{\mathrm{par}})^3
\]
or equivalently
\[
	 \pi \left( \de_{\mathrm{par}} \right)^{-2} (r - r_0)^2 + (\de_{\mathrm{par}})^3
		< \pi - \theta
\]
for sufficiently small $\de_{\mathrm{par}}$ and $\de_S$.
Since $\pi - \theta \geq \frac{\de_S}4$ and $(r - r_0)^2 \leq (\de_S)^2$ by Remark~\ref{rmk: out versus sphere regions}, it suffices to show that
\[
	 \pi \left( \de_{\mathrm{par}} \right)^{-2} (\de_S)^2 + (\de_{\mathrm{par}})^3
		< \frac {\de_S}4 
\]
for sufficiently small $\de_{\mathrm{par}}$ and $\de_S$.
Choosing now $\de_{\mathrm{par}}^2 = 5 \pi \de_S$, this becomes
\[
	 \frac15 \de_S + (5 \pi \de_S)^{3/2}
		< \frac {\de_S}4,
\]
which is satisfied if $\de_S > 0$ is small enough.
We may therefore apply Proposition~\ref{prop: between critical point estimates}, the part near $p_S$, and conclude~\eqref{eq: main low frequency propagation estimate} if
\begin{equation} \label{eq: B2 in pS ball}
	\WFb(\Bop_2)
		\subseteq \Omega_{\de_S}^{p_S}
\end{equation}
for any sufficiently small $\de_S$ and if $e \in \left(0, (\de_S)^3\right)$.
In summary, if $\ep > 0$ is chosen such that
\[
	\ep \de 
		< \min \left(\de_S, \de_{\mathbb{S}^2}, \de_N \right),
\]
and such that
\[
	\ep 
		< \min \left((\de_S)^3, (\de_{\mathrm{par}})^{10}, (\de_N)^3 \right),
\]
then by~\eqref{eq: B2 in pN ball},~\eqref{eq: B2 in parabolic} and~\eqref{eq: B2 in pS ball}, we conclude that~\eqref{eq: main low frequency propagation estimate} holds under the assumption that
\[
	\WFb(\Bop_2)
		\subseteq \Omega_{\ep \de}^{\mathbb{S}^2} \subseteq \Omega_{\de_S}^{p_S} \cup \Omega_{\de_{\mathbb{S}^2}}^{\mathrm{par}} \cup \Omega_{\de_N}^{p_N}
\]
and $e \in \left(0, \ep \right)$, as claimed.
\end{proof}

We now continue towards the proof of Proposition~\ref{prop: between critical point estimates}.
Given any $\de > 0$ and $\ep \in (0, 1)$, fix a $\chi \in C_c^\infty(\R)$ satisfying
\begin{itemize}
	\item $\chi \geq 0$,
	\item $\supp(\chi) \subseteq [-(1 + \ep)\de, (1 + \ep)\de]$,
	\item $\chi(s) \equiv 1$ for $\abs s \leq \de$, 
	\item $\left(-s\chi'(s)\chi(s)\right)^{\frac12} \in C^\infty_c(\R)$.
\end{itemize}
In particular, $s \chi'(s) \leq 0$ for all $s \in \R$.
We will also use $\tilde \chi(s) := \chi \left( \frac s {\de^2} \right)$, which has similar properties.
For any fixed $\rho \geq 0$, define the symbol $\as \in \S^{ -\infty, \rho, 0}_{\be, h}(M; \bT^* M)$ by
\[
	\as(z, \xi)
		:= s^{-\rho}\chi(s) \left( 1 - \chi(\theta) \right) \Psi \left( \theta + \pi \de^{-2}(r - r_0)^2 \right)\tilde \chi(\abs{\xi}),
\]
for $s > 0$, where
\[
	\Psi(t)
		:= 
		\begin{cases}
			e^{\frac \digamma {t-\left(\pi - \de^3\right)}}, & t < \pi - \de^3 \\
			0, & t \geq \pi - \de^3,
		\end{cases}
\]
for any $\digamma > 0$, which we eventually will choose sufficiently large.
(The term $\de^3$ in the definition of $\Psi(t)$ could in fact be replaced by $\de^k$, for any $k \geq 3$; the point is just that we get sufficiently close to $p_S$, c.f.~the proof of Corollary~\ref{cor: sphere propagation}.)
Note that
\[
	\esssupp(\as)
		= \Omega_{\de}^{\mathrm{par}}.
\]
We define the commutator to be
\[
	\Aop
		:= \Re\left( \Op_h\left( \as \right) \right).
\]
\begin{lemma} \label{le: commutator equality between critical points}
There are $\rho_0, \de_0 > 0$ such that if $\rho \in [0, \rho_0]$, $\de \in (0, \de_0)$, $\ep \in (0, 1)$, $e \in \left(0, \de^{10}\right)$, then there is a $\digamma_0 > 0$ such that if $\digamma > \digamma_0$, then the following holds:

There are (endomorphism-valued) symbols $\be, \f, \k \in \S_{\be, h}^{-\infty, \rho, 0}$ such that
\begin{align*}
	\esssupp(\f)
		&\subseteq \left( \{ s \in [\de, (1 + \ep)\de ] \} \cup  \{ \theta \in [\de, (1 + \ep)\de ] \} \right) \cap \Omega_{(1 + \de)\ep}^{\mathbb{S}^2} \\
	\esssupp(\k)
		&\subseteq \{ \abs{\xi} \in [\de^3, (1 + \ep)^3\de^3] \} \cap \Omega_{(1 + \de)\ep}^{\mathbb{S}^2},
\end{align*}
such that
\begin{equation} \label{eq: symbol equality parabolic}
	\s \left( \frac i h \left[\tLop_e + \Jop_e, \Aop^2\right] + \frac i h \left( \tLop_e^{*\mB} - \tLop_e \right) \Aop^2 \right)
		= - \be^2 + \f \as + \k \as,
\end{equation}
and $\be$ is elliptic on the interior of $\esssupp(\as)$.
\end{lemma}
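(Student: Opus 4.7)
The strategy mirrors Lemma~\ref{le: prop at critical points}: I will construct the positive commutator $\Aop^2$ whose commutator with $\tLop_e+\Jop_e$, plus the skew-adjoint contribution, has principal symbol of the form $-\be^2+\f\as+\k\as$. The crucial new feature here, compared to the critical-point case, is that $\c^\sharp\neq 0$ throughout $\esssupp(\as)$; consequently the dominant negative term will come from differentiating the exponential weight $\Psi(u)$ along $\c^\sharp$, rather than from the skew-adjoint part of $\tLop_e$ (which is only of size $O(\sqrt e)$ away from the critical points and so cannot absorb the growth from the weight $s^{-\rho}$).

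Step~1 is to compute $\sigma\!\left(\tfrac{i}{h}[\tLop_e+\Jop_e,\Aop^2]\right)=\{\tl_e+\j_e,\as^2\}$. Using the linearity of $\tl_e$ in $\xi$ and the cubic vanishing of $\j_e$ at the zero section, the leading piece of $\md_\xi(\tl_e+\j_e)\cdot\md_z\as^2$ reduces to $(\id+\c\otimes\c^\sharp/G(\c,\c))\,\c^\sharp(\as^2)$; the errors from $\sqrt{2e}\,\r_e$ and from $\j_e$ are of order $\sqrt{e}+\de^6$ on $\esssupp(\as)\subset\{|\xi|\lesssim\de^3\}$, hence negligible under $e<\de^{10}$. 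The transverse piece $-\md_z(\tl_e+\j_e)\cdot\md_\xi\as^2$, supported where $\tilde\chi'(|\xi|)\neq 0$, yields the $\k\as$ term with the stated essential support.

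Step~2 is the key computation. Direct differentiation gives
\[
\c^\sharp(\as)=\rho\as+\bigl(\text{cutoff terms on }\{\chi'(s)\neq 0\}\cup\{\chi'(\theta)\neq 0\}\bigr)+\Psi'(u)\,\c^\sharp(u)\,\as/\Psi(u),
\]
with $\c^\sharp(u)=\kappa^2\sin\theta+2\pi\kappa\de^{-2}(r-r_0)f(r)$. The geometric linchpin is that $f(r_0)=0$ and $f'(r_0)>0$: the cross-term equals $2\pi\kappa f'(r_0)\de^{-2}(r-r_0)^2(1+O(\de))\geq 0$ on $\esssupp(\as)$ (where $|r-r_0|\leq\de$), and the factor $(1-\chi(\theta))\Psi(u)$ forces $\theta\in[\de,\pi-\de^3]$, so $\sin\theta\geq c\de^3$. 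Hence $\c^\sharp(u)\geq c\kappa^2\de^3>0$ uniformly on $\esssupp(\as)$, and $\Psi'(u)/\Psi(u)=-\digamma/(\pi-\de^3-u)^2\leq-\digamma/\pi^2$ yields the dominant negative contribution $\Psi'(u)\c^\sharp(u)\,\as/\Psi(u)\leq-c\digamma\kappa^2\de^3\as/\pi^2$.

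Step~3 is the assembly. Choosing $\digamma\geq\digamma_0$ sufficiently large makes this negative contribution dominate both the weight term $\rho\as$ and the pointwise-bounded, $\digamma$-independent principal symbol of $\tfrac{i}{h}(\tLop_e^{*\mB}-\tLop_e)\Aop^2$ (by Proposition~\ref{prop: tilde L e properties}). The combined principal symbol then takes the form $-Q\as^2+\f\as+\k\as$ with $Q$ a smooth, $\mB$-positive-definite endomorphism bounded below on $\esssupp(\as)$; setting $\be$ to be the $\mB$-positive square root of $Q\as^2$ yields the desired decomposition, and ellipticity of $\be$ on the interior of $\esssupp(\as)$ follows from the strict positivity of $Q$ there. \emph{The main obstacle is ensuring smoothness of $\be$ up to the boundary $\{u=\pi-\de^3\}$}: the algebraic singularity $1/(\pi-\de^3-u)^2$ in $Q$ is tamed by the infinite-order vanishing of the exponential weight $\Psi^2(u)=e^{-2\digamma/(\pi-\de^3-u)}$ contained in $\as^2$, so that $Q\as^2\in\S_{\be,h}^{-\infty,2\rho,0}$ is a smooth non-negative symbol whose $\mB$-square root lies in $\S_{\be,h}^{-\infty,\rho,0}$.
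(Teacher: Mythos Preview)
Your proposal is correct and follows essentially the same route as the paper's proof: both compute $\{\tl_e+\j_e,\as^2\}$ via $\tl_e(\md_z\as^2)+\as\k$, identify the dominant negative term as coming from $\Psi'(u)\,\c^\sharp(u)$ with $\c^\sharp(u)=\kappa^2\sin\theta+2\pi\kappa\de^{-2}(r-r_0)f(r)\geq c\de^3$ on $\esssupp(\as)$, and then take $\digamma$ large to absorb the $\rho$-weight term and the bounded errors from $\sqrt{2e}\,\r_e$, $\md_\xi\j_e$, and the skew-adjoint piece. Your treatment of the smoothness of $\be$ at $\{u=\pi-\de^3\}$ via the infinite-order vanishing of $\Psi$ is exactly the mechanism the paper relies on (implicitly, by writing $\be^2=\as^2\cdot(\text{positive endomorphism})$).
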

\begin{proof}
Analogous to~\eqref{eq: commutator term critical point}, we note that
\begin{align}
	\s \left( \frac i h [\tLop_e + \Jop_e, \Aop] \right) 
		&= \ \left( \left( \id + \frac{\c \otimes \c^\sharp}{G(\c, \c)} \right)\c^\sharp + \sqrt{2e} \cdot \r_e + \md_\xi \j_e \right)\left(\md_z \as^2 \right) + \as \k,
\end{align}
where 
\[
	\k
		:= - 2 \md_z \left( \tl_e + \j_e \right) \cdot \md_\xi a,
\]
so that indeed $\supp(\k) \subset \{ \abs \xi \in [\de^3, (1 + \ep)^3\de^3] \}$.
We first compute
\begin{align*}
	&\md_z \as \\*
		&\quad = \left( \rho \as - s^{-\rho} s \chi'(s) \left( 1 - \chi(\theta) \right) \Psi \left( \theta + \pi \de^{-2}(r - r_0)^2 \right) \tilde \chi(\abs \xi) \right) \left( - \frac{\md s}s \right) \\
		&\quad \qquad - s^{-\rho} \chi(s) \chi'(\theta) \Psi \left( \theta + \pi \de^{-2}(r - r_0)^2 \right) \tilde \chi(\abs \xi) \md \theta \\
		&\quad \qquad + s^{-\rho} \chi(s) \left( 1 - \chi(\theta) \right) \Psi' \left( \theta + \pi \de^{-2}(r - r_0)^2 \right) \tilde \chi(\abs \xi) \left( 2 \pi \de^{-2}(r - r_0) \md r + \md \theta \right).
\end{align*}
By Equation~\eqref{eq: c sharp in new coordinates}, we have
\begin{align*}
	\c^\sharp \left( \md_z \as \right) 
		&= \left( - s \d_{s} + a \frac{r^2 - r_0^2}{(r^2 + a^2)(r_0^2 + a^2)} \d_{\psi} + \a f(r) \d_r + \b \sin(\theta) \d_\theta \right) \as \\
		&\quad = \rho \as - s^{-\rho} s \chi'(s) \left( 1 - \chi(\theta) \right) \Psi \tilde \chi(\abs \xi) \\
		&\quad \qquad - s^{-\rho} \chi(s) \chi'(\theta) \Psi \tilde \chi(\abs \xi) \b \sin(\theta) \\
		&\quad \qquad + s^{-\rho} \chi(s) \left( 1 - \chi(\theta) \right) \Psi' \tilde \chi(\abs \xi) \left(2\a\pi \de^{-2} (r - r_0)f(r) + \b \sin(\theta) \right).
\end{align*}
Using that $\Psi'(t) = - \frac \digamma {(t - \left(\pi - \de^3 \right))^2} \Psi(t)$, we conclude that
\begin{align*}
	\left( \left( \id + \frac{\c \otimes \c^\sharp}{G(\c, \c)} \right)\c^\sharp + \sqrt{2e} \cdot \r_e + \md_\xi \j_e \right)\left(\md_z \as^2 \right)
		&= - \be^2 + \as \f,
\end{align*}
where
\begin{align*}
	\f
		&= - 2 s^{-\rho} s \chi'(s) \left( 1 - \chi(\theta) \right) \Psi \tilde \chi(\abs \xi) \\*
		&\qquad \quad \cdot \left( \id + \frac{\c \otimes \c^\sharp}{G(\c, \c)} + \left( \sqrt{2e} \cdot \r_e + \md_\xi \j_e \right) \left( - \frac{\md s}s \right) \cdot \id \right) \\
		&\qquad - s^{-\rho} \chi(s) \chi'(\theta) \tilde \Psi \chi(\abs \xi) \\*
		&\qquad \quad \cdot \left( \b \sin(\theta) \left( \id + \frac{\c \otimes \c^\sharp}{G(\c, \c)} \right) + \left( \sqrt{2 e} \cdot \r_e + \md_\xi \j_e \right)(\md \theta) \right), \\
	\be^2
		&= \as^2 \Bigg( - 2 \rho \left( \id + \frac{\c \otimes \c^\sharp}{G(\c, \c)} + \left( \sqrt{2e} \cdot \r_e + \md_\xi \j_e \right) \left( - \frac{\md s}s \right) \cdot \id \right) \\
		&\qquad \qquad + \frac \digamma {\left( \theta + \pi \de^{-2}(r - r_0)^2 - \left(\pi - \de^3\right) \right)^2} \\
		&\qquad \qquad \qquad \cdot \bigg( \left( 2\a\pi \de^{-2}(r - r_0)f(r) + \b \sin(\theta) \right) \left( \id + \frac{\c \otimes \c^\sharp}{G(\c, \c)} \right) \\*
		&\qquad \qquad \qquad + \left( \sqrt{2e} \cdot \r_e + \md_\xi \j_e \right) \left( 2\pi \de^{-2}(r - r_0) \md r + \md \theta \right) \cdot \id \bigg) \Bigg).
\end{align*}
Note that $\f$ clearly has the desired support. 
We now claim that $\be^2$ really is the square of a well-defined symbol $\be \in \S^{-\infty, \rho, 0}_{\be, h}$, if $\de_0$ is sufficiently small.
Since the factor $\as^2$ is the square of the symbol $\as$,  it remains to show that the endomorphism is positive definite, if $\digamma > 0$ is large enough, so that we can take the square root.
Similarly as in the proof of Lemma~\ref{le: prop at critical points}, we may estimate
\begin{align*}
	& \abs{\left( \sqrt{2e} \cdot \r_e + \md_\xi \j_e \right) \left( 2\pi \de^{-2}(r - r_0) \md r + \md \theta \right) \cdot \id} \\
		&\qquad \leq C \left( \sqrt{2e} + \abs{\xi}^2 \right) \abs{ \left( 2\pi \de^{-2}(r - r_0) \md r + \md \theta \right)} \id \\
		&\qquad \leq C \de^{-1} \left( \de^5 + \de^6 \right) \id \\
		&\qquad \leq C \de^4 \cdot \id,
\end{align*}
in $\esssupp(\as)$.
On the other hand, we conclude that
\begin{align*}
	\left( 2\a \pi \de^{-2}(r - r_0)f(r) + \b \sin(\theta) \right) \left( \id + \frac{\c \otimes \c^\sharp}{G(\c, \c)} \right)
		&\geq \b \sin(\theta) \left( \id + \frac{\c \otimes \c^\sharp}{G(\c, \c)} \right) \\
		&\geq c \de^3 \cdot \id,
\end{align*}
for some $c > 0$, since $\theta > \de$ and $\pi - \theta < \de^3$ in the support of $\as$.
Therefore, if $\de_0 > 0$ is sufficiently small,
\begin{align*}
	&\left( 2\a \pi \de^{-2} (r - r_0)f(r) + \b \sin(\theta) \right) \left( \id + \frac{\c \otimes \c^\sharp}{G(\c, \c)} \right) \\*
		&\qquad + \left( \sqrt{2e} \cdot \r_e + \md_\xi \j_e \right) \left( 2\pi \de^{-2}(r - r_0) \md r + \md \theta \right) \cdot \id \\
		&\qquad \geq c \de^3 \cdot \id,
\end{align*}
for some $c > 0$.
Since 
\[
	 \abs{- 2 \rho \left( \id + \frac{\c \otimes \c^\sharp}{G(\c, \c)} + \left( \sqrt{2e} \cdot \r_e + \md_\xi \j_e \right) \left( - \frac{\md s}s \right) \cdot \id \right) }
\]
is uniformly bounded for $e \in (0, 1]$, it follows that $\be^2$ is indeed positive in $K$, if $\de_0 > 0$ is sufficiently small and $\digamma > 0$ is large enough (depending on $\de$).
We conclude that $\be \in \S^{-\infty, \rho, 0}_{\be, h}$ if $\de_0 > 0$ is sufficiently small and $\digamma > 0$ is large enough. 
\end{proof}

\begin{proof}[Proof of Proposition~\ref{prop: between critical point estimates}]

Quantizing the symbol equality~\eqref{eq: symbol equality parabolic}, we get
\begin{align*}
	\frac i h 
		\left[\tLop_e + \Jop_e, \Aop^2\right] + \frac i h \left( \tLop_e^{*\mB} - \tLop_e \right) \Aop^2
		= - \Bop^* \Bop + \Aop \Fop + \Aop \Kop + h \Rop,
\end{align*}
where $\Bop \in \Psi^{-\infty, \rho, 0}$  is elliptic on $\WFb(\Aop)$, $\Fop, \Kop \in \Psi^{-\infty, \rho, 0}_{\be, h}$ and $\Rop \in \Psi^{-\infty, 2 \rho, 0}_{\be, h}$ and
\begin{align*}
	\WFb(\Fop)
		&\subseteq \left( \{ s \in [\de, (1 + \ep)\de ] \} \cup  \{ \theta \in [\de, (1 + \ep)\de ] \} \right) \cap \Omega_{(1 + \de)\ep}^{\mathbb{S}^2}, \\
	\WFb(\Kop)
		&\subseteq \{ \abs{\xi} \in [\de^2, (1 + \ep)^2\de^2] \} \cap \Omega_{(1 + \ep)\de}^{\mathbb{S}^2}, \\
	\WFb(\Rop)
		&\subseteq \Omega_{(1 + \ep)\de}^{\mathbb{S}^2}.
\end{align*}
It follows that
\begin{align*}
	&\ldr{\frac i {2h} \left( \left[\tLop_e + \Jop_e, \Aop_N^2\right] + \left( \tLop_e^{*\mB} - \tLop_e \right) \Aop_N^2\right)u, u} \\
		&\quad = - \norm{\Bop u}_{L^2}^2 + \ldr{\Fop u, \Aop u} + \ldr{\Kop u, \Aop u} + h \ldr{\Rop u, u} \\
		&\quad \leq - c \norm{\Aop u}_{L^2}^2 + \left( \norm{\Fop u}_{L^2} + \norm{\Kop u}_{L^2} \right) \norm{\Aop u}_{L^2} + h \ldr{\Rop u, u} + h^{2k} \norm{u}_{H^{0, \rho}_{\be, h}}^2,
\end{align*}
for some $c > 0$, since $\Bop$ is elliptic on $\WFb(\Aop)$.
Since $\Sop \hat \P_e$ is elliptic on $\WFb(\Kop)$, by Proposition~\ref{prop: ellipticity finite frequencies}, and since $\Bop_1$ is elliptic on $\WFb(\Fop)$, by assumption, microlocal elliptic regularity theory implies that
\begin{align*}
	\norm{\Kop u}_{L^2}
		&\lesssim_{k, e} \norm{\Sop \hat \P_e u}_{H^{0, \rho}_{\be, h}} + h^k \norm{u}_{H^{0, \rho}_{\be, h}}, \\
	\norm{\Fop u}_{L^2}
		&\lesssim_{k, e} \norm{\Bop_1 u}_{H^{0, \rho}_{\be, h}} + h^k \norm{u}_{H^{0, \rho}_{\be, h}},
\end{align*}
for all $k$.
Moreover,
\[
	\ldr{\Rop u, u}_{L^2}
		= \ldr{s^{\rho} \Rop_N u, s^{-\rho}u}_{L^2}
		\lesssim \norm{\Sop u}_{H^{0, \rho}_{\be, h}}^2 + h^{2k} \norm{u}_{H^{0, \rho}_{\be, h}}^2.
\]
Combining~\eqref{eq: low order commutator} with~\eqref{eq: commutator equality} and applying Lemma~\ref{le: the right-hand-side low}, we conclude the estimate
\begin{align*}
	\norm{\Aop u}_{L^2}^2
		&\lesssim_{k, e} \left( \norm{\Bop_1 u}_{H^{0, \rho}_{\be, h}} + h^{-1}\norm{\Sop \hat \P_e u}_{H^{0, \rho}_{\be, h}} + h^k \norm{u}_{H^{0, \rho}_{\be, h}} \right) \norm{\Aop u}_{L^2} \\
		&\quad \qquad + \left(\de + h^{1/2} \right) \norm{\Aop u}_{L^2}^2 + h \norm{\Sop u}^2_{H^{0, \rho}_{\be, h}} + h^k \norm{u}_{H^{0, \rho}_{\be, h}}^2,
\end{align*}
which implies the estimate
\begin{align*}
	\norm{\Aop u}_{L^2}
		&\lesssim_{k, e} \norm{\Bop_1 u}_{H^{0, \rho}_{\be, h}} + h^{-1}\norm{\Sop \hat \P_e u}_{H^{0, \rho}_{\be, h}} + h^{1/2} \norm{\Sop u}_{H^{0, \rho}_{\be, h}} + h^k \norm{u}_{H^{0, \rho}_{\be, h}},
\end{align*}
if $\de$ and $h$ are small enough.
Since $\Aop$ is elliptic on $\WFb(\Bop_2)$,
\[
	\norm{\Bop_2 u}_{H^{0, \rho}_{\be, h}}
		\lesssim_{k, e} \norm{\Aop u}_{L^2} + h^k \norm{u}_{H^{0, \rho}_{\be, h}},
\]
and we therefore conclude
\begin{equation} \label{eq: non-optimal parabolic estimate}
	\norm{\Bop_2 u}_{H^{0, \rho}_{\be, h}}
		\lesssim_{k, e} \norm{\Bop_1 u}_{H^{0, \rho}_{\be, h}} + h^{-1}\norm{\Sop \hat \P_e u}_{H^{0, \rho}_{\be, h}} + h^{1/2} \norm{\Sop u}_{H^{0, \rho}_{\be, h}} + h^k \norm{u}_{H^{0, \rho}_{\be, h}}.
\end{equation}
The factor $h^{1/2}$ can be improved to a factor $h^k$, for an arbitrary $k \in \N_0$, by iterating this estimate similar to the end of the proof of Proposition~\ref{prop: critical point estimates}.
Therefore, the second last term in~\eqref{eq: non-optimal parabolic estimate} can be absorbed by the last term in~\eqref{eq: non-optimal parabolic estimate}, yielding the desired estimate.
\end{proof}

\begin{remark}[A corresponding adjoint estimate] \label{rmk: adjoint between critical points}
Again, obtains an analogous estimate for the adjoint operator propagated in the direction $-\c^\sharp$ instead of $\c^\sharp$, using the adjoint estimate at the critical points, described in Remark~\ref{rmk: adjoint near critical points}.
Concretely, the symbol $\as$ in the commutator estimate, Lemma~\ref{le: commutator equality between critical points}, would be replaced by
\[
	\as'(z, \xi)
		:= s^{\rho} \chi(s) (1 - \chi(\pi - \theta)) \Psi\left(\pi - \theta + \pi \de^{-2}(r - r_0)^2 \right)
\]
propagating the control at $p_N$ to a parabolic region of the type
\begin{align*}
	{\Omega_\de^{\mathrm{par}}}'
		:= & \ \{s \in [0, \de]\} \cap \{ \pi - \theta + \pi \de^{-2} (r - r_0)^2 \in (0, \pi-\de^3) \} \\
		&\quad\cap \{ \theta \in (0, \pi - \de] \} \cap \{ \abs \xi \in [0, \de^3] \},
\end{align*}
analogous to Proposition~\ref{prop: between critical point estimates}, and then continue by propagating the control to $\Omega_{\ep \de}^{\mathbb{S}^2}$ analogous to Corollary~\ref{cor: sphere propagation}.
Again, since the propagation is in the reversed direction, the weights $\rho$ get replaced by $-\rho$.
\end{remark}

\subsubsection{Global estimates near the zero section}

Before finally proving Proposition~\ref{prop: main low frequency estimate}, it is natural to first prove the we have~\eqref{eq: main low frequency propagation estimate} close to timelike infinity, where $s = 0$.

\begin{prop} \label{prop: at timelike infinity}

There are $\rho_0, \de_0 > 0$ such that if $\rho \in [0, \rho_0]$, $\de \in (0, \de_0)$ and $e \in \left(0, \de^3\right)$, there is a $\ep \in (0, 1)$ such that the following holds:

If $\Bop_1, \Bop_2, S \in \Psi_{\be, h}^{- \infty, 0, 0}(M)$ are such that
\begin{align*}
	\{ s \in [\ep \de, \de] \} \cap \{ \abs \xi \in [0, \de^2] \}
		&\subseteq \ell(\Bop_1), \\
	\WFb(\Bop_2)
		&\subseteq \{ s \in [0, \ep \de] \} \cap \{ \abs \xi \in [0, \de^2] \}, \\
	\{ s \in [0, \de] \} \cap \{ \abs \xi \in [0, \de^2] \}
		&\subseteq \ell(\Sop),
\end{align*}
then~\eqref{eq: main low frequency propagation estimate} holds for all $u \in \dot C_c^\infty$ and all $k \in \N_0$.
\end{prop}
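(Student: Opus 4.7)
The plan is to combine the sphere control established in Corollary~\ref{cor: sphere propagation} with a positive commutator argument along the Hamilton flow of $\tl_e$, which on the zero section projects to the flow of $\c^\sharp$ on the base. The geometric point (Remark~\ref{rmk: integral curves}) is that every point of the target region $\{s \in [0, \ep\de]\} \cap \{|\xi| \in [0, \de^2]\}$ either lies in a small sphere-neighborhood $\Omega^{\mathbb{S}^2}_{\ep_1 \de_1}$ of $\{s=0, r=r_0\}$, or its backward $\c^\sharp$-trajectory crosses $\{s \in [\ep\de, \de]\}$ (where $\Bop_1$ is elliptic) in finite time, since $\c^\sharp = -s\d_s + \cdots$ makes $s$ grow exponentially backward along the flow.

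First I would apply Corollary~\ref{cor: sphere propagation} with $\de_1 \le \de$ and $\ep_1 \in (0,1)$ chosen so that the present ellipticity of $\Bop_1$ and $\Sop$ implies the corollary's hypotheses, namely $\{s \in [\ep \de_1, \de_1]\} \cap \Omega^{\mathbb{S}^2}_{\de_1} \subseteq \{s \in [\ep\de, \de]\} \cap \{|\xi| \in [0, \de^2]\}$ and $\Omega^{\mathbb{S}^2}_{\de_1} \subseteq \{s \in [0, \de]\} \cap \{|\xi| \in [0, \de^2]\}$. This yields control of $u$ microlocalized to $\Omega^{\mathbb{S}^2}_{\ep_1 \de_1}$. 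Then I would run a positive commutator argument with
\[
	\as(z, \xi) := s^{-\rho} \chi_1(s)\bigl(1 - \eta(z)\bigr) \tilde \chi(|\xi|),
\]
where $\chi_1 \in C_c^\infty(\R)$ is supported in $[0, (1+\ep')\ep\de]$ and equal to $1$ on $[0, \ep\de]$, $\eta \in C_c^\infty(M)$ is supported in $\Omega^{\mathbb{S}^2}_{\ep_1 \de_1}$ and equal to $1$ on $\Omega^{\mathbb{S}^2}_{\ep_1 \de_1/2}$, and $\tilde \chi$ is a frequency cutoff supported in $\{|\xi| \le (1+\ep')\de^2\}$ with $\tilde \chi \equiv 1$ on $\{|\xi| \le \de^2\}$. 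By construction $\as$ is elliptic on $\WFb(\Bop_2) \setminus \Omega^{\mathbb{S}^2}_{\ep_1 \de_1/2}$, and the complementary piece is handled by the first step. As in Lemmas~\ref{le: prop at critical points} and~\ref{le: commutator equality between critical points}, computing the $\c^\sharp$-derivative of $\as^2$ produces: a weight term $2\rho \as^2$ from $\c^\sharp(s^{-\rho}) = \rho s^{-\rho}$; an $s$-boundary term in $\{s \in [\ep\de, (1+\ep')\ep\de]\}$, absorbable against $\Bop_1 u$; an $\eta$-boundary term in $\Omega^{\mathbb{S}^2}_{\ep_1\de_1} \setminus \Omega^{\mathbb{S}^2}_{\ep_1\de_1/2}$, absorbable against the first step's control; and a frequency-cutoff contribution at $|\xi| \sim \de^2$, absorbable against $\Sop \hat \P_e u$ via Proposition~\ref{prop: ellipticity finite frequencies}. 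Combining with the skew-adjoint part of $\tLop_e$ yields an identity of the schematic form
\[
	\s\Bigl(\tfrac{i}{h}[\tLop_e + \Jop_e, \Aop^2] + \tfrac{1}{ih}(\tLop_e^{*\mB} - \tLop_e)\Aop^2\Bigr) = -\rho_0 \as^2 - \bs^2 + \as \f + \as \k,
\]
after which quantization, pairing with $u$, Lemma~\ref{le: the right-hand-side low} for the $\hat \P_e$- and $\Qop_e$-terms, and the standard $h^{1/2} \to h^k$ iteration (as at the end of the proof of Proposition~\ref{prop: critical point estimates}) deliver~\eqref{eq: main low frequency propagation estimate}.

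The main obstacle I expect is verifying that the combination of $2\rho \as^2$ and the skew-adjoint part of $\tLop_e$ produces a genuinely negative-definite leading term $-\rho_0 \as^2$ throughout $\supp(\as)$, which now extends over the full $r$-range rather than being confined near $p_N$ and $p_S$. Proposition~\ref{prop: tilde L e properties}(c) gives the explicit favorable form of the skew-adjoint contribution only at points where $\na_{\c^\sharp}\c = 0$ and $\md G(\c,\c) = 0$, and these conditions need not hold away from the two critical points. This should be overcome by (i) taking $\rho$ smaller than any uniform pointwise lower bound on the negative definiteness built into the design of $\c$ and $\A_g$, with $e$ small enough that the $\sqrt{2e}\,\r_e$ correction from~\eqref{eq: tilde l e} is harmless, and (ii) if needed, absorbing any residual pointwise deficit into the $\Qop_e$-term via Lemma~\ref{le: the right-hand-side low}, whose semiclassical principal symbol is positive-definite at the zero section in the limit $e \to 0$.
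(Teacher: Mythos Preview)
Your overall plan---first feed the a priori region into Corollary~\ref{cor: sphere propagation} to get control on $\Omega^{\mathbb{S}^2}_{\ep_1\de_1}$, then run a positive commutator argument to cover the rest of $\{s\in[0,\ep\de]\}\cap\{|\xi|\le\de^2\}$---matches the paper's. The gap is in the commutant you choose for the second step. Your $\as = s^{-\rho}\chi_1(s)(1-\eta)\tilde\chi(|\xi|)$ produces, away from the $s$-, $\eta$-, and $|\xi|$-boundaries, only the weight term $2\rho\as^2$ (times the endomorphism factor) from $\c^\sharp(s^{-\rho})$. You then need the skew-adjoint part of $\tLop_e$ to overpower this and yield $-\rho_0\as^2$. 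But Proposition~\ref{prop: KdS skew adjoint term} gives negative definiteness of $(2ih)^{-1}(\tLop_e-\tLop_e^{*\mB})$ \emph{only} at $p_N,p_S$, and the remark after Proposition~\ref{prop: tilde L e properties} warns that away from points where $\n_{\c^\sharp}\c=0$ the subcritical part of $\tLop_e$ can be of size $e^{-1}$. Your fix (i) cannot help if the skew-adjoint contribution has the wrong sign somewhere on $\supp(\as)$, and fix (ii) does not work either: Lemma~\ref{le: the right-hand-side low} shows the $\Qop_e$ term contributes only $O(\de)\|\Aop u\|^2$ and its principal symbol vanishes at $\xi=0$, so it cannot absorb a nonvanishing bulk term.

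The paper avoids this entirely by putting the negativity into the commutator rather than the skew-adjoint part. Its commutant is
\[
	\as(z,\xi) = s^{-\rho}\chi_\de(s)\,\Phi(r)\,\chi_\de(|\xi|),
\]
with $\Phi$ chosen so that $|\Phi'/\Phi|$ is large and $\c^\sharp(\Phi(r)) = \g f(r)\Phi'(r)$ is very negative on $\supp(\as)$ (recall $f(r_0)=0$, $f'>0$, so $f(r)\Phi'(r)<0$ once $\Phi$ decreases away from $r_0$ toward both $\{r=r_e-\ep_M\}$ and $\{r=r_c+\ep_M\}$). This large negative term, exactly analogous to the role of $\Psi$ in Lemma~\ref{le: commutator equality between critical points}, dominates the weight term, the skew-adjoint contribution, and the $\sqrt{2e}\,\r_e$ and $\md_\xi\j_e$ corrections, yielding the required $-\bs^2$. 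The $r$-boundary of $\supp(\Phi)$ lands in the sphere region already controlled by Corollary~\ref{cor: sphere propagation}, and the $s$- and $|\xi|$-boundary terms are handled as you describe. In short: replace your $(1-\eta)$ factor by a monotone $\Phi(r)$ with large logarithmic derivative, and the argument goes through without any global assumption on the skew-adjoint part.
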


\begin{proof}
Let $\rho_0, \de_0 > 0$, $\rho \in [0, \rho_0]$, $\de \in (0, \de_0)$, $e \in \left(0, \de^3\right)$ and $\ep \in (0, 1)$ be as in Corollary~\ref{cor: sphere propagation}.
Then, by Corollary~\ref{cor: sphere propagation},~\eqref{eq: main low frequency propagation estimate} holds if
\[
	\WFb(\Bop_2) 
		\subseteq \Omega_{\ep \de}^{\mathbb{S}^2}.
\]
The key point is that any integral curve of $\c^\sharp$ goes to $\Omega_{\ep \de}^{\mathbb{S}^2} \cup \{ s \in [\ep \de, \de] \} \cap \{ \abs \xi \in [0, \de^2] \}$ to the past and transversely through either
\[
	\{r = r_e - \ep_M\}
\]
or 
\[
	\{r = r_c + \ep_M\}
\]
to the future.
The proposition is therefore proven using symbols of the form
\begin{align*}
	\as(z, \xi)
		:= & \ s^{-\rho}\chi_\de(s) \Phi \left( r \right)\chi_\de(\abs{\xi}),
\end{align*}
where of $\Phi'$ is very negative, and beats $\Phi$, towards the hypersurfaces $\{r = r_e - \ep_M\}$ and $\{r = r_c + \ep_M\}$.
These steps are very similar, but significantly simpler, than the corresponding steps in the proof of Proposition~\ref{prop: between critical point estimates}.
\end{proof}

We finally turn to the proof of Proposition~\ref{prop: main low frequency estimate}.

\begin{proof}[Proof of Proposition~\ref{prop: main low frequency estimate}]
Similarly to the proof of Proposition~\ref{prop: at timelike infinity}, improving the statement of Proposition~\ref{prop: at timelike infinity} to the statement of Proposition~\ref{prop: main low frequency estimate} is an even simpler propagation of semiclassical singularities along the integral curves of $\c^\sharp$ away from timelike infinity, where $s = 0$.
Indeed, noting that each integral curve of $\c^\sharp$ passes through the hypersurface $\{s = s_0\}$ to the past and passes either through $\{s = \ep \de\}$, $\{r = r_e - \ep_M\}$ or $\{r = r_c + \ep_M\}$, this is a rather immediate consequence of Proposition~\ref{prop: at timelike infinity}. 
\end{proof}

\begin{remark}[A corresponding adjoint estimate] \label{rmk: adjoint global low freq estimate}
Yet again, c.f.~Remark~\ref{rmk: adjoint near critical points} and Remark~\ref{rmk: adjoint between critical points}, the propagation in the opposite direction follows analogous also near timelike infinity and away from timelike infinity with the weights $\rho$ replaced by $-\rho$, proving the adjoint low frequency estimate, Proposition~\ref{prop: main low frequency adjoint estimate}.
\end{remark}

\subsection{High frequency propagation estimates} \label{sec: high frequency estiamtes}

We now consider the high frequencies for the operator $\tP_e$ in the Kerr--de~Sitter spacetimes.
Note that there is no need to modify $\tP_e$ to $\hat \P_e$, as we did in the previous section.
Since we already have estimates in compact sets of finite non-zero frequencies, by Proposition~\ref{prop: ellipticity finite frequencies}, and near zero frequency, by Proposition~\ref{prop: main low frequency estimate}, we now focus on the semiclassical characteristic set at infinite frequencies, given by
\begin{align*}
	\Sigma^\pm
		:= & \ \{ \xi \in \bS M \mid G(\xi, \xi) = 0\ \text{and}\ {\mp}G(\c, \xi) > 0 \} \\
		= & \  \{ \xi \in \bS M \mid G(\xi, \xi) = 0\ \text{and}\ {\pm}\xi\ \text{is future directed} \}.
\end{align*}
The key in the high frequency propagation estimates is that by Lemma~\ref{le: skew adjoint part} and~\eqref{eq: tilde l 0}, there is a $e_0 > 0$ such that
\begin{equation} \label{eq: skew-adjoint part}
	\pm \s_{\be, h} \left( \frac i2 \left(\tP_e^{*\mB} - \tP_e \right) \right)|_{\Sigma^\pm}
		= \pm \tl_e|_{\Sigma^\pm}
		< 0
\end{equation}
for all $e \in (0, e_0]$, as quadratic forms.
The propagation estimates will be based on the commutator equality
\begin{equation} \label{eq: commutator equality 2}
	\Im \ldr{\frac1 h \Aop \tP_e u, \Aop u}
		= \ldr{ \frac i {2 h} \left( [\tP_e, \Aop^2] + \left(\tP_e^{*\mB} - \tP_e\right) \right) \Aop^2 u, u},
\end{equation}
where we note that our skew-adjoint term~\eqref{eq: skew-adjoint part} will appear with an extra factor $h^{-1}$.
The skew-adjoint term~\eqref{eq: skew-adjoint part} has the correct sign for our \emph{future} propagation setup (regularity in $\Sigma^+$ will be propagated forward and regularity in $\Sigma^-$ will be propagated backward), and the largeness of $h^{-1}$ will be used in order to remove various regularity thresholds, which would normally be present in a propagation estimate in $\Sigma^\pm$.
The main goal for this section is to prove the following:

\begin{prop}[The high frequency propagation estimate] \label{eq: global high frequency estimates}
Let $s_0 > 0$ and $\ep \in (0, 1)$ be given, let $K \subset \boT M \backslash o$ be a compact subset and let $\Bop_1, \Bop_2, \Sop \in \Psi_{\be, h}^{0, 0, 0}(M)$.
Let $e_0 > 0$ be such that~\eqref{eq: skew-adjoint part} holds for all $e \in (0, e_0]$.
Assume that
\begin{align}
	\bS M \cap \{ s \in [\ep s_0, s_0] \}
		&\subseteq \ell(\Bop_1), \label{eq: B_1 assumption high} \\
	\WFb(\Bop_2)
		&\subseteq \{ s \in [0, s_0] \} \cap  K, \nonumber \\
	K
		&\subseteq \ell(\Sop). \nonumber
\end{align}
Then, for all $m, \rho \in \R$, $k \in \N_0$ and $e \in (0, e_0]$, there is an $h_0 > 0$ such that
\begin{equation} \label{eq: main high frequency propagation estimate}
	\norm{\Bop_2 u}_{H^{m, \rho}_{\be, h}}
		\lesssim_{k, e} \norm{\Bop_1 u}_{H^{m, \rho}_{\be, h}} + h^{-1} \norm{\Sop \tP_e u}_{H^{m-1, \rho}_{\be, h}} + h^k \norm{u}_{H^{-k, \rho}_{\be, h}},
\end{equation}
for all $h \in (0, h_0)$ in the strong sense that if the right hand side in~\eqref{eq: main high frequency propagation estimate} is finite, then so is the norm on the left, and the estimate holds.
For the adjoint operator $\tP_e^{*\mB}$, the same holds if we replace~\eqref{eq: B_1 assumption high} by the assumption
\[
	\bS M \cap \left( \{ r \in [r_e - \ep_M, r_e - \ep \cdot \ep_M] \cup [r_c + \ep \cdot \ep_M, r_c + \ep_M] \} \right)
		\subseteq \ell(\Bop_1).
\]
\end{prop}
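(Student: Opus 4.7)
The plan is to establish~\eqref{eq: main high frequency propagation estimate} via a positive-commutator argument based on the identity~\eqref{eq: commutator equality 2}, combining standard semiclassical b-propagation of singularities with the favorable sign of the skew-adjoint part of $\tP_e$ on $\Sigma^\pm$. Outside $\Sigma^+\cup\Sigma^-$, the principal symbol $G(\xi,\xi)\id + i\tl_e$ of $\tP_e$ is elliptic at fiber infinity (since the scalar term $G(\xi,\xi)$ dominates at high frequency), so microlocal elliptic regularity handles the part of $\WFb(\Bop_2)$ lying off the characteristic set. It therefore remains to propagate regularity along each component of the characteristic set, starting from the a priori region $\bS M \cap \{s\in[\ep s_0, s_0]\}$ on which $\Bop_1$ is elliptic.

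On $\Sigma^\pm$, I would apply the standard semiclassical positive-commutator estimate with a commutant $\Aop$ whose essential support is propagated forward in $\Sigma^+$ and backward in $\Sigma^-$ along the Hamilton flow of $G$. The global structure of this flow in the full subextremal Kerr--de~Sitter range -- the radial source/sink structure at the event and cosmological horizons and the normally hyperbolic character of the trapped set -- was established in \cites{PV2023, PV2024}, so the standard package of propagation, radial-point and trapping estimates applies. Since the principal symbol of $\tP_e$ is scalar up to the lower-order correction $i\tl_e$, the trapping analysis here is essentially the scalar one of \cites{WZ2011, Dya2015, Dya2015b, Dya2016} rather than the delicate matrix-valued version that necessitates Theorem~\ref{thm: spectral gap} for $\Lop_{b_0}$.

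The key input is the role of the skew-adjoint part of $\tP_e$: by Lemma~\ref{le: skew adjoint part} its semiclassical principal symbol is $\tl_e$, and by~\eqref{eq: skew-adjoint part} one has $\pm\tl_e|_{\Sigma^\pm}<0$ as a quadratic form for $e\in(0,e_0]$. In~\eqref{eq: commutator equality 2} this term enters the inner product with an extra factor of $h^{-1}$ relative to $\norm{\Aop u}_{L^2}^2$, with the sign favorable for forward propagation in $\Sigma^+$ and backward propagation in $\Sigma^-$. For sufficiently small $h$, this $h^{-1}$-sized definite-sign contribution dominates all the bounded threshold terms that otherwise arise at the radial points and at the trapped set, so regularity can be propagated across these critical sets \emph{without any regularity threshold on $u$}. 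This is precisely the constraint-damping mechanism at high frequency.

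Combining ellipticity off the characteristic set with the propagation, radial-point, and trapping estimates, and tracing every bicharacteristic reaching $\WFb(\Bop_2) \subseteq \{s\in[0,s_0]\}\cap K$ back to the a priori slab, yields~\eqref{eq: main high frequency propagation estimate}. For $\tP_e^{*\mB}$, the principal symbol of the skew-adjoint part changes sign, reversing the direction of propagation in each $\Sigma^\pm$; the a priori control must then be imposed near the two horizons, which is exactly the modified assumption on $\Bop_1$ in the adjoint statement. The main obstacle will be choosing the commutant supports and weights so that the $h^{-1}$-sized skew-adjoint contribution genuinely dominates while patching together the propagation from the a priori region through the two radial points and across the normally hyperbolically trapped set; this is a direct adaptation of the arguments of \cite{HV2018} to the full subextremal setting, using the flow structure of \cites{PV2023, PV2024}.
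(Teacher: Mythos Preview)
Your proposal is correct and follows essentially the same approach as the paper: the paper's proof simply invokes ellipticity of $\tP_e$ on $\bS M\setminus(\Sigma^+\cup\Sigma^-)$ and then combines three separately stated ingredients (Propositions~\ref{prop: high away from critical points}, \ref{prop: high radial points}, \ref{prop: high trapped set}) -- standard propagation, radial-point estimates, and trapping estimates, each with the threshold removed by the $h^{-1}$-sized favorable skew-adjoint term from~\eqref{eq: skew-adjoint part} -- together with the global null-bicharacteristic flow structure of \cites{V2013, HV2017, PV2023}. Your narrative collapses these into a single argument but the content and mechanism are the same.
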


\subsubsection{Propagation away from critical points}

The first proposition gives the propagation estimates away from any critical points.

\begin{prop} \label{prop: high away from critical points}
Let $\Bop_1, \Bop_2, \Sop \in \Psi_{\be, h}^{0, 0, 0}(M)$ and let $e_0 > 0$ be such that~\eqref{eq: skew-adjoint part} holds for all $e \in (0, e_0]$.
Assume that
\[
	\WFb(\Bop_2)
		\subseteq \ell(\Sop) \backslash o,
\]
and that all backward (resp.~forward) bicharacteristics of $\sfH_G$ in $\Sigma^+$ (resp.~$\Sigma^-$) through $\WFb(\Bop_2)$ pass through $\ell(\Bop_1)$ while staying in $\ell(\Sop)$.

Then, for all $m, \rho \in \R$, $k \in \N_0$ and $e \in (0, e_0]$, there is an $h_0 > 0$ such that~\eqref{eq: main high frequency propagation estimate} holds for all $h \in (0, h_0)$ in the strong sense that if the right hand side in~\eqref{eq: main high frequency propagation estimate} is finite, then so is the norm on the left, and the estimate holds.

Moreover, the same estimate holds for the adjoint operator $\tP_e^{*\mB}$ interchanging `backward' and `forward'. 
\end{prop}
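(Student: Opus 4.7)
The plan is to run a standard semiclassical positive commutator argument, localized to a tubular neighborhood of an arbitrary bicharacteristic segment of $\mathsf{H}_G$ in $\Sigma^\pm$, with the novelty being that the (tensorial) skew-adjoint part of $\tP_e$ supplies the sign needed to close the estimate irrespective of any regularity threshold. Concretely, fix a sign $\pm$ and work in $\Sigma^\pm$. For a compactly microlocalized commutant $\Aop\in\Psi_{\be,h}^{m-1/2,\rho,0}$ with principal symbol $\as$, the identity~\eqref{eq: commutator equality 2} reads
\[
\Im\ldr{h^{-1}\Aop\tP_e u,\Aop u}=\ldr{\tfrac{i}{2h}\bigl([\tP_e,\Aop^2]+(\tP_e^{*\mB}-\tP_e)\Aop^2\bigr)u,u}.
\]
The principal symbol of the commutator term is $\tfrac12\mathsf{H}_{G(\xi,\xi)}\as^2\cdot\id$ plus contributions from the non-scalar endomorphism-valued part of $\tilde p_e$, while the principal symbol of the skew-adjoint term is $h^{-1}\tl_e(\xi)\,\as^2$. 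I would choose $\as$ in the standard Hörmander-type form, so that $\mathsf{H}_{G(\xi,\xi)}\as^2 = -\bs^2+\as\,\f+\as\,\k$ with $\bs$ elliptic on the desired propagation region, $\f$ supported in the a priori control region $\ell(\Bop_1)$, and $\k$ supported where $\Sop\tP_e$ is microlocally elliptic (this is exactly the semiclassical b-version of the construction used in \cite{PV2023,PV2024}, whose global geometry of the bicharacteristic flow in the full subextremal range we invoke).

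The main new input is the role of the skew-adjoint term: by Lemma~\ref{le: skew adjoint part} and the hypothesis on $e_0$, $\pm\tl_e|_{\Sigma^\pm}$ is a negative-definite endomorphism, so
\[
\pm\s_{\be,h}\!\left(\tfrac{i}{2h}(\tP_e^{*\mB}-\tP_e)\right)\!\Aop^2\ \le\ -\tfrac{c}{h}\,\as^2\cdot\id
\]
on the microsupport of $\Aop$, for some $c>0$. Since the endomorphism-valued parts of the Poisson-bracket contribution from $[\tP_e,\Aop^2]$ are of order $1$ (uniformly in $h$), the factor $h^{-1}$ guarantees that, for $h$ sufficiently small (depending on $e$), the skew-adjoint term dominates. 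This is what eliminates the usual ``threshold'' condition on the differential order $m$ that would ordinarily appear for scalar wave operators near the radial/non-radial propagation setting, and it is the structural reason why we do not need to track subprincipal symbol conditions along these non-trapped bicharacteristics.

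Quantizing the symbol identity and applying the sharp Gårding inequality (or direct positivity for the principal term via $\Bop=\Op_h(\bs)$), one obtains
\[
\norm{\Bop u}^2_{H^{m,\rho}_{\be,h}} + \tfrac{c}{h}\norm{\Aop u}^2_{H^{m,\rho}_{\be,h}}
\le \bigl|\ldr{h^{-1}\Aop\tP_e u,\Aop u}\bigr|+\norm{\Fop u}_{H^{m,\rho}_{\be,h}}\norm{\Aop u}_{H^{m,\rho}_{\be,h}}+h\norm{\Sop u}^2+\O(h^\infty),
\]
with $\WFb(\Fop)\subset\ell(\Bop_1)$ and $\WFb(\Kop)\subset\ell(\Sop)$ where $\Kop$ handles the $\k$ term via the a priori bound on $\Sop\tP_e u$. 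After Cauchy--Schwarz, the $h^{-1}$ gain on the skew-adjoint term lets us absorb the commutator endomorphism errors into $\norm{\Aop u}^2$, and microlocal elliptic regularity for $\Sop\tP_e$ on the $\k$-support converts the test-pairing on the left into $h^{-1}\norm{\Sop\tP_e u}_{H^{m-1,\rho}_{\be,h}}$. Combined with the usual iteration to upgrade $h^{1/2}\norm{\Sop u}$ errors to $h^k\norm{u}_{H^{-k,\rho}_{\be,h}}$ (as in the proof of Proposition~\ref{prop: critical point estimates}), one arrives at~\eqref{eq: main high frequency propagation estimate}. The adjoint statement $\tP_e^{*\mB}$ is identical with $\tl_e\mapsto -\tl_e$, which flips $\Sigma^+\leftrightarrow\Sigma^-$ and hence the propagation direction.

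The main obstacle, and the reason this argument works at all in the tensorial setting, is the interplay of the (possibly sign-indefinite) endomorphism-valued part of the scalar commutator with the skew-adjoint contribution: one has to verify that the negativity of $\pm\tl_e$ on $\Sigma^\pm$ from~\eqref{eq: tilde l 0} is \emph{strict} as a quadratic form on the full bundle $\bT^*M$, uniformly on compact sets of $\bS M\cap\Sigma^\pm$. This is precisely the content of Lemma~\ref{le: skew adjoint part} and is the reason for the conjugation by $b_{(2e)^{1/2}}$ carried out in Section~\ref{subsec: c general}; once this is in place, all remaining manipulations are routine semiclassical b-microlocal analysis in the framework of \cite{PV2023,PV2024}.
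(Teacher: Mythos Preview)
Your proposal is correct and follows essentially the same approach as the paper's proof: a standard semiclassical positive commutator argument with a scalar commutant $\Aop\in\Psi_{\be,h}^{m-1/2,\rho,0}$, where the $h^{-1}$-weighted skew-adjoint contribution $\tl_e$ (via Lemma~\ref{le: skew adjoint part}) has the favorable sign on $\Sigma^\pm$ and is simply dropped (or, as you note more explicitly than the paper, used to absorb the endomorphism-valued Poisson-bracket errors coming from $\{\tl_e,\as^2\}$), followed by the same iteration to remove the $h^{1/2}\norm{\Sop u}$ term. Your remark about eliminating a ``threshold'' is slightly misplaced here---real-principal-type propagation never has one; that point belongs to the radial and trapping estimates (Propositions~\ref{prop: high radial points} and~\ref{prop: high trapped set})---but this does not affect the argument.
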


\begin{proof}
The proof follows from a standard positive commutator argument.
Let 
\[
	\Aop = \Aop^{*\mB} \in \Psi_{\be, h}^{m - \frac12, \rho, 0}(M),
\]
and consider~\eqref{eq: commutator equality 2}.
The left-hand side is simply estimated as
\[
	\Im \ldr{\frac1 h \Aop \tP_e u, \Aop u}
		\leq \frac {C_1} {h^2} \norm{\Aop \tP_e u}_{L^2}^2 + \frac1{C_1} \norm{\Aop u}_{L^2}^2,
\]
for some large constant $C_1 > 0$, yet to be chosen.
If $\Aop = \Op_h(\as)$, where the principal symbol $\as \in \S^{m-\frac12,\rho, 0}_{\be, h}$ is real and scalar, Proposition~\ref{prop: KdS skew adjoint term} implies that
\[
	\s_h \left( \frac i{2h} \left( [\tP_e, \Aop^2] + \left(\tP_e^{*\mB} - \tP_e\right) \Aop^2 \right) \right) 
		= \sfH_G \as^2 + h^{-1} \tl_e \as^2.
\]
By the usual construction of the symbol $\as$, we may arrange that
\[
	\sfH_G \as^2
		= - \be_2^2 + \be_1^2
\]
in $\Sigma^+$, where 
\begin{itemize}
	\item $\esssupp(\be_1) \subset \ell(\Bop_1)$,
	\item $\esssupp(\as) \subset \ell(\Sop)$,
	\item $\be_2 \neq 0$ on
\[
	\esssupp(\as) \backslash \ell(\Bop_1) \supset \WFb(\Bop_2).
\]
\end{itemize}
Since $h^{-1} \tl_e \as^2$ is negative for $\xi \in \Sigma^+$ by~\eqref{eq: skew-adjoint part}, and thereby has the same sign as $-\be_2^2$, we can disregard it from the estimate and the quantization gives 
\[
	\norm{\Bop_2 u}_{H^{m, \rho}_{\be, h}}
		\lesssim_{k, e} \norm{\Bop_1 u}_{H^{m, \rho}_{\be, h}} + h^{-1} \norm{\Sop \tP_e u}_{H^{m-1, \rho}_{\be, h}} + h \norm{u}_{H^{m-1, \rho}_{\be, h}}.
\]
Iterating this similar to the proof of Proposition~\ref{prop: critical point estimates}, we may improve this to the desired estimate~\eqref{eq: main high frequency propagation estimate}.
For the estimate to hold in the strong sense, one applies a standard regularization argument.
\end{proof}

\subsubsection{Propagation at radial points}
We now turn to the estimate for the first type of critical points of the bicharacteristic flow in $\Sigma^\pm$, namely the radial points $\mR^\pm \subset \Sigma^\pm$.

\begin{prop} \label{prop: high radial points}
Let $\Bop_1, \Bop_2, \Sop \in \Psi_{\be, h}^{0, 0, 0}(M)$ with wave front sets disjoint form $\Sigma^-$ (resp.~$\Sigma^+$). 
Let $e_0 > 0$ be such that~\eqref{eq: skew-adjoint part} holds for all $e \in (0, e_0]$.
Assume that
\[
	\WFb(\Bop_2)
		\subseteq \ell(\Sop) \backslash o,
\]
that $B_2$ is elliptic at $\mR^+$ (resp.~$\mR^-$) and that all backward (resp.~forward) bicharacteristics through $\WFb(\Bop_2) \cap \bS M$ either tend to $\mR^+$ (resp.~$\mR^-$) or enter $\ell(\Bop_2)$ in finite time, while remaining in $\ell(\Sop)$.

Then, for all $m, \rho \in \R$, $k \in \N_0$ and $e \in (0, e_0]$, there is an $h_0 > 0$ such that~\eqref{eq: main high frequency propagation estimate} holds for all $h \in (0, h_0)$ in the strong sense that if the right hand side in~\eqref{eq: main high frequency propagation estimate} is finite, then so is the norm on the left, and the estimate holds.

Moreover, the same estimate holds for the adjoint operator $\tP_e^{*\mB}$ interchanging `backward' and `forward'. 
\end{prop}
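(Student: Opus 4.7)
The plan is to follow the standard radial point positive commutator argument as in Vasy~\cite{V2013}, but to exploit the $h^{-1}$-strength skew-adjoint contribution~\eqref{eq: skew-adjoint part} in order to bypass the usual threshold regularity condition. Let me concentrate on $\mR^+\subset\Sigma^+$ and the forward direction; the estimate for $\tP_e^{*\mB}$ and for $\mR^-$ is analogous after the sign flips permitted by Lemma~\ref{le: skew adjoint part}. I will use the commutator identity~\eqref{eq: commutator equality 2} with a commutant $\Aop=\Op_h(\as)$, where $\as=\as(z,\xi)\in\S^{m-1/2,\rho,0}_{\be,h}$ is a real scalar symbol of the standard form $\as = \hat\rho^{-m+1/2}s^{-\rho}\chi(\hat\rho^2 G)\chi_0(\mathrm{dist}(\cdot,\mR^+))\chi_1(s)$, with $\chi,\chi_0,\chi_1$ suitable cutoffs whose derivatives either produce symbols supported in $\ell(\Bop_1)$ (propagation away from $\mR^+$) or yield a negative definite contribution along the radial direction thanks to the source/sink structure.

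For the radial point structure itself I would invoke \cite{PV2023,PV2024}, where it is shown for all subextremal Kerr--de~Sitter spacetimes that $\sfH_G$ vanishes at $\mR^+\subset\bS M$ with a source/sink linearization governed by a positive quantity $\beta_0>0$ (the normal expansion rate along the bicharacteristic flow). The standard symbolic calculation then yields
\begin{equation*}
	\sfH_G(\hat\rho^{-2m+1}s^{-2\rho}) = \left( (2m-1)\beta_0 + 2\rho\beta_1 + \O(\mathrm{dist}) \right)\hat\rho^{-2m+1}s^{-2\rho},
\end{equation*}
so $\sfH_G\as^2 = (2m-1)\beta_0\,\as^2 + 2\rho\beta_1\,\as^2 - \be_1^2 + \be_2^2 + \text{(terms in }\ell(\Bop_1)\text{)},$ with $\be_2\geq 0$ coming from the cutoffs localizing toward $\mR^+$. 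In the usual scalar radial point estimate one has to require $m$ (or $\rho$) above a threshold so that the first two terms on the right, together with the principal-type skew-adjoint contribution, have the right sign; here I expect to avoid that threshold entirely.

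The crucial point is that the total subprincipal contribution to the commutator is
\begin{equation*}
	\s\left( \frac i{2h}\left([\tP_e,\Aop^2] + (\tP_e^{*\mB}-\tP_e)\Aop^2\right) \right) = \sfH_G\as^2 + h^{-1}\tl_e\,\as^2.
\end{equation*}
By~\eqref{eq: skew-adjoint part} and the compactness of $\mR^+$, there is a positive definite endomorphism $c_0>0$ with $\tl_e|_{\Sigma^+}\leq -c_0\,\id$ near $\mR^+$. Thus the $h^{-1}\tl_e\as^2$ term dominates the polynomially bounded radial contributions $(2m-1)\beta_0\as^2+2\rho\beta_1\as^2$ once $h$ is small enough (with the smallness depending on $m$, $\rho$ and $e$). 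Pairing, using microlocal elliptic regularity to control $\Kop_N$-type terms by $\Sop\tP_e u$, and iterating to absorb the $h^{1/2}$-losses exactly as at the end of the proof of Proposition~\ref{prop: critical point estimates}, yields
\begin{equation*}
	\norm{\Bop_2 u}_{H^{m,\rho}_{\be,h}} \lesssim_{k,e} \norm{\Bop_1 u}_{H^{m,\rho}_{\be,h}} + h^{-1}\norm{\Sop\tP_e u}_{H^{m-1,\rho}_{\be,h}} + h^k\norm{u}_{H^{-k,\rho}_{\be,h}}.
\end{equation*}

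The main obstacle I anticipate is making the argument work in the tensorial setting where $\tl_e$ is an endomorphism, not a scalar: in particular, the symbol $\as$ should really be endomorphism-valued, and the pairing $\ldr{\tl_e\as^2 u,u}$ must be compared with $\ldr{\sfH_G\as^2 u,u}$ via the fiber inner product $\mB$. The resolution is precisely that $\mB$ is positive definite and $\tl_e$ is negative definite (as a quadratic form in $\mB$) on $\Sigma^+$, so the argument goes through uniformly. The standard regularization procedure (replacing $\Aop$ by $\Aop_{\varepsilon}$ built from $\as_\varepsilon = \as\cdot(1+\varepsilon\hat\rho^{-1})^{-N}$ and letting $\varepsilon\to 0$) then gives the estimate in the strong sense. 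The adjoint statement follows verbatim after noting that $\tP_e^{*\mB}$ has semiclassical principal symbol $G(\xi,\xi)\id - i\tl_e$, so the sign of the damping term is reversed, and regularity propagates backward at $\mR^+$, forward at $\mR^-$.
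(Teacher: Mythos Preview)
Your proposal is correct and follows essentially the same approach as the paper: a positive commutator argument via~\eqref{eq: commutator equality 2} at the saddle-type radial points $\mR^\pm$, with the key observation that the $h^{-1}\tl_e\as^2$ contribution (negative definite on $\Sigma^+$ by~\eqref{eq: skew-adjoint part}) dominates the usual threshold term $(m-\beta\rho-\tfrac12)\as^2$ once $h$ is small, thereby eliminating any regularity restriction on $m,\rho$. The paper simply refers to \cite{HV2015}*{Prop.~2.1} for the details of the saddle radial point commutant construction and then records the shifted threshold $m-\beta\rho-\tfrac12+h^{-1}\delta>0$, which is exactly your mechanism; your additional remarks on the tensorial nature of $\tl_e$ and the regularization for the strong sense are accurate and compatible with the paper's (terse) argument.
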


\begin{proof}
Again, the proof is based on choosing $\Aop$ properly and estimate both sides in equality~\eqref{eq: commutator equality 2}.
The bicharacteristic flow has \emph{saddle points} $\mR^\pm$, and we refer the reader to \cite{HV2015}*{Prop.~2.1} for the details of such estimates, which are very similar in structure to the proof of Proposition~\ref{prop: critical point estimates}. 
Just like in the proof of Proposition~\ref{prop: critical point estimates}, the key is that $\Aop$ is elliptic at the critical points, so that the skew-adjoint part is useful at those points.
By the non-positivity (resp.~non.negativity) of the skew-adjoint part, see~\eqref{eq: skew-adjoint part}, it follows that~\eqref{eq: main high frequency propagation estimate} holds under the regularity threshold
\[
	m - \b \rho - \frac12 
		> 0,
\]
where $\b := \max(\b_e, \b_c)$, where
\[
	\b_{e/c} 
		:= \pm 2 \left(1 + \frac{\Lambda a^2}3\right) \frac{r^2 + a^2}{\mu'(r)}|_{r = r_{e/c}},
\]
c.f.~\cite{V2013}*{Eq.~(6.10)} and \cite{PV2023}*{Thm.~2.1}.
However, the extra negativity of~\eqref{eq: skew-adjoint part} at the critical points, which comes with a factor $h^{-1}$, implies that the regularity only needs to satisfy
\[
	m - \b \rho - \frac12 + h^{-1} \de 
		> 0,
\]
for some $\de > 0$.
Given any $m$ and $\rho$, this inequality is satisfied for sufficiently small $h$.
This is the reason there is no regularity threshold in this statement, as opposed to \cite{HV2015}*{Prop.~2.1}.
\end{proof}

\subsubsection{Propagation near the trapped set}

The trapping in subextremal Kerr--de~Sitter spacetimes is normally hyperbolic, as shown in \cite{PV2023}*{Thm.~3.2} (see also \cite{V2013}*{Sec.~6}).
It consists of two components $\Gamma^\pm$, which are given by transversal intersections
\[
	\Gamma^\pm
		= \Gamma^\pm_{fw} \cap \Gamma^\pm_{bw} \subset \Sigma^\pm,
\]
of the \emph{forward} trapped set $\Gamma^\pm_{fw}$ and \emph{backward} trapped set $\Gamma^\pm_{bw}$.
The propagation estimate we need near the trapping is the following:

\begin{prop} \label{prop: high trapped set}
There are $\Bop_1, \Bop_2, \Sop \in \Psi_{b,h}^{0,0,0}(M)$ such that
\begin{itemize}
	\item $\Gamma^+ \subset \ell(\Bop_2)$ (resp.~$\Gamma^- \subset \ell(\Bop_2)$),
	\item $\WFb(\Bop_1) \cap \Gamma^+_{bw} = \emptyset$ (resp.~$\WFb(\Bop_1) \cap \Gamma^-_{fw} = \emptyset$),
\end{itemize}
and such that for all $m, \rho \in \R$, $k \in \N_0$ and $e \in (0, e_0]$, there is an $h_0 > 0$ such that~\eqref{eq: main high frequency propagation estimate} holds for all $h \in (0, h_0)$, in the strong sense that if the right hand side in~\eqref{eq: main high frequency propagation estimate} is finite, then so is the norm on the left, and the estimate holds.
Moreover, the same estimate holds for the adjoint operator $\tP_e^{*\mB}$, now with 
\begin{itemize}
	\item $\WFb(\Bop_1) \cap \Gamma^+_{fw} = \emptyset$ (resp.~$\WFb(\Bop_1) \cap \Gamma^-_{bw} = \emptyset$),
\end{itemize}
instead. 
\end{prop}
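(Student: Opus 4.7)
The plan is to run a positive commutator argument at $\Gamma^\pm$ in the style of Dyatlov's normally hyperbolic trapping estimates \cites{Dya2015, Dya2015b, Dya2016} and Wunsch--Zworski \cite{WZ2011}, exploiting the fact that the skew-adjoint part of $\tP_e$ contributes at order $h^{-1}$ with the correct sign at $\Gamma^\pm$ and therefore lets us dispense entirely with the usual spectral gap / subprincipal threshold. I focus on $\Gamma^+$; the $\Gamma^-$ component and the adjoint estimates follow by reversing the direction of the Hamilton flow, which interchanges $\Gamma^\pm_{fw}$ with $\Gamma^\pm_{bw}$ and uses $-\tl_e|_{\Sigma^\pm}$ (still favorably signed for the adjoint).

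Using the $r$-normally hyperbolic structure of $\Gamma^+$ from \cite{PV2023}*{Thm.~3.2}, I would fix homogeneous-degree-zero defining functions $\phi_{fw}, \phi_{bw}$ for the stable and unstable manifolds $\Gamma^+_{fw}, \Gamma^+_{bw}$ in a small conic neighborhood $\U \subset \boT M$ of $\Gamma^+$, normalized so that
\[
	\H_G \phi_{fw} = -\nu_{fw}\, \phi_{fw} + \O\bigl(\phi_{fw}^2 + \phi_{bw}^2\bigr), \qquad
	\H_G \phi_{bw} = +\nu_{bw}\, \phi_{bw} + \O\bigl(\phi_{fw}^2 + \phi_{bw}^2\bigr),
\]
with $\nu_{fw}, \nu_{bw}$ smooth and positive. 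Then I would take the commutant $\Aop = \Op_h(\as)$ with real scalar principal symbol
\[
	\as = \hat{\rho}^{-m + 1/2}\, s^{-\rho}\, \tilde{\chi}\, \chi\!\left(\phi_{fw}^2 + \phi_{bw}^2\right),
\]
where $\tilde\chi$ cuts off to $\U$ and $\chi \geq 0$ is a standard cutoff with $-\chi\chi' \geq 0$ and $(-\chi\chi')^{1/2}$ smooth. The b-Hamilton derivative $\H_G \as^2$ produces a favorable (negative) piece $\sim -\nu_{fw}\,\phi_{fw}^2\, \as^2$ along the stable direction and an indefinite, wrongly-signed piece $\sim +\nu_{bw}\,\phi_{bw}^2\, \as^2$ along the unstable direction --- the standard trapping obstruction. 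By Lemma~\ref{le: skew adjoint part}, however, the skew-adjoint contribution $h^{-1} \tl_e \as^2$ in~\eqref{eq: commutator equality 2} is bounded above by $-c\, h^{-1}\, \as^2\, \id$ on $\Sigma^+$ uniformly in small $e$, and for $h$ sufficiently small this $h^{-1}$-large term dominates both the unstable-direction contribution and any bounded correction from the weight factors. The symbol of the combined commutator thereby takes the form $-\be^2 + \as\, \f + \as\, \k$ modulo absorbable $h$-errors, with $\be$ elliptic on $\esssupp(\as)$, $\esssupp(\f) \subseteq \esssupp(\chi') \cap \{\phi_{bw} \neq 0\} \subset \U \setminus \Gamma^+_{bw}$, and $\esssupp(\k)$ disjoint from the characteristic set. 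Quantizing, pairing with $u$, invoking microlocal elliptic regularity for $\tP_e$ on $\esssupp(\k)$ (Proposition~\ref{prop: ellipticity finite frequencies}), and iterating in powers of $h$ as at the end of the proof of Proposition~\ref{prop: critical point estimates} yields~\eqref{eq: main high frequency propagation estimate} with $\Bop_2$ elliptic on a smaller neighborhood of $\Gamma^+$, $\Sop$ elliptic on $\U$, and $\Bop_1$ elliptic on $\esssupp(\f)$; the disjointness $\esssupp(\f) \cap \Gamma^+_{bw} = \emptyset$ is exactly the required support condition on $\Bop_1$.

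The main obstacle is the tensorial nature of $\tP_e$: the symbol $\tl_e$ is an endomorphism of $\bT^*M$ (not a scalar), so the commutator calculus must be carried out with endomorphism coefficients, and the positivity of the $-c\, h^{-1}\, \id$ contribution has to be measured with respect to the fiber inner product $\mB$ constructed in Section~\ref{sec: spectral gap}. This reduces to the endomorphism-valued bound $\mp \tl_e|_{\Sigma^\pm} \geq c\, \id$ in the sense of quadratic forms relative to $\mB$, which is Lemma~\ref{le: skew adjoint part}, together with Theorem~\ref{thm: spectral gap} controlling the $\mB$-skew-adjoint part of the subprincipal operator of $\Lop_{b_0}$ at $\Gamma^\pm$ by an arbitrarily small quantity $\e$. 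With these in hand the operator-valued positive commutator computation closes exactly as in the scalar case, and no additional spectral-gap hypothesis on $\nu_{fw}, \nu_{bw}$ is needed. The strong-sense interpretation of the estimate is handled by the standard regularization argument used in Proposition~\ref{prop: high away from critical points}.
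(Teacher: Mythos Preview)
Your approach is correct and essentially the same as the paper's, which simply invokes the positive commutator argument of \cite{HV2013}*{Thm.~3.2} and observes that the $h^{-1}$-large, favorably-signed skew-adjoint contribution from Lemma~\ref{le: skew adjoint part} dominates the unstable-direction term and thereby removes the usual threshold on $m,\rho$ at the trapped set.

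One conceptual confusion should be corrected, however. Theorem~\ref{thm: spectral gap} concerns the subprincipal symbol of the linearized gauge-fixed Einstein operator $\Lop_{b_0}$, which acts on symmetric $2$-tensors and uses the fiber inner product $\mathfrak C$; it is an input to Theorem~\ref{thm: expansion linearized} and plays no role whatsoever in the present proposition. Here the operator is $\tP_e$, acting on $1$-forms with the inner product $\mB$ (defined in Section~\ref{subsec: c general}, not Section~\ref{sec: spectral gap}), and the \emph{only} input needed for the endomorphism positivity at $\Gamma^\pm$ is Lemma~\ref{le: skew adjoint part}; your invocation of Theorem~\ref{thm: spectral gap} is spurious. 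A minor further point: the elliptic set off $\Sigma^\pm$ near fiber infinity is handled directly since $\hat\rho^2\tilde\p_e|_{\bS M}=\hat\rho^2 G\,\id$, not via Proposition~\ref{prop: ellipticity finite frequencies}, which concerns finite nonzero frequencies.
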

\begin{proof}
This statement relies on the positive commutator estimate in the proof of \cite{HV2013}*{Thm.~3.2}, with one important difference.
In \cite{HV2013}*{Thm.~3.2}, it was assumed that the skew-adjoint part of the operator $\P$ was in $\Psi_b^{m-2, 0}(M)$.
Note here that $\P$ was a non-semiclassical operator in the reference.
In our case, the (semiclassical) skew-adjoint part in the positive commutator estimate is
\[
	\frac1{2i h} \left( \tP_e - \tP_e^{*\mB} \right)
\] 
and Lemma~\ref{le: skew adjoint part} in particular implies that
\[
	\mp \s_{\be, h} \left( \frac1{2i} \left( \tP_e - \tP_e^{*\mB} \right) \right)|_{\Gamma^\pm}
		> 0.
\]
Hence, at the trapping, we get a term with the `good sign' in both $\Sigma^+$ and $\Sigma^-$, shifting the threshold weight $r = 0$ to any fixed $C > 0$, provided $h_0 > 0$ is sufficiently small.
Consequently, estimate~\eqref{eq: main high frequency propagation estimate} holds for any $m, \rho \in \R$ (with the appropriately small $h_0 > 0$), unlike in \cite{HV2013}*{Thm.~3.2}.
\end{proof}

\subsubsection{Proving the global high frequency estimates}

We are finally ready to prove the high frequency estimates.

\begin{proof}[Proof of Proposition~\ref{eq: global high frequency estimates}]
First recall that $\tP_e$ and $\tP_e^{*\mB}$ are elliptic in
\[
	\bS M \backslash \left( \Sigma^+ \cup \Sigma^- \right).
\]
In $\Sigma^+ \cup \Sigma^-$, we combine Proposition~\ref{prop: high away from critical points}, Proposition~\ref{prop: high radial points} and Proposition~\ref{prop: high trapped set} and the structure of the null-geodesic flow (see \cite{V2013}*{§6}, \cite{HV2017}*{§2} and \cite{PV2023}*{§3}), which together prove the statement.
\end{proof}

\subsection{Energy estimates near the initial and final hypersurfaces}
We would now like to improve upon the propagation estimates Proposition~\ref{prop: main low frequency estimate} (resp.~Proposition~\ref{prop: main low frequency adjoint estimate}) and Proposition~\ref{eq: global high frequency estimates}, in which there was an assumption region (i.e., where the $\Bop_1$ was supposed to be elliptic), to instead prove the estimates on Sobolev spaces with support conditions.
Concretely, we would like to prove estimates on a spacetime with corners of the form
\[
	\Omega_{s_0}
		:= \{ s \in [0, s_0] \} \cap \{r \in [r_e - \ep_M, r_c + \ep_M]\} \subset M,
\]
for some $s_0 > 0$. 
Eventually we will set $s_0 := - \ln(t_0)$, in which case $\Omega_{s_0} = \Omega$.
We will work with the function spaces
\[
	H^{m, \rho}_{\be, h}(\Omega_{s_0})^{\bullet, -},
\]
which consists of restrictions to $\Omega_{s_0}$ of distributions in $H^{m, \rho}_{\be, h}(M)$ which are \emph{supported} (in the sense of \cite{H2009}*{App.~B.2}) with respect to the boundary hypersurface
\[
	\Sigma_0
		:= \ \{ s = s_0\} \cap \{r \in [r_e - \ep_M, r_c + \ep_M]\},
\]
and \emph{extendible} (in the sense of \cite{H2009}*{App.~B.2}) at the boundary hypersurfaces
\begin{align*}
	\Sigma_e
		:= & \ \{ s \in [0, s_0] \} \cap \{r = r_e - \ep_M\}, \\ 
	\Sigma_c
		:= & \ \{ s \in [0, s_0] \} \cap \{r = r_e - \ep_M\}.
\end{align*}
The dual space to $H^{m, \rho}_{\be, h}(\Omega_{s_0})^{\bullet, -}$ is given by
\[
	H^{-m, -\rho}_{\be, h}(\Omega_{s_0})^{-, \bullet}
\]
and consists of distributions which are now extendible at $\Sigma_0$ and supported with respect to $\Sigma_e$ and $\Sigma_c$.

In order to obtain estimates on these spaces, we combine the propagation estimates in Proposition~\ref{prop: main low frequency estimate} (resp.~Proposition~\ref{prop: main low frequency adjoint estimate}) and Proposition~\ref{eq: global high frequency estimates} with energy estimates in region near the hypersurfaces $\Sigma_0, \Sigma_e$ and $\Sigma_c$.
The energy estimate near $\Sigma_0$ will take place in a region of the form
\[
	\Omega_{0, \ep}
		:= \{ s \in [\ep s_0, s_0] \} \cap \{r \in [r_e - \ep_M, r_c + \ep_M]\} \subset \Omega_{s_0}
\]
for some $\ep \in (0, 1)$.
\begin{prop}[The energy estimate near the initial hypersurface] \label{prop: energy initial hypersurface}
Let $s_0 > 0$ and $\ep \in (0, 1)$ be given.
Then there is a $e_0 > 0$ such that
\[
	\norm{u}_{H^1_h(\Omega_{0, \ep})^{\bullet, -}}
		\lesssim_e \frac1h \norm{\tP_e u}_{L^2(\Omega_{0, \ep})},
\]
and 
\[
	\norm{u}_{H^1_h(\Omega_{0, \ep})^{-, \bullet}}
		\lesssim_e \frac1h \norm{\tP_e^{*\mB} u}_{L^2(\Omega_{0, \ep})},
\]
for all $e \in (0, e_0]$, $u \in \dot C^\infty(\Omega_{0, \ep})$ and $h \in (0, h_0)$, for some $h_0$ depending on $e$.
\end{prop}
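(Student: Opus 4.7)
The plan is to invoke the standard semiclassical energy estimate for Lorentzian wave operators. The key geometric setup is already built into the problem: $\md t_*$ is past-timelike throughout $M$, the "final" hypersurfaces $\Sigma_e, \Sigma_c$ are spacelike by construction of the extension past the horizons, and $\{s=\ep s_0\}$ is spacelike as a level set of $t_*$. Hence $\Omega_{0,\ep}$ is bounded by spacelike hypersurfaces, which is exactly the geometry in which one expects an $L^2 \to H^1_h$ energy estimate of the form $\norm{u}_{H^1_h}\lesssim h^{-1}\norm{\tP_e u}_{L^2}$, since $\tP_e$ is a semiclassical second-order operator whose real principal symbol $G(\xi,\xi)\mathrm{id}$ is exactly that of a Lorentzian wave operator acting on $\bT^*M$-valued sections.

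Concretely, I would choose the multiplier $T = -s\d_s = \d_{t_*}$, which is timelike on a neighbourhood of $\Omega_{0,\ep}$ (shrinking $\ep_M$ if necessary so that this persists to $\Sigma_e$ and $\Sigma_c$). Pairing against $Tu$ in $L^2(\Omega_{0,\ep}; \mB)$ yields, after integration by parts,
\[
    2h\,\Re\,\langle \tP_e u,\, T u\rangle_{\mB}
        = \int_{\d\Omega_{0,\ep}} \mathcal E_\nu(u)\,\md S + \mathcal B(u),
\]
where $\mathcal E_\nu$ is the standard energy-momentum flux associated to $h^2\Box$ and the inner product $\mB$, and $\mathcal B(u)$ collects bulk commutator terms as well as the contribution of the lower-order terms $ih\tLop_e - h^2\tilde\Ric^\sharp$. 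The boundary flux on $\Sigma_e,\Sigma_c$ and $\{s=\ep s_0\}$ has the favourable sign (outflow through spacelike faces with positive-definite energy density, provided $\mB$ is positive definite near the multiplier — which it is, by construction), while the contribution on $\Sigma_0$ vanishes due to the support condition in $H^1_h(\Omega_{0,\ep})^{\bullet,-}$. The bulk error $\mathcal B(u)$ is estimated by $C_e\, h\, \norm{u}_{H^1_h}^2$, where the $O(h)$ is obtained by moving one derivative from $\tLop_e$ onto $Tu$ via integration by parts, using that $\tLop_e$ is a semiclassical first order operator of order $h$.

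Combining the boundary positivity with Cauchy--Schwarz on the left gives
\[
    h^2\norm{u}_{H^1_h}^2
        \lesssim h\,\norm{\tP_e u}_{L^2}\cdot h^{-1}\norm{u}_{H^1_h} + C_e h\,\norm{u}_{H^1_h}^2,
\]
and after absorbing the last term for $h$ small enough one obtains the claimed inequality $\norm{u}_{H^1_h}\lesssim_e h^{-1}\norm{\tP_e u}_{L^2}$. For the adjoint estimate one runs the same argument with the reversed multiplier $-T$ and with the roles of "initial" and "final" boundaries swapped: the support condition in $H^1_h(\Omega_{0,\ep})^{-,\bullet}$ kills the flux through $\Sigma_e \cup \Sigma_c$, and $\tP_e^{*\mB}$ differs from $\tP_e$ only in the sign of the skew-adjoint part, which does not affect the positivity of the principal energy density.

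The only nontrivial point I expect is checking that the energy density defined by $T$ and the $\mB$-inner product on $\bT^*M$ is in fact coercive, i.e., controls $\abs{hdu}^2 + \abs{u}^2$. This reduces to verifying positivity of a fixed quadratic form determined by $\mB$ and $T$, which holds because $T$ is timelike and $\mB$ is positive definite; the perturbation by the skew-adjoint first order part contributes at order $h$ and is therefore harmless. All other ingredients — integration by parts with the support/extendibility conventions, and absorption of $O(h)$ error terms — are completely standard in this setting and do not depend on any subextremality threshold.
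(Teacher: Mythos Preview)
Your argument has a genuine gap at two places. First, the claim that the bulk error from $i\tLop_e$ is $O(h)\|u\|_{H^1_h}^2$ is incorrect: since $\tLop_e u\sim h\nabla u$ while $Tu\sim\nabla u$, the pairing $2h\,\Re\langle i\tLop_e u,Tu\rangle$ has size $\|h\nabla u\|^2$, i.e.\ $O(1)$ in $\|u\|_{H^1_h}^2$, and integration by parts only trades this for terms of size $\|u\|\,\|h\nabla u\|$. In your displayed inequality this would force you to absorb $C_e h\|u\|_{H^1_h}^2$ into $h^2\|u\|_{H^1_h}^2$, which is impossible for small $h$. Second, and more fundamentally, even granting control of $\|h\nabla u\|$ (say via a weight $e^{\digamma t_*}$ with $\digamma$ large), the standard multiplier gives no mechanism for $\|u\|_{L^2}\lesssim h^{-1}\|\tP_e u\|_{L^2}$: the boundary flux only controls traces, and Poincar\'e from $\Sigma_0$ yields $\|u\|\lesssim\|\nabla u\|=h^{-1}\|h\nabla u\|\lesssim h^{-2}\|\tP_e u\|$, one power of $h$ short of the claim.

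The paper closes both gaps by using $\chi\tLop_e$ (with $\chi=e^{\digamma s}$, $\digamma$ large) as the multiplier and taking the \emph{imaginary} part of the pairing. The $i\tLop_e$-contribution then becomes the positive term $h^{-1}\|\sqrt{\chi}\,\tLop_e u\|_{L^2}^2$ rather than an error, while the commutator $\tLop_e^{*\mB}\chi\tBox_e-\tBox_e^{*\mB}\chi\tLop_e$ produces the energy--momentum tensor $\T_{\md\chi}$ relative to the timelike direction $\c$, whose coercivity together with the weight yields bulk control $(\digamma-C_e)\|\sqrt{\chi}\,h\nabla u\|^2$. The same positive term $h^{-1}\|\sqrt{\chi}\,\tLop_e u\|^2$ then dominates $2\Im\langle\tLop_e u,\chi u\rangle=\langle\tl_e(\md\chi)u,u\rangle+\text{l.o.t.}\sim\digamma\|\sqrt{\chi}\,u\|^2$, delivering the zeroth-order bound with the correct $h$-scaling. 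The choice of multiplier is therefore not cosmetic: it is exactly what converts the large skew-adjoint part of $\tP_e$ into the driving positive term.
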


\noindent
The energy estimate near $\Sigma_e$ and $\Sigma_c$ will take place in regions of the form
\begin{align*}
	\Omega_{e, \ep}
		:= & \ \{ s \in [0, s_0] \} \cap \{r \in [r_e - \ep_M, r_e - \ep \cdot \ep_M]\} \subset \Omega_{s_0} \\
	\Omega_{c, \ep}
		:= & \ \{ s \in [0, s_0] \} \cap \{r \in [r_c + \ep \cdot \ep_M, r_c + \ep_M]\} \subset \Omega_{s_0}
\end{align*}
for some $\ep \in (0, 1)$.

\begin{prop}[The energy estimate near the final hypersurfaces] \label{prop: energy final hypersurface}
Let $\ep \in (0, 1)$ and $\rho \in \R$ be given.
Then there is a $e_0 > 0$ such that
\[
	\norm{u}_{H^{1, \rho}_{\be, h}(\Omega_{e/c, \ep})^{\bullet, -}}
		\lesssim_e \frac1h \norm{\tP_e u}_{H^{0, \rho}_{\be, h}(\Omega_{e/c, \ep})^{\bullet, -}},
\]
and 
\[
	\norm{u}_{H^{1, \rho}_h(\Omega_{e/c, \ep})^{-, \bullet}}
		\lesssim_e \frac1h \norm{\tP_e^{*\mB} u}_{H^{0, \rho}_{\be, h}(\Omega_{e/c, \ep})^{\bullet, -}},
\]
for all $e \in (0, e_0]$, $u \in \dot C^\infty(\Omega_{e/c, \ep})$ and $h \in (0, h_0)$, for some $h_0$ depending on $e$.
\end{prop}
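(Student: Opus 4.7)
The plan is to prove these estimates by a standard semiclassical vector-field (energy) argument, treating $\Omega_{e,\ep}$ and $\Omega_{c,\ep}$ separately and deducing the adjoint estimates by reversing the direction of the multiplier. I describe the argument for $\Omega_{e,\ep}$ and the forward estimate for $\tP_e$; the cosmological case is analogous by symmetry. First, write $\tP_e = h^2 \Box_g \cdot \id + h \mathrm{B} + \mathrm{R}_0$ with $\mathrm{B} \in \mathrm{Diff}_\be^1$ and $\mathrm{R}_0 \in \mathrm{Diff}_\be^0$. The semiclassical principal symbol $G(\xi,\xi)\id + i\tl_e(\xi)$ has leading scalar part $G(\xi,\xi)\id$, which is the tensorial scalar wave operator on $\bT^*M$ that governs the usual energy estimate. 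The subprincipal $\tLop_e$ is one order lower in $h$ and produces errors absorbable for $h$ small (depending on $e$).

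In $\Omega_{e,\ep}$, past the event horizon in the extended region, inspection of the metric~\eqref{eq: metric} shows that for $\ep_M>0$ small the hypersurfaces $\{r = \mathrm{const}\}$ are spacelike and $\partial_r$ is timelike. I would pick a uniformly past-directed timelike b-vector field $V_e$ of the form $-\partial_r - C s\partial_s$, with $C>0$ chosen large, and form the pairing
\[
	2\Re \bigl\langle \tP_e u,\, s^{-2\rho}\, \mB(V_e, \cdot)^\sharp u \bigr\rangle_{\mB},
\]
where $\mB$ is the positive definite fiber inner product from Section~\ref{subsec: c general}. Integrating by parts against the divergence structure of $\tP_e$ produces: (i) a bulk term coercive in $\norm{u}_{H^{1,\rho}_{\be,h}(\Omega_{e,\ep})}^2$, assembled from the deformation tensor of $V_e$ plus the weight commutator $V_e(s^{-2\rho}) = 2C\rho\,s^{-2\rho}$ (which dominates $\rho$-independent errors for $C$ large); (ii) boundary contributions on each face. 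The face $\{s = s_0\}$ has vanishing flux by the supported ($\bullet$) condition at $\Sigma_0$; at $\Sigma_e = \{r = r_e - \ep_M\}$ the orientation of $V_e$ gives the non-negative sign consistent with the extendible ($-$) condition, so this flux is dropped; the b-face $\{s=0\}$ is absorbed into the bulk via the weight; the artificial inner face $\{r = r_e - \ep\cdot\ep_M\}$ is treated by a standard cutoff-and-iterate argument on a family of nested subdomains.

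Cauchy--Schwarz applied to the defining pairing yields
\[
	\bigl|\bigl\langle \tP_e u,\, s^{-2\rho} V_e u \bigr\rangle_\mB\bigr| \le \norm{\tP_e u}_{H^{0,\rho}_{\be,h}} \cdot h^{-1} \norm{u}_{H^{1,\rho}_{\be,h}},
\]
where the factor $h^{-1}$ arises because the non-semiclassical b-vector field $V_e$ costs $h^{-1}$ relative to the semiclassical norm $H^{1,\rho}_{\be,h}$. Dividing produces the asserted estimate. The adjoint estimate is proved identically with $V_e$ replaced by $-V_e$, which interchanges the roles of the supported and extendible faces. The main obstacle is uniform-in-$e$ coercivity of the tensor-valued bulk energy: since $\mB$ is not $\nabla^g$-parallel and the subprincipal $h\tLop_e$ contributes a first-order commutator piece, both produce lower-order error terms that must be absorbed by the positive bulk. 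Both enter at strictly lower semiclassical order than $h^2 \Box_g \cdot \id$, so absorption succeeds for $h$ small (depending on $e$), which is precisely where the dependence of $h_0$ on $e$ enters.
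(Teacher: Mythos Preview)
Your proposal takes a different route from the paper and contains a genuine gap in the $h$-counting. The paper does not use a timelike vector-field multiplier at all; instead it uses the operator $\chi\tLop_e$ itself as the multiplier, via the identity
\[
\frac{1}{h}\Im\langle\tP_e u,\chi\tLop_e u\rangle = \frac{1}{2ih}\bigl\langle(\tLop_e^{*\mB}\chi\tBox_e - \tBox_e^{*\mB}\chi\tLop_e)u,u\bigr\rangle + \frac{1}{h}\norm{\sqrt{\chi}\,\tLop_e u}_{L^2}^2 + (\text{lower order}).
\]
The commutator term is shown (via the energy--momentum tensor $\T_{\md\chi}$ associated with $\c$) to control $\norm{h\nabla u}^2$, while the manifestly positive term $\frac{1}{h}\norm{\sqrt{\chi}\tLop_e u}^2$ separately controls $\norm{u}_{L^2}^2$. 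A large exponential weight $\chi = e^{\digamma(\cdot)}$ is used, with $\digamma$ chosen last (depending on $e$) to absorb all residual errors.

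The gap in your argument is the claim that the $i\tLop_e$ contribution ``enters at strictly lower semiclassical order than $h^2\Box_g$'' and is therefore absorbable for $h$ small. In the pairing you set up this is false. Your multiplier $V_e$ is a \emph{non-semiclassical} vector field (as you correctly note, this is forced in order to produce the factor $h^{-1}$ on the right-hand side). Consequently the $h^2\Box_g$ bulk, after integration by parts, is of size $h^2|\partial u|^2 = \norm{h\nabla u}^2$, whereas the cross term $2\Re\langle i\tLop_e u,\,s^{-2\rho} V_e u\rangle$ is of size $h|\partial u|^2 = h^{-1}\norm{h\nabla u}^2$, i.e.\ $h^{-1}$ times \emph{larger}. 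No fixed choice of $C$ in $V_e$ or of the weight can absorb a term that grows like $h^{-1}$ relative to the bulk as $h\to 0$, and this cross term has no evident sign. This is exactly why the paper's choice of multiplier $\chi\tLop_e$ is essential rather than cosmetic: it converts the would-be cross term into the positive quantity $\frac{1}{h}\norm{\sqrt{\chi}\tLop_e u}^2$, which is then \emph{exploited} rather than absorbed. Your proposal also leaves unaddressed how the zeroth-order piece $\norm{u}_{L^2}$ of the $H^{1,\rho}_{\be,h}$ norm is obtained; in the paper this too comes from the $\frac{1}{h}\norm{\tLop_e u}^2$ term (see Lemma~\ref{le: zeroth order commutator energy estimate}).
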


\noindent
As usual, we will only prove the estimate for $\tP_e$, as the estimate for $\tP_e^{*\mB}$ will follow similarly.
Further, we will only discuss the proof of Proposition~\ref{prop: energy initial hypersurface}, as the proof of Proposition~\ref{prop: energy final hypersurface} is analogous.
We choose
\[
	\chi \tLop_e
\]
as a multiplier, where $\chi: \Omega \to [0, \infty)$, chosen later in~\eqref{eq: next chi}, and note that
\begin{equation} \label{eq: commutator equality energy estimate}
\begin{split}
	\frac 1h \Im \ldr{\tP_e u, \chi \tLop_e u}
		&= \frac1{2 i h} \ldr{ \left( \tLop_e ^{*\mB} \chi \tBox_e - \tBox_e^{*\mB} \chi \tLop_e \right) u,  u} + \frac1h \norm{\sqrt{\chi} \tLop_e u}^2_{L^2} \\
		&\qquad + h \Im \ldr{\tilde \Ric_g^\sharp u, \chi \tLop_e u }.
\end{split}
\end{equation}
The last term will be irrelevant in the estimate, since it has an extra factor of $h^2$ compared to the other terms.
The main complication is indeed to analyse the commutator term on the right-hand-side.
For $z \in M$ and $\a \in \bT_z M$, we define the \emph{energy-momentum-tensor}
\[
	\T_{\a}(\xi, \eta)
		:= G(\a, \xi) G(\c, \eta) + G(\a, \eta) G(\c, \xi) - G(\a, \c) G(\xi, \eta),
\]
for any $\xi, \eta \in \bT_z M$.

\begin{lemma}[The commutator term]
Fix a b-connection $\n$ on $\bT M$ such that $\n \mB = 0$.
There are smooth bundle maps $\Rop_{j, e}$, for $j = 1, \hdots, 8$, smoothly depending on $e > 0$, such that
\begin{equation} \label{eq: commutator energy estimate}
\begin{split}
	\frac1{2 i h} 
		& \left( \tLop_e ^{*\mB} \chi \tBox_e - \tBox_e^{*\mB} \chi \tLop_e \right) \\
		& = \left( h \n \right)^{*\mB} \left( \frac12 \left( \id + \frac{\c \otimes \c^\sharp }{G(\c, \c)} \right) \T_{\md \chi} + \sqrt{2 e} \cdot \Rop_{1, e}(\md \chi) + \chi \Rop_{2,e} \right)\left( h \n \right) \\
		&\qquad + h \left( \Rop_{3,e}(\md \chi) + \chi \Rop_{4,e} \right) (h \n) + h (h \n)^{*\mB} \left( \Rop_{5,e}(\md \chi) + \chi \Rop_{6,e} \right) \\
		&\qquad + h^2 \left( \Rop_{7,e}(\md \chi) + \chi \Rop_{8,e} \right)
\end{split}
\end{equation}
for any $h > 0$.
Moreover, $\Rop_{1,e}$ is uniformly bounded in $e > 0$.
Here, $\n^{*\mB}$ is the formal dual of $\n$ with respect to $\mB$, given by
\[
	\n^{*\mB} u(\cdot)
		= - G^{\a \b}\n_{e_\a} u(e_\b, \cdot),
\]
with respect to a frame $e_0, \hdots, e_n$.
\end{lemma}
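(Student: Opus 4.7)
The plan is to do a direct semiclassical symbol calculation and then recognise the resulting quadratic-in-$\xi$ principal symbol as a divergence-form quadratic operator $(h\n)^{*\mB}(\cdot)(h\n)$ modulo lower-order error terms which become the $\Rop_j$. I would start from the decomposition
\[
\tLop_e^{*\mB}\chi\tBox_e - \tBox_e^{*\mB}\chi\tLop_e = \chi\bigl(\tLop_e^{*\mB}\tBox_e - \tBox_e^{*\mB}\tLop_e\bigr) + [\tLop_e^{*\mB},\chi]\tBox_e - [\tBox_e^{*\mB},\chi]\tLop_e,
\]
and compute the principal semiclassical symbol of each of the three pieces via the Moyal product formula.

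\textbf{Principal symbol calculation and algebraic identification.} Using Proposition~\ref{prop: tilde L e properties}(b) ($\tl_e^{*\mB}=\tl_e$) and the scalar nature of $G(\xi,\xi)$, the pointwise matrix commutator $[\tl_e, G(\xi,\xi)]$ vanishes, so the leading symbol of the first piece cancels and the subprincipal Moyal contribution is $\tfrac{h}{i}\chi\{\tl_e, G(\xi,\xi)\}$. Homogeneity of degree $1$ in $\xi$ gives $\partial_\xi \tl_e\cdot d\chi = \tl_e(d\chi)$, so after dividing by $2ih$ the principal symbol of the LHS will be
\[
-\tfrac{1}{2}\bigl[\chi\{\tl_e, G(\xi,\xi)\} + \tl_e(d\chi)G(\xi,\xi) - 2G(\xi,d\chi)\tl_e(\xi)\bigr].
\]
Setting $M := \id + \c\otimes\c^\sharp/G(\c,\c)$, so that $\tl_0(\xi)=G(\c,\xi)M$ by (\ref{eq: tilde l 0}), the key algebraic step is the identity
\[
\tl_0(d\chi)G(\xi,\xi) - 2G(\xi,d\chi)\tl_0(\xi) = \bigl[G(\c,d\chi)G(\xi,\xi) - 2G(\xi,d\chi)G(\c,\xi)\bigr]M = -\T_{d\chi}(\xi,\xi)\,M,
\]
which recognises the combination as minus the energy-momentum tensor $\T_{d\chi}$ composed with the twist $M$, matching the claimed leading coefficient $\tfrac{1}{2}M\,\T_{d\chi}$ of the divergence-form factor. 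For $e>0$, writing $\tl_e - \tl_0 = \sqrt{2e}\bigl(s_1(\xi) + \sqrt{2e}\,s_2(\xi)\bigr)$ with $s_1,s_2$ explicit $\xi$-linear bundle maps independent of $e$, the induced correction is $\sqrt{2e}\cdot\Rop_{1,e}(d\chi)(\xi,\xi)$ with $\Rop_{1,e}$ smooth in $\sqrt{e}$ and uniformly bounded. The $\chi$-multiplied Poisson bracket $\chi\{\tl_e, G(\xi,\xi)\}$ is a $\chi$-multiplied quadratic form in $\xi$ with smooth coefficients, absorbed into $\chi\Rop_{2,e}$.

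\textbf{Quantization, error bookkeeping, and main obstacle.} I would then quantize the quadratic-in-$\xi$ symbol $\tfrac{1}{2}M\T_{d\chi} + \sqrt{2e}\Rop_{1,e}(d\chi) + \chi\Rop_{2,e}$ as the covariant divergence-form operator $(h\n)^{*\mB}(\cdot)(h\n)$ (using $\n\mB=0$ to make the adjoint clean). The discrepancy between the LHS and this operator will collect into: first-order $h$-weighted corrections $h\,\Rop(h\n)$ and $h(h\n)^{*\mB}\Rop$ coming from the subprincipal Moyal remainders and from the $O(h)$ skew-adjoint parts $\tLop_e^{*\mB}-\tLop_e$ and $\tBox_e^{*\mB}-\tBox_e$ (the former given explicitly by Proposition~\ref{prop: tilde L e properties}(c), the latter by direct computation on $\tBox_e = h^2 b_{(2e)^{1/2}}\Box b_{(2e)^{-1/2}}$), which split by $d\chi$ versus $\chi$ dependence into $\Rop_{3,e},\ldots,\Rop_{6,e}$; and $h^2$-weighted zeroth-order corrections from curvature/Christoffel remainders (including the $-h^2\tilde\Ric^\sharp$ contribution) giving $\Rop_{7,e},\Rop_{8,e}$. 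The hard part will be the algebraic identification above—recognising the seemingly ad hoc combination $\tl_e(d\chi)G(\xi,\xi) - 2G(\xi,d\chi)\tl_e(\xi)$ as $-\T_{d\chi}(\xi,\xi)M$—together with establishing the uniform boundedness of $\Rop_{1,e}$ as $e\downarrow 0$. This uniformity is essential (the later propagation estimates send $e$ small) and works because $\tl_e - \tl_0$ carries a factor of $\sqrt{2e}$ rather than $1/\sqrt{e}$ and remains $\xi$-linear, so the resulting quadratic-in-$\xi$ correction inherits a smooth $\sqrt{2e}$-prefactor with bounded coefficients.
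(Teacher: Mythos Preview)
Your approach is essentially the same as the paper's: compute the semiclassical principal symbol via a commutator decomposition, recognise the $\T_{d\chi}$ structure at $e=0$, isolate the $\sqrt{2e}$ correction with uniform bounds, and absorb the remainder into divergence-form plus lower-order pieces. Your decomposition $\chi(\tLop_e^{*\mB}\tBox_e - \tBox_e^{*\mB}\tLop_e) + [\tLop_e^{*\mB},\chi]\tBox_e - [\tBox_e^{*\mB},\chi]\tLop_e$ differs cosmetically from the paper's (which first splits off the skew-adjoint defects $\tLop_e^{*\mB}-\tLop_e$ and $\tBox_e^{*\mB}-\tBox_e$), but the resulting symbol computation and the key algebraic identification $\tl_0(d\chi)G(\xi,\xi)-2G(\xi,d\chi)\tl_0(\xi)=-\T_{d\chi}(\xi,\xi)M$ are identical.

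There is one point you gloss over that the paper treats explicitly: why the lower-order remainders $\Rop_{3,e},\ldots,\Rop_{8,e}$ can be taken to depend only on $\chi$ and $d\chi$, never on $d^2\chi$. This is not automatic from your ``collect the discrepancy'' step---the term $\tBox_e^{*\mB}\chi\tLop_e$ is second order on the left of $\chi$ and a priori produces $d^2\chi$. The paper handles this structurally: writing any order-$k$ operator $Q$ as $\sum_j A_{j,h}(h\n)^j$ with homomorphism coefficients, one has $Q^{*\mB}=(h\n)^{*\mB}S_1+S_2$ with $S_1,S_2$ of order $k-1$, so each term of the LHS factors as $(A_0+(h\n)^{*\mB}A_1)\,\mathcal{C}(\chi)\,(B_0+B_1(h\n))$ with $A_i,B_i$ homomorphisms and $\mathcal{C}(\chi)$ built from $\chi$, $h\n$, and $[\chi,h\n]=-h\,d\chi$. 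The $d^2\chi$ terms that do appear when the LHS is fully expanded are exactly those generated by the outer $(h\n)^{*\mB}$ acting on the $d\chi$-dependent middle coefficient---which is already the form claimed on the RHS. You should add this argument.

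Minor correction: your invocation of Proposition~\ref{prop: tilde L e properties}(c) is misplaced---that part computes $\tLop_e-\tLop_e^{*\mB}$ only at the special points where $\n_{\c^\sharp}\c=0$ and $dG(\c,\c)=0$. What you actually need is that $\tLop_e^{*\mB}-\tLop_e$ is $O(h)$ globally, which follows from part~(b) (self-adjointness of the principal symbol $\tl_e$).
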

\begin{proof}
We begin by comparing the semiclassical principal symbol of both sides in~\eqref{eq: commutator energy estimate}.
The principal symbol of the right-hand-side is given by
\[
	\left( \id + \frac{\c \otimes \c^\sharp }{G(\c, \c)} \right) \T_{\md \chi} + \sqrt{2 e} \cdot \Rop_{1, e}(\md \chi) + \chi \Rop_{2,e}.
\]
To compute the principal symbol of the left-hand-side, we first note the identity
\begin{align*}
	\tLop_e^{*\mB} \chi \tBox_e - \tBox_e^{*\mB} \chi \tLop_e
		&= \left( \tLop_e^{*\mB} - \tLop_e \right) \chi \tBox_e - \left( \tBox_e^{*\mB} - \tBox_e \right) \chi \tLop_e \\
		&\qquad + [\tLop_e, \chi] \tBox_e + \chi[\tLop_e, \tBox_e] + [\chi, \tBox_e] \tLop_e.
\end{align*}
The semiclassical principal symbol of the commutator term is therefore given by
\begin{align*}
	\s 
		\left( \frac1{hi} \left( \tLop_e^{*\mB} \chi \tBox_e - \tBox_e^{*\mB} \chi \tLop_e \right) \right)
		&= \chi \be_e + \left( \H_G(\chi) \tl_e - \tl_e \left( \md \chi \right) G \right),
\end{align*}
where 
\[
	\be_e
		:= \s \left( \frac1{hi} \left( \left( \tLop_e^{*\mB} - \tLop_e \right) \tP_e - \left( \tBox_e^{*\mB} - \tBox_e \right) \tLop_e + [\tLop_e, \tBox_e] \right) \right).
\]
Furthermore, we compute
\begin{align*}
	\H_G 
		& (\chi)(\xi) \tl_e(\xi) - \tl_e \left( \md \chi \right) G(\xi, \xi) \\
		&= \sum_{\a = 0}^n G^{\a \b} (\d_\a \chi) \xi_\b \tl_e(\xi) + \sum_{\b = 0}^n G^{\a \b} \xi_\a (\d_\b \chi) \tl_e(\xi) - \tl_e(\md \chi) G(\xi, \xi) \\
		&= G(\md \chi, \xi) \tl_e(\xi) + G(\xi, \md \chi) \tl_e(\xi) - \tl_e(\md \chi) G(\xi, \xi).
\end{align*}
By Proposition~\ref{prop: tilde L e properties}, we recall that
\[
	\tl_e(\xi)
		:= \tl_0(\xi) + \sqrt{2e} \cdot \r_e(\xi),
\]
where
\begin{align*}
	\tl_0(\xi)
		&= G(\c, \xi) \left( \id + \frac{\c \otimes \c^\sharp}{G(\c, \c)} \right), \\
	\r_e(\xi) \o
		&= G(\xi, \o) \c - G(\o, \c) \xi - \sqrt{2 e} \frac{G(\o, \c) G(\xi, \c)}{G(\c, \c)} \c.
\end{align*}
In conclusion, we have 
\begin{align*}
	\H_G 
		& (\chi)(\xi) \tl_e(\xi) - \tl_e \left( \md \chi \right) G(\xi, \xi) \\*
		& = \left( 2 G(\md \chi, \xi) G(\c, \xi) - G(\md \chi, \c) G(\xi, \xi) \right) \left( \id + \frac{\c \otimes \c^\sharp}{G(\c, \c)} \right) \\*
		&\qquad + \sqrt{2 e} \left( \H_G (\chi)(\xi) \r_e(\xi) - \r_e \left( \md \chi \right) G(\xi, \xi) \right)  \\
		&= \left( \id + \frac{\c \otimes \c^\sharp}{G(\c, \c)} \right) \T_{\md \chi}(\xi, \xi) + \sqrt{2 e} \cdot \q_e(\md \chi, \xi, \xi),
\end{align*}
where $\q_e$ is a polynomial of degree $3$ in $\xi$.
This completes the computation of the principal symbol.

It remains to check that all coefficients on the right-hand-side of~\eqref{eq: commutator energy estimate} indeed do not depend on second derivatives of $\chi$ (higher derivatives will clearly not show up).
For this, note that any semiclassical differential operator $Q$ of order $k$ can be written as
\[
	\Qop 
		= \sum_{j = 0}^k A_{j,h}((h \n)^j)
\]
where $A_{j,h}$ are appropriate homomorphism fields which are smooth in $h \geq 0$.
It follows that 
\[
	\Qop^{*\mB}
		= \sum_{j = 0}^k (h \n^{*\mB})^j A_{j,h}^{*\mB}
		= (h \n)^{*\mB} \Sop_1 + \Sop_2,
\]
where $\Sop_1$ and $\Sop_2$ are semiclassical differential operators of order $k-1$.
It therefore follows that the left-hand-side of~\eqref{eq: commutator energy estimate} is of the form
\[
	\left( A_0 + (h \n)^{*\mB} A_1 \right) \mathcal C(\chi) \left( B_0 + B_1 (h \n) \right),
\]
where $A_0, A_1, B_0, B_1$ are homomorphism fields and $\mathcal C(\chi)$ is some expression given by $\chi, h \n$ and commutators $[\chi, h \n]$.
Hence at most first derivatives of $\chi$ will appear in the coefficients.
\end{proof}

The next lemma is the standard coercivity of the energy-momentum tensor in this setting.

\begin{lemma} \label{le: positivity energy-momentum tensor}
If $z \in M$ and $\a \in \bT_z M$ is future directed timelike, then 
\[
	\T_{\a}(\xi, \eta)
		:= G(\a, \xi) G(\c, \eta) + G(\a, \eta) G(\c, \xi) - G(\a, \c) G(\xi, \eta)
\]
is a positive definite quadratic form on $\bT_z M$.
If, moreover, $E_z$ is a complex vector space with definite Hermitian form $\ldr{-,-}_{E_z}$, then
\[
	\T_{\a} \otimes \ldr{-,-}_{E_z} \left( \xi \otimes e, \eta \otimes f \right)
		:= \T_{\a}(\xi, \eta) \ldr{e,f}_{E_z}
\]
and extended bilinearly, defines a positive definite Hermitian form on $\bT_z M \otimes E_z$.
\end{lemma}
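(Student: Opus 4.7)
The plan is to prove this pointwise, since the assertion concerns a fixed point $z \in M$. I would first reduce to a Minkowski model by choosing a $G$-orthonormal basis $\o_0, \ldots, \o_n$ of $\bT_z^* M$ with $G(\o_0, \o_0) = -1$, $G(\o_i, \o_i) = 1$ for $i \geq 1$, all cross terms vanishing, and $\o_0$ proportional to $\a$ with matching time orientation. By the obvious homogeneity of $\T_\a$ under positive rescalings $\a \mapsto t\a$, it suffices to treat $\a = \o_0$.

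Next I would expand both $\c$ and an arbitrary $\xi$ in this basis as $\c = c_0 \o_0 + \sum_{i \geq 1} c_i \o_i$ and $\xi = \xi_0 \o_0 + \sum_{i \geq 1} \xi_i \o_i$. The hypothesis that $\c$ is a (future-directed) timelike one-form, which for the present paper's $\c$ is guaranteed by Proposition~\ref{prop: c smooth and timelike}, translates to $c_0 > 0$ together with $c_0^2 - \sum_{i\geq 1} c_i^2 > 0$. A direct substitution using the signature $(-,+,\ldots,+)$ yields
\[
\T_\a(\xi, \xi) = c_0 \xi_0^2 - 2 \xi_0 \sum_{i \geq 1} c_i \xi_i + c_0 \sum_{i \geq 1} \xi_i^2,
\]
and completing the square in the spatial components $\xi_i$ recasts this as
\[
\T_\a(\xi, \xi) = \frac{c_0^2 - \sum_{i \geq 1} c_i^2}{c_0}\,\xi_0^2 + c_0 \sum_{i \geq 1} \left(\xi_i - \frac{c_i}{c_0}\xi_0\right)^2,
\]
which is a manifestly positive definite quadratic form in $\xi$.

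For the tensor product extension, I would pick a real basis $\{v_\alpha\}$ of $\bT_z M$ in which $\T_\a$ is diagonal with (positive) eigenvalues $\lambda_\alpha > 0$. Writing an arbitrary $u \in \bT_z M \otimes_{\R} E_z$ as $u = \sum_\alpha v_\alpha \otimes e_\alpha$ with $e_\alpha \in E_z$, the induced Hermitian form evaluates to $\sum_\alpha \lambda_\alpha \ldr{e_\alpha, e_\alpha}_{E_z}$, a sum of non-negative terms that vanishes exactly when every $e_\alpha = 0$, i.e., when $u = 0$. No significant obstacle arises: the statement is a classical coercivity identity for the Lorentzian energy-momentum tensor paired with two future-directed timelike (co)vectors, and the proof reduces entirely to the elementary identity displayed above.
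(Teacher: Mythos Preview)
Your argument is correct. Both your proof and the paper's establish the same classical coercivity fact, but via different decompositions.

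The paper splits $\xi = \xi_0 \c + \xi^\perp$ along $\c$ and its $G$-orthogonal complement, obtains an identity for $\T_\a(\xi,\xi)$ containing the cross term $2\xi_0 G(\a,\xi^\perp)G(\c,\c)$, and then bounds $|G(\a,\xi^\perp)|$ by testing $\a$ against the future-directed null covectors $\tfrac{\c}{\sqrt{-G(\c,\c)}} \pm \tfrac{\xi^\perp}{\sqrt{G(\xi^\perp,\xi^\perp)}}$; this leads to a short case analysis (including the lightlike case $|\xi_0|\sqrt{-G(\c,\c)} = \sqrt{G(\xi^\perp,\xi^\perp)}$ treated separately). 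Your route instead diagonalizes $G$ with $\o_0$ along $\a$, reduces by homogeneity to $\a=\o_0$, and arrives at the explicit identity
\[
\T_\a(\xi,\xi) = \frac{c_0^2 - \sum_{i\ge 1} c_i^2}{c_0}\,\xi_0^2 + c_0 \sum_{i\ge 1}\Bigl(\xi_i - \tfrac{c_i}{c_0}\xi_0\Bigr)^2,
\]
from which positive definiteness is read off with no case splitting, using only $c_0>0$ (future-directedness of $\c$) and $c_0^2 > \sum_{i\ge 1} c_i^2$ (timelikeness of $\c$). Your argument is more elementary and self-contained; the paper's has the minor advantage of being frame-free and of making the role of the reverse Cauchy--Schwarz inequality for timelike covectors explicit. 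For the tensor-product part the two proofs coincide: the paper simply invokes the general fact that the tensor product of a positive definite real form with a positive definite Hermitian form is positive definite, which is exactly what your diagonalization spells out.
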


\begin{remark} \label{rmk: one form causality test}
A causal b-one-form $\o$ is future directed by definition precisely when $\o^\sharp$ is future directed. 
Moreover, $\c^\sharp$ and thereby $\c$ are future directed by construction.
A causal b-one-form $\o$ is therefore future directed precisely when
\[
	\o(\c^\sharp)
		= G(\o, \c)
		< 0.
\]
\end{remark}

\begin{proof}[Proof of Lemma~\ref{le: positivity energy-momentum tensor}]
Note first that $\T_{\a}(\xi, \eta)$ is symmetric in $\xi$ and $\eta$.
We write $\xi = \xi_0 \c + \xi^\perp$, where $\xi_0 \in \R$ and $\xi^\perp \perp \c$.
It follows that
\begin{align}
	\T_{\a}(\xi, \xi)
		&= 2 G(\a, \xi) G(\c, \xi) - G(\a, \c) G(\xi, \xi) \nonumber \\
		&= 2 \xi_0^2 G(\a, \c) G(\c, \c) + 2\xi_0 G(\a, \xi^\perp) G(\c, \c) \nonumber \\
		&\qquad - G(\a, \c) \left( \xi_0^2 G(\c, \c) + G(\xi^\perp, \xi^\perp) \right) \nonumber \\
		&= - G(\a, \c) \left( - \xi_0^2 G(\c, \c) + G(\xi^\perp, \xi^\perp) \right) + 2 \xi_0 G(\a, \xi^\perp) G(\c, \c). \label{eq: energy-momentum equality}
\end{align}
If $\xi^\perp = 0$, the statement is now immediate. 
We may therefore assume that $\xi^\perp \neq 0$ from now on.
Since $\a$ is future directed timelike and
\[
	\frac{\c}{\sqrt{-G(\c, \c)}} \pm \frac{\xi^\perp}{\sqrt{G(\xi^\perp, \xi^\perp)}}
\]
is future directed lightlike, it follows that 
\begin{align*}
	0
		& > G\left( \a, \frac{\c}{\sqrt{-G(\c, \c)}} \pm \frac{\xi^\perp}{\sqrt{G(\xi^\perp, \xi^\perp)}} \right) 
		= \frac{G( \a, \c)}{\sqrt{-G(\c, \c)}} \pm \frac{G(\a, \xi^\perp)}{\sqrt{G(\xi^\perp, \xi^\perp)}},
\end{align*}
i.e., 
\begin{align*}
	\abs{G(\a, \xi^\perp)}
		& < \sqrt{G(\xi^\perp, \xi^\perp)}\frac{-G( \a, \c)}{\sqrt{-G(\c, \c)}}.
\end{align*}
We therefore conclude that
\begin{align*}
	\T_{\a}(\xi, \xi)
		&\geq - G(\a, \c) \left( - \xi_0^2 G(\c, \c) + G(\xi^\perp, \xi^\perp) \right) - 2 \abs{\xi_0} \abs{G(\a, \xi^\perp)} \abs{G(\c, \c)} \\
		&> - G(\a, \c) \left( - \xi_0^2 G(\c, \c) + G(\xi^\perp, \xi^\perp) \right) \\
		&\qquad - 2 \abs{\xi_0} \sqrt{G(\xi^\perp, \xi^\perp)}\frac{-G( \a, \c)}{\sqrt{-G(\c, \c)}} \left( - G(\c, \c) \right) \\
		&= - G(\a, \c) \left( \abs{\xi_0} \sqrt{-G(\c, \c)} -  \sqrt{G(\xi^\perp, \xi^\perp)} \right)^2 \\
		& > 0,
\end{align*}
since $G(\a, \c) < 0$, unless $\abs{\xi_0} \sqrt{-G(\c, \c)} =  \sqrt{G(\xi^\perp, \xi^\perp)}$.
But in this case, $\xi$ is lightlike, and we get
\[
	\T_{\a}(\xi, \xi)
		= 2 G(\a, \xi) G(\c, \xi) 
		> 0,
\]
since $G(\a, \xi)$ and $G(\c, \xi)$ have the same sign.
Since the tensor product of a positive definite quadratic form and a positive definite Hermitian form is positive definite, the second statement follows immediately from the first one.
\end{proof}

We now turn to the proof of Proposition~\ref{prop: energy initial hypersurface}.
We use the notation
\[
	H(x)
		:= 
		\begin{cases}
			1, & x \geq 0, \\
			0, & x < 0.
		\end{cases}
\]
For a fixed $\ep \in (0, 1)$, define
\begin{equation} \label{eq: next chi}
	\chi
		:= \ e^{\digamma s} H(s - \ep s_0) H(s_0 - s) H(r_c + \ep_M - r) H(r - (r_e - \ep_M)),
\end{equation}
for any $\digamma > 0$.
Note that $\chi \geq 0$ everywhere and that $\chi$ has a positive bound from below on $\Omega_{0, \ep}$.
The differential is given by the distribution
\begin{equation} \label{eq: differential chi}
	\md \chi
		= \ \digamma \chi \md s + \chi \left( \de_{s - \ep s_0} - \de_{s - s_0} \right) \md s + \chi \left( \de_{r - r_e + \ep_M} - \de_{r - r_c - \ep_M} \right) \md r.
\end{equation}
In the following estimates, we use the notation $\abs{\cdot} := \sqrt{\mB(\cdot, \cdot)}$.

\begin{lemma} \label{le: applied T coercivity}
There is a $c > 0$ such that
\begin{align*}
	& \int_{\Omega_{s_0}} \T_{\md \chi}(h \n u, h \n u) \md \Vol_g \\
		&\quad \geq c \digamma \norm{ \sqrt{\chi} \abs {h \n u}}^2_{L^2}  + c \norm{\sqrt{\chi} \abs {h \n u}|_{s = s_0}}^2_{L^2} \\
		&\quad \qquad + c \norm{\sqrt{\chi} \abs {h \n u}|_{r = r_e - \ep_M}}^2_{L^2} + c \norm{\sqrt{\chi} \abs {h \n u}|_{r = r_c + \ep_M}}^2_{L^2},
\end{align*}
for any $u \in C^\infty(\Omega)$ with $u|_{s > s_0} = 0$, and for any $\digamma > 0$ and $h > 0$.
\end{lemma}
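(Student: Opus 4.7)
The plan is to feed the distributional identity~\eqref{eq: differential chi} into the integral, splitting the left-hand side into a bulk contribution from the smooth part $\digamma \chi\,\md s$ together with four boundary-delta contributions at $\{s = \ep s_0\}$, $\{s = s_0\}$, $\{r = r_e - \ep_M\}$, and $\{r = r_c + \ep_M\}$, and to check that each term has the correct sign by means of Lemma~\ref{le: positivity energy-momentum tensor}. The key inputs are the time orientations of $\md s$ on $\supp(\chi)$ and of $\md r$ at the two final hypersurfaces, together with the compactness of $\supp(\chi)$ to pass from pointwise to uniform positive-definiteness.

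For the bulk term, I would note that $\md s = -s\,\md t_*$, so by the past-timelike assumption on $\md t_*$ stated after~\eqref{EqIf} and $s > 0$ on $\supp(\chi) \subset \{s \in [\ep s_0, s_0]\}$, the one-form $\md s$ is future timelike there. Lemma~\ref{le: positivity energy-momentum tensor} (applied with the Hermitian form $\mB$ on the coefficient bundle) then yields a uniform pointwise lower bound $\T_{\md s}(\xi, \xi) \geq c_0 \mB(\xi, \xi)$ on $\supp(\chi)$, and integrating $\digamma\chi\,\T_{\md s}(h\n u, h\n u)$ produces the asserted $c_0 \digamma\,\norm{\sqrt\chi\,\abs{h\n u}}^2_{L^2}$ contribution.

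For the boundaries in $r$, I would use Remark~\ref{rmk: one form causality test}: since $f(r_e) = -1$ and $f' > 0$ by~\eqref{EqIf}, one has $(\md r)(\c^\sharp) = \kappa f(r_e - \ep_M) < 0$, and since $\mu(r_e - \ep_M) < 0$ one also has $G(\md r, \md r) = \mu/\varrho^2 < 0$; hence $\md r$ is future timelike at $\{r = r_e - \ep_M\}$, and Lemma~\ref{le: positivity energy-momentum tensor} together with the $+\chi\,\de_{r - r_e + \ep_M}\,\md r$ term of~\eqref{eq: differential chi} gives $c\,\norm{\sqrt\chi\,\abs{h\n u}|_{r = r_e - \ep_M}}^2_{L^2}$. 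At $\{r = r_c + \ep_M\}$, by contrast, $(\md r)(\c^\sharp) = \kappa f(r_c + \ep_M) > 0$ while $\md r$ is still timelike, so $\md r$ is past timelike; then $\T_{\md r} = -\T_{-\md r}$ is negative-definite there, and the $-\chi\,\de_{r - r_c - \ep_M}\,\md r$ coefficient in~\eqref{eq: differential chi} reverses the sign to produce the matching positive term $c\,\norm{\sqrt\chi\,\abs{h\n u}|_{r = r_c + \ep_M}}^2_{L^2}$.

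The two remaining $\md s$-deltas both involve the future-timelike direction $\md s$: at $s = \ep s_0$ the coefficient $+\chi$ gives an additional non-negative contribution that is simply discarded, while at $s = s_0$ the coefficient $-\chi$ would contribute negatively, but the hypothesis that $u \in C^\infty$ vanishes on $\{s > s_0\}$ forces $u$, and hence $h\n u$, to vanish to infinite order at $\Sigma_0 = \{s = s_0\}$, so this boundary integral is identically zero; trivially, so is the term $c\,\norm{\sqrt\chi\,\abs{h\n u}|_{s = s_0}}^2_{L^2}$ on the right. The main obstacle is the sign bookkeeping between the distributional signs in~\eqref{eq: differential chi} and the time orientations of $\md r$ on either side of the exterior region; once these matches are confirmed, summing the four non-negative contributions yields the asserted inequality with a common constant $c > 0$.
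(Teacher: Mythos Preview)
Your proposal is correct and follows essentially the same route as the paper: split via the distributional identity~\eqref{eq: differential chi}, check that $\md s$, $\md r|_{r=r_e-\ep_M}$, and $-\md r|_{r=r_c+\ep_M}$ are future timelike using Remark~\ref{rmk: one form causality test}, and apply Lemma~\ref{le: positivity energy-momentum tensor}. The one cosmetic difference is that the paper retains the $s=\ep s_0$ boundary contribution as a positive term in the final bound (the ``$s=s_0$'' in the lemma statement appears to be a typo for $\ep s_0$, as the paper's own proof and its subsequent use in Lemma~\ref{le: first order commutator energy estimate} make clear), whereas you discard it as non-negative and handle the literal $s=s_0$ term as trivially zero on both sides; either reading is fine.
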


\begin{proof}
By~\eqref{eq: differential chi} and the support condition on $u$, we have
\begin{align*}
	\int_{\Omega_{s_0}} 	\T_{\md \chi}(h \n u, h \n u) \md \Vol_g
		&= \digamma \int_{\Omega_{s_0}} \chi \T_{\md s}(h \n u, h \n u) \md \Vol_g \\
		&\qquad + \int_{[r_e - \ep_M, r_c + \ep_M]} \chi \T_{\md s}(h \n u, h \n u)|_{s = \ep s_0} \md S_g \\
		&\qquad + \int_{[\ep s_0, s_0]} \chi \T_{\md r}(h \n u, h \n u)|_{r = r_e - \ep_M} \md S_g \\
		&\qquad + \int_{[\ep s_0, s_0]} \chi \T_{-\md r}(h \n u, h \n u)|_{r = r_c + \ep_M} \md S_g 
\end{align*}
where $\md S_g$ is induced volume form on the respective hypersurface by $\md \Vol_g$.
Note that $\md s$, $\md r|_{r = r_e - \ep_M}$ and $\md r|_{r = r_c + \ep_M}$ are all timelike (recall the discussion in the introduction where it was arranged that $\md t_* = - \frac{\md s}s$ is timelike).
Since
\begin{align*}
	\md s(\c^\sharp)
		&= - s 
		< 0, \\
	\md r(\c^\sharp) |_{r = r_e - \ep_M}
		&= \g f(r_e - \ep_M)
		< 0, \\
	\md r(\c^\sharp) |_{r = r_c + \ep_M}
		&= \g f(r_c + \ep_M)
		> 0,
\end{align*}
the argument in Remark~\ref{rmk: one form causality test} shows that $\md s$, $\md r|_{r = r_e - \ep_M}$ and $- \md r|_{r = r_c + \ep_M}$ are future directed timelike.
Lemma~\ref{le: positivity energy-momentum tensor} therefore implies the desired estimate.
\end{proof}

Using this lemma, we now prove the following.

\begin{lemma} \label{le: first order commutator energy estimate}
There are $c, \digamma_0, e_0 > 0$ such that for any $\digamma > \digamma_0$, $e \in (0, e_0]$,
\begin{align*}
	\frac1{2 i h} 
		&\ldr{ \left( \tLop_e ^{*\mB} (\chi) \tBox_e + \tBox_e^{*\mB} (\chi) \tLop_e \right) u,  u} \\
		&\geq c \left( \digamma - C_e \right) \norm{ \sqrt{\chi} \abs {h \n u}}^2_{L^2}  - C_e \digamma h^2 \norm{\sqrt{\chi} u}_{L^2}^2 - C_e h^2 \norm{\sqrt{\chi} u|_{s = s_0}}_{L^2}^2 \\
		&\qquad - C_e h^2 \norm{\sqrt{\chi} u|_{r = r_e - \ep_M}}_{L^2}^2 - C_e h^2 \norm{\sqrt{\chi} u|_{r = r_c + \ep_M}}_{L^2}^2,
\end{align*}
for any $u \in C^\infty(M)$ with $u|_{s > s_0} = 0$, where $C_e > 0$ depends on $e$, and any $h > 0$.
\end{lemma}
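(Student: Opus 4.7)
The plan is to start from the identity~\eqref{eq: commutator energy estimate}, take the $L^2$-pairing against $u$, and systematically extract coercive contributions while absorbing errors. Write $M_0 := \id + \tfrac{\c\otimes\c^\sharp}{G(\c,\c)}$. I first observe that $M_0$ is $\mB$-positive: using $\mB(\c,\o) = -G(\c,\o)$ one computes $\mB(M_0\o,\o) = \mB(\o,\o) - G(\o,\c)^2/G(\c,\c)$, which is positive definite since $\mB$ is positive definite and $G(\c,\c)<0$. Combined with Lemma~\ref{le: positivity energy-momentum tensor}, this means the tensor product quadratic form $\tfrac12 \T_{\md\chi}(h\n u,h\n u)_{M_0}$ is pointwise non-negative whenever $\md\chi$ is future directed timelike.

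Next I integrate by parts (the connection $\n$ is $\mB$-compatible by construction) to rewrite the principal summand in~\eqref{eq: commutator energy estimate} as
\[
  \Big\langle \big(\tfrac12 M_0 \T_{\md\chi} + \sqrt{2e}\,\Rop_{1,e}(\md\chi) + \chi \Rop_{2,e}\big)(h\n u),\, h\n u\Big\rangle_{L^2,\mB}.
\]
Using~\eqref{eq: differential chi}, the $\tfrac12 M_0 \T_{\md\chi}$ contribution splits into a bulk piece $\digamma\chi M_0\T_{\md s}(h\n u,h\n u)$ plus boundary surface integrals at $s=\varepsilon s_0$, $s=s_0$, $r=r_e-\varepsilon_M$ and $r=r_c+\varepsilon_M$. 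By Lemma~\ref{le: applied T coercivity} (applied in the tensor product form, where the $M_0$-weighted inner product on the bundle factor is positive by the previous paragraph) these yield
\[
  c\digamma\,\|\sqrt{\chi}|h\n u|\|_{L^2}^2 + c\|\sqrt{\chi}|h\n u|_{s=s_0}\|_{L^2}^2 + c\|\sqrt{\chi}|h\n u|_{r=r_e-\varepsilon_M}\|_{L^2}^2 + c\|\sqrt{\chi}|h\n u|_{r=r_c+\varepsilon_M}\|_{L^2}^2,
\]
where the support condition $u|_{s>s_0}=0$ eliminates the $s=\varepsilon s_0$ boundary piece (it appears with a sign that makes it a contribution we can discard or absorb).

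I then estimate the error terms: the $\sqrt{2e}\,\Rop_{1,e}(\md\chi)$ term produces only boundary integrals, each bounded by $C\sqrt e\,\|\sqrt{\chi}|h\n u|\|_{L^2(\text{bdry})}^2$, which is absorbed into the positive boundary contributions once $e\le e_0$ is small enough (this is where uniform boundedness of $\Rop_{1,e}$ in $e$ matters). The $\chi\Rop_{2,e}$ bulk term is controlled by $C_e\|\sqrt{\chi}|h\n u|\|_{L^2}^2$ and absorbed into the $c\digamma$ bulk term for $\digamma \geq \digamma_0 := C_e/c$. The first-order $h$-terms of shapes $h\,\Rop(\md\chi)(h\n u)$ and $h\,(h\n)^{*\mB}\Rop(\md\chi)u$ produce boundary pairings $h\,\langle h\n u, u\rangle_{\text{bdry}}$; Cauchy--Schwarz plus Young's inequality splits each into $\eta\|h\n u\|^2_{\text{bdry}} + \eta^{-1}h^2\|u\|^2_{\text{bdry}}$ with $\eta$ small, giving absorbable boundary contributions plus the advertised $C_e h^2\|\sqrt{\chi} u|_{\text{bdry}}\|^2$ errors. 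The analogous bulk pairings $h\,\langle\chi R(h\n u),u\rangle$ are bounded by $\eta\|\sqrt{\chi}|h\n u|\|^2 + \eta^{-1}h^2\|\sqrt{\chi}u\|^2$, producing the $C_e\digamma h^2\|\sqrt{\chi}u\|^2$ term (the $\digamma$ enters because one has to rescale $\eta$ against the $\digamma$-sized coercive bulk). Finally the $h^2$-terms directly give $h^2\|\sqrt\chi u\|^2$-type contributions.

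The main technical obstacle is the simultaneous balancing of three small parameters: $e$ must be small enough to kill the $\sqrt{2e}$ boundary cross-terms uniformly in $\digamma$, then $\digamma$ must be taken large compared to the $e$-dependent constant $C_e$ arising from the $\chi\Rop_{2,e}$ bulk term and the bulk cross-terms, and finally $h$ is taken small only after $e$ and $\digamma$ are fixed so that the $h^2$-errors remain error terms rather than competing with the $c(\digamma-C_e)$ coefficient. The bookkeeping is made tractable by the fact that $M_0\T_{\md s}$ is a uniformly positive quadratic form on $\bT^*M\otimes \bT^*M$ in a neighborhood of $\Omega_{0,\varepsilon}$, so the constant $c$ can be chosen independently of $e$ and $\digamma$.
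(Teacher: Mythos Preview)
Your approach is the same as the paper's, but there is one genuine slip. You claim that the $\sqrt{2e}\,\Rop_{1,e}(\md\chi)$ term ``produces only boundary integrals''. This is false: by~\eqref{eq: differential chi}, $\md\chi$ contains the bulk piece $\digamma\chi\,\md s$, so $\sqrt{2e}\,\Rop_{1,e}(\md\chi)$ contributes a bulk error
\[
  \sqrt{2e}\,\digamma\,\big\langle \chi\,\Rop_{1,e}(\md s)(h\n u),\,h\n u\big\rangle_{L^2,\mB}
  \ \ge\ -C\sqrt{e}\,\digamma\,\big\|\sqrt{\chi}\,|h\n u|\big\|_{L^2}^2,
\]
with $C$ independent of $e$ (this uses the uniform boundedness of $\Rop_{1,e}$). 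This term grows linearly in $\digamma$ and therefore \emph{cannot} be absorbed by taking $\digamma$ large; it is absorbed into the coercive $c\digamma$ bulk term precisely by choosing $e_0$ small enough that $C\sqrt{e_0}<c$. This is in fact the primary reason $e_0$ must be chosen small and uniform in $\digamma$ --- your stated reason (boundary absorption) is secondary. The paper handles exactly this point, producing the combined lower bound $(c-C\sqrt{e})\digamma - C_e$ for the bulk coefficient.

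Two minor remarks. First, the support condition $u|_{s>s_0}=0$ kills the boundary contribution at $s=s_0$ (which carries the unfavourable sign in~\eqref{eq: differential chi}), not at $s=\varepsilon s_0$; the $s=\varepsilon s_0$ contribution has the good sign and is kept. Second, there is no need to take $h$ small in this lemma: the statement is for all $h>0$, and the $h^2$ terms are simply permitted errors in the conclusion, not quantities to be absorbed here.
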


\begin{proof}
We estimate the terms in~\eqref{eq: commutator energy estimate}.
By Lemma~\ref{le: positivity energy-momentum tensor}, 
\begin{align*}
	& \ldr{\left( \frac12 \left( \id + \frac{\c \otimes \c^\sharp }{G(\c, \c)} \right) \T_{\md \chi} + \sqrt{2 e} \cdot \Rop_{1, e}(\md \chi) + \chi \Rop_{2,e} \right)\left( h \n \right) u, \left( h \n \right) u} \\
	&\quad \geq \frac12 \int_M \T_{\md \chi}(h \n u, h \n u) \md \Vol_g - C \sqrt{e} \digamma \norm{\sqrt{\chi} \abs {h \n u}}^2_{L^2} \\
	&\quad \qquad - C \sqrt{e} \norm{\sqrt{\chi} \abs {h \n u}|_{s = \ep s_0}}^2_{L^2} - C \sqrt{e} \norm{\sqrt{\chi} \abs {h \n u}|_{r = r_e - \ep_M}}^2_{L^2} \\
	&\quad \qquad - C \sqrt{e} \norm{\sqrt{\chi} \abs {h \n u}|_{r = r_c + \ep_M}}^2_{L^2} - C_e \norm{\sqrt \chi \abs{h \n u}}_{L^2}^2,
\end{align*}
for some constants $C, C_e > 0$ independent of $\digamma$ (where $C$ is independent of $e > 0$).
We continue with the remaining terms of~\eqref{eq: commutator energy estimate}.
\begin{align*}
	& h \ldr{\left( \Rop_{3,e}(\md \chi) + \chi \Rop_{4,e} \right) (h \n)u, \o} + h \ldr{\left( \Rop_{5,e}(\md \chi) + \chi \Rop_{6,e} \right) u, (h \n) u } \\
		&\qquad + h^2 \ldr{\left( \Rop_{7,e}(\md \chi) + \chi \Rop_{8,e} \right)u, u} \\
		&\geq - \de \digamma \norm{\sqrt{\chi} \abs {h \n u}}_{L^2}^2 - \digamma \frac {C_e} \de h^2 \norm{\sqrt{\chi} u}_{L^2}^2 \\
	&\quad \qquad - \de  \norm{\sqrt{\chi} \abs {h \n u}|_{s = \ep s_0}}_{L^2}^2 - \frac {C_e} \de h^2 \norm{\sqrt{\chi} u|_{s = \ep s_0}}_{L^2}^2 \\
	&\quad \qquad - \de \norm{\sqrt{\chi} \abs {h \n u}|_{r = r_e - \ep_M}}_{L^2}^2 - \frac {C_e} \de h^2 \norm{\sqrt{\chi} u|_{r = r_e - \ep_M}}_{L^2}^2 \\
	&\quad \qquad - \de \norm{\sqrt{\chi} \abs {h \n u}|_{r = r_c + \ep_M}}_{L^2}^2 - \frac {C_e} \de h^2 \norm{\sqrt{\chi} u|_{r = r_c + \ep_M}}_{L^2}^2,
\end{align*}
for a constant $C_e > 0$ and any $\de > 0$.
Applying Lemma~\ref{le: applied T coercivity} with $e_0 > 0$ and $\de > 0$ small enough and $\digamma_0$ large enough, the statement follows.
\end{proof}

By choosing $\digamma$ large enough in Lemma~\ref{le: first order commutator energy estimate}, depending on the choice of $e$, we can control any first derivatives of $u$.
The next statement will give the necessary control of the zeroth order terms.

\begin{lemma}\label{le: zeroth order commutator energy estimate}
Let $\ep \in (0, 1)$.
There are $c, \digamma_0, e_0, h_0 > 0$ such that
\begin{align*}
	\frac1h \norm{\sqrt{\chi}\tLop_e u}_{L^2}^2
		& \geq c \left( \digamma - C_e \right) \norm{\sqrt{\chi} u}_{L^2}^2 + c \norm{\sqrt{\chi} u|_{s = \ep s_0}}_{L^2}^2 \\*
		& \qquad + c \norm{\sqrt{\chi} u|_{r = r_e - \ep_M}}_{L^2}^2 \\*
		& \qquad + c \norm{\sqrt{\chi} u|_{r = r_c + \ep_M}}_{L^2}^2,
\end{align*}
for any $u \in C^\infty(M)$ with $u|_{s > s_0} = 0$, and for any $\digamma > \digamma_0$, $e \in (0, e_0]$ and $h \in (0, h_0)$.
\end{lemma}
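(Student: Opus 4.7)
The plan is to mirror the positive-commutator argument of Lemma~\ref{le: first order commutator energy estimate}, but with $\tLop_e$ taking the place of $\tBox_e$ in both slots; the output is then an $L^2$-type coercivity $\|\sqrt\chi u\|^2$ rather than the $H^1$-type coercivity $\|\sqrt\chi|h\n u|\|^2$. First I would rewrite
\[
  \tfrac{1}{h}\|\sqrt\chi\tLop_e u\|_{L^2}^2 = \tfrac{1}{h}\ldr{\tLop_e^{*\mB}\chi\tLop_e u,u}_\mB
\]
and split $\tLop_e^{*\mB}\chi\tLop_e = \chi\tLop_e^{*\mB}\tLop_e + [\tLop_e^{*\mB},\chi]\tLop_e$. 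The first summand contributes non-negatively as an operator and can be discarded (or used to absorb error terms at positive frequency). All the coercivity has to come from the commutator term.

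Next I would invoke Proposition~\ref{prop: tilde L e properties}(b), which gives $\tl_e^{*\mB} = \tl_e$, so the semiclassical expansion yields $[\tLop_e^{*\mB},\chi] = -ih\,\tl_e(d\chi) + h^2 R$ for a smooth bundle map $R$. Substituting~\eqref{eq: differential chi},
\[
  d\chi = \digamma\chi\,ds + \chi(\delta_{s-\ep s_0}-\delta_{s-s_0})\,ds + \chi(\delta_{r-(r_e-\ep_M)}-\delta_{r-(r_c+\ep_M)})\,dr,
\]
the bulk part $\digamma\chi\,\tl_e(ds)$, after taking the $\mB$-Hermitian part of $-i\tl_e(d\chi)\tLop_e/h + \text{adjoint}$, becomes at leading order a zeroth-order self-adjoint operator proportional to $\digamma\chi$ times a positive-semidefinite endomorphism on $\bT^*M$. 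Positivity here is the tensorial analogue of Lemma~\ref{le: positivity energy-momentum tensor} applied to the timelike one-form $ds$ (which is future timelike by Remark~\ref{rmk: one form causality test}, since $ds(\c^\sharp)=-s<0$ on $\esssupp(\chi)$); the $\sqrt{2e}$-correction in the formula for $\tl_e$ supplies the missing positivity in the $\c$-direction that is projected out by $\id + \c\otimes\c^\sharp/G(\c,\c)$ at $e=0$.

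The boundary-delta parts of $d\chi$ produce boundary integrals at $s=\ep s_0,\,s_0$ and at $r=r_e-\ep_M,\,r_c+\ep_M$. The $s=s_0$ contribution vanishes by the support hypothesis $u|_{s>s_0}=0$. For the other three, $ds|_{s=\ep s_0}$, $dr|_{r=r_e-\ep_M}$, and $-dr|_{r=r_c+\ep_M}$ are all future-timelike (same check as in the proof of Lemma~\ref{le: applied T coercivity}), so the same $\tl_e$-coercivity gives the positive boundary terms appearing on the right-hand side of the claim. Choosing $\digamma \geq \digamma_0$ large enough that the leading $c\digamma\|\sqrt\chi u\|^2$ dominates all $\digamma$-independent remainders of the form $C_e\|\sqrt\chi u\|^2$, and then $h\leq h_0$ small enough (depending on $e$ through $\digamma$ and the $\sqrt{2e}$-denominators implicit in the coercivity constants) to absorb the $O(h)$ and $O(h^2)$ terms, completes the argument.

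The main obstacle is the tensorial coercivity: one must show that the Hermitian part of the symbol of $\tl_e(ds)\,\tl_e(\xi)$ in a neighbourhood of the relevant $\xi$ is uniformly positive on $\bT^*M$ in $e\in(0,e_0]$. This reduces to a matrix computation entirely analogous to the proof of Proposition~\ref{prop: tilde L e properties}(c), now with $ds$ playing the role that $\c$ played there. Tracking the constants carefully, in particular the $e$-dependence coming from the denominators $G(\c,\c)^{-1}$ and the factor $\sqrt{2e}$ that controls the $\c$-direction, yields the quoted $(\digamma-C_e)$ coefficient.
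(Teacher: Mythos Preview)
Your decomposition does not work as written, and the approach is different from (and more complicated than) the paper's.

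First, the split $\tLop_e^{*\mB}\chi\tLop_e = \chi\tLop_e^{*\mB}\tLop_e + [\tLop_e^{*\mB},\chi]\tLop_e$ does not have the properties you claim. The first summand is not self-adjoint, and its pairing $\langle\chi\tLop_e^{*\mB}\tLop_e u,u\rangle_\mB = \langle\tLop_e u,\tLop_e(\chi u)\rangle_\mB$ is not non-negative; in fact, expanding $\tLop_e(\chi u)=\chi\tLop_e u+[\tLop_e,\chi]u$ shows you have merely rewritten $\|\sqrt\chi\tLop_e u\|^2$ plus a cross term, so nothing has been gained. More seriously, the remaining piece $\tfrac{1}{h}[\tLop_e^{*\mB},\chi]\tLop_e$ is a \emph{first}-order semiclassical operator (zeroth-order commutator composed with first-order $\tLop_e$); its pairing with $u$ produces terms involving $h\nabla u$, not $\|u\|^2$. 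Your assertion that ``after taking the $\mB$-Hermitian part'' it ``becomes at leading order a zeroth-order self-adjoint operator'' is unexplained: $\tfrac{1}{h}\tLop_e^{*\mB}\chi\tLop_e$ is already self-adjoint, so there is no anti-Hermitian part to discard, and no mechanism in your outline reduces the order. Your final paragraph, reducing to coercivity of the Hermitian part of the symbol $\tl_e(\md s)\tl_e(\xi)$, would (even if carried out) give $H^1_h$-type control, not the $L^2$ control the lemma asserts; and this symbol is degenerate on $\{G(\c,\xi)=0\}$ for small $e$, so no uniform lower bound is available there. (A minor side error: $\id+\c\otimes\c^\sharp/G(\c,\c)$ does not project out the $\c$-direction---it acts as multiplication by $2$ on $\c$.)

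The paper's argument is shorter and sidesteps all of this. One uses Cauchy--Schwarz in the form
\[
  \tfrac{1}{h}\|\sqrt\chi\,\tLop_e u\|^2 + h\|\sqrt\chi\,u\|^2 \;\geq\; 2\,\Im\langle\tLop_e u,\chi u\rangle_\mB,
\]
and then computes $2\,\Im\langle\tLop_e u,\chi u\rangle_\mB = \langle i[\tLop_e,\chi]u,u\rangle_\mB - \langle i(\tLop_e-\tLop_e^{*\mB})\chi u,u\rangle_\mB$. Both terms on the right are \emph{zeroth order}: the commutator has principal part $\tl_e(\md\chi)$, and $\tLop_e-\tLop_e^{*\mB}$ is $O(h)$ since $\tl_e$ is $\mB$-symmetric. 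Substituting~\eqref{eq: differential chi} into $\tl_e(\md\chi)=\md\chi(\c^\sharp)\bigl(\id+\tfrac{\c\otimes\c^\sharp}{G(\c,\c)}\bigr)+\sqrt{2e}\,\r_e(\md\chi)$ and evaluating $\md\chi(\c^\sharp)$ directly (a scalar!) yields the bulk term $\digamma s\chi$ times a uniformly positive-definite endomorphism, plus the boundary terms, plus $\sqrt e$- and $h$-small errors. The point is that pairing $\tLop_e u$ against $\chi u$ rather than against $\chi\tLop_e u$ is exactly what drops the order by one.
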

\begin{proof}
We compute
\begin{align*}
	2 \Im \ldr{\tLop_e u, \chi u}
		&= - i \ldr{  \left( \chi \tLop_e - \tLop_e^{*\mB} \chi\right) u, u} \\
		&= \ldr{ i [\tLop_e, \chi] u, u} - \ldr{i \left( \tLop_e - \tLop_e^{*\mB} \right) \chi u, u} \\
		&= \ldr{ \tl_e(\md \chi) u, u} - i \ldr{ \left( \tLop_e - \tLop_e^{*\mB} \right) \chi u, u} \\
		&= \ldr{ \md \chi(\c^\sharp) \left( \id + \frac{\c \otimes \c^\sharp}{G(\c, \c)} \right) u, u} + \sqrt{2e}\ldr{ \r_e(\md \chi) u, u} \\
		&\qquad - \ldr{ i \left( \tLop_e - \tLop_e^{*\mB} \right) \chi u, u},
\end{align*}
where we recall that $\r_e$ is bounded uniformly in $e \in [0, 1]$.
Moreover, where $u \neq 0$, we have
\begin{align*}
	\md \chi(\c^\sharp)
		= & \ - \digamma \chi s - \chi s \de_{s - \ep s_0} \\
		& \ + \a \chi f(r - r_e + \ep_M) \de_{r - r_e + \ep_M} - \a \chi f(r - r_c - \ep_M) \de_{r - r_c - \ep_M}.
\end{align*}
We thus get the estimate
\begin{align*}
	\frac1h 
		& \norm{\sqrt{\chi}\tLop_e u}_{L^2}^2 + h \norm{\sqrt{\chi} u}_{L^2}^2 \\
		& \geq 2 \Im \ldr{\tLop_e u, \chi u} \\
		& \geq \ep s_0 \digamma \norm{\sqrt{\chi} u}_{L^2}^2 + \ep s_0 \norm{\sqrt{\chi} u|_{s = \ep s_0}}_{L^2}^2 \\*
		& \qquad + \a \abs{f(r - r_e + \ep_M)} \norm{\sqrt{\chi} u|_{r = r_e - \ep_M}}_{L^2}^2 \\*
		& \qquad + \a \abs{f(r - r_c - \ep_M)} \norm{\sqrt{\chi} u|_{r = r_c + \ep_M}}_{L^2}^2 \\
		& \qquad - \sqrt{e} C \digamma \norm{\sqrt{\chi} u}_{L^2}^2 - \sqrt{e} C\norm{\sqrt{\chi} u|_{s = s_0}}_{L^2}^2 \\*
		& \qquad - \sqrt{e} C \norm{\sqrt{\chi} u|_{r = r_e - \ep_M}}_{L^2}^2 - \sqrt{e} C \norm{\sqrt{\chi} u|_{r = r_c + \ep_M}}_{L^2}^2 \\
		& \qquad - C_e \norm{\sqrt{\chi} u}_{L^2}^2,
\end{align*}
for some constants $C, C_e > 0$ (where $C$ is independent of $e$).
This proves the statement if $\digamma_0 > 0$ is large enough and $e_0, h_0 > 0$ are small enough.
\end{proof}

We may now finally turn to the proof of the energy estimate near the initial hypersurface.

\begin{proof}[Proof of Proposition~\ref{prop: energy initial hypersurface}]

By combining~\eqref{eq: commutator equality energy estimate} with Lemma~\ref{le: first order commutator energy estimate} and Lemma~\ref{le: zeroth order commutator energy estimate}, we get the estimate
\begin{align*}
	\frac2h \norm{\sqrt{\chi} \tP_e u}^2
		& \geq \frac1 h \Im \ldr{\tP_e u, \chi \tLop_e u} - \frac1{2h} \norm{\sqrt \chi \tLop_e u}^2 \\
		& = \frac1{2 i h} \ldr{ \left( \tLop_e ^{*\mB} \chi \tBox_e - \tBox_e^{*\mB} \chi \tLop_e \right) u,  u} + \frac1{2h} \norm{\sqrt{\chi} \tLop_e u}^2_{L^2} \\*
		& \qquad + h \Im \ldr{\tilde \Ric_g^\sharp u, \chi \tLop_e u } \\*
		& \geq \frac1{2 i h} \ldr{ \left( \tLop_e ^{*\mB} \chi \tBox_e - \tBox_e^{*\mB} \chi \tLop_e \right) u,  u} + \left( \frac1{2h} - 1 \right) \norm{\sqrt{\chi} \tLop_e u}^2_{L^2} \\*
		& \qquad - h C \norm{\sqrt{\chi}u}_{L^2}^2 \\
		& \geq c \left( \digamma - C_e \right) \norm{ \sqrt{\chi} \abs {h \n u}}^2_{L^2} - C_e \digamma h^2 \norm{\sqrt{\chi} \abs {u}}_{L^2}^2 \\*
		& \qquad - C_e h^2 \norm{\sqrt{\chi} \abs {u}|_{s = s_0}}_{L^2}^2 - C_e h^2 \norm{\sqrt{\chi} \abs {u}|_{r = r_e - \ep_M}}_{L^2}^2 \\*
		& \qquad - C_e h^2 \norm{\sqrt{\chi} \abs {u}|_{r = r_c + \ep_M}}_{L^2}^2 \\*
		& \qquad + c \left( \digamma - C_e \right) \norm{\sqrt{\chi} u}_{L^2}^2 + c \norm{\sqrt{\chi} u|_{s = \ep s_0}}_{L^2}^2 \\*
		& \qquad + c \norm{\sqrt{\chi} u|_{r = r_e - \ep_M}}_{L^2}^2 + c \norm{\sqrt{\chi} u|_{r = r_c + \ep_M}}_{L^2}^2 \\*
		& \qquad - h C \norm{\sqrt{\chi}u}_{L^2}^2 \\
		& \geq c \left( \digamma - C_e \right) \left( \norm{ \sqrt{\chi} \abs {h \n u}}^2_{L^2} + \norm{\sqrt{\chi} u}_{L^2}^2 \right)
\end{align*}
if $e_0 > 0$ is small enough and $h_0$ is small enough depending on the choice of $e$.
Fixing $e > 0$, we may finally choose $\digamma$ large enough to get the desired estimate.
\end{proof}

\subsection{Global estimates and conclusion}

We now turn to the proof of the constraint damping.

\begin{proof}[Proof of Theorem~\ref{thm: stable constraint propagation}]

Combining the low frequency estimate, Proposition~\ref{prop: main low frequency estimate}, the high frequency estimate, Proposition~\ref{eq: global high frequency estimates}, the ellipticity at finite frequency, Proposition~\ref{prop: ellipticity finite frequencies}, and the energy estimate near the initial and final hypersurfaces, Proposition~\ref{prop: energy initial hypersurface} and Proposition~\ref{prop: energy final hypersurface}, we conclude that there are $\rho_0, e_0 > 0$ such that
\[
	\norm{u}_{H^{1, \rho}_{\be, h} (\Omega_{s_0})^{\bullet, -}}
		\lesssim_e \frac 1h \norm{\tP_e u}_{L^2(\Omega_{s_0})},
\]
for all $\rho \in [0, \rho_0]$, $e \in (0, e_0]$, $u \in C^\infty(\Omega)$ with $u|_{s > s_0} = 0$ and $h \in (0, h_0)$, where $h_0$ depends on $e$.
Recalling that 
\[
	\tP_e
		= b_{(2e)^{1/2}} \P_e b_{(2e)^{-1/2}},
\]
we conclude that there are $\rho_0, e_0$ such that
\[
	\norm{u}_{H^{1, \rho}_{\be, h} (M_{s_0})^{\bullet, -}}
		\lesssim_e \frac 1h \norm{\P_e u}_{H^{0, \rho}_{\be, h} (M_{s_0})^{\bullet, -}},
\]
for all $\rho \in [0, \rho_0]$, $e \in (0, e_0]$, $u \in C^\infty(\Omega)$ with $u|_{s > s_0} = 0$ and $h \in (0, h_0)$, where $h_0$ depends on $e$.
The same propositions, with the exception of Proposition~\ref{prop: main low frequency estimate}, which we replace with Proposition~\ref{prop: main low frequency adjoint estimate}, similarly imply that there are $\rho_0, e_0 > 0$ such that
\[
	\norm{u}_{H^{1, - \rho}_{\be, h} (\Omega_{s_0})^{-, \bullet}}
		\lesssim_e \frac 1h \norm{\P_e^{*\mB} u}_{H^{0, - \rho}_{\be, h} (\Omega_{s_0})^{-, \bullet}},
\]
for all $\rho \in [0, \rho_0]$, $e \in (0, e_0]$, $u \in C^\infty(\Omega)$ with $u|_{r \notin [r_e - \ep_M, r_c + \ep_M]} = 0$ and $h \in (0, h_0)$, where $h_0$ depends on $e$.

This implies the solvability of the adjoint operator to $\P_e^{*\mB}$, i.e., we obtain a forward solution operator 
\[
	\P_e^{-1}
		: H^{-1, \rho}_{\be, h} (\Omega_{s_0})^{\bullet, -} \to H^{0, \rho}_{\be, h} (\Omega_{s_0})^{\bullet, -}.
\]
for any $\rho \in [0, \rho_0]$, $e \in (0, e_0]$, $h \in (0, h_0)$.
Let us now assume that there is a resonant state 
\[
	u(s, r, \psi, \theta)
		= s^{i \s} v(r, \psi, \theta)
\]
such that $\P_e u = 0$ with $\Im(\s) > -\rho$.
Let $\phi \in C^\infty(\R)$ be such that $\phi(s) = 1$ for $s < \frac{s_0}4$ and $\phi(s) = 0$ for $s > \frac{s_0}2$, and consider
\[
	\tilde u(s, r, \psi, \theta)
		:= \phi(s) u(s, r, \psi, \theta).
\]
It follows that $\P_e \tilde u$ vanishes for $s \notin [s_0/4, s_0/2]$.
In particular, 
\[
	\P_e \tilde u \in H^{-1, \rho}_{\be, h} (\Omega_{s_0})^{\bullet, -}.
\]
Applying the solution operator $\P_e^{-1}$ to $\P_e \tilde u$ implies, since the Cauchy problem for linear wave equations has a unique solution in any fixed real Sobolev regularity (this is essentially due to an energy estimate like Proposition~\ref{prop: energy initial hypersurface} improved to arbitrary regularity by, e.g., standard propagation of singularities and considering the estimate for  adjoint estimate), that 
\[
	\tilde u \in H^{0, \rho}_{\be, h} (\Omega_{s_0})^{\bullet, -}.
\]
But this is a contradiction, since
\[
	\tilde u(s, r, \psi, \theta)|_{s < s_0/4}
		= u(s, r, \psi, \theta)
		= s^{i \s} v(r, \psi, \theta),
\]
which would imply that
\[
	\tilde u \notin H^{0, \rho}_{\be, h} (\Omega_{s_0})^{\bullet, -},
\]
since $\Im(\s) > -\rho$.
This finishes the proof.
\end{proof}

\begin{bibdiv}
  \begin{biblist}

\bib{AHW24}{article}{
  author={Andersson, Lars},
  author={H{\"a}fner, Dietrich},
  author={Whiting, Bernard},
title={Mode analysis for the linearized {E}instein equations on the {K}err metric: the large $a$ case},
journal={ J. Eur. Math. Soc., DOI 10.4171/JEMS/1544},
year={2024}
}

\bib{B2007}{book}{
	author={Besse, A.\ L.},
	title={Einstein manifolds},
	series={Classics in Mathematics},
	note={Reprint of the 1987 edition},
	publisher={Springer-Verlag, Berlin},
	date={2008},
	pages={xii+516},
}

\bib{BZ09}{book}{
    author={Bieri, Lydia},
    author={Zipser, Nina},
title={Extensions of the stability theorem of the {M}inkowski space in
  general relativity},
series={AMS/IP Studies in Advanced Mathematics},
volume={45},
publisher={American Mathematical Society},
year={2009}}

\bib{BFHR1999}{article}{
   author={Brodbeck, Othmar},
   author={Frittelli, Simonetta},
   author={H\"ubner, Peter},
   author={Reula, Oscar A.},
   title={Einstein's equations with asymptotically stable constraint
   propagation},
   journal={J. Math. Phys.},
   volume={40},
   date={1999},
   number={2},
   pages={909--923},
}

\bib{CK93}{book}{
  AUTHOR = {Christodoulou, Demetrios},
  author={Klainerman, Sergiu},
     TITLE = {The global nonlinear stability of the {M}inkowski space},
    SERIES = {Princeton Mathematical Series},
    VOLUME = {41},
 PUBLISHER = {Princeton University Press},
   ADDRESS = {Princeton, NJ},
      YEAR = {1993},
     PAGES = {x+514},
      ISBN = {0-691-08777-6}
}

\bib{DHR19T}{article}{
   author={Dafermos, Mihalis},
   author={Holzegel, Gustav},
   author={Rodnianski, Igor},
   title={Boundedness and decay for the Teukolsky equation on Kerr
   spacetimes I: The case $|a|\ll M$},
   journal={Ann. PDE},
   volume={5},
   date={2019},
   number={1},
   pages={Paper No. 2, 118},
}

\bib{DHR19S}{article}{
   author={Dafermos, Mihalis},
   author={Holzegel, Gustav},
   author={Rodnianski, Igor},
   title={The linear stability of the Schwarzschild solution to
   gravitational perturbations},
   journal={Acta Math.},
   volume={222},
   date={2019},
   number={1},
   pages={1--214},
}

\bib{DHRT21}{article}{
         AUTHOR = {Dafermos, Mihalis},
  author={Holzegel, Gustav},
  author={Rodnianski, Igor},
  author={Taylor, Martin},
title={The non-linear stability of the {S}chwarzschild family of black holes},
journal={Preprint, arXiv:2104.08222},
year={2021}
}

\bib{F-B1952}{article}{
   author={Four\`es-Bruhat, Y.},
   title={Th\'eor\`eme d'existence pour certains syst\`emes d'\'equations aux
   d\'eriv\'ees partielles non lin\'eaires},
   language={French},
   journal={Acta Math.},
   volume={88},
   date={1952},
   pages={141--225},
}

\bib{F1996}{article}{
   author={Friedrich, Helmut},
   title={Hyperbolic reductions for Einstein's equations},
   journal={Classical Quantum Gravity},
   volume={13},
   date={1996},
   number={6},
   pages={1451--1469},
}

\bib{Dya2015}{article}{
	author = {Dyatlov, Semyon},
    title = {Asymptotics of linear waves and resonances with applications to black holes},
	journal = {Comm. Math. Phys.},
	volume = {335},
	year = {2015},
	number = {3},
	pages = {1445--1485},
}

\bib{Dya2015b}{article}{
	author = {Dyatlov, Semyon},
	title = {Resonance projectors and asymptotics for {$r$}-normally hyperbolic trapped sets},
	journal = {J. Amer. Math. Soc.},
	volume = {28},
	year = {2015},
	number = {2},
	pages = {311--381},
}

\bib{Dya2016}{article}{
	author = {Dyatlov, Semyon},
	title = {Spectral gaps for normally hyperbolic trapping},
	journal = {Ann. Inst. Fourier (Grenoble)},
	volume = {66},
	year = {2016},
	number = {1},
	pages = {55--82},
}

\bib{FS24}{article}{
  author={Fournodavlos, G.},
  author={Schlue, V.},
title={Stability of the expanding region of {K}err de {S}itter spacetimes},
journal={Preprint, arXiv:2408.02596},
year={2024}
}

\bib{Fr86}{article}{
author={Friedrich, Helmut},
title={On the existence of {$n$}-geodesically complete or future complete
  solutions of {E}instein's field equations with smooth asymptotic structure},
journal={Comm. Math. Phys.},
volume={107},
number={4},
pages={587--609},
year={1986}
}

\bib{GKS24}{article}{
   author={Giorgi, Elena},
   author={Klainerman, Sergiu},
   author={Szeftel, J\'er\'emie},
   title={Wave equations estimates and the nonlinear stability of slowly
   rotating Kerr black holes},
   journal={Pure Appl. Math. Q.},
   volume={20},
   date={2024},
   number={7},
   pages={2865--3849},
}

\bib{GL91}{article}{
   author={Graham, C. Robin},
   author={Lee, John M.},
   title={Einstein metrics with prescribed conformal infinity on the ball},
   journal={Adv. Math.},
   volume={87},
   date={1991},
   number={2},
   pages={186--225},
}

\bib{Gund05}{article}{
  author={Gundlach, Carsten},
  author={Calabrese, Gioel},
  author={Hinder, Ian},
  author={Martin-Garcia, Jose~M.},
title={{C}onstraint damping in the ${Z}4$ formulation and harmonic gauge},
journal={Classical and Quantum Gravity},
volume={22},
number={17},
pages={3767},
year={2005}
}

\bib{HHV25}{article}{
 AUTHOR = {H\"{a}fner, Dietrich},
 author={Hintz, Peter},
 author={Vasy, Andr\'{a}s},
 title={Linear stability of Kerr black holes in the full subextremal range },
 journal={Preprint, arXiv:2506.21183},
 year={2025}
}

\bib{HV2013}{article}{
   author={Hintz, Peter},
   author={Vasy, Andr{\'a}s},
   title={Non-trapping estimates near normally hyperbolic trapping},
   journal={Math. Res. Lett.},
   volume={21},
   date={2014},
   number={6},
   pages={1277--1304},
}

\bib{HV2015}{article}{
	author = {Hintz, Peter},
	author={Vasy, Andr{\'a}s},
	title = {Semilinear wave equations on asymptotically de {S}itter,
              {K}err--de {S}itter and {M}inkowski spacetimes},
	journal = {Anal. PDE},
	volume = {8},
	year = {2015},
	number = {8},
	pages = {1807--1890},
}

\bib{HV2017_IMRN}{article}{
   author={Hintz, Peter},
   author={Vasy, Andr{\'a}s},
   title={Global analysis of quasilinear wave equations on asymptotically Kerr–de~Sitter spaces},
   journal={Int. Math. Res. Not.},
   volume={17},
   date={2016},
   pages={5355–5426},
}

\bib{HV2017}{article}{
   author={Hintz, Peter},
   author={Vasy, Andr{\'a}s},
   title={Analysis of linear waves near the Cauchy horizon of cosmological
   black holes},
   journal={J. Math. Phys.},
   volume={58},
   date={2017},
   number={8},
   pages={081509, 45},
}

\bib{HV2018}{article}{
	author = {Hintz, Peter},
   	author={Vasy, Andr{\'a}s},
	title = {The global non-linear stability of the {K}err--de {S}itter family of black holes},
    journal = {Acta Math.},
	volume = {220},
    year = {2018},
	number = {1},
	pages = {1--206},
}

\bib{HV20}{article}{
   author={Hintz, Peter},
   author={Vasy, Andr\'as},
   title={Stability of Minkowski space and polyhomogeneity of the metric},
   journal={Ann. PDE},
   volume={6},
   date={2020},
   number={1},
   pages={Paper No. 2, 146},
}

\bib{H2015}{article}{
	author = {Hintz, Peter},
	title = {Resonance expansions for tensor-valued waves on asymptotically Kerr--de~Sitter spaces},
	journal = {Journal of Spectral Theory},
	volume = {7},
	year = {2017},
	pages = {519–557},
}

\bib{H2021}{article}{
	author = {Hintz, Peter},
	title ={Normally hyperbolic trapping on asymptotically stationary spacetimes},
	journal={Probability and Mathematical Physics},
	volume={2},
	number={1},
	pages={71–126},
	year={2021},
}

\bib{H2024}{article}{
	author = {Hintz, Peter},
	title ={Gluing small black holes along timelike geodesics II: uniform analysis on glued spacetimes},
	journal={Preprint, arxiv: 2408.06712},
    year={2024}
}

\bib{Hin2024M}{article}{
author={Hintz, Peter},
title={Mode stability and shallow quasinormal modes of {K}err-de {S}itter black holes away from extremality},
journal={J. Eur. Math. Soc. (2024), published online first},
volume={DOI 10.4171/JEMS/1463},
year={2024}
}

\bib{Hin23E}{article}{
   author={Hintz, Peter},
   title={Exterior stability of Minkowski space in generalized harmonic
   gauge},
   journal={Arch. Ration. Mech. Anal.},
   volume={247},
   date={2023},
   number={5},
   pages={Paper No. 99, 45},
}

\bib{HV24}{article}{
  author={Hintz, Peter},
  author={Vasy, Andras},
  title={Stability of the expanding region of Kerr--de~Sitter
    spacetimes and smoothness at the conformal boundary},
  journal={Preprint, arXiv:2409.1546},
  year={2024}
  }

\bib{H2009}{book}{
   author={H\"ormander, Lars},
   title={The analysis of linear partial differential operators},
   volume = {1-4},
   publisher={Springer-Verlag},
   date={1983},
}

\bib{J2025}{article}{
	author={Jia, Qiuye},
	title={Propagation of Singularities with Normally Hyperbolic Trapping},
    journal = {Ann.~Henri Poincar\'e},
    year = {2025},
  }

\bib{KS20P}{book}{
   author={Klainerman, Sergiu},
   author={Szeftel, J\'er\'emie},
   title={Global nonlinear stability of Schwarzschild spacetime under
   polarized perturbations},
   series={Annals of Mathematics Studies},
   volume={210},
   publisher={Princeton University Press, Princeton, NJ},
   date={2020},
   pages={xi+840},
}

\bib{KS22C}{article}{
   author={Klainerman, Sergiu},
   author={Szeftel, J\'er\'emie},
   title={Construction of GCM spheres in perturbations of Kerr},
   journal={Ann. PDE},
   volume={8},
   date={2022},
   number={2},
   pages={Paper No. 17, 153},
}

\bib{KS22E}{article}{
   author={Klainerman, Sergiu},
   author={Szeftel, J\'er\'emie},
   title={Effective results on uniformization and intrinsic GCM spheres in
   perturbations of Kerr},
   note={With an appendix by Camillo De Lellis},
   journal={Ann. PDE},
   volume={8},
   date={2022},
   number={2},
   pages={Paper No. 18, 89},
}

\bib{KS23K}{article}{
   author={Klainerman, Sergiu},
   author={Szeftel, J\'er\'emie},
   title={Kerr stability for small angular momentum},
   journal={Pure Appl. Math. Q.},
   volume={19},
   date={2023},
   number={3},
   pages={791--1678},
}

\bib{LIG2016}{article}{
    author={LIGO Scientific Collaboration},
    author={Virgo Collaboration},
title={Observation of Gravitational Waves from a Binary Black Hole Merger},
journal={Phys. Rev. Lett.},
volume={116:061102},
year={2016}
}

\bib{LR10}{article}{
   author={Lindblad, Hans},
   author={Rodnianski, Igor},
   title={The global stability of Minkowski space-time in harmonic gauge},
   journal={Ann. of Math. (2)},
   volume={171},
   date={2010},
   number={3},
   pages={1401--1477},
}

\bib{M1983}{article}{
	author = {Marck, Jean-Alain},
	title = {Parallel-tetrad on null geodesics in Kerr–Newman space-time},
    journal = {Physics Letters A},
	volume = {97},
    year = {1983},
	number = {4},
	pages = {140-142},
}

\bib{PV2021}{article}{
   author={Petersen, Oliver},
   author={Vasy, Andr\'as},
   title={Analyticity of quasinormal modes in the Kerr and Kerr--de~Sitter
   spacetimes},
   journal={Comm. Math. Phys.},
   volume={402},
   date={2023},
   number={3},
   pages={2547--2575},
}

\bib{PV2023}{article}{
   author={Petersen, Oliver},
   author={Vasy, Andr\'as},
   title={Wave equations in the Kerr--de~Sitter spacetime: the full
   subextremal range},
   journal={J. Eur. Math. Soc. (JEMS)},
   volume={27},
   date={2025},
   number={8},
   pages={3497--3526},
}

\bib{PV2024}{article}{
   author={Petersen, Oliver},
   author={Vasy, Andr\'as},
   title={Stationarity and Fredholm theory in subextremal Kerr--de~Sitter
   spacetimes},
   journal={SIGMA Symmetry Integrability Geom. Methods Appl.},
   volume={20},
   date={2024},
   pages={Paper No. 052, 11},
 }

\bib{Pre2005}{article}{
author={Pretorius,Frans},
title={Evolution of Binary Black-Hole Spacetimes},
journal={Phys. Rev. Lett.},
volume={95},
pages={121101},
year={2005}
}

\bib{Rin2008}{article}{
   author={Ringstr\"om, Hans},
   title={Future stability of the Einstein-non-linear scalar field system},
   journal={Invent. Math.},
   volume={173},
   date={2008},
   number={1},
   pages={123--208},
}

\bib{SR1989}{article}{
   author={Saint Raymond, Xavier},
   title={A simple Nash–Moser implicit function theorem},
   journal={Enseign. Math.},
   volume={35},
   date={1989},
   pages={217–226},
 }

\bib{Sh2023}{article}{
   author={Shen, Dawei},
   title={Construction of GCM hypersurfaces in perturbations of Kerr},
   journal={Ann. PDE},
   volume={9},
   date={2023},
   number={1},
   pages={Paper No. 11, 112},
}

\bib{Shl2015}{article}{
   author={Shlapentokh-Rothman, Yakov},
   title={Quantitative mode stability for the wave equation on the Kerr
   spacetime},
   journal={Ann. Henri Poincar\'e},
   volume={16},
   date={2015},
   number={1},
   pages={289--345},
}

\bib{D1981}{article}{
   author={DeTurck, Dennis M.},
   title={Existence of metrics with prescribed Ricci curvature: local
   theory},
   journal={Invent. Math.},
   volume={65},
   date={1981/82},
   number={1},
   pages={179--207},
}

\bib{V2013}{article}{
	author={Vasy, Andr\'{a}s},
	title={Microlocal analysis of asymptotically hyperbolic and Kerr--de~Sitter spaces (with an appendix by Semyon Dyatlov)},
	journal={Invent. Math.},
	volume={194},
	date={2013},
	number={2},
	pages={381--513},
      }

\bib{W1989}{article}{
   author={Whiting, Bernard F.},
   title={Mode stability of the Kerr black hole},
   journal={J. Math. Phys.},
   volume={30},
   date={1989},
   number={6},
   pages={1301--1305},
}

\bib{WZ2011}{article}{
	author = {Wunsch, Jared},
	author={Zworski, Maciej},
	title = {Resolvent estimates for normally hyperbolic trapped sets},
	journal = {Ann. Henri Poincar\'e},
	volume = {12},
	year = {2011},
	number = {7},
	pages = {1349--1385},
}

\bib{Z2012}{book}{
   author={Zworski, Maciej},
   title={Semiclassical analysis},
   series={Graduate Studies in Mathematics},
   volume={138},
   publisher={American Mathematical Society, Providence, RI},
   date={2012},
   pages={xii+431},
}

\end{biblist}
\end{bibdiv}

\end{sloppypar}
\end{document}